\definecolor{green}{rgb}{0,0.8,0} % Redefines the color green.
\newtheorem{theorem}{Theorem}[section]
\newtheorem{corollary}[theorem]{Corollary}
\newtheorem{lemma}[theorem]{Lemma}
\newtheorem{proposition}[theorem]{Proposition}
\theoremstyle{definition}
\newtheorem{definition}[theorem]{Definition}
\theoremstyle{remark}
\newtheorem{remark}[theorem]{Remark}
\newtheorem{conjecture}{Conjecture}
\numberwithin{equation}{section}
\newcommand{\abs}[1]{\vert#1\vert}
\newcommand{\set}[1]{\{#1\}}
\newcommand{\ep}{\epsilon}
\def\beaa{\begin{eqnarray*}}
\def\eeaa{\end{eqnarray*}}
\def\bea{\begin{eqnarray}}
\def\eea{\end{eqnarray}}
\def\be{\begin{equation}}
\def\ee{\end{equation}}
\newcommand{\aeq}{\sim}
\newcommand{\ud}{\mathrm{d}}
\newcommand{\rd}{\partial}
\newcommand{\nb}{\nabla}
\newcommand{\bb}{\Big}
\newcommand{\gmm}{\gamma}
\newcommand{\Gmm}{\Gamma}
\newcommand{\dlt}{\delta}
\newcommand{\eps}{\epsilon}
\newcommand{\lmb}{\lambda}
\newcommand{\Lmb}{\Lambda}
\newcommand{\omg}{\omega}
\newcommand{\om}{\omega}
\newcommand{\Omg}{\Omega}
\newcommand{\bfg}{{\bf g}}
\newcommand{\bfR}{{\bf R}}
\newcommand{\bfT}{{\bf T}}
\newcommand{\bbR}{\mathbb R}
\newcommand{\bbS}{\mathbb S}
\newcommand{\calA}{\mathcal A}
\newcommand{\calB}{\mathcal B}
\newcommand{\calD}{\mathcal D}
\newcommand{\calI}{\mathcal I}
\newcommand{\calM}{\mathcal M}
\newcommand{\calQ}{\mathcal Q}
\newcommand{\calR}{\mathcal R}
\newcommand{\calT}{\mathcal T}
\newcommand{\covD}{\nb}
\newcommand{\Ric}{\bfR}
\newcommand{\scal}{R}
\newcommand{\met}{\bfg}
\newcommand{\EM}{\bfT}
\newcommand{\mfd}{\calM}
\newcommand{\PD}{\calQ}
\newcommand{\cpt}{\PD_{\mathrm{cpt}}}
\newcommand{\extr}{\PD_{\mathrm{ext}}}
\newcommand{\intr}{\PD_{\mathrm{int}}}
\newcommand{\uC}{\underline{C}}
\newcommand{\dur}{\nu}
\newcommand{\dvr}{\lmb}
\newcommand{\trpR}{\calT}
\newcommand{\appH}{\calA}
\newcommand{\regR}{\calR}
\newcommand{\TV}{\mathrm{T.V.}}
\newcommand{\pfstep}[1]{\vspace{.5em} {\it #1.}}
\newcommand{\Err}{\mathrm{Err}}
\newcommand{\fparagraph}[1]{\paragraph{\bfseries #1}}
\begin{document}

\title[]{Quantitative decay rates for dispersive solutions to the Einstein-scalar field system in spherical symmetry}%: Title of the article
\author{Jonathan Luk}
\address{Department of Mathematics, MIT, Cambridge, MA, 02139}
\email{jluk@math.mit.edu}

\author{Sung-Jin Oh}%
\address{Department of Mathematics, UC Berkeley, Berkeley, CA, 94720}%
\email{sjoh@math.berkeley.edu}%

%\thanks{}%
%\subjclass{}%
%\keywords{}%

%\date{\today}%
%\dedicatory{}%
%\commby{}%
% ----------------------------------------------------------------
\begin{abstract}
In this paper, we study the future causally geodesically complete solutions of the spherically symmetric Einstein-scalar field system. Under the \emph{a priori assumption} that the scalar field $\phi$ scatters locally in the scale-invariant bounded-variation (BV) norm, we prove that $\phi$ and its derivatives decay polynomially. Moreover, we show that the decay rates are sharp. In particular, we obtain sharp quantitative decay for the class of global solutions with small BV norms constructed by Christodoulou. As a consequence of our results, for every future causally geodesically complete solution with sufficiently regular initial data, we show the dichotomy that either the sharp power law tail holds or that the spacetime blows up at infinity in the sense that some scale invariant spacetime norms blow up.
\end{abstract}

\maketitle
% ----------------------------------------------------------------
\section{Introduction}
In this paper, we study the quantitative long time dynamics for the spherically symmetric dispersive spacetimes satisfying the Einstein-scalar field equations. More precisely, these are spherically symmetric solutions $(\mfd,\met,\phi)$ to the Einstein-scalar field system, where $\met$ is a Lorentzian metric and $\phi$ is a real valued function on a $3+1$ dimensional manifold $\mfd$, such that $(\mfd,\met)$ is future causally geodesically complete and $\phi$ scatters locally in the scale-invariant bounded-variation (BV) norm. For these spacetimes, we establish a Price-law type decay for the scalar field $\phi$, the Christoffel symbols associated to $\met$ and all of their derivatives. To obtain the decay results, we do not need to assume any smallness of the initial data. Moreover, we show that the decay rates in this paper are sharp.

The spherically symmetric Einstein-scalar field system, being one of the simplest model of self-gravitating matter in this symmetry class, has been studied extensively both numerically and mathematically. In a seminal series of papers by Christodoulou \cite{Christodoulou:1987ta}, \cite{Christodoulou:1991}, \cite{Christodoulou:1993bt}, \cite{Christodoulou:1994}, \cite{Christodoulou:1999}, he achieved a complete understanding of the singularity structure of spherically symmetric spacetime solutions to this system. The culmination of the results shows that generic\footnote{in the BV class, i.e., the initial data for $\partial_v(r\phi)$ has bounded variation. More precisely, Christodoulou showed that the non-generic set of initial data has co-dimension at least two in the BV topology.} spherically symmetric initial data with one asymptotically flat end give rise to a spacetime whose global geometry is either dispersive (with a Penrose diagram represented by Figure 1) or contains a black hole region $\mathcal B\mathcal H$ which terminates in a spacelike curvature singularity $\mathcal S$ (with a Penrose diagram represented by Figure 2). In particular, in either of these generic scenarios, the spacetime possesses a complete null infinity $\mathcal I^+$ and thus obeys the weak cosmic censorship conjecture. Moreover, in either case the maximal Cauchy development of the data is inextendible with a $C^2$ Lorentzian metric and therefore also verifies the strong cosmic censorship conjecture. We refer the readers to \cite{Kommemi} for a comprehensive discussion on general singularity structures for spherically symmetric spacetimes.

\begin{figure}[htbp] \label{fig.disp}
\begin{center}
 
\includegraphics{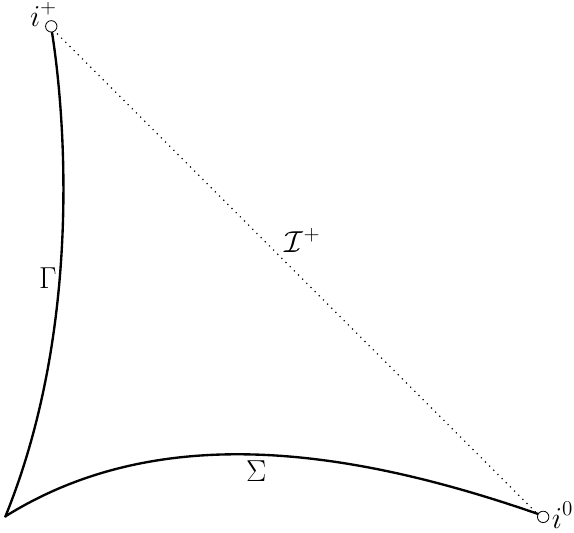}
 
\caption{The dispersive case}
\end{center}
\end{figure}

\begin{figure}[htbp]
\begin{center}
 
\includegraphics{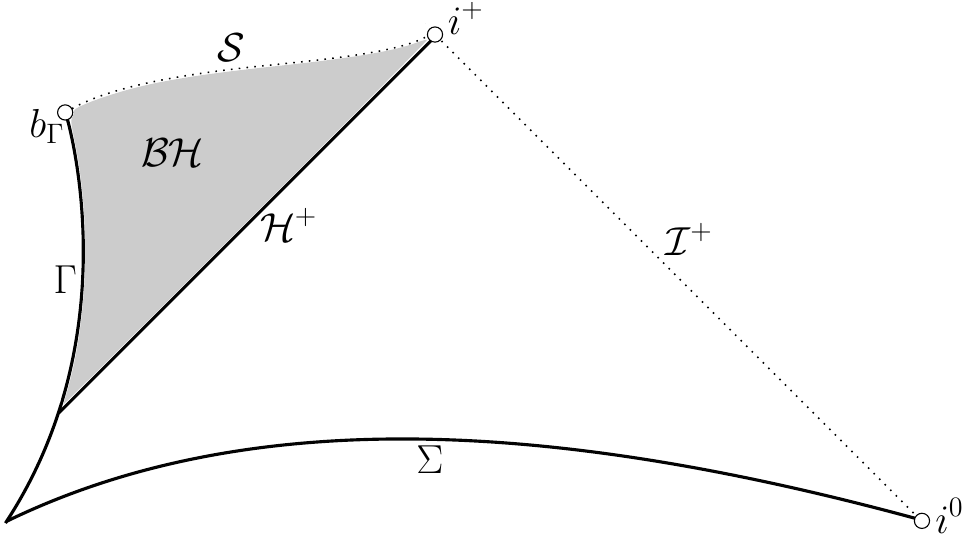}
 
\caption{The black hole case}
\end{center}
\end{figure}

The remarkable resolution of the cosmic censorship conjectures however gives very little information on the long time dynamics for these spacetimes except for the small data\footnote{i.e., when the initial data is close to that of Minkowski space.} case  \cite{Christodoulou:1993bt}. In particular, not much is known about the asymptotic decay of the scalar field as measured by a far-away observer at null infinity. In the dispersive case, Christodoulou showed that the Bondi mass decays to zero along null infinity without an explicit decay rate. In the black hole case, he showed that the Bondi mass approaches the mass of the black hole, from which one can infer the non-quantitative decay for the scalar field along null infinity \cite{Christodoulou:1993bt}.
 
The long time dynamics in the case where the spacetime settles to a black hole was subsequently studied in the seminal work\footnote{In fact, they considered the more general Einstein-Maxwell-scalar field equations.} of Dafermos-Rodnianski \cite{DR} . They proved a quantitative decay rate for the scalar field (and its derivatives) in the spacetime including along null infinity $\mathcal I^+$ and the event horizon $\mathcal H^+$. The proof is based on the local conservation of energy, which is subcritical, together with techniques exploiting the conformal geometry of the spacetime and the celebrated red-shift effect along the event horizon. The result in particular justified, in a nonlinear setting, the heuristics of Price \cite{Price}. It turns out that the quantitative decay rates, when combined with the results of \cite{D}, also have interesting consequences for the strong cosmic censorship conjecture in the context of the spherically symmetric Einstein-Maxwell-scalar field system.

In this paper, we study the other generic scenerio, i.e., spherically symmetric dispersive spacetime solutions to the Einstein-scalar field system. Unlike in the black hole case, the monotonic Hawking mass is \emph{supercritical} and provides no control over the dynamics of the solution. We thus do not expect to be able to obtain quantitative decay rates for large solutions without imposing extra assumptions. Instead, we assume \emph{a priori} the non-quantitative decay of a \emph{critical} quantity - the BV norm\footnote{Solutions of bounded variation have been first studied by Christodoulou \cite{Christodoulou:1993bt} and plays an important role in the proof of the cosmic censorship conjectures \cite{Christodoulou:1999}.} - but only locally in a region where the area of the orbit of the symmetry group $SO(3)$ remains uniformly bounded. Under this assumption of local BV scattering, we show that the scalar field and all its derivatives decay with a quantitative rate, reminescent of the Price law decay rates in the black hole case. (We refer the readers to the statement of the main theorems in Section \ref{sec.main.thm} for the precise rates that we obtain.) We prove, in particular, a quantitative decay rate for the scalar field along null infinity.

Our results apply in particular to the class of solutions arising from initial data with small BV norm. Christodoulou \cite{Christodoulou:1993bt} showed that these spacetimes are future causally geodesically complete. One can easily deduce from \cite{Christodoulou:1993bt} that in fact these spacetimes satisfy the BV scattering assumption and therefore the solutions obey the quantitative decay estimates of our main theorem (see Theorem \ref{thm:smallData}). On the other hand, our results do not require any smallness assumptions on the initial data. We conjecture that indeed our class of spacetimes contains those arising from large data:
\begin{conjecture}\label{large.sol.conj}
There exists initial data of arbitrarily large BV norm whose maximal global development scatters locally in the BV norm.
\end{conjecture}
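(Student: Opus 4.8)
The plan is to prove the conjecture by a superposition construction, producing for each $M>0$ an asymptotically flat initial data set of BV norm at least $M$ whose maximal future development is dispersive. Note that one cannot hope to do this by scaling, since the BV norm is scale invariant, nor by a naive continuity argument increasing the BV norm along a path of data, since such a path must eventually enter the black-hole-forming regime. Instead, fix $\eps>0$ below the threshold of Christodoulou's small-BV existence theory \cite{Christodoulou:1993bt}, so that a single compactly supported ``bump'' of total variation $\sim\eps$ launches a future causally geodesically complete, dispersive spacetime. Working with characteristic data (incoming radiation prescribed on a past null cone, or equivalently along past null infinity), prescribe $\rd_v(r\phi)$ to be a sum of $N\geq M/\eps$ such bumps, the $k$-th supported in an advanced-time interval $[v_k,v_k+\dlt_k]$ with total variation $\sim\eps$; the parameters $v_k$ and $\dlt_k$ are to be fixed during the construction. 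The BV norm of this data is $\sim N\eps\geq M$, so the only issue is to arrange global existence and local BV scattering.

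The evolution should proceed in $N$ successive \emph{dispersive epochs}. In the $k$-th epoch the $k$-th bump enters the region of bounded area-radius, passes through the center of symmetry, and disperses back out toward null infinity. By Christodoulou's small-data analysis together with the quantitative decay proved in the present paper (cf. Theorem \ref{thm:smallData}), after a finite epoch time the Bondi mass and the scalar field near the center are driven below any prescribed threshold, so that the configuration seen by the next bump is a small perturbation of Minkowski space. One then chooses $v_{k+1}$ to lie well beyond the end of this relaxation, i.e. $v_{k+1}\gg v_k+\dlt_k+T_k$ where $T_k$ is the $k$-th epoch time, and, if needed, enlarges $\dlt_k$ so that the energy flux, hence the Hawking mass increment across bump $k$, stays small; the quasi-local mass enclosing the innermost sphere reached by bump $k$ is then controlled by the small mass of bump $k$ alone, so $2m/r<1$ there and no trapped surface forms. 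Iterating over $k=1,\dots,N$ and then applying small-data theory to the nearly Minkowskian configuration that remains after the last bump has dispersed should yield a future causally geodesically complete development whose local BV norm tends to zero as retarded time tends to infinity.

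The main obstacle is to make the decoupling of the bumps quantitative in the presence of gravity. The bumps are genuinely coupled: the metric near the center is sensitive to the slowly incoming flux of the later bumps, and the dispersing tail of the $k$-th bump eventually reaches the support of the $(k+1)$-st. One therefore needs a perturbed version of Christodoulou's BV a priori estimates showing that a bump of small BV norm disperses with quantitative rates even when evolved on a background that differs from Minkowski space by an arbitrarily small but nonzero perturbation, carried as an inductive hypothesis. Here the spherically symmetric structure helps: the Hawking mass is quasi-local, so an outgoing bump at large radius contributes nothing to $2m/r$ at smaller radii, and the influence of a far-away incoming bump on the near-center dynamics is weighted by its small amplitude times the reciprocal of its large area-radius. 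Closing this bootstrap---propagating smallness of the residual field and mass through the relaxation phases while controlling the mutual influence of the epochs---is the technical heart of the argument, and is what currently places the statement at the level of a conjecture rather than a theorem.
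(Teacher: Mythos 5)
This statement appears in the paper as Conjecture~\ref{large.sol.conj} and is explicitly left open --- the authors remark that they will address it in future work --- so there is no internal proof to compare against. The relevant question is whether your sketch is a plausible program, and you are appropriately candid that it is not a proof. Your preliminary observations are correct: scaling is useless because the BV norm is scale-invariant, and a naive continuity argument increasing the BV norm along a path of data would have to cross the collapse threshold. The superposition strategy --- concatenating $N$ widely separated small-BV wave packets and relying on the quasilocality of the Hawking mass to make each epoch look approximately like a fresh small-data problem --- is a natural line of attack, and you correctly locate the heart of the difficulty: one needs a quantitatively stable perturbed version of Christodoulou's small-BV theory, propagated as a bootstrap hypothesis through the relaxation phases, and this is not available either in the literature or in the present paper. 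An additional obstruction worth flagging is that the decay constants $A_1, A_2$ in the paper's main theorems depend on the full profile of the solution, not merely on $\calI_1, \calI_2$; for a multi-bump iteration one would need to show these constants do not degrade as epochs accumulate, which is precisely the type of control the small-data Theorem~\ref{thm:smallData} provides only under smallness of the \emph{total} initial variation.

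There is also a concrete error in your mass bookkeeping. Enlarging $\dlt_k$ while holding the total variation $\sim\eps$ fixed does \emph{not} reduce the energy flux: a single bump of total variation $\eps$ in $\rd_v(r\phi)$ has amplitude $\sim\eps$ independent of its $v$-width, and at large area-radius the Hawking-mass increment scales like $\int (\rd_v\phi)^2 r^2\,\ud v \sim \eps^2\dlt_k$, so it \emph{grows} linearly in $\dlt_k$. To keep the per-bump mass small at fixed total variation one should instead shrink the width, or use oscillatory profiles with small amplitude and many crossings. This is only a local misstep, but it illustrates the delicate bookkeeping the full argument would require, and makes plain why this remains a conjecture rather than a theorem.
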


In addition to the upper bounds that we obtain in our main theorem, we also construct examples where we prove lower bounds for the solutions with the same rates as the upper bounds. In particular, there exists a class of initial data with compactly supported scalar field whose future development saturates the decay estimates in the main theorem. This shows that the decay rates are sharp. We note that the decay rate is also consistent with the numerical study of Biz\'on-Chmaj-Rostworowski \cite{BCR}.

As a corollary of the main result on decay, we show the following dichotomy: either the quantitative decay rates are satisfied or the solution blows up at infinity. The latter are solutions such that some scale-invariant spacetime norms become infinite (see precise definition in Definition \ref{def.blow.up.infty}).

The decay result in this paper easily implies that the locally BV scattering solutions that we consider are stable against small, regular, \emph{spherically symmetric} perturbations. More ambitiously, one may conjecture that
\begin{conjecture}\label{stab.conj}
Spherically symmetric locally BV scattering dispersive solutions to the Einstein-scalar field equations are stable against \emph{non-spherically symmetric} perturbations.
\end{conjecture}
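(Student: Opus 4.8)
The plan is to treat Conjecture \ref{stab.conj} as a global nonlinear stability statement about a (possibly large) background spacetime $(\mfd,\met,\phi)$ for which the quantitative decay rates of the main theorem hold, and to set up a perturbative scheme \emph{around that background} rather than around Minkowski space. The first step would be to fix a good gauge outside of spherical symmetry: either a generalized wave (harmonic) coordinate gauge, writing $\met = \met_{\mathrm{bg}} + h$ and deriving a quasilinear wave system for $h$ and the scalar field perturbation $\psi = \phi - \phi_{\mathrm{bg}}$, or --- more in the spirit of Christodoulou--Klainerman and of the double-null analysis used in this paper --- a double-null foliation adapted to the characteristic geometry of the background, with the perturbed optical functions $u,v$ solving eikonal equations. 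In either case one obtains a coupled system of the schematic form $\Box_{\met_{\mathrm{bg}}}(h,\psi) = \mathcal{N}(h,\psi,\partial h,\partial\psi) + \mathcal{L}_{\mathrm{bg}}(h,\psi)$, in which the inhomogeneous linear term $\mathcal{L}_{\mathrm{bg}}$ has coefficients built from the background Christoffel symbols and $\partial\phi_{\mathrm{bg}}$ --- exactly the quantities whose Price-law decay is controlled by the main theorem.

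The second step is the linear theory: prove weighted energy and pointwise decay estimates for the linearized system on the background, using as inputs (i) the decay of the background Ricci, Christoffel symbols and scalar field, (ii) the absence of trapped surfaces and the completeness of $\mathcal{I}^+$ in the dispersive background (the Penrose diagram of Figure 1), and (iii) a vector-field / multiplier method built on the background analogues of the Morawetz estimate and the $r^p$-weighted commuting fields, with conformal techniques near $\mathcal{I}^+$ and careful handling of the regular center $r = 0$. It is here that the a priori BV scattering hypothesis inherited by the background is essential: the Hawking mass is supercritical in the dispersive regime, so the conserved energy alone gives no control, and the scale-invariant BV information is the substitute that makes the decay mechanism run. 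The third step is the bootstrap: postulate that $(h,\psi)$ obeys the decay rates of the main theorem with slightly enlarged constants, insert this into the nonlinearity $\mathcal{N}$ (which is at least quadratic and therefore gains both smallness and extra decay), recover the original constants, and close a continuity argument in a weighted Sobolev space of asymptotically flat Cauchy data near the background data.

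The main obstacle, I expect, is twofold. First, the background is \emph{not} small, so $\Box_{\met_{\mathrm{bg}}} - \mathcal{L}_{\mathrm{bg}}$ need not be a uniformly small perturbation of the flat wave operator; one must genuinely exploit the global structure and the \emph{quantitative} decay of the background to run the vector-field method, i.e.\ prove a robust ``stability of the decay mechanism'' rather than ``stability of Minkowski,'' and commuting and multiplier fields adapted to a large background are delicate to construct and to control error terms for. Second, and more seriously, controlling the geometry near the center $r = 0$ without symmetry is the crux: one must rule out focusing driven by the perturbation, which at the nonlinear level is precisely the assertion that small non-spherical perturbations do not create trapped surfaces near $r=0$. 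Establishing a scale-invariant smallness threshold for this, tied to the background BV norm, appears to require ideas genuinely beyond the spherically symmetric estimates obtained here, which is why I expect the full conjecture to remain substantially open even granting the main theorem.
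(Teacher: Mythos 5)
The statement you were asked to prove is Conjecture~\ref{stab.conj}, which the paper explicitly leaves open; the authors write only that ``we will address both Conjectures~\ref{large.sol.conj} and~\ref{stab.conj} in future works,'' so there is no proof in the paper against which to compare your argument. What you have written is, as you yourself say, not a proof but a programme: a plausible two-layer scheme (linear decay on the large dispersive background via vector-field multipliers, then a quadratic bootstrap), together with an honest assessment that the central difficulty --- controlling the null geometry near the regular centre $r=0$ without symmetry and ruling out perturbation-induced trapping --- is not resolved by anything in this paper. That assessment is consistent with the authors' own framing: they view their main theorems as establishing the quantitative decay that would be a \emph{prerequisite} for such a stability argument (in the spirit of Alinhac and Yang for large solutions of nonlinear wave equations), not as supplying the argument itself. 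So your proposal cannot be graded as correct or incorrect against the paper; it is a reasonable sketch of how one might attack an open problem, and you have correctly located where the genuinely new ideas would be needed. The one thing worth flagging is that you should not present this as a ``proof proposal'' in the ordinary sense: no amount of filling in the steps you outline would close the argument without first resolving the centre/trapping issue, which is the conjecture's actual content.
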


Conjecture \ref{stab.conj}, if true, will generalize the monumental theorem on the nonlinear stability of Minkowski spacetime of Christodoulou-Klainerman \cite{CK} (see also a simpler proof in \cite{LR}). For nonlinear wave equations satisfying the null condition, it is known \cite{Alinhac}, \cite{Yang} that \emph{large} solutions decaying sufficiently fast are globally stable against small perturbations. On the other hand, our main theorem shows a quantitative decay rate for spherically symmetric locally BV scattering dispersive spacetimes. Conjecture \ref{stab.conj} can therefore be viewed as an attempt to generalize the results in \cite{Alinhac}, \cite{Yang} to the Einstein-scalar field system. We will address both Conjectures \ref{large.sol.conj} and \ref{stab.conj} in future works.
\\
\\
\noindent{\bf Acknowledgements.} The authors would like to thank Mihalis Dafermos and Igor Rodnianski for valuable discussions. We also thank Jonathan Kommemi for providing the Penrose diagrams. 

J. Luk is supported by the NSF Postdoctoral Fellowship DMS-1204493. S.-J. Oh is a Miller Research Fellow, and thanks the Miller Institute at UC Berkeley for the support.

\subsection{Outline of the paper}
We outline the remainder of the paper. In Section \ref{sec.setup}, we discuss the set-up of the problem and in particular define the class of solutions considered in the main theorem, i.e., the locally BV scattering solutions. In Section \ref{sec.main.thm}, we state the main theorems in the paper (Theorems \ref{main.thm.1} and \ref{main.thm.2}), their consequences and additional theorems on the optimality of the decay rates. In the same section, we outline the main ideas of the proof. In Sections \ref{sec.anal.prop} and \ref{sec.geom}, we explain the main analytic features of the equations and the geometry of the class of spacetimes that we consider.
 
Sections \ref{sec.decay1} and \ref{sec.decay2} consist of the main content of this paper. In Section \ref{sec.decay1}, we prove the decay estimates for $\phi$, $\rd_v(r\phi)$ and $\rd_u(r\phi)$, i.e., the first main theorem (Theorem \ref{main.thm.1}). In Section \ref{sec.decay2}, using in particular the results in Section \ref{sec.decay1}, we derive the decay bounds for the second derivatives for $r\phi$ and the metric components, i.e., the second main theorem (Theorem \ref{main.thm.2}). 

In the remaining sections of the paper, we turn to other theorems stated in Section \ref{sec.main.thm}. In Section \ref{sec.dichotomy}, we give a proof of the dichotomy alluded to above, i.e., either the quantitative decay holds or the spacetime blows up at infinity. In Section \ref{sec:smallData}, we sketch a proof of a refinement of the conclusions of the main theorems in the small data case. Finally, in Section \ref{sec.opt}, we prove optimality of the decay rates asserted by the main theorems.

\section{Set-up}\label{sec.setup}
In this section, we define the set-up, formulate the equations in a double null coordinate system and explain the characteristic initial value problem. This will allow us to state the main theorem in the next section.

\subsection{Spherically Symmetric Einstein-Scalar-Field System (SSESF)} \label{subsec:derivation}
We begin with a brief discussion on the derivation of the Spherically Symmetric Einstein-Scalar-Field System \eqref{eq:SSESF} from the $(3+1)$-dimensional Einstein-scalar-field system. 

Solutions to the Einstein-scalar field equations can be represented by a triplet $(\mfd, \met_{\mu \nu},\phi)$, where $(\mfd, \met_{\mu \nu})$ is a $(3+1)$-dimensional Lorentzian manifold and $\phi$ a real-valued function on $\mfd$. The spacetime metric $\met_{\mu \nu}$ and the scalar field $\phi$ satisfy the Einstein-scalar-field system:
\begin{equation} \label{eq:ES}
\left\{
\begin{aligned}
	\Ric_{\mu \nu} - \frac{1}{2} \met_{\mu \nu} \scal =& 2 \EM_{\mu \nu}, \\
	\covD^{\mu} \rd_{\mu} \phi =& 0.
\end{aligned}
\right.
\end{equation}
where $\Ric_{\mu \nu}$ is the Ricci curvature of $\met_{\mu \nu}$, $\scal$ is the scalar curvature, and $\covD_{\mu}$ is the covariant derivative given by the Levi-Civita connection on $(\mfd, \met)$. The energy-momentum tensor $\EM_{\mu \nu}$ is given by the scalar field $\phi$, i.e.
\begin{equation}\label{eq:T} 
	\EM_{\mu\nu} = \rd_{\mu} \phi \rd_{\nu} \phi - \frac{1}{2} \met_{\mu \nu} \rd^{\lmb} \phi \rd_{\lmb} \phi.
\end{equation}

Assume that the solution $(\mfd, \met_{\mu \nu}, \phi)$ is spherically symmetric, i.e., the group $\mathrm{SO}(3)$ of three dimensional rotations acts smoothly and isometrically on $(\mfd, \met)$, where each orbit is either a point or is isometric to $\bbS^{2}$ with a round metric. The scalar field $\phi$ is required to be constant on each of the orbits. These assumptions are propagated by \eqref{eq:ES}; hence, if $(\mfd, \met_{\mu \nu}, \phi)$ is a Cauchy development, then it suffices to assume spherical symmetry only on the initial data.

The quotient $\calM / \mathrm{SO}(3)$ gives rise to a (1+1)-dimensional Lorentzian manifold with boundary, which we denote by $(\PD, g_{ab})$. The boundary $\Gmm$ consists of fixed points of the group action. We define the \emph{area radius function} $r$ on $\PD$ to be
\begin{equation*}
	 r := \sqrt{\frac{\mbox{Area of symmetry sphere}}{4 \pi}}.
\end{equation*}
and $r=0$ at $\Gmm$. Note that each component of $\Gmm$ is a timelike geodesic.

We assume that $\Gmm$ is non-empty and connected, and moreover that there exists a \emph{global double null coordinates} $(u,v)$, i.e. a coordinate system $(u,v)$ covering $\PD$ in which the metric takes the form
\begin{equation} \label{eq:defn4Met}
	g_{ab} \ud x^{a} \cdot \ud x^{b} = - \Omg^{2} \ud u \cdot \ud v
\end{equation}
for some $\Omg > 0$. We remark that both assumptions are easily justified if $(\calM, \bfg)$ is a Cauchy development of a spacelike hypersurface homeomorphic to $\bbR^{3}$.

The metric $\bfg_{\mu \nu}$ of $\calM$ is characterized by $\Omg$ and $r$ and takes the form
\begin{equation}
	\bfg_{\mu \nu} \ud x^{\mu} \cdot \ud x^{\nu} = - \Omg^{2} \ud u \cdot \ud v + r^{2} \ud s^{2}_{\bbS^{2}}\label{metric}
\end{equation}
where $\ud s^{2}_{\bbS^{2}}$ is the standard line element on the unit sphere $\bbS^{2}$. Therefore, we may reformulate the \emph{spherically symmetric Einstein-scalar-field system} \eqref{eq:SSESF} in terms of the triplet $(\phi, r, \Omg)$ as 
\begin{equation} \label{eq:SSESF} \tag{SSESF}
\left\{
\begin{aligned}
	r \rd_{u} \rd_{v} r =& - \rd_{u} r \rd_{v} r - \frac{1}{4} \Omg^{2}, \\
	r^{2} \rd_{u} \rd_{v} \log \Omg =& \, \rd_{u} r\rd_{v} r + \frac{1}{4} \Omg^{2} -  r^{2} \rd_{u} \phi \rd_{v} \phi, \\
	r \rd_{u} \rd_{v} \phi =& - \rd_{u} r \rd_{v} \phi - \rd_{v} r \rd_{u} \phi, \\
	2 \Omg^{-1} \rd_{u} r\,  \rd_{u} \Omg =& \, \rd^{2}_{u} r + r (\rd_{u} \phi)^{2}, \\
	2 \Omg^{-1} \rd_{v} r\,  \rd_{v} \Omg =& \, \rd^{2}_{v} r + r (\rd_{v} \phi)^{2},
\end{aligned}
\right.
\end{equation}
with the boundary condition $r=0$ along $\Gmm$. 

\subsection{Basic assumptions, notations and conventions}
In this subsection, we introduce the basic assumptions on the base manifold $\PD$, as well as some notations and conventions that will be used in the rest of the paper.

\subsubsection*{Definition of $\PD$ and $\mfd$}
Denote by $\bbR^{1+1}$ the (1+1)-dimensional Minkowski space, with the standard double null coordinates $(u,v)$. Let $\PD$ be a (1+1)-dimensional Lorentzian manifold which is conformally embedded into $\bbR^{1+1}$ with $\ud s^{2}_{\PD} = - \Omg^{2} \ud u \cdot \ud v$. Given a non-negative function $r$ on $\PD$, we define the set $\Gmm := \set{(u,v) \in \PD : r(u,v) = 0}$, called the \emph{axis of symmetry}. We also define $(\mfd, \met_{\mu \nu})$ to be the (1+3)-dimensional Lorentzian manifold with $\mfd = \PD \times \bbS^{2}$ and $\met_{\mu \nu}$ given by \eqref{eq:defn4Met}; this is to be thought of as the full spacetime before the symmetry reduction. (We refer to \S \ref{subsec:derivation} for the full interpretation.)

\subsubsection*{Assumptions on the conformal geometry of $\PD$}
We assume that $\Gmm \subset \PD$ is a connected set, which is the image of a future-directed timelike curve emanating from the point $(1,1)$.
We also assume that $C_{1} \subset \PD$, where
\begin{equation*}
C_{1} = \set{(u,v) \in \bbR^{1+1} : u=1, \, 1 \leq v < \infty}.
\end{equation*}
 
Furthermore, $\PD$ is assumed to be the domain of dependence of $\Gmm$ and $C_{1}$ to the future, in the sense that every causal curve in $\PD$
has its past endpoint on either $\Gmm$ or $C_{1}$. 

\subsubsection*{Notations for the conformal geometry of $\PD$}
Denote by $C_{u}$ (resp. $\uC_{v}$) the constant $u$ (resp. $v$) curve in $\PD$. Note that these are null curves in $\PD$.

Given $(u_{0}, v_{0}) \in \PD$, we define the \emph{domain of dependence} of the line segment $C_{u_{0}} \cap \set{v \leq v_{0}}$, denoted by $\calD(u_{0}, v_{0})$, to be the set of all points $p \in \PD$ such that all past-directed causal curves passing $p$ intersects $\Gmm \cup (C_{u_{0}} \cap \set{v \leq v_{0}})$, plus the line segment $(C_{u_{0}} \cap \set{v \leq v_{0}})$ itself. 

Also, we define the \emph{future null infinity} $\calI^{+}$ to be the set of ideal points $(u, +\infty)$ such that $\sup_{C_{u}} r = \infty$. 

\subsubsection*{Integration over null curves}
Whenever we integrate over a subset of $C_{u}$ (resp. $\uC_{v}$), we will use the standard line element $\ud v$ ($\ud u$) for integrals over, i.e.,  
\begin{align*}
	\int_{\uC_v\cap\{u_1\leq u\leq u_2\}} f = \int_{u_1}^{u_2} f(u',v) \, \ud u'  \\ 
	\int_{C_u\cap\{v_1\leq v\leq v_2\}} f = \int_{v_1}^{v_2} f(u,v') \, \ud v',
\end{align*} 
respectively.

\subsubsection*{Functions of bounded variation}
Unless otherwise specified, functions of bounded variation (BV) considered in this paper will be assumed to be right-continuous. By convention,
\begin{equation*}
	\rd_{v} f \, \ud v  \hbox{ or } \rd_{v} f
\end{equation*}
will refer to the distributional derivative of $f$, which is a finite signed measure, and
\begin{equation*}
	\abs{\rd_{v} f} \, \ud v  \hbox{ or } \abs{\rd_{v} f}
\end{equation*}
will denote the total variation measure. Unless otherwise specified, these measures will be evaluated on intervals of the form $(v_{1}, v_{2}]$. Thus, according to our conventions,
\begin{align*}
	\int_{v_{1}}^{v_{2}} \rd_{v} f (v) \, \ud v =& f(v_{2}) - f(v_{1}), \\
	\int_{v_{1}}^{v_{2}} \abs{\rd_{v} f (v)} \, \ud v = & \TV_{(v_{1}, v_{2}]} [f]. 
\end{align*}

\subsubsection*{New variables}
We introduce the following notation for the directional derivatives of $r$:
\begin{equation*}
	\dvr := \frac{\rd r}{\rd v}, \quad \dur := \frac{\rd r}{\rd u},
\end{equation*}

The \emph{Hawking mass} $m(u,v)$ is defined by the relation
\begin{equation}\label{mdef}
1 - \frac{2m}{r} = \rd^{a} r \rd_{a} r = - 4 \Omg^{-2} \rd_{u} r \rd_{v} r.
\end{equation}

For a solution to \eqref{eq:SSESF}, the quantity $m$ possesses useful monotonicity properties (see Lemma \ref{lem:mntn4m}), which will be one of the key ingredients of our analysis. We define the \emph{mass ratio} to be
\begin{equation*}
	\mu := \frac{2m}{r}.
\end{equation*}

We also define the \emph{Bondi mass} on $C_{u}$ by $M(u) := \lim_{v \to \infty} m(u, v)$, provided the limit exists. The Bondi mass $M_{i} := M(1) = \lim_{v \to \infty} m(1, v)$ on the initial curve $C_{1}$ is called the \emph{initial Bondi mass}.

\subsection{Refomulation in terms of the Hawking mass}
The Hawking mass as defined in \eqref{mdef} turns out to obey useful monotonicity (See \S \ref{subsec:monotonicity}). We therefore reformulate \eqref{eq:SSESF} in terms of $m$ and eliminate $\Omg$. Notice that by \eqref{metric} and \eqref{mdef}, the metric is determined by $r$ and $m$.

We say that \emph{$(\phi, r, m)$ on $\PD$ is a solution to }(SSESF) if the following equations hold:
\begin{equation} \label{eq:SSESF:dr}
\left\{
\begin{aligned}
\rd_{u} \dvr = & \frac{\mu}{(1-\mu) r} \dvr \dur, \\
\rd_{v} \dur = & \frac{\mu}{(1-\mu) r} \dvr \dur,
\end{aligned}
\right.
\end{equation}
\begin{equation} \label{eq:SSESF:dm}
\left\{
\begin{aligned}
2 \dur \rd_{u} m = &  (1-\mu) r^{2} (\rd_{u} \phi)^{2}, \\
2 \dvr \rd_{v} m = &  (1-\mu) r^{2} (\rd_{v} \phi)^{2},
\end{aligned}
\right.
\end{equation}
\begin{equation} \label{eq:SSESF:dphi}
\rd_{u} \rd_{v} (r \phi) = \frac{\mu \dvr \dur}{(1-\mu)r} \phi,
\end{equation}
and moreover, the following boundary conditions hold:
\begin{equation*}
	r = 0 \hbox{ and } m = 0 \hbox{ along } \Gmm.
\end{equation*}

We remark that using \eqref{eq:SSESF:dr}, the wave equation \eqref{eq:SSESF:dphi} for $\phi$ may be rewritten in either of the following two equivalent forms:
\begin{align} 
\rd_{u} (\rd_{v} (r \phi)) = (\rd_{u} \dvr) \phi, \label{eq:SSESF:dphi'} \tag{\ref{eq:SSESF:dphi}$'$} \\
\rd_{v} (\rd_{u} (r \phi)) = (\rd_{v} \dur) \phi. \label{eq:SSESF:dphi''} \tag{\ref{eq:SSESF:dphi}$''$}
\end{align}

\subsection{Choice of coordinates} \label{subsec:coordSys}
Note that $\PD$ is ruled by the family of null curves $C_{u}$. Since a null curve $C_{u}$ with $u \neq 1$ cannot intersect $C_{1}$, its past endpoint must be on $\Gmm$. Therefore, our assumptions so far impose the following conditions on the double null coordinates $(u,v)$ on $\PD$: $u$ is constant on each future-directed null curve emanating from $\Gmm$ and $v$ is constant on each conjugate null curve. However, these conditions are insufficient to give a unique choice of a coordinate system, as the system \eqref{eq:SSESF} and assumptions so far are invariant under the change of coordinates
\begin{equation*}
	u \mapsto U(u), \quad v \mapsto V(v), \quad U(1) = V(1) = 1,
\end{equation*}
for any strictly increasing functions $U$ and $V$. To remove this ambiguity, we fix the choice of the coordinate system, once and for all, as follows.

We first fix $v$ on $C_{1}$ relating it with the function $r$. Specifically, we will require that $v = 2r + 1$ on $C_1$, which in particular implies that 
\begin{equation} \label{eq:id4dvr}
\dvr(1, v) = \frac{1}{2}.
\end{equation}

Next, in order to fix $u$, we prescribe $u$ such that $\Gmm = \set{(u,v) : u = v}$. To do so, for every outgoing null curve $\uC$ in $\PD$, follow the incoming null curve to the past starting from $\uC\cap \Gmm$ until the point $p_*$ where it intersects the initial curve $C_1$. We then define the $u$-coordinate value for $\uC$ to be the $v$-coordinate value for $p_*$. 

Under such coordinate choice, $\calD(u_{0}, v_{0})$ may be expressed as
\begin{equation*}
	\calD(u_{0}, v_{0}) = \set{(u,v) \in \PD : u \in [u_{0}, v_{0}], v \in [u, v_{0}]}.
\end{equation*}

Moreover, if $r$ and $\phi$ are sufficiently regular functions on $\PD$, then our coordinate choice leads to $\lim_{v \to u+}(\dvr + \dur) (u,v) = \lim_{u \to v-}(\dvr + \dur) (u,v) = 0$ and $\lim_{v \to u+} (\rd_{v} + \rd_{u}) (r \phi)(u,v) = \lim_{u \to v-} (\rd_{v} + \rd_{u}) (r \phi)(u,v) = 0$. These conditions will be incorporated into precise formulations of solutions to \eqref{eq:SSESF} with limited regularity in the following subsection.

\subsection{Characteristic initial value problem}
In this paper, we study the characteristic initial value problem for \eqref{eq:SSESF} with data prescribed on $C_{1}$, under quite general assumptions on the regularity. In this subsection, we give precise formulations of initial data and solutions to \eqref{eq:SSESF} to be considered in this paper.

We begin with a discussion on the constraint imposed by \eqref{eq:SSESF} (more precisely, \eqref{eq:SSESF:dr}--\eqref{eq:SSESF:dphi}) on initial data for $(\phi, r, m)$. In fact, the constraint is very simple, thanks to the fact that they are prescribed on a characteristic (i.e., null) curve $C_{1}$. Once we prescribe $\phi$ on $C_{1}$, the coordinate condition \eqref{eq:id4dvr} dictates the initial values of $r$, and the initial values of $m$ are then determined by the constraint \eqref{eq:SSESF:dm} along the $v$ direction, as well as the boundary condition $m(1, 1) = 0$.  In other words, initial data for $(\phi, r, m)$ possess only one degree of freedom, namely the prescription of a single real-valued function $\phi(1, v)$, or equivalently, $\rd_{v} (r \phi)(1, v)$.

Following Christodoulou \cite{Christodoulou:1993bt}, we say that an initial data set for $(\phi, r, m)$ is of \emph{bounded variation} (BV) if $\rd_{v}(r \phi)(1, \cdot)$ is a (right-continuous) BV function on $[1, \infty)$ with finite total variation on $(1, \infty)$. We also define the notion of \emph{solution of bounded variation} to \eqref{eq:SSESF} as follows.

\begin{definition}[Bounded variation solutions to \eqref{eq:SSESF}] \label{def:BVsolution}
A solution $(\phi, r, m)$ to \eqref{eq:SSESF} is called a \emph{solution of bounded variation} on $\PD$ if on every compact domain of dependence $\calD(u_{0}, v_{0})$, the following conditions hold:
\begin{enumerate}
\item $\sup_{\calD(u_{0}, v_{0})} (-\dur) < \infty$ and $\sup_{\calD(u_{0}, v_{0})} \dvr^{-1} < \infty$.
\item $\dvr$ is BV on each $C_{u} \cap \calD(u_{0}, v_{0})$ uniformly in $u$, and $\dur$ is BV on each $\uC_{v} \cap \calD(u_{0}, v_{0})$ uniformly in $v$.
\item For each $a$ with $(a, a) \in \Gmm$,
\begin{equation*}
	\lim_{\eps \to 0+} (\dur + \dvr)(a, a+\eps) = 0.
\end{equation*}
\item $\phi$ is an absolutely continuous function on each $C_{u} \cap \calD(u_{0}, v_{0})$ with total variation bounded uniformly in $u$, and also an absolutely continuous function on each $\uC_{v} \cap \calD(u_{0}, v_{0})$ with total variation bounded uniformly in $v$.
\item For each $a$ with $(a, a) \in \Gmm$,
\begin{align*}
\lim_{\eps \to 0} \sup_{0 < \dlt \leq \eps} \mathrm{T.V.}_{\set{a-\dlt} \times (a-\dlt, a)} [\phi] =0, 
& \qquad \lim_{\eps \to 0} \sup_{0 < \dlt \leq \eps} \mathrm{T.V.}_{(a-\eps, a-\dlt) \times \set{a-\dlt}} [\phi] =0,  \\
\lim_{\eps \to 0} \sup_{0 < \dlt \leq \eps} \mathrm{T.V.}_{(a, a+\dlt) \times \set{a+\dlt}} [\phi] =0, 
& \qquad \lim_{\eps \to 0} \sup_{0 < \dlt \leq \eps} \mathrm{T.V.}_{\set{a+\dlt} \times (a+\dlt, a+\eps)} [\phi] =0.
\end{align*}
\item $\rd_{v}(r \phi)$ is BV on each $C_{u} \cap \calD(u_{0}, v_{0})$ uniformly in $u$, and $\rd_{u}(r \phi)$ is BV on each $\uC_{v} \cap \calD(u_{0}, v_{0})$ uniformly in $v$.
\item For each $a$ with $(a, a) \in \Gmm$,
\begin{equation*}
	\lim_{\eps \to 0+} \big( \rd_{v}(r \phi) + \rd_{u}(r \phi) \big) (a, a+\eps) = 0.
\end{equation*}
\end{enumerate}
\end{definition}

We also consider more regular data and solutions, as follows. We say that an initial data set for $(\phi, r, m)$ is $C^{1}$ if $\rd_{v}(r \phi)(1, \cdot)$ is $C^{1}$ on $[1, \infty)$ with $\sup_{C_{1}} \abs{\rd_{v}^{2}(r \phi)} < \infty$. In the following definition, we define the corresponding notion of a \emph{$C^{1}$ solution} to \eqref{eq:SSESF}.

\begin{definition}[$C^{1}$ solutions to \eqref{eq:SSESF}] \label{def:C1solution}
A solution $(\phi, r, m)$ to \eqref{eq:SSESF} is called a \emph{$C^{1}$ solution} on $\PD$ if the following conditions hold on every compact domain of dependence $\calD(u_{0}, v_{0})$:
\begin{enumerate}
\item $\sup_{\calD(u_{0}, v_{0})} (-\dur) < \infty$ and $\sup_{\calD(u_{0}, v_{0})} \dvr^{-1} < \infty$.
\item $\dvr$, $\dur$ are $C^{1}$ on $\calD(u_{0}, v_{0})$.
\item For each $a$ with $(a, a) \in \Gmm$,
\begin{equation*}
	\lim_{\eps \to 0+} (\dur + \dvr)(a, a+\eps) = \lim_{\eps \to 0+} (\dur + \dvr)(a-\eps, a) = 0.
\end{equation*}
\item $\rd_{v}(r \phi)$ and $\rd_{u} (r \phi)$ are $C^{1}$ on $\calD(u_{0}, v_{0})$.
\item For each $a$ with $(a, a) \in \Gmm$,
\begin{equation*}
	\lim_{\eps \to 0+} \big( \rd_{v}(r \phi) + \rd_{u}(r \phi) \big) (a, a+\eps) 
	= \lim_{\eps \to 0+} \big( \rd_{v} (r \phi) + \rd_{v}(r \phi) \big) (a - \eps, a)
	= 0.
\end{equation*}
\end{enumerate}
\end{definition}

\begin{remark} \label{rem:wp}
By \cite[Theorem 6.3]{Christodoulou:1993bt}, a BV initial data set leads to a unique BV solution to \eqref{eq:SSESF} on $\set{(u,v) : 1 \leq u \leq 1+\dlt, v \geq u}$ for some $\dlt > 0$.
If the initial data set is furthermore $C^{1}$, then it is not difficult to see that the corresponding solution is also $C^{1}$ (persistence of regularity). In fact, this statement follows from the arguments in Section \ref{sec.decay2} of this paper; see, in particular, the proof of Lemma \ref{lem:decay2:key4nullStr}.
\end{remark}

\subsection{Local scattering in BV and asymptotic flatness}
We are now ready to formulate the precise notion of \emph{locally BV scattering solutions} to \eqref{eq:SSESF}, which is the class of solutions that we consider in this paper. In particular, for this class of solutions, we make a priori assumptions on its global geometry. 

\begin{definition}[Local scattering in BV] \label{def:locBVScat}
We say that a BV solution $(\phi, r, m)$ to \eqref{eq:SSESF} is \emph{locally scattering in the bounded variation norm (BV)}, or a \emph{locally BV scattering solution}, if the following conditions hold:
\begin{enumerate}
\item \emph{Future completeness of radial null geodesics}: Every incoming null geodesic in $\PD$ has its future endpoint on $\Gmm$, and every outgoing null geodesic in $\PD$ is infinite towards the future  in the affine parameter. 
Moreover, there exists a global system of null coordinates $(u,v)$ and $\PD$ is given by
\begin{equation} \label{eq:globalCoords}
	\PD = \set{(u,v) : u \in [1, \infty), v \in [u, \infty)}.
\end{equation}

\item \emph{Vanishing final Bondi mass}: The final Bondi mass vanishes, i.e.,
\begin{equation} \label{eq:zeroMf}
M_{f} := \lim_{u \to \infty} M(u) = 0.
\end{equation}

\item \emph{Scattering in BV in a compact $r$-region}: There exists $R > 0$ such that for $\cpt$ defined to be the region $\set{(u,v) \in \PD : r(u,v) \leq R}$, we have 
\begin{equation} \label{eq:locBVScat}
	\int_{C_{u} \cap \cpt} \abs{\rd_{v}^{2} (r \phi)} \to 0, \quad
	\int_{C_{u} \cap \cpt} \abs{\rd_{v} \log \dvr} \to 0
\end{equation}
as $u \to \infty$.
\end{enumerate}
\end{definition}

Several remarks concerning Definition~\ref{def:locBVScat} are in order.
\begin{remark} 
In fact, the condition \eqref{eq:globalCoords} is a consequence of future completeness of radial null geodesics and the preceding assumptions. To see this, first recall our assumption that $C_{1} = \set{(u,v) : u = 1, v \in [1, \infty)}$. Hence from our choice of the coordinate $u$ and future completeness of incoming radial null geodesics, it follows that the range of $u$ must be $[1, \infty)$. Furthermore, for each $u \in [1, \infty)$, the range of $v$ on $C_{u}$ is $[u, \infty)$ by future completeness of outgoing radial null geodesics and Definition~\ref{def:BVsolution}. More precisely, future completeness of $C_{u}$ implies that it can be continued past $\set{u} \times [u, v_{0}]$ as long as $\int_{u}^{v_{0}} \Omg^{2} \, \ud v < \infty$, and Definition~\ref{def:BVsolution} implies\footnote{We refer to the proof of Proposition~\ref{prop:geomLocBVScat} below for details of estimating $\frac{-\dur}{1-\mu}$ in terms of assumptions on $\phi$, $\rd_{v}(r \phi)$ and $\dvr$.} that $\Omg^{2} = - \frac{4 \lmb \nu}{1-\mu}$ indeed remains bounded on $\set{u} \times [u, v_{0}]$ for every finite $v_{0}$.
\end{remark}

\begin{remark}  \label{rem:FCGC}
For more regular (e.g., $C^{1}$) asymptotically flat solutions, the conditions $(1)$ and $(2)$ in Definition \ref{def:locBVScat} may be replaced by a single equivalent condition, namely requiring the full spacetime $(\mfd, \met)$ to be \emph{future casually geodesically complete} as a (1+3)-dimensional Lorentzian manifold. In particular, (2) follows from the deep work \cite{Christodoulou:1987ta} of Christodoulou, in which it was proved that if $M_{f} > 0$ then the space-time necessarily contains a black-hole and thus is not future causally geodesically complete.
\end{remark}

\begin{remark}\label{rmk.unif.int}
As we will see in the proof, there exists a universal $\tilde{\ep}_0$ such that (3) in Definition \ref{def:locBVScat} can be replaced by the weaker requirement that there exists $R>0$ and $U>0$ such that
\begin{equation*}
	\int_{C_{u} \cap \cpt} \abs{\rd_{v}^{2} (r \phi)} \leq \tilde{\ep}_0, \quad
	\int_{C_{u} \cap \cpt} \abs{\rd_{v} \log \dvr} \leq \tilde{\ep}_0
\end{equation*}
for $u\geq U$. To simplify the exposition, we will omit the proof of this improvement.
\end{remark}

\begin{remark} 
For a sufficiently regular, asymptotically flat solution to \eqref{eq:SSESF}, Condition $(1)$ in Definition \ref{def:locBVScat} is equivalent to requiring that the conformal compactification of $\PD$ is depicted by a Penrose diagram as in Figure \ref{fig.disp} (in the introduction). For more discussion on Penrose diagrams, we refer the reader to \cite[Appendix C]{DR} and \cite{Kommemi}. In fact, this equivalence follows easily from the classification of all possible Penrose diagrams for the system \eqref{eq:SSESF} given in the latter reference.
\end{remark}

We also define the precise notion of \emph{asymptotic flatness} for initial data with BV or $C^{1}$ regularity. As we shall see soon in the main theorems, the rate of decay for the initial data, measured in $r$, is directly related to the rate of decay of the corresponding solution, in both $u$ and $r$.

\begin{definition}[Asymptotic flatness of order $\omg'$ in BV or $C^{1}$] \label{def:AF}
For $\omg' >1$, we make the following definition.

\begin{enumerate}
\item We say that an initial data set is \emph{asymptotically flat of order $\omg'$ in BV} if $\rd_{v} (r \phi)(1, \cdot) \in \mathrm{BV}[1, \infty)$ and there exists $\calI_{1} > 0$ such that 
\begin{equation}
	\sup_{C_{1}} (1+r)^{\omg'} \abs{\rd_{v}(r \phi)} \leq \calI_{1} < \infty.
\end{equation}

\item We say that an initial data set is \emph{asymptotically flat of order $\omg'$ in $C^{1}$} if $\rd_{v} (r \phi)(1, \cdot) \in C^{1}[1, \infty)$ and there exist $\calI_{2} > 0$ such that 
\begin{equation}
	\sup_{C_{1}} (1+r)^{\omg'} \abs{\rd_{v}(r \phi)} + \sup_{C_{1}} (1+r)^{\omg'+1} \abs{\rd_{v}^{2}(r \phi)} \leq \calI_{2} < \infty.
\end{equation}
\end{enumerate}
\end{definition}

\begin{remark} 
The initial Bondi mass $M_{i} := \lim_{v \to \infty} m(1, v)$ can be easily bounded by $\leq C \calI_{1}^{2}$; see Lemma \ref{lem:bnd4Mi}. 
\end{remark}

\begin{remark} \label{rem:PhiIsZero}
Observe that both conditions imply that $(r \phi)(1,v)$ tends to a finite limit as $v \to \infty$; in particular, $\lim_{v \to \infty} \phi(1, v) = 0.$
This serves to fix the gauge freedom $(\phi, r, m) \mapsto (\phi + c, r, m)$ for solutions to \eqref{eq:SSESF}.
\end{remark}

\section{Main results}\label{sec.main.thm}
\subsection{Main theorems}
With the definitions of locally BV scattering solutions and asymptotically flat initial data, we now have the necessary means to state the main theorems of this paper. Roughly speaking, these theorems say that locally BV scattering solutions with asymptotically flat initial data exhibits quantitative decay rates, which can be read off from the rate $\omg'$ in Definition \ref{def:AF}. The first theorem is for initial data and solutions in BV.

\begin{theorem}[Main Theorem in BV] \label{main.thm.1}
Let $(\phi, r, m)$ be a locally BV scattering solution to \eqref{eq:SSESF} with asymptotically flat initial data of order $\omg'$ in BV. Then for $\omg := \min \set{\omg', 3}$, there exists a constant $A_{1} > 0$ such that 
\begin{align} 
	\abs{\phi} \leq & A_{1} \min \set{u^{-\omg}, r^{-1} u^{-(\omg-1)}}, \label{eq:decay1:1} \\
	\abs{\rd_{v}(r \phi)} \leq & A_{1} \min \set{u^{-\omg}, r^{-\omg}}, \label{eq:decay1:2} \\
	\abs{\rd_{u} (r \phi)} \leq & A_{1} u^{-\omg}. \label{eq:decay1:3}
\end{align}
\end{theorem}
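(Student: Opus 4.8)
The plan is to establish \eqref{eq:decay1:1}--\eqref{eq:decay1:3} through a continuity/bootstrap argument for the triple $\big(\phi,\rd_{v}(r\phi),\rd_{u}(r\phi)\big)$, drawing the decay from two inputs: the asymptotic flatness of the data on $C_{1}$, which supplies the $(1+r)^{-\omg'}$ falloff of $\rd_{v}(r\phi)$ there, and the smallness and quantitative decay of the Hawking mass, which is forced for large $u$ by local BV scattering. Throughout one works with the system in the form \eqref{eq:SSESF:dr}--\eqref{eq:SSESF:dphi} and with the renormalized wave equations \eqref{eq:SSESF:dphi'}, \eqref{eq:SSESF:dphi''}.

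First I would record the geometric preliminaries. From the monotonicity of $m$ (non-increasing in $u$, non-decreasing in $v$) one has $0\le\mu<1$, $m\le M_{i}\aleq\calI_{1}^{2}$, and, on compact domains of dependence, $0<\dvr\le\frac12$ and $\sup\big(\frac{-\dur}{1-\mu}\big)<\infty$. Combining the vanishing of the final Bondi mass with the compact-region BV smallness \eqref{eq:locBVScat}, one then shows that for $u$ large the quotient geometry is nearly trivial inside $\cpt$: $\dvr$ is nearly constant and $\mu\to0$ uniformly there. A dyadic pigeonholing in $u$ (in the spirit of Dafermos--Rodnianski) upgrades these qualitative statements to a first, non-sharp decay $|r\phi|+|\rd_{u}(r\phi)|+|\rd_{v}(r\phi)|\aleq u^{-\dlt_{0}}$ on $\cpt$ for some $\dlt_{0}>0$, which seeds the bootstrap.

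The heart of the argument is then an induction that improves the decay exponent up to the sharp value $\omg=\min\set{\omg',3}$, gaining (at most) one power of $u$ at each stage, proving at each stage the estimates on $\calD(U,V)$ with a constant independent of $(U,V)$ by a continuity argument. Given the estimates with exponent $k<\omg$, the step to exponent $k+1$ (capped at $\omg$) proceeds by the following chain, a hands-on pointwise analogue of an $r^{p}$-weighted hierarchy:
\begin{enumerate}
\item[(i)] integrate the constraint $2\dvr\rd_{v}m=(1-\mu)r^{2}(\rd_{v}\phi)^{2}$ in $v$, using $r\rd_{v}\phi=\rd_{v}(r\phi)-\dvr\phi$ and the inductive bounds, to obtain quantitative decay of $M(u)$ and of $m$ in any fixed $r$-region;
\item[(ii)] integrate \eqref{eq:SSESF:dphi'}, $\rd_{u}\big(\rd_{v}(r\phi)\big)=(\rd_{u}\dvr)\phi$, along incoming curves $\uC_{v}$ from $C_{1}$, using $\rd_{u}\dvr=\frac{\mu\dvr\dur}{(1-\mu)r}\le0$ and the mass decay from (i) to see that the accumulated error is of higher order, thereby propagating the $(1+r)^{-\omg}$ falloff and the improved $u$-decay of $\rd_{v}(r\phi)$; a mirror-image argument with \eqref{eq:SSESF:dphi''} and the axis identity $\rd_{u}(r\phi)|_{\Gmm}=-\rd_{v}(r\phi)|_{\Gmm}$ handles $\rd_{u}(r\phi)$;
\item[(iii)] integrate $\rd_{v}(r\phi)$ in $v$ from $\Gmm$, where $r\phi=0$, changing variables $\ud v\leftrightarrow\ud r/\dvr$, to recover $\phi$ with the bound $\min\set{u^{-\omg},r^{-1}u^{-(\omg-1)}}$, the $\min$ structure emerging from splitting the $r$-integral at $r\sim u$;
\item[(iv)] feed the improved $\phi$-bound back into (ii) to complete the inductive step, and, once $k+1=\omg$, to close the bootstrap; a final continuity argument in $(U,V)$ gives the estimates on all of $\PD$.
\end{enumerate}

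The main obstacle is the bookkeeping across the three regimes --- a neighborhood of $\Gmm$, the intermediate region $r\sim u$, and a neighborhood of $\calI^{+}$ --- which must be handled simultaneously so that the gains compound and the coupled system closes exactly at $\omg=\min\set{\omg',3}$; the ceiling at $3$ is the Price-law phenomenon, reflecting that the backscattering term $\frac{\mu\dvr\dur}{(1-\mu)r}\phi$ in \eqref{eq:SSESF:dphi} generates a $u^{-3}$ tail that this mechanism cannot beat regardless of how fast the data decay. A pervasive secondary difficulty is that the solution is only of bounded variation, so every transport estimate, application of Fubini, and change of variables above must be justified at that regularity; here the structure of \eqref{eq:SSESF:dr}--\eqref{eq:SSESF:dphi} and the a priori bounds in Definition~\ref{def:BVsolution} are used essentially.
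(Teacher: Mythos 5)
Your proposal is broadly in the spirit of the paper --- exterior estimate from the data's $r^{-\omg'}$ falloff, inward propagation via the wave equation \eqref{eq:SSESF:dphi'}, $\phi$ recovered from $\rd_v(r\phi)$ by integration from $\Gmm$, separate treatment of $\rd_u(r\phi)$ via the axis reflection --- but the closing mechanism you propose is genuinely different, and its seed step does not appear to be justified in this setting.

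The paper does \emph{not} iterate on the decay exponent. It introduces a single bootstrap quantity
\begin{equation*}
\calB_{1}(U)=\sup_{u\in[1,U]}\sup_{C_{u}}\bb(u^{\om}\abs{\phi}+ru^{\om-1}\abs{\phi}\bb),
\end{equation*}
which is finite for each fixed $U$ purely by the qualitative $(1+r)\abs{\phi}\le H_{1}$ bound of Lemma~\ref{lem:decay1:cptu:0} and continuity --- no a priori power decay is needed. It then proves a single near-identity $\calB_{1}(U)\le C(u_{1})(\calI_{1}+H_{1})+\eps(u_{1})\calB_{1}(U)$, with the prefactor of $\calB_1(U)$ made $<1/2$ by choosing $u_1$ large, where the smallness $\eps(u_{1})$ is just an $o(1)$ statement from Lemma~\ref{lem:smallPtnl} (the potential integral $\int\abs{\tfrac{2m\dur}{(1-\mu)r^{2}}}$ on $\uC_{v}\cap\{u\ge u_{1}\}$ tends to zero). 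Absorbing the last term, one obtains a $U$-independent bound on $\calB_{1}(U)$ in one shot, and then Lemma~\ref{lem:decay1:extr} together with Lemma~\ref{lem:intEst4phi} and Lemma~\ref{lem:decay1:uDecay4durphi} finishes.

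Your route instead requires (a) a dyadic pigeonholing to produce a seed decay $u^{-\dlt_{0}}$, followed by (b) an induction that upgrades $u^{-k}$ to $u^{-(k+1)}$. Step (a) is where I see a real gap: the Dafermos--Rodnianski pigeonholing exploits that the energy flux is \emph{subcritical} in the black-hole setting because the event horizon sits at $r=r_{+}>0$. Here, in the dispersive case, the Hawking mass and its flux are \emph{supercritical} near the axis (this is precisely the obstruction that the paper circumvents by assuming local BV scattering of a \emph{critical} quantity); it is not clear what nonnegative flux you would pigeonhole, nor how a dyadic energy gain would translate into a pointwise $u^{-\dlt_{0}}$ bound on $\rd_{v}(r\phi)$ near $\Gmm$. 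Step (b) also needs more: to gain a full power, you need the potential integral over $[u/3,u]$ along $\uC_{v}$ to decay like $u^{-1}$, whereas the a priori assumptions only deliver $o(1)$; deriving a power law from the inductive mass decay $m\aleq u^{-(2k-1)}$ requires a nontrivial argument near the axis where $r$ degenerates, and you would have to explain why this closes rather than just loses a logarithm. The paper's absorption argument sidesteps both issues --- you may want to compare with \S\ref{sec.full.decay.1} where the details are spelled out.
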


The second theorem is for initial data and solutions in $C^{1}$. 

\begin{theorem}[Main Theorem in $C^{1}$] \label{main.thm.2}
Let $(\phi, r, m)$ be a locally BV scattering solution to \eqref{eq:SSESF} with asymptotically flat initial data of order $\omg'$ in $C^{1}$. Then, in addition to the bounds \eqref{eq:decay1:1}-\eqref{eq:decay1:3}, there exists a constant $A_{2} > 0$ such that 
\begin{align} 
	\abs{\rd_{v}^{2} (r \phi)} \leq & A_{2} \min \set{u^{-(\omg+1)}, r^{-(\omg+1)}}, \label{eq:decay2:1} \\
	\abs{\rd_{u}^{2} (r \phi)} \leq & A_{2} u^{-(\omg+1)}, \label{eq:decay2:2} \\
	\abs{\rd_{v} \dvr} \leq & A_{2} \min \set{u^{-3}, r^{-3}}, \label{eq:decay2:3} \\
	\abs{\rd_{u} \dur} \leq & A_{2} u^{-3}. \label{eq:decay2:4}
\end{align}
for $\omg := \min \set{\omg', 3}$.
\end{theorem}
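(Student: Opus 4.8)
The plan is to bootstrap off the first-order decay estimates of Theorem \ref{main.thm.1}, differentiating the evolution equations \eqref{eq:SSESF:dr} and \eqref{eq:SSESF:dphi} one more time and integrating along characteristics. First I would observe that the quantities to be controlled, namely $\rd_v^2(r\phi)$, $\rd_u^2(r\phi)$, $\rd_v\dvr$ and $\rd_u\dur$, satisfy transport equations of the schematic form $\rd_u(\rd_v^2(r\phi)) = (\text{source})$ and $\rd_v(\rd_v\dvr) = (\text{source})$, where the sources are built out of the metric coefficients $r$, $\mu$, $\dvr$, $\dur$ and the first-order quantities $\phi$, $\rd_v(r\phi)$, $\rd_u(r\phi)$ already estimated in Theorem \ref{main.thm.1}, together with exactly one factor of a second-order quantity. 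This near-linear structure should permit a Gronwall/continuity argument. The two genuinely new inputs are (i) good control of the null structure $\dvr$, $-\dur$, $\mu$, $1-\mu$ and their logarithmic derivatives on the whole spacetime — this is where the local BV scattering hypothesis and the vanishing final Bondi mass enter, presumably packaged in a lemma like the referenced \texttt{lem:decay2:key4nullStr} — and (ii) the propagation of the $C^1$ initial bounds of Definition \ref{def:AF}(2) from $C_1$ into $\PD$.

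The key steps, in order, would be: (1) establish sharp pointwise bounds and decay for the geometric quantities $\dvr \sim 1$, $-\dur \sim 1$, $\mu \lesssim \min\{u^{-2}, r^{-1}u^{-1}\}$-type bounds, and integrated bounds for $\abs{\rd_v\log\dvr}$, $\abs{\rd_u\log(-\dur)}$ on the far region $r \geq R$, combining the compact-region hypothesis \eqref{eq:locBVScat} with integration of \eqref{eq:SSESF:dr}--\eqref{eq:SSESF:dm} and the Theorem \ref{main.thm.1} decay of $\phi$; here the monotonicity of $m$ (Lemma \ref{lem:mntn4m}) and $M_f = 0$ are essential to close the estimate globally in $u$. (2) Differentiate \eqref{eq:SSESF:dphi'} in $v$ to get $\rd_u(\rd_v^2(r\phi)) = (\rd_u\rd_v\dvr)\phi + (\rd_u\dvr)\rd_v\phi$, rewrite $\rd_v\phi = r^{-1}(\rd_v(r\phi) - \dvr\,\phi)$, and express $\rd_u\dvr$ via \eqref{eq:SSESF:dr}; integrate in $u$ from the axis $\Gmm$ (using the boundary conditions in Definition \ref{def:C1solution}(3),(5)) or from $C_1$, splitting into the regions $r \leq R$ and $r \geq R$ and using the $\min\{u^{-\omg}, r^{-\omg}\}$ behavior of $\rd_v(r\phi)$ to extract the sharp $\min\{u^{-(\omg+1)}, r^{-(\omg+1)}\}$ rate. (3) Symmetrically integrate the $u$-equation for $\rd_u^2(r\phi)$ in $v$; the $r$-weight is lost here (only $u^{-(\omg+1)}$ is claimed) because on an outgoing cone $r$ grows, so the integration is anchored at $\Gmm$ and one uses the $u^{-\omg}$ bound on $\rd_u(r\phi)$. (4) Finally, differentiate \eqref{eq:SSESF:dr} to obtain the equations for $\rd_v\dvr$ and $\rd_u\dur$; their sources involve $\rd_v\mu$, $\rd_u\mu$, which by \eqref{eq:SSESF:dm} and \eqref{mdef} are quadratic in the now-controlled first derivatives of $r\phi$, yielding the $r^{-3}$, $u^{-3}$ rates (the cap $\omg \leq 3$ is exactly the threshold at which the $\phi$-driven source no longer dominates the geometric self-interaction). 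A continuity/bootstrap argument on $u$-slabs, using Remark \ref{rem:wp} for the starting point and the uniform-in-$u$ smallness \eqref{eq:locBVScat} (cf. Remark \ref{rmk.unif.int}) to absorb the linear-in-second-derivative terms, closes everything.

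The main obstacle I anticipate is Step (1): getting the sharp, globally-in-$u$ control of the null geometry — in particular showing $\dvr$ stays comparable to a constant and $\mu \to 0$ at the right quantitative rate uniformly as $u \to \infty$ — without any smallness of the data. This requires genuinely exploiting the interplay between the a priori local BV scattering assumption in $\{r \leq R\}$, the monotonicity of the Hawking mass, and the vanishing of the final Bondi mass to propagate smallness of $\int \abs{\rd_v\log\dvr}$ from the compact region out to null infinity. A secondary difficulty is bookkeeping the two-parameter weights $\min\{u^{-a}, r^{-a}\}$ consistently through the characteristic integrations so that the sharp rates (rather than lossy ones) emerge, especially near the axis $\Gmm$ where $r \to 0$ and near $\calI^+$ where $r \to \infty$; the boundary conditions in Definition \ref{def:C1solution} at $\Gmm$ must be used carefully to avoid spurious singular contributions.
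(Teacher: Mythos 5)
Your overall skeleton — differentiate the evolution equations, integrate along characteristics, split $\PD$ into exterior $\set{v \geq 3u}$ and interior $\set{v \leq 3u}$, exploit local BV scattering plus $M_f = 0$, and close with a bootstrap in $u$-slabs — matches the paper's strategy, and your observation about the $\omg \leq 3$ threshold is correct. However, there is a genuine gap at the heart of the argument: you propose to integrate the \emph{naively} differentiated equation $\rd_u(\rd_v^2(r\phi)) = (\rd_u\rd_v\dvr)\phi + (\rd_u\dvr)\rd_v\phi$, and you assert that the sources contain ``exactly one factor of a second-order quantity'' times already-controlled lower-order terms, so Gronwall should close. This is precisely where the argument fails. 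Expanding $\rd_u\rd_v\dvr$ via \eqref{eq:SSESF:dr} produces, among others, the term $\frac{\dur}{1-\mu}(\rd_v\phi)^2\phi$ (see \eqref{eq:eq4dvdvrphi:normal}). When you integrate this in $u$ across the compact region $\set{r \leq R}$ near the axis, you must control $\int (\rd_v\phi)^2\,\ud u$ along an incoming cone. The local BV scattering hypothesis gives smallness of $\int_{C_u}\abs{\rd_v\phi}\,\ud v$ and (via Theorem~\ref{thm:decayInCpt}) of $\int_{\uC_v}\abs{\rd_u\phi}\,\ud u$, but it says nothing about $\int_{\uC_v}\abs{\rd_v\phi}\,\ud u$. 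One can bound $\sup\abs{\rd_v\phi}$ by the bootstrap quantity $\calB_2$, but then the remaining $\rd_v\phi$ factor would have to be integrated in $\ud u$, which is not available. Neither factor is controllable in the direction of integration, so the Gronwall argument does not close: the equation lacks a null structure.

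The paper's fix is the renormalization introduced in \S\ref{subsec:nullStr}: one replaces the raw quantities by
\begin{align*}
	F_{1} &:= \rd_{v}^{2} (r \phi) - (\rd_{v} \dvr) \phi, \\
	F_{2} &:= \rd_{v} \log \dvr - \tfrac{\dvr}{1-\mu}\tfrac{\mu}{r} + \rd_{v}\phi\bb(\dvr^{-1}\rd_v(r\phi) - \dur^{-1}\rd_u(r\phi)\bb),
\end{align*}
and similarly $G_1, G_2$. Their transport equations \eqref{eq:eq4dvdvrphi}, \eqref{eq:eq4dudurphi}, \eqref{eq:eq4dvdvr}, \eqref{eq:eq4dudur} have right-hand sides of the schematic form $\rd_u f\,\rd_v g - \rd_v f\,\rd_u g$, so that in each product exactly one factor lies in the ``good'' $L^1$ direction given by BV scattering (or by the smallness of $\int\frac{2m\dur}{(1-\mu)r^2}$ from Lemma~\ref{lem:smallPtnl}) while the other is controlled pointwise by the bootstrap. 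This is what Lemma~\ref{lem:decay2:key4nullStr} actually does — it is not, as you suggest, a geometric estimate on $\dvr$, $\dur$, $\mu$ (those are already in Proposition~\ref{prop:geomLocBVScat} and Corollary~\ref{cor:decay1}), but the place where the null structure is exploited. Only after the preliminary $u^{-\omg}$ decay of all four second-derivative quantities is obtained in the renormalized variables (Proposition~\ref{prop:decay2:nullStr}) does the paper switch back to the non-renormalized equations (Proposition~\ref{prop:decay2:final}) to sharpen to the final $u^{-(\omg+1)}$ and $u^{-3}$ rates — at that stage the dangerous quadratic terms are harmless because they are already known to decay. Two further points: the renormalization is also unavoidable for the $\rd_v\dvr$ equation, whose naive source \eqref{eq:eq4dvdvr:normal} has the same defect; and your steps (2)--(4) cannot in fact be carried out sequentially, because the $F_2$ and $G_2$ equations couple $\rd_v\dvr$ to $\rd_u^2(r\phi)$ and vice versa, forcing a simultaneous bootstrap on all four quantities via the functional $\calB_2(U)$ as in \eqref{eq:decay2:def4B2}.
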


Some remarks regarding the main theorems are in order.

\begin{remark}
Notice that in Theorem \ref{main.thm.2}, the decay rates for $\rd_v\lambda$ and $\rd_u\nu$ are independent of the order $\om'$ of asymptotic flatness of the initial data. This is because the scalar field terms enter the equations for $\rd_u\rd_v\log\lambda$ and $\rd_v\rd_u\log\nu$ quadratically (see equations \eqref{eq:eq4dvdvr:normal} and \eqref{eq:eq4dudur:normal}) and thus as long as $\om'>1$, their contributions to the decay rates of $\rd_v\lambda$ and $\rd_u\nu$ are lower order compared to the term involving the Hawking mass.
\end{remark}

\begin{remark}
By Remark \ref{rem:wp}, a $C^{1}$ initial data set gives rise to a $C^{1}$ solution. Hence Remark \ref{rem:FCGC} applies, and the conditions (1)--(2) of Definition \ref{def:locBVScat} may be replaced by a single equivalent condition of \emph{future causal geodesic completeness} of $(\mfd, \bfg)$ in the case of Theorem \ref{main.thm.2}. 
\end{remark}

\begin{remark}
In general, the constants $A_1$ and $A_2$ depend not only on the size of the initial data (e.g., $\calI_{1}$, $\calI_{2}$), but rather on the full profile of the solution. Nevertheless, for the special case of small initial total variation of $\rd_{v}(r \phi)$, $A_{1}$ and $A_{2}$ \emph{do} depend only on the size of the initial data; see \S \ref{sec:mainThm:smallData}.
\end{remark}

\begin{remark}
If the initial data also verify higher derivative estimates, then the techniques in proving Theorems \ref{main.thm.1} and \ref{main.thm.2} also allow us to derive decay bounds for higher order derivatives. The proof of the higher derivative decay estimates is in fact easier than the proofs of the first and second derivative decay bounds since we have already obtained sufficiently strong control of the scalar field and the geometry of the spacetime. We will omit the details.
\end{remark}

\begin{remark}
The decay rates that we obtain in these variables imply as immediate corollaries decay rates for $\rd_v \phi$, $\rd_u \phi$, etc. See Corollaries \ref{cor:decay1} and \ref{cor:decay2}.
\end{remark}

\begin{remark} \label{rem:coord}
The decay rates in the main theorems are measured with respect to the double null coordinates $(u,v)$ normalized at the initial curve and the axis $\Gmm$ as in \S \ref{subsec:coordSys}. To measure the decay rate along null infinity, one can alternatively normalize the $u$ coordinate\footnote{In particular, this normalization is used in \cite{DR} for the black hole case. By changing the null coordinates, we can thus more easily compare the decay rates in our setting and that in \cite{DR}.} by requiring $\rd_{u} r=-\frac 12$ at future null infinity. As we will show in \S \ref{sec.coord}, for the class of spacetimes considered in this paper, the decay rates with respect to this new system of null coordinates are the same up to a constant multiplicative factor.
\end{remark}

\begin{remark}
In view of Remark \ref{rmk.unif.int}, the assumption of local BV scattering can be replaced by the \emph{boundedness} of the \emph{subcritical} quantities
$$\int_{C_{u} \cap \cpt} \abs{\rd_{v}^{2} (r \phi)}^p \leq C, \quad
	\int_{C_{u} \cap \cpt} \abs{\rd_{v} \log \dvr}^p \leq C,\quad\mbox{for }p>1.$$
This is because for fixed $\tilde{\ep}_0$, one can choose $R$ to be sufficiently small (depending on $C$) and apply H\"older's inequality to ensure	that
$$\int_{C_{u} \cap \cpt} \abs{\rd_{v}^{2} (r \phi)} \leq \tilde{\ep}_0, \quad
	\int_{C_{u} \cap \cpt} \abs{\rd_{v} \log \dvr} \leq \tilde{\ep}_0.$$
\end{remark}

\begin{remark}
We also notice that the proof of our main theorem can be carried out in an identical manner for locally BV scattering solutions arising from asymptotically flat \emph{Cauchy data}. More precisely, we can consider initial data given on a Cauchy hypersurface
$$v=f(u),\quad\mbox{with }C^{-1}\leq -f'(u)\leq C$$
and satisfying the constraint equation together with the following bounds on the initial data:
$$(1+r)^{\om'}(|\phi|+|\rd_v(r\phi)|+|\rd_u(r\phi)|+|\lambda-\frac 12|+|\nu+\frac 12|)\leq \tilde{\mathcal I}_1,$$
and
$$(1+r)^{\om'+1}(|\rd_v^2(r\phi)|+|\rd_u^2(r\phi)|+|\rd_v \log\lambda|+|\rd_u\log\nu|)\leq \tilde{\mathcal I}_2.$$ 
Then, if we assume in addition that the spacetime is locally BV scattering to the future, the conclusions of Theorems \ref{main.thm.1} and \ref{main.thm.2} hold.
\end{remark}

\begin{remark}
Our main theorems can also be viewed as results on upgrading qualitative decay to quantitative decay estimates. Such problems have been widely studied in the \emph{linear} setting (without the assumption on spherical symmetry) on nontrapping asymptotically flat Lorentzian manifolds \cite{DR2}, \cite{Ta}, \cite{MTT}, as well as for the obstacle problem on Minkowski space \cite{Morawetz}, \cite{Strauss}. In the \emph{nonlinear} setting, we mention the work of Christodoulou-Tahvildar--Zadeh \cite{CTZ}, who studied the energy critical 2-dimensional spherically symmetric wave map system and proved asymptotic decay for the solution and its derivatives.
\end{remark}

\subsection{BV scattering and the blow-up at infinity scenerio}

The condition of local BV scattering in the main theorems follows if one rules out, a priori, a blow-up at infinity scenario. More precisely, we say that a solution blows up at infinity if some scale-invariant spacetime norms are infinite as follows:

\begin{definition}\label{def.blow.up.infty}
Let $(\phi, r, m)$ be a BV solution to \eqref{eq:SSESF} such that the condition $(1)$ of Definition \ref{def:locBVScat} (future completeness of radial null geodesics) holds. We say that the solution \emph{blows up at infinity} if at least one of the following holds:
\begin{enumerate}
\item $\displaystyle{\sup \lambda_{\Gmm}^{-1} = \infty}$, where $\displaystyle{\lmb_{\Gmm}(u) := \lim_{v \to u+} \dvr(u,v)}$.
\item $\displaystyle{\int_1^{\infty}\int_u^{\infty}|\rd_v\lambda\rd_u\phi-\rd_u\lambda\rd_v\phi| \ud v \ud u =\infty}$,
\item $\displaystyle{\int_1^{\infty}\int_u^{\infty}|\rd_u\phi\rd_v(\nu^{-1}\rd_u(r\phi))-\rd_v\phi\rd_u(\nu^{-1}\rd_u(r\phi))| \ud v \ud u =\infty}$.
\end{enumerate}
\end{definition} 

\begin{remark}
We do not prove in the paper the existence or non-existence of solutions that blow up at infinity. We remark that this is analogous to the blow-up at infinity scenarios which have recently been constructed in some simpler \emph{semilinear}, \emph{critical} wave equations \cite{DK}.
\end{remark}

It follows from our main theorem that if a solution does not blow up at infinity, it obeys quantitative decay estimates. More precisely, we have
\begin{theorem}[Dichotomy between blow-up at infinity and BV scattering] \label{thm.dichotomy}
Let $(\phi, r, m)$ be a BV solution to \eqref{eq:SSESF} such that the condition $(1)$ of Definition \ref{def:locBVScat} (future completeness of radial null geodesics) holds. Assume furthermore that the initial data for $(\phi, r, m)$ obey the condition\footnote{By Remark \ref{rem:PhiIsZero}, note that this is the only condition on $\lim_{v \to \infty} \phi(1, v)$ which is consistent with asymptotic flatness.} $\lim_{v \to \infty} \phi(1,v) = 0$ and
\begin{equation} \label{dichotomy.hyp}
\int_{C_{1}} \abs{\rd_{v}^{2} (r \phi)} \, \ud v + \sup_{C_{1}} \abs{\rd_{v}(r \phi)} < \infty.
\end{equation}

Then either
\begin{enumerate}
\item the solution blows up at infinity; or
\item the solution is globally BV scattering, in the sense that the conditions $(2)$ and $(3)$ of Definition \ref{def:locBVScat} hold with $R =\infty$.
\end{enumerate}
\end{theorem}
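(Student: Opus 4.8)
I would prove the contrapositive: assuming the solution does \emph{not} blow up at infinity --- i.e.\ $\sup_{\Gmm}\lmb_{\Gmm}^{-1}<\infty$ and the two spacetime integrals in parts (2)--(3) of Definition~\ref{def.blow.up.infty} are finite --- I will show it is globally BV scattering, i.e.\ that conditions (2)--(3) of Definition~\ref{def:locBVScat} hold with $R=\infty$. Condition (2), the vanishing of the final Bondi mass, follows from the decay of $\phi$ and $\rd_{v}(r\phi)$ obtained below (or, more directly, from the standing hypothesis (1) and Christodoulou's theorem \cite{Christodoulou:1987ta}: a positive final Bondi mass would produce a black hole region whose outgoing radial null geodesics are future-incomplete, contradicting (1)). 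So the heart of the matter is condition (3) with $R=\infty$, namely
\[
	\int_{C_{u}\cap\PD}\abs{\rd_{v}^{2}(r\phi)}\to 0,\qquad \int_{C_{u}\cap\PD}\abs{\rd_{v}\log\dvr}\to 0\qquad (u\to\infty).
\]

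The plan is to re-run the decay estimates of Sections~\ref{sec.decay1}--\ref{sec.decay2} with the non-blow-up hypothesis substituted for local BV scattering. As a preliminary step I would extract a full set of global a priori bounds: from (1) (so that $\PD=\set{(u,v):u\geq 1,\,v\geq u}$), the monotonicity of $m$ (Lemma~\ref{lem:mntn4m}), the normalization \eqref{eq:id4dvr}, the bound $\sup_{\Gmm}\lmb_{\Gmm}^{-1}<\infty$, and the finiteness of the two spacetime integrals, one obtains --- by the arguments of Sections~\ref{sec.anal.prop}--\ref{sec.geom} and Proposition~\ref{prop:geomLocBVScat}, but now fed by these inputs rather than by a smallness assumption --- uniform bounds over $\PD$ for $\dvr$, $-\dur$, $(1-\mu)^{-1}$ and $m$ (in particular $M_{i}<\infty$), a uniform lower bound for $\dvr$, and finiteness of the basic energy and Hawking-mass fluxes. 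Since \eqref{dichotomy.hyp} carries no asymptotic-flatness rate, these estimates give only $o(1)$ (rather than power-law) decay, but that is exactly what condition (3) requires; carrying them out in $\cpt$ shows the solution is at least locally BV scattering and, as in the proof of Theorem~\ref{main.thm.1}, that $m$, $\phi$ and $\rd_{v}(r\phi)$ obey $\sup_{C_{u}}$-bounds that tend to $0$ as $u\to\infty$.

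The core is a pair of transport estimates along the incoming rays $\uC_{v}$. Differentiating \eqref{eq:SSESF:dr} in $v$ (equation \eqref{eq:eq4dvdvr:normal}) gives the $u$-evolution of $\rd_{v}\log\dvr$, and commuting \eqref{eq:SSESF:dphi'}--\eqref{eq:SSESF:dphi''} gives the $u$-evolution of a renormalization of $\rd_{v}^{2}(r\phi)$ built from $\dur^{-1}\rd_{u}(r\phi)$ --- which is precisely why Definition~\ref{def.blow.up.infty} is phrased in terms of $\log\dvr$ and $\dur^{-1}\rd_{u}(r\phi)$. Up to the geometric factors controlled in the preliminary step and up to $\sup_{\Gmm}\lmb_{\Gmm}^{-1}$, the source terms of these transport equations are dominated by the two blow-up integrands $\abs{\rd_{v}\dvr\,\rd_{u}\phi-\rd_{u}\dvr\,\rd_{v}\phi}$ and $\abs{\rd_{u}\phi\,\rd_{v}(\dur^{-1}\rd_{u}(r\phi))-\rd_{v}\phi\,\rd_{u}(\dur^{-1}\rd_{u}(r\phi))}$ together with terms quadratic in the first derivatives of $\phi$ of mass-flux type; the latter, using the $o(1)$ decay just established, have finite integral over all of $\PD$. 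Since $\rd_{v}\log\dvr\equiv 0$ on $C_{1}$ by \eqref{eq:id4dvr}, while $\int_{C_{1}}\abs{\rd_{v}^{2}(r\phi)}<\infty$ and $\rd_{v}(r\phi)(1,v)\to 0$ as $v\to\infty$ (by \eqref{dichotomy.hyp} and $\lim_{v}\phi(1,v)=0$), integrating these transport equations in $u$ from $C_{1}$ --- the incoming ray through each point of $C_{u}$ reaches $C_{1}$ --- gives
\[
	\int_{C_{u}\cap\PD}\abs{\rd_{v}\log\dvr}\leq\iint_{\PD\cap\set{v\geq u}}\abs{\rd_{u}\rd_{v}\log\dvr},\qquad
	\int_{C_{u}\cap\PD}\abs{\rd_{v}^{2}(r\phi)}\leq\int_{C_{1}\cap\set{v\geq u}}\abs{\rd_{v}^{2}(r\phi)}+\iint_{\PD\cap\set{v\geq u}}(\cdots).
\]
Because the full integrals over $\PD$ are finite and the sets $\PD\cap\set{v\geq u}$ decrease to the empty set as $u\to\infty$, dominated convergence forces both left-hand sides to $0$, which is condition (3) with $R=\infty$.

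The main obstacle is the structural bookkeeping of the last two paragraphs: pinning down precisely where local BV scattering enters the proofs of Theorems~\ref{main.thm.1}--\ref{main.thm.2}, verifying that the non-blow-up hypothesis --- a \emph{spacetime-integral} condition rather than a decay condition on characteristic segments --- can be substituted there, and, after choosing the correct renormalized variables, checking that every source term other than the two designated null forms acquires a finite integral over all of $\PD$. This last point requires running the $o(1)$-decay bootstrap globally (delicate near $\Gmm$ and near $\calI^{+}$ simultaneously, where the mass-flux terms must be shown integrable), and is coupled to the preliminary extraction of $M_{i}<\infty$ and the uniform geometric bounds from the non-blow-up hypotheses alone.
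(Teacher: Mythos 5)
Your high-level skeleton is right: argue the contrapositive, note that $\sup\lmb_\Gmm^{-1}<\infty$ gives the uniform bound $\Lmb^{-1}\leq\dvr\leq 1/2$, identify the renormalized quantities $F_1 = \rd_v^2(r\phi)-(\rd_v\dvr)\phi$ and $F_2 = \rd_v\log\dvr - \frac{\dvr}{1-\mu}\frac{\mu}{r}+\rd_v\phi\bigl(\dvr^{-1}\rd_v(r\phi)-\dur^{-1}\rd_u(r\phi)\bigr)$ whose $\rd_u$-derivatives are \emph{exactly} the two spacetime integrands appearing in Definition~\ref{def.blow.up.infty}, and invoke dominated convergence to make the tail integrals $\iint_{\{v\geq u\}}\abs{\rd_u F_1}$, $\iint_{\{v\geq u\}}\abs{\rd_u F_2}$ vanish as $u\to\infty$. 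That gets you $\int_{C_u}\abs{F_1}+\int_{C_u}\abs{F_2}\to 0$, which is precisely the paper's Proposition~\ref{8.1.1}. Also, deducing vanishing final Bondi mass from \cite{Christodoulou:1987ta} and condition (1) is a legitimate shortcut, consistent with Remark~\ref{rem:FCGC}.

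However, there are two genuine gaps after that point. First, your display bounds $\int_{C_u}\abs{\rd_v\log\dvr}$ by $\iint_{\{v\geq u\}}\abs{\rd_u\rd_v\log\dvr}$, but $\rd_u\rd_v\log\dvr$ is the right-hand side of \eqref{eq:eq4dvdvr:normal}, which is \emph{not} a null form: it contains $(1-\mu)^{-1}\dvr^{-1}\dur(\rd_v\phi)^2$ and $4m(1-\mu)^{-1}r^{-3}\dvr\dur$, neither of which is controlled by the non-blow-up hypotheses (the Hawking mass is supercritical, so the finiteness of these spacetime integrals is exactly what one would need to \emph{prove}). Only $\rd_u F_2$, not $\rd_u\rd_v\log\dvr$, is the controlled null form. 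Second, and more seriously: even granting smallness of $\int_{C_u}\abs{F_1}+\int_{C_u}\abs{F_2}$, extracting smallness of $I_1=\int_{C_{u_0}}\abs{\rd_v^2(r\phi)}$ and $I_2=\int_{C_{u_0}}\abs{\rd_v\dvr}$ is a genuinely nonlinear step, because $F_1$ and $F_2$ differ from $\rd_v^2(r\phi)$ and $\rd_v\log\dvr$ by terms of the schematic form $(\rd_v\dvr)\phi$, $\frac{\mu}{r}$, $\rd_v\phi\cdot\rd_v(r\phi)$, all of which are controlled by $I_1,I_2$ themselves. The paper closes this circle with Lemmas~\ref{lem.dic.1}--\ref{lem.dic.2} (yielding inequalities of the shape $I_1\leq 3\eps+C I_1I_2$, $I_2\leq 3\eps+C I_1^2(1+I_1)^2(1+I_2)^2 e^{CI_1^2(1+I_2)}$) and a continuous induction in $v$ along the \emph{fixed} null line $C_{u_0}$ (Proposition~\ref{dic.main.prop}), which also has to track jump discontinuities of $\rd_v(r\phi)$ propagating along constant-$v$ curves, since $I_1(u_0,\cdot)$ is not continuous for a BV solution. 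Your proposed substitute --- ``running the $o(1)$-decay bootstrap globally'' with the non-blow-up hypothesis in place of local BV scattering --- is circular: the bootstrap of Sections~\ref{sec.decay1}--\ref{sec.decay2} feeds on Lemma~\ref{lem:smallPtnl} and Theorem~\ref{thm:decayInCpt}, which are themselves consequences of the smallness $\int_{C_u\cap\cpt}\abs{\rd_v^2(r\phi)}\to 0$, $\int_{C_u\cap\cpt}\abs{\rd_v\log\dvr}\to 0$ that Theorem~\ref{thm.dichotomy} is trying to establish. The paper deliberately does \emph{not} invoke Theorems~\ref{main.thm.1}--\ref{main.thm.2} in the proof of the dichotomy for exactly this reason. (You also need to verify that $\lim_{v\to\infty}\phi(u_0,v)=0$ for every $u_0$, so that the gauge-shifted field $\overline\phi_{u_0}=\phi-\lim_{v\to u_0+}\phi(u_0,v)$ satisfies $\overline\phi_{u_0}(u_0,\infty)=O(\eps)$; this is the paper's claim \eqref{claim.vanishing}, which requires a separate argument using only the data at $u=1$.)
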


This theorem is established in Section \ref{sec.dichotomy}. It then follows from our main theorems (Theorems \ref{main.thm.1} and \ref{main.thm.2}) that if a BV solution does not blow up at infinity and possesses asymptotically flat initial data, then it obeys quantitative decay estimates.

\subsection{Refinement in the small data in BV case} \label{sec:mainThm:smallData}
By a theorem of Christodoulou \cite{Christodoulou:1993bt}, the maximal development of data with small BV norms does not blow up at infinity. The previous theorem applies, and thus the corresponding solution is globally BV scattering, in the sense described in Theorem \ref{thm.dichotomy}. Moreover, a closer inspection of the proof of the main theorems reveals that the following stronger conclusion holds in this case.

\begin{theorem} [Sharp decay for data with small BV norm] \label{thm:smallData}
There exists a universal $\eps_{0} > 0$ such that for $0 < \eps \leq \eps_{0}$, the following statements hold.

\begin{enumerate}
\item If the initial data set is asymptotically flat of order $\omg'$ in BV and
\begin{equation*}
	\int_{C_{1}} \abs{\rd_{v}^{2} (r \phi)} < \eps,
\end{equation*}
then the maximal development $(\phi, r, m)$ is globally BV scattering, in the sense that Definition \ref{def:locBVScat} holds with arbitrarily large $R > 0$. Moreover, it satisfies the estimates \eqref{eq:decay1:1}--\eqref{eq:decay1:3} with $A_{1} \leq C_{\calI_{1}} (\calI_{1} + \eps)$.

\noindent Here (and similarly in (2)), we use the convention that $C_{\calI_{1}}$ depends on $\calI_1$ in a non-decreasing fashion.\footnote{In particular, for $\calI_1$ sufficiently small, we have the estimate $A_1\leq C(\mathcal I_1+\ep)$ for some absolute constant $C$.}

\item If, in addition, the initial data set is asymptotically flat of order $\omg'$ in $C^{1}$, then the maximal development also satisfies \eqref{eq:decay2:1}--\eqref{eq:decay2:4} with $A_{2} \leq C_{\calI_{2}} (\calI_{2} + \eps)$.
\end{enumerate}
\end{theorem}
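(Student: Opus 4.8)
The plan is to deduce Theorem \ref{thm:smallData} by revisiting the proofs of Theorems \ref{main.thm.1} and \ref{main.thm.2} and tracking the dependence of the constants $A_1$, $A_2$ on the solution profile. The starting point is Christodoulou's small-data theorem \cite{Christodoulou:1993bt}: if $\int_{C_1}\abs{\rd_v^2(r\phi)}$ is sufficiently small, the maximal development is future causally geodesically complete, has vanishing final Bondi mass, and the mass ratio $\mu$ stays uniformly small throughout $\PD$ (not merely in a compact $r$-region). In particular conditions (1)--(2) of Definition \ref{def:locBVScat} hold, and Theorem \ref{thm.dichotomy} applies once we verify that the relevant spacetime integrals in Definition \ref{def.blow.up.infty} are finite; this gives global BV scattering, i.e. (3) of Definition \ref{def:locBVScat} with $R$ arbitrarily large. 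So the qualitative hypotheses of the main theorems are met for every $R$, and what remains is the quantitative bookkeeping.

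First I would isolate, in the proofs of Sections \ref{sec.decay1} and \ref{sec.decay2}, exactly where the constants $A_1$, $A_2$ acquire dependence on the full profile of the solution rather than just on $\calI_1$, $\calI_2$. The two typical sources are: (i) the bootstrap/continuity arguments that convert the qualitative smallness in \eqref{eq:locBVScat} into a quantitative rate, where one picks a (profile-dependent) threshold $u = U$ beyond which the compact-region integrals are below a universal $\tilde\ep_0$ (cf. Remark \ref{rmk.unif.int}); and (ii) the use of the a priori bound $\sup_{\calD}(-\dur)$, $\sup_{\calD}\dvr^{-1}$ and the total variation of $\dvr$, $\rd_v(r\phi)$ on characteristics, whose sizes in general depend on the solution. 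In the small-data regime both sources become universal: Christodoulou's estimates give $\abs{\mu}\lesssim\eps$ and $\abs{\dvr-\tfrac12}+\abs{\dur+\tfrac12}\lesssim\eps$ globally, and the compact-region integrals in \eqref{eq:locBVScat} are themselves $\lesssim\eps$ for \emph{all} $u\geq 1$ (not just large $u$), so one may take $U=1$ and absorb everything into universal constants. I would then re-run the decay estimates carrying $\calI_1$ (resp. $\calI_2$) and $\eps$ explicitly: the linear-in-data terms produce the $\calI_1$ (resp. $\calI_2$) contribution, while the nonlinear error terms, being at least quadratic and controlled by $\eps$ times the bootstrap quantities, are absorbed, yielding $A_1\leq C_{\calI_1}(\calI_1+\eps)$ and $A_2\leq C_{\calI_2}(\calI_2+\eps)$ with $C_{\calI_1}$, $C_{\calI_2}$ non-decreasing in their arguments (and, for $\calI_1$ small, an absolute constant as noted in the footnote).

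For part (2) I would additionally note that a $C^1$ asymptotically flat data set of order $\omg'$ with small $\int_{C_1}\abs{\rd_v^2(r\phi)}$ still falls under Christodoulou's small-data theory (the relevant smallness is in the BV/scale-invariant norm, which is controlled by $\calI_2$ if $\calI_2$ is small, or one simply imposes the $\eps$-smallness directly as in the statement), and that persistence of $C^1$ regularity (Remark \ref{rem:wp}, via the arguments of Section \ref{sec.decay2}) propagates the $C^1$ bounds globally; then the second-derivative estimates of Theorem \ref{main.thm.2} are re-derived with the same profile-to-universal replacement, giving the claimed bound on $A_2$.

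The main obstacle I anticipate is item (i) above: in the general proof, the passage from qualitative decay of the critical BV quantities to a \emph{quantitative} rate is the heart of the argument and is where profile dependence genuinely enters (one has no control on how fast \eqref{eq:locBVScat} decays). The key observation that makes the small-data case work is that Christodoulou's a priori control makes the compact-region integrals uniformly small in $u$ from the very beginning, so the threshold $U$ can be taken to be $1$ and the entire bootstrap runs with universal constants times $(\calI+\eps)$. Verifying this uniformity — essentially that the bound in Remark \ref{rmk.unif.int} holds with $U=1$ in the small-data regime, across all $R$ — and checking that no hidden profile dependence survives in the hierarchy of estimates (particularly in the second-derivative bounds, where one differentiates the structure equations and could a priori pick up uncontrolled constants) is the part requiring genuine care. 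Since this is a refinement obtained by inspection of already-completed proofs, I would present it as a sketch, pointing to the specific lemmas in Sections \ref{sec.decay1} and \ref{sec.decay2} where the constants are generated and indicating the universal replacements, rather than reproducing those proofs in full.
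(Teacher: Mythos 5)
Your overall strategy coincides with the paper's: use Christodoulou's small-BV theory to convert the qualitative, profile-dependent smallness into quantitative bounds (the paper packages this as Lemma \ref{lem:smallData}, which gives $K+\Lmb \leq C_{0}$, $\Psi \leq C_{0}\eps$ and uniform-in-$u$ bounds $\leq C_{0}\eps$ or $C_{0}\eps^{2}$ on all the scattering integrals with $R = \infty$), obtain global BV scattering from the dichotomy theorem, and then re-run the decay arguments tracking constants. For part (2) your key observation is exactly right: every quantity entering $\eps''(u_{2})$ via \eqref{eq:decay2:eps} and \eqref{eq:decay2:eps''} is $O(\eps)$ uniformly in $u$, so the paper takes $u_{2}=1$, supplemented by a persistence-of-regularity bound $H_{2}'(1) \leq C_{\calI_{2}}(\calI_{2}+\eps)$ on the compact region $\calD(1,9)$.

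However, your central claim that "the threshold $U$ can be taken to be $1$" fails for part (1) when $\calI_{1}$ is large, and $\calI_{1}$ is explicitly allowed to be arbitrarily large here — only the total variation is assumed small. The coefficient multiplying $\calB_{1}(U)$ in \eqref{eq:decay1:intr:pf:key} is not made up solely of terms controlled by $\eps$: it contains $C K M_{i} u_{1}^{-1}$, inherited from the exterior estimate \eqref{eq:decay1:extr}, and the initial Bondi mass obeys only $M_{i} \leq C \calI_{1}^{2}$ (Lemma \ref{lem:bnd4Mi}); it is \emph{not} small in $\eps$. With $u_{1}=1$ this coefficient has size $\sim \calI_{1}^{2}$ and the bootstrap does not close. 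This is not fatal to the approach — the point of the theorem is only that the thresholds depend on the \emph{data} rather than on the solution profile — and the paper's fix is to take $u_{1} \sim (1+\calI_{1})^{2}$, after which the remaining $O(\eps^{2})$ piece of the coefficient (coming from Lemma \ref{lem:smallData}) is absorbed by taking $\eps$ small independently of $\calI_{1}$. So you should separate the two sources of smallness more carefully: the BV-type integrals are uniformly $O(\eps)$ from $u=1$ on, but the Hawking-mass contribution in the exterior region is only $O(\calI_{1}^{2}/u_{1})$ and forces a data-dependent (though still universal-in-the-profile) choice of $u_{1}$.
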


The point of this theorem is that we only need to know that the initial total variation to be small in order to conclude pointwise decay rates; in particular, $\calI_{1}$, $\calI_{2}$ can be arbitrarily large. In this sense, Theorem \ref{thm:smallData} generalizes both the small BV global well-posedness theorem \cite[Theorem 6.2]{Christodoulou:1993bt} and the earlier small data scattering theorem \cite{Christodoulou:1986ue} for data that are small in a weighted $C^1$ norm. A proof of this theorem will be sketched in Section \ref{sec:smallData}.

\subsection{Optimality of the decay rates}

Our main theorems show upper bounds for the decay rates of the scalar field $\phi$ and its derivatives both towards null infinity (i.e., in $r$) and along null infinity (i.e., in $u$). For $\omega' = \omg<3$, if the decay rate of the initial data towards null infinity satisfies also a lower bound, then we can show that both the $r$ and $u$ decay rates in Theorem \ref{main.thm.1} are saturated. More precisely,

\begin{theorem} [Sharpness of $t^{-\omg}$ tail for $1 < \omg < 3$] \label{thm.opt.1}
Let $1 < \omg < 3$. Suppose, in addition to the assumptions of Theorem \ref{main.thm.1}, that there exists $V\geq 1$ such that the initial data set satisfies the lower bound
\begin{equation*}
	r^{\omg} \rd_{v}(r \phi) (1, v) \geq L > 0,
\end{equation*}
for $v\geq V$.

Then there exists a constant $L_{\omg} >0$ such that
\begin{align*}
	\rd_{v} (r \phi)(u, v) \geq & L_{\omg} \min\set{r^{-\omg}, u^{-\omg}}, \\
	- \rd_{u} (r \phi)(u, v) \geq & L_{\omg} u^{-\omg},
\end{align*}
for $u$ sufficiently large.
\end{theorem}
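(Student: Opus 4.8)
The plan is to turn the quantitative upper bounds of Theorem~\ref{main.thm.1} into lower bounds by exploiting the near-linear structure of the equations once the solution is known to be small and decaying. The starting observation is that \eqref{eq:SSESF:dphi'}, namely $\rd_u(\rd_v(r\phi)) = (\rd_u\dvr)\phi$, together with the already-established decay $\abs{\phi}\leq A_1\min\{u^{-\omg}, r^{-1}u^{-(\omg-1)}\}$ and the bound on $\rd_u\dvr$ from Theorem~\ref{main.thm.2}-type estimates (or more elementarily from \eqref{eq:SSESF:dr} and the mass bounds), shows that along any incoming curve the quantity $\rd_v(r\phi)$ changes only by a controllably small amount. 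Concretely, I would fix a large dyadic parameter and propagate the initial lower bound $r^{\omg}\rd_v(r\phi)(1,v)\geq L$ from $C_1$ into the spacetime: integrating \eqref{eq:SSESF:dphi'} in $u$ from the initial curve, one gets
\begin{equation*}
	\rd_v(r\phi)(u,v) = \rd_v(r\phi)(1,v) + \int_1^u (\rd_u\dvr)(u',v)\,\phi(u',v)\,\ud u',
\end{equation*}
and the error integral must be shown to be bounded by $\tfrac12 L\min\{r^{-\omg}, u^{-\omg}\}$ in the relevant region. This is where the restriction $\omg<3$ and the precise $r$-weights in \eqref{eq:decay1:1}--\eqref{eq:decay1:2} matter: the error term carries an extra smallness either in $r$ or in $u$ relative to the main term, so in the "large $r$" regime $r\gtrsim u$ one uses the $r^{-1}$ gain in $\phi$, and in the "bounded $r$" regime one uses the $u^{-\omg}$ decay plus the fact that $\rd_u\dvr$ is integrable in $u$ with a favorable rate. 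I expect one needs to first treat the region near null infinity (large $r$) to get $\rd_v(r\phi)(u,v)\geq L_\omg r^{-\omg}$ there, and then, having a lower bound on a suitable far-away curve $C_{u}\cap\{r = \text{const}\}$, propagate inward.

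For the second inequality, the lower bound on $-\rd_u(r\phi)$, I would integrate \eqref{eq:SSESF:dphi''}, $\rd_v(\rd_u(r\phi)) = (\rd_v\dur)\phi$, along outgoing curves starting from the axis $\Gmm$. On $\Gmm$ one has $(\rd_v + \rd_u)(r\phi)\to 0$, so near the axis $-\rd_u(r\phi) \approx \rd_v(r\phi)$, and then one integrates outwards in $v$; the change in $\rd_u(r\phi)$ is again an error term controlled by $\rd_v\dur$ and $\phi$. The key is to evaluate this at a point where $r$ is bounded (say $r\sim R$) so that one can then convert the already-obtained lower bound on $\rd_v(r\phi)$ in the bounded-$r$ region into a lower bound on $-\rd_u(r\phi)\gtrsim u^{-\omg}$, using that in that region $\rd_v(r\phi)\gtrsim u^{-\omg}$ and the two quantities differ only by a lower-order error. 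One has to be slightly careful that the sign works out: $\dvr>0$, $\dur<0$, $1-\mu>0$, so $\rd_v\dur$ and $\rd_u\dvr$ have a definite sign governed by $\mu$, and combined with $\phi$ possibly changing sign this forces the error-estimate (rather than a cancellation) approach.

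The main obstacle, I expect, is making the bootstrap/continuity argument for the error terms rigorous and uniform: one wants a region of the form $\{u\geq U_0\}\cap\{r\geq \text{something}\}$ on which the error is genuinely $\leq \tfrac12$ of the main term, and the natural way to get this is to choose $U_0$ large (using $M_f = 0$ and the local BV scattering hypothesis, which make the mass and hence $\mu$ small for large $u$, so that $\rd_u\dvr$ and $\rd_v\dur$ are small), and possibly also restrict to $r$ large or bounded as appropriate. Tracking the constants so that $L_\omg$ depends only on $L$, $\omg$, and the solution profile, and making sure the two regimes (near $\calI^+$ and bounded $r$) patch together along a curve $r=\text{const}$, is the delicate bookkeeping. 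A secondary technical point is justifying the manipulations at the level of BV/measure-theoretic regularity — integrating the distributional identities \eqref{eq:SSESF:dphi'}--\eqref{eq:SSESF:dphi''} and Fubini — but this is routine given the framework already set up in Definition~\ref{def:BVsolution} and the estimates of the earlier sections.
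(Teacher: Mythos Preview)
Your proposal is correct and follows essentially the same route as the paper's proof: integrate \eqref{eq:SSESF:dphi} along incoming curves to compare $\rd_v(r\phi)(u,v)$ with its value on an earlier curve, treat the exterior region first (revisiting the proof of Lemma~\ref{lem:decay1:extr} to bound the \emph{difference} rather than the size), then propagate the resulting lower bound into the interior, and finally obtain the bound on $-\rd_u(r\phi)$ by integrating from the axis using the boundary condition there. The only cosmetic difference is that the paper propagates into the interior from the curve $\{v=3u\}$ (i.e., from $(u/3,v)\in\extr$) rather than from a curve $\{r=\text{const}\}$, and it invokes Lemma~\ref{lem:smallPtnl} directly for the smallness of $\int\abs{\rd_u\dvr}$ rather than phrasing it as integrability with a favorable rate; your identification of $M_f=0$ and local BV scattering as the source of this smallness is exactly right.
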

\begin{remark}
One can also infer the sharpness of the decay of $\phi$ from that of its derivatives. We will omit the details.
\end{remark}

This theorem will be proved in \S \ref{subsec.opt.1}. In fact, the proof of this theorem is similar to the proof of the upper bounds in the first main theorem (Theorem \ref{main.thm.1}). We will show that after restricting to $u$ sufficiently large, the initial lower bound propagates and the nonlinear terms only give lower order contributions. Notice also that the analogous statement is false for $\omega'\geq 3$, since the nonlinear terms may dominate the contribution of the initial data.

For $\omega' \geq 3$, we can show that the decay rates in Theorem \ref{main.thm.1} are sharp in the following sense:
\begin{theorem} [Sharpness of $t^{-3}$ tail] \label{thm.opt.2}
For arbitrarily small $\eps > 0$, there exists a locally BV scattering solution $(\phi, r, m)$ to \eqref{eq:SSESF} which satisfies the following properties:
\begin{enumerate}
\item $\rd_{v} (r \phi)(1, v)$ is smooth, compactly supported in the $v$-variable and has total variation less than $\eps$, i.e.,
\begin{equation*}
	\int_{C_{1}} \abs{\rd_{v}^{2} (r \phi)} < \eps.
\end{equation*}
\item There exists a constant $L_{3} > 0$ such that
\begin{align*}
	\rd_{v} (r \phi) (u, v) \geq & L_{3} \min \set{r^{-3}, u^{-3}}, \\
	- \rd_{u} (r \phi) (u, v) \geq & L_{3} u^{-3},
\end{align*}
for $u$ sufficiently large.
\end{enumerate}
\end{theorem}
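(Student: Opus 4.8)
The plan is to construct the desired solution as a small, compactly supported perturbation of Minkowski space and then track the precise leading-order behavior of $\rd_v(r\phi)$ and $\rd_u(r\phi)$ as $u \to \infty$. First I would choose the initial data: let $\rd_v(r\phi)(1,v)$ be smooth, supported in some interval $[v_1, v_2] \subset (1,\infty)$, with total variation less than $\eps$, and normalized so that the zeroth moment $\int_{C_1} \rd_v(r\phi)(1,v)\,\ud v$ and/or the relevant higher moments (through third order) are nonzero with a definite sign. By Theorem \ref{thm:smallData}, such data are automatically globally BV scattering and the maximal development satisfies the decay estimates \eqref{eq:decay1:1}--\eqref{eq:decay2:4} with constants controlled by $\eps$ (and $\calI_1, \calI_2$, which here are comparable to $\eps$). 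So property (1) is immediate from the choice of data, and all the upper bounds from Theorems \ref{main.thm.1} and \ref{main.thm.2} are available as a priori control throughout the argument.

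The heart of the matter is the lower bound (2), which saturates the $t^{-3}$ tail. Here the point, as noted in the discussion after Theorem \ref{thm.opt.1}, is that for $\omg' \geq 3$ the tail is \emph{not} driven by the linear evolution of the initial data (which would decay like $r^{-\omg'}$ and hence faster than $r^{-3}$ once $\omg' > 3$, and which for compactly supported data on $C_1$ vanishes identically in $r$ for large $r$ after the data's support), but rather by the \emph{nonlinear} interaction, i.e. by the backreaction of the Hawking mass through the term $\frac{\mu \dvr \dur}{(1-\mu)r}\phi$ in \eqref{eq:SSESF:dphi}. The strategy is to integrate \eqref{eq:SSESF:dphi'}, namely $\rd_u(\rd_v(r\phi)) = (\rd_u\dvr)\phi$, along incoming null curves, using $\rd_u\dvr = \frac{\mu}{(1-\mu)r}\dvr\dur$ from \eqref{eq:SSESF:dr}. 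Using the already-established decay rates for $\phi$, $\dvr$, $\dur$, $\mu$ (in particular $m \lesssim$ a quantity decaying appropriately, and $\phi \sim c\, u^{-\omg}$ to leading order with $\omg = 3$ here since $\omg' \geq 3$), one obtains an asymptotic expansion
\[
	\rd_v(r\phi)(u,v) = (\text{initial contribution, negligible for large } u) + \int \frac{\mu \dvr \dur}{(1-\mu) r}\phi + \cdots,
\]
and the main task is to show the integral term has a definite sign and size $\sim \min\{r^{-3}, u^{-3}\}$ for $u$ large, after arranging the moments of the initial data. A parallel computation using \eqref{eq:SSESF:dphi''} handles $-\rd_u(r\phi) \geq L_3 u^{-3}$. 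In practice I would first establish the leading-order asymptotics of the Bondi mass $M(u)$ and of $\phi$ along $\Gmm$ and along $C_u$ for large $u$ (these should follow from refinements of the estimates in Sections \ref{sec.decay1}--\ref{sec.decay2}, comparing $\phi$ against an explicit profile determined by $\int \mu\dvr\dur/((1-\mu)r)\phi$), then feed these back into the integral above.

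The main obstacle I anticipate is precisely the need for a genuinely \emph{two-sided} (matching upper and lower) asymptotic analysis: the upper bounds in the main theorems are proved by a bootstrap that discards signs, whereas here one must retain the sign of the source term $(\rd_u\dvr)\phi$ and show that error terms (higher-order nonlinear contributions, the decaying initial-data contribution, the difference between $\mu$ and its leading asymptotics, the difference between $r$ and $v$ near null infinity, etc.) are strictly smaller than the main term uniformly for $u$ large. Concretely, one needs: (i) a lower bound $\phi(u,v) \gtrsim u^{-3}$ (or the appropriate $r$-weighted version) in a region reaching to $\Gmm$, obtained by choosing the sign of the initial moments and propagating; (ii) control that $\int_u^\infty |\rd_v^2(r\phi)|$ and the analogous "error" integrals on $C_u$ decay strictly faster than the main contribution; and (iii) that the region of integration — roughly the causal past reaching back to where the data's nonlinear interaction is concentrated — contributes with a fixed sign. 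Item (i) is the delicate one, since it requires propagating a \emph{lower} bound through the coupled system; I would handle it by writing $\phi = \phi_{\mathrm{lin}} + \phi_{\mathrm{nl}}$, showing $\phi_{\mathrm{nl}}$ carries the $u^{-3}$ tail with the chosen sign and that $\phi_{\mathrm{lin}}$ (decaying like $u^{-\omg'}$ with $\omg' \geq 3$, and with faster $r$-decay) cannot cancel it for $u$ sufficiently large. Once the $u^{-3}$ lower bound on $\phi$ near the axis is in hand, the lower bounds on $\rd_v(r\phi)$ and $\rd_u(r\phi)$ follow by the same integration used for the upper bounds, now run in reverse to produce a lower bound, with all error terms absorbed using the quantitative rates already proved.
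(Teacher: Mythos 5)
Your proposal correctly identifies the broad strategy — small compactly supported data, global control via Theorem \ref{thm:smallData}, and the key observation that for $\omg' \geq 3$ the tail is driven by the nonlinear Hawking-mass backreaction rather than the linear evolution of the data. But there is a genuine gap in how you propose to turn this observation into a lower bound, and the paper's proof is organized around an idea your sketch does not contain.

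The paper first establishes a separate structural result, Lemma \ref{lem:LB}, which reduces the sharpness claim to the nonvanishing of a \emph{single scalar quantity} computed at null infinity,
\begin{equation*}
	\mathfrak{L} := \lim_{v \to \infty} r^{3} \rd_{v} (r \phi)(1, v) + \int_{1}^{\infty} (M \dur_{\infty} \Phi)(u)\, \ud u,
\end{equation*}
where $M$, $\dur_{\infty}$, $\Phi$ are the limiting Bondi mass, $\dur$-limit, and radiation field along $\calI^{+}$. Lemma \ref{lem:LB} is proved by working \emph{outward-in}: Step 2 shows $\rd_{v}(r\phi) \geq \frac{\mathfrak{L}}{2}(v/2)^{-3}$ in a thin exterior wedge $\set{u \leq \eta v}$, exploiting that the integrand $\frac{2m\dvr\dur}{(1-\mu)r^{3}} r\phi$ converges pointwise (in $u$) to $M\dur_{\infty}\Phi$ as $v\to\infty$; Steps 3 and 4 then propagate this lower bound inward to $\Gmm$ and then along outgoing curves to get $-\rd_{u}(r\phi) \gtrsim u^{-3}$. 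Your plan runs the other way: you want to first establish a lower bound $\phi(u,v) \gtrsim u^{-3}$ "in a region reaching to $\Gmm$" by decomposing $\phi = \phi_{\mathrm{lin}} + \phi_{\mathrm{nl}}$ and then feeding this into the wave equation. This is precisely the circular difficulty the criterion $\mathfrak{L} \neq 0$ is designed to avoid: to get a lower bound on $\phi$ near the axis you need the tail, and to get the tail you need a lower bound on $\phi$ near the axis. By formulating $\mathfrak{L}$ in terms of quantities at $\calI^{+}$ — where the already-proved \emph{upper} bounds give clean quantitative asymptotics — the paper breaks this circle.

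The second gap concerns how the sign of the main term is fixed. Your proposal says to "normalize \ldots the relevant higher moments (through third order)" of the data, but the paper's $\mathfrak{L}$ is not a moment of the data: for compactly supported $\rd_{v}(r\phi)(1,\cdot)$ the first term of $\mathfrak{L}$ vanishes identically, and the sign of the remaining integral $\int M\dur_{\infty}\Phi$ is determined, not by a choice of moment signs, but by the \emph{positivity of the Hawking mass} ($M \geq 0$), the monotonicity $\dur_{\infty} < 0$, and the positivity of $\Phi$ (the data has a single sign). The quantitative lower bound $M(u) \geq c\eps^{2}$ for $1\leq u\leq 2$ is obtained from the mass formula $M = \frac12\int \frac{1-\mu}{\dvr}(\rd_{v}(r\phi) - \frac{\dvr}{r} r\phi)^2$ via the inequality $(a+b)^2 \geq \frac12 a^2 - b^2$, comparing against the free wave profile $\eps\chi$. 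This is a more elementary and sharper argument than trying to isolate $\phi_{\mathrm{nl}}$; you have the right intuition that the Bondi mass asymptotics should enter, but without the reduction to $\mathfrak{L}$, it is unclear how your error-term bookkeeping would close, since the propagated lower bound on $\phi$ is exactly what is to be proved.
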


To prove Theorem \ref{thm.opt.2}, we will first establish a sufficient condition for the desired lower bounds in terms of (non-vanishing of) a single real number $\mathfrak{L}$, which is computed from information at the null infinity. This result (Lemma \ref{lem:LB}) is proved using the decay rates proved in the main theorems, and we believe it might be of independent interest. In \S \ref{subsec.opt.2}, we will complete the proof of Theorem \ref{thm.opt.2} by constructing an initial data set for which $\mathfrak{L}$ can be bounded away from zero. This can be achieved by showing that the solution is close to that of a corresponding linear problem and controlling the error terms after taking $\eps > 0$ to be sufficiently small and using Theorem \ref{thm:smallData}.

\subsection{Strategy of the proof of the main theorems}

Roughly speaking, the proof of decay of $\phi$ and its derivatives can be split into three steps. In the first two steps, we control the incoming part\footnote{We call these variables `incoming' because they obey a transport equation in the $\rd_u$ direction.} of the derivatives of the scalar field and metric components, i.e., $\rd_v(r\phi)$, $\rd_v^2(r\phi)$ and $\rd_v\dvr$. To this end, we split the spacetime into the exterior region $\extr:=\{(u,v)\in \mathcal Q: v\geq 3u\}$ and the interior region $\intr:=\{(u,v)\in \mathcal Q: v\leq 3u\}$. In the first step, we control the incoming part of the solution in the exterior region. In this region, we have $r \gtrsim v, u$, thus the negative $r$ weights in the equations give the required decay of $\phi$ and its derivatives. We then prove bounds in the interior region in the second step. Here, we exploit certain (non-quantitative) smallness in the spacetimes quantities as $u \to \infty$ given by the assumption of local BV scattering to propagate the decay estimates from the exterior region to the interior region all the way up to the axis. Finally, in the third step, we control the outgoing part of the solution, i.e., $\rd_u(r\phi)$, $\rd_u^2(r\phi)$ and $\rd_u\dur$, by showing that the decay bounds that we have proved along the axis can be propagated in the outgoing direction.

We remind the readers that the above sketch is only a heuristic argument and is not true if taken literally. In particular, in order to carry out this procedue, we need to first show that the local BV scattering assumption provides some control over the spacetime geometry. As we will show below, the estimates are derived in slightly different fashions for the first and the second derivatives of $r\phi$. We note in particular that carrying out this general scheme relies heavily on the analytic structure of the Einstein-scalar field equations, including the montonicity properties as well as the null structure of the (renormalized) equations.

\subsubsection{Estimates for first derivatives of $r\phi$}

To obtain decay bounds for the first derivatives of $r\phi$, we will rely on the wave equation
$$\rd_u \rd_v(r\phi)=\frac{2 m \lambda \nu}{(1-\mu)r^2}\phi.$$
Notice that when we solve for the incoming radiation $\rd_v(r\phi)$ using this as a transport equation in $u$, the right hand side does not depend explicitly on the outgoing radiation $\rd_u(r\phi)$. Instead, the right hand side consists of terms that are either lower order (in terms of derivatives) or satisfy a certain monotonicity property.

In particular, this equations shows that as long as $\phi$ can be controlled, we can estimate $\rd_v(r\phi)$ by integrating along the incoming $u$ direction. On the other hand, we can also control $\phi$ once a bound on $\rd_v(r\phi)$ is known by integrating along the outgoing $v$ direction.

To achieve the desired decay rates for $\phi$, $\rd_v(r\phi)$ and $\rd_u(r\phi)$, we follow the three steps outlined above:

\begin{enumerate}
\item [(1)] Bounds\footnote{The estimates in this region are similar to the corresponding bounds for the black hole case in \cite{DR}. There, it was observed that the quantity $\partial_v(r \phi)$, which Dafermos-Rodnianski called an almost Riemann invariant, verifies an equation such that the right hand side has useful weights in $r$ and give the desired decay rates.} for $\rd_v(r\phi)$ and $\phi$ in $v\geq 3u$: In the exterior region, we have $r\gtrsim u,v$, it is therefore sufficient to prove the decay in $r$. First, we  prove that $\sup_{C_u}(1+r)\phi$ is bounded. This is achieved in a compact region by continuity of the solution\footnote{In particular, since we are simply using compactness, the constants in Theorem \ref{main.thm.1} depend not only on the size of the initial data.} and in the region of large $r$ by integrating $\rd_v(r\phi)$ in the outgoing direction from the compact region. Since $\rd_v(r\phi)$ can in turn be controlled by $\phi$, we get the desired bound. To improve over this bound we define
\begin{equation*}
\calB_{1}(U) := \sup_{u \in [1, U]} \sup_{C_{u}} \bb( u^{\om} \abs{\phi} + r u^{\om-1} \abs{\phi} \bb)
\end{equation*}
and show via the wave equation that
$$r^\om|\rd_v(r\phi)|\leq C(u_1)+\ep(u_1)\mathcal{B}_1(U),$$
where $\ep\to 0$ as $u_1\to \infty$. This gives the optimal decay rate for $\rd_v(r\phi)$ in the exterior region up to an arbitrarily small loss, which can be estimated once $\mathcal{B}_1(U)$ can be controlled.

\item [(2)] Bounds for $\rd_v(r\phi)$ and $\phi$ in $v\leq 3u$: For the decay of the first derivatives, the interior region $\{v\leq 3u\}$ is further divided into the intermediate region $\{r\geq R\}$ and the compact region $\{r\leq R\}$. In these two regions, the $r$ weight in the equation is not sufficient to give the sharp decay rate. Instead, we start from the decay rate $\rd_v(r\phi)$ obtained in the first step in the exterior region and propagate this decay estimate inwards. To achieve this, we need to show that $\int \frac{2 m \lambda \nu}{(1-\mu)r^2}$ is small when $u$ is sufficiently large. 

\item [(2a)] $r\geq R$ and $v\leq 3u$: In the intermediate region where we still have a lower bound on $r$, the required smallness is given by the \emph{qualitative} information that the Hawking mass approaches $0$. Thus, from some large time onwards, $\int \frac{2 m \lambda \nu}{(1-\mu)r^2}$ becomes sufficiently small and we can integrate the wave equation directly to obtain the desired decay bounds.

\item [(2b)] $r\leq R$ and $v\leq 3u$: In this region, we use the local BV scattering assumption to show that $\int_{\{r\leq R\}} \frac{2 m \lambda \nu}{(1-\mu)r^2}\to 0$ as $u\to\infty$. This smallness allows us to propagate the decay estimates from the curve $r=R$ to the region $r< R$. At this point, we can also recover the control for $\mathcal{B}_1(U)$ and close the estimates in step 1. This allows us to derive all the optimal decay rates for $\phi$ and $\rd_v(r\phi)$

\item [(3)] Bounds for $\rd_u(r\phi)$: To achieve the bounds for $\rd_u(r\phi)$, first note that along the axis we have $\rd_u(r\phi)=-\rd_v(r\phi)$. Thus, by the previous derived control for $\rd_v(r\phi)$, we also have the decay of $\rd_u(r\phi)$ along the axis. We then consider the wave equation as a transport equation in the outgoing direction for $\rd_u(r\phi)$ to obtain the sharp decay for $\rd_u(r\phi)$ in the whole spacetime. 

\end{enumerate}

\subsubsection{Estimates for second derivatives of $r\phi$}

As for the first derivatives, we control the second derivatives by first integrating the equation in the exterior region up to a curve $v=3u$. We then propagate the decay bounds from the exterior region to the interior region using the estimates already derived for the first derivative of $\phi$, as well as the local BV scattering assumption. However, at this level of derivatives, some new difficulties arise as we now describe.
\\
\fparagraph{Renormalization and the null structure}
The assumption of local BV scattering implies that
\bea
\int_{C_u\cap\{r\leq R\}} (|\rd_v\phi|+|\rd_v^2(r\phi)|)\to 0 \label{BV.small.1}
\eea
as $u\to \infty$. When combined with Christodoulou's BV theory, this also implies that as $v\to \infty$, we have
\bea
\int_{\uC_v\cap\{r\leq R\}} (|\rd_u\phi|+|\rd_u^2(r\phi)|) \to 0.  \label{BV.small.2}
\eea
Notice that on $C_u$ (resp. $\uC_v$), we only control the integral of $\rd_v^2(r\phi)$ and $\rd_v\phi$ (resp. $\rd_u^2(r\phi)$ and $\rd_u\phi$).

Suppose when integrating along the incoming direction to control $\rd_v^2(r\phi)$ and $\rd_v\dvr$, we need to estimate terms of the form
$$\int_{\uC_v\cap\{r\leq R\}} |\rd_u\phi \rd_v\phi|.$$
We can apply the BV theory to show that for $v$ sufficiently large,
$$\int_{\uC_v\cap\{r\leq R\}} |\rd_u\phi| \leq \epsilon.$$
On the other hand, one can show that
$$\sup_{\uC_v\cap\{r\leq R\}} |\rd_v\phi|\leq C \sup_{J^-(\uC_v\cap\cpt)} |\rd_v^2(r\phi)|$$
which can be controlled by the quantity that we are estimating.

However, in equation \eqref{eq:eq4dvdvrphi:normal} for $\rd_v^2(r\phi)$ derived by differentiating \eqref{eq:SSESF:dphi}, there are terms of the form
$$\rd_v\phi \rd_v\phi$$
such that neither of the factors can be controlled a priori in $L^1$ by the local BV scattering assumption. In other words, the equation does not obey any null condition.

To deal with this problem, we follow \cite{Christodoulou:1993bt} and introduce the renormalized variables
$ \rd_{v}^{2} (r \phi) - (\rd_{v} \dvr) \phi, $
	$ \rd_{u}^{2} (r \phi) - (\rd_{u} \dur) \phi, $
	$ \rd_{v} \log \dvr - \frac{\dvr}{(1-\mu)} \frac{\mu}{r} + \rd_{v} \phi \bb( \dvr^{-1} \rd_{v} (r \phi) - \dur^{-1} \rd_{u} ( r \phi) \bb), $
	$ \rd_{u} \log (-\dur) - \frac{\dur}{(1-\mu)} \frac{\mu}{r} + \rd_{u} \phi \bb( \dvr^{-1} \rd_{v} (r \phi) - \dur^{-1} \rd_{u} (r \phi) \bb)$
which have the property that the nonlinear terms arising in the equations for these variables in fact have a null structure. In particular, we can apply the above heuristic procedure to obtain decay estimates in the compact region $r\leq R$.
\\
\fparagraph{Non-renormalized variables and decay towards null infinity}

While the renormalization allows us to apply the BV theory in the interior region, it does not give the optimal $r$ decay rates in the exterior region. For example, the renormalized quantity
$$\rd_{v} \log \dvr - \frac{\mu}{(1-\mu)} \frac{\dvr}{r} + \rd_{v} \phi \bb( \dvr^{-1} \rd_{v} (r \phi)  - \dur^{-1} \rd_{u} (r \phi) \bb)$$
decays only as $r^{-2}$ towards null infinity due to the contribution of $\frac{\mu}{(1-\mu)} \frac{\dvr}{r}$, which is weaker than the desired $r^{-3}$ decay for $\rd_v\log \dvr$. Therefore, in order to obtain the optimal estimates everywhere in the spacetime, we need to use the variables $\rd_v^2(r\phi)$, $\rd_u^2(r\phi)$, $\rd_v\dvr$ and $\rd_u\dur$ together with their renormalized versions.
\\
\fparagraph{Coupling of the incoming and outgoing parts}

Finally, an additional challenge is that unlike the estimates for the first derivatives of the scalar field, the bounds for the incoming part of the solution $\rd_v^2(r\phi)$ and $\rd_v\dvr$ are coupled to that for the outgoing part $\rd_u^2(r\phi)$ and $\rd_u\dur$. Likewise, to control $\rd_u^2(r\phi)$, we need estimates for $\rd_v^2(r\phi)$ and $\rd_v\dvr$. For example, in the equation for $\rd_{v} \log \dvr - \frac{\mu}{(1-\mu)} \frac{\dvr}{r} + \rd_{v} \phi \bb( \dvr^{-1} \rd_{v} (r \phi)  - \dur^{-1} \rd_{u} (r \phi) \bb)$, there is a term involving $\rd_u^2(r\phi)$ on the right hand side. In particular, in order to obtain the desired decay for $\rd_v \dvr$, we need to at the same time prove the decay for $\rd_u^2(r\phi)$.
\\
\fparagraph{Strategy for obtaining the decay estimates}
With the above difficulties in mind, we can now give a very rough sketch of the strategy of the proof.

\begin{enumerate}
\item [(1)] Bounds for $\rd_v^2(r\phi)$ and $\rd_v\dvr$ for large $r$: As in the case for the first derivatives, we first prove the optimal $r$ decay for $\rd_v^2(r\phi)$ and $\rd_v\dvr$ in the exterior region. To this end, we integrate the equations satisfied by the \emph{non-renormalized} variables. We note that the error terms can all be bounded using the local BV scattering assumption and the decay estimates already proved for the first derivatives.

\item [(2)] Bounds for all second derivatives: Steps 2 and 3 for the decay bounds for the first derivatives are now coupled. Define \begin{align*}
	\calB_{2}(U) := \sup_{u \in [1, U]} \sup_{C_{u}} \bb( & u^{\om} \abs{\rd_{v}^{2} (r \phi)} +u^{\om} \abs{\rd_{u}^{2} (r \phi)} 
%					+ u^{3} \abs{\rd_{v} \phi} + u^{3} \abs{\rd_{u} \phi} 
					+ u^{\om} \abs{\rd_{v} \dvr} + u^{\om} \abs{\rd_{u} \dur} \bb).
\end{align*}
We then show that $\calB_{2}(U)$ can control the error terms arising from integrating the \emph{renormalized} equations in the sense that we can obtain an inequality of the form
$$\abs{\mbox{weighted renormalized variables}}\leq C(u_2)+\ep(u_2)\calB_{2}(U),$$
where $\ep(u_2)\to 0$ as $u_2\to \infty$. We then prove that the renormalized variables in fact control all the weighted second derivatives in $\calB_2$. After choosing $u_2$ to be sufficiently large, we show that $\calB_{2}(U)$ is bounded independent of $U$ and thus all the second derivatives have $u^{-\om}$ decay.
\item [(3)] Optimal bounds in terms of $u$ decay: While we have obtained $u^{-\om}$ decay for the second derivatives, the decay rates are not the sharp rates claimed in the main theorem. To finally obtained the desired bounds, we integrate the equations of the \emph{non-renormalized} variables and use the preliminary estimates obtained in (1) and (2) above. Here, we make use of the fact that the estimates obtained in step (2) above are sufficiently strong (both in terms of regularity and decay) to control the error terms in the non-renormalized equations.
\end{enumerate}

\section{Analytic properties of \eqref{eq:SSESF}}\label{sec.anal.prop}
In this section, we discuss the analytic properties of \eqref{eq:SSESF}. These include scaling, monotonicity and the null structure of the system. All these features will play crucial roles in the analysis.
\subsection{Scaling}
For $a>0$, \eqref{eq:SSESF} is invariant under the scaling of the coordinate system
$$ u \mapsto au,\quad v\mapsto av$$
together with the scaling of the functions
$$r \mapsto ar,\quad m\mapsto am,\quad \Omega\mapsto \Omega,\quad\phi\mapsto\phi.$$
This in particular implies that the BV norms
\begin{equation*}
\int_u^{\infty} |\rd_v^2(r\phi)(u,v')| \ud v'
\hbox{ and }
\int_u^{\infty} |\rd_v \lambda(u,v')| \ud v'
\end{equation*}
are scale invariant. Thus the a priori assumptions \eqref{eq:locBVScat} are taken with respect to localized versions of scale invariant norms.

\subsection{Monotonicity properties} \label{subsec:monotonicity}
We first begin with basic monotonicity properties of $r$.
\begin{lemma}[Monotonicity of $r$] \label{lem:mntn4r}
Let $(\phi, r, m)$ be a BV solution to \eqref{eq:SSESF}. Then we have 
\begin{equation*}
	\dur < 0 \hbox{ in } \PD,
\end{equation*}
and
\begin{equation*}
	\left\{
	\begin{aligned}
	\dvr > 0 & \hbox{ when } 1-\mu > 0, \\
	\dvr = 0 & \hbox{ when } 1-\mu = 0, \\
	\dvr < 0 & \hbox{ when } 1-\mu < 0.
	\end{aligned}	
	\right.
\end{equation*}
\end{lemma}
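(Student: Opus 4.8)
The plan is to read off the sign of $\dur$ and $\dvr$ from the transport equations \eqref{eq:SSESF:dr} together with a continuity/propagation argument starting from the initial curve $C_1$ and the axis $\Gmm$. First I would record the gauge normalization: on $C_1$ we have $\dvr(1,v) = \tfrac12 > 0$ by \eqref{eq:id4dvr}, and $m(1,1) = 0$ so $1-\mu = 1 > 0$ at the bottom vertex; also along $\Gmm$ we have $r = 0$, $m = 0$, hence $1-\mu \to 1$ as one approaches $\Gmm$. I would also need the fact (from the constraint \eqref{eq:SSESF:dm} and $m(1,1)=0$) that $\dur < 0$ somewhere to get started; in fact, from the coordinate condition at the axis, $\lim_{v\to u+}(\dvr+\dur)(u,v) = 0$, so near $\Gmm$, $\dur \approx -\dvr$, which will be negative once we know $\dvr > 0$ near $\Gmm$. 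So the natural order is: (i) establish $\dur < 0$ and $1-\mu>0$ in a neighborhood of $\Gmm\cup C_1$; (ii) bootstrap to all of $\PD$.

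The key observation is that \eqref{eq:SSESF:dr} can be integrated as ODEs along null curves. Writing $\dur\rd_v\dvr$... more precisely, from $\rd_u\dvr = \frac{\mu}{(1-\mu)r}\dvr\dur$ we get that along any incoming curve $\uC_v$, $\dvr$ solves a linear homogeneous ODE in $u$, so $\dvr(u,v) = \dvr(v,v)\exp\big(\int_v^u \frac{\mu\,\dur}{(1-\mu)r}(u',v)\,\ud u'\big)$ — hence $\dvr$ can never change sign along $\uC_v$ as long as the integrating factor is finite, and in particular $\sgn \dvr$ is determined by its value at the axis, \emph{provided} $1-\mu$ does not vanish along the way. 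Similarly $\rd_v\dur = \frac{\mu}{(1-\mu)r}\dvr\dur$ shows $\dur$ solves a linear homogeneous ODE in $v$ along $C_u$, so $\dur$ never changes sign along $C_u$; starting from the axis where $\dur = -\dvr < 0$, or from $C_1$ where one computes the sign of $\dur$ directly via the constraint, we get $\dur < 0$ throughout. For the $\dvr$ trichotomy: once $\dur<0$ is known, inspect the definition $1-\mu = -4\Omg^{-2}\dur\dvr$; since $\Omg^2 > 0$ and $\dur < 0$, the sign of $1-\mu$ equals the sign of $\dvr$, which is exactly the claimed trichotomy. So really the whole lemma reduces to: (a) $\dur<0$ everywhere, and (b) the algebraic identity $\sgn(1-\mu) = \sgn(\dvr)$ given $\dur<0$.

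For step (a), the clean route is a connectedness argument. Let $\PD' \subseteq \PD$ be the set where $\dur < 0$; it is open (continuity of a $C^0$ or BV-but-continuous quantity — here $\dur$ is continuous since $\dur$ is BV in $v$ along $C_u$ and the transport equation has continuous right-hand side where defined), nonempty (contains a neighborhood of $\Gmm$, using the coordinate condition and $\dvr>0$ near $\Gmm$, the latter from $\dvr(a,a)=\lmb_\Gmm(a)\ge 0$ and the ODE above with $1-\mu\to 1$), and I claim relatively closed: if $\dur = 0$ at a first point $p$ along some $C_u$, then since $\dur$ satisfies the linear homogeneous ODE $\rd_v\dur = \frac{\mu}{(1-\mu)r}\dvr\cdot\dur$ in $v$ with (on the relevant segment) bounded coefficients — bounded because we stay in a compact domain of dependence where the BV-solution bounds of Definition~\ref{def:BVsolution} apply and $1-\mu$ stays away from $0$ on the closed segment up to but not including $p$... — the only solution vanishing at $p$ is identically zero, contradicting $\dur(u,u) = -\dvr(u,u) < 0$ at the axis endpoint. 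One has to be a little careful that $1-\mu$ does not degenerate before reaching $p$; but on the segment of $C_u$ from the axis to $p$ we have $\dur<0$ by minimality of $p$, so $\sgn(1-\mu)=\sgn\dvr$ there, and $\dvr$ in turn can only vanish if $1-\mu$ does, so either both are positive on the whole half-open segment (the generic case, ODE coefficients bounded, done) or they hit zero together at some interior point $q$, at which $\Omg^2(1-\mu) = -4\dur\dvr\to 0$ while $\dur(q)<0$ stays bounded away from $0$ — one needs the structure equations / Lemma~\ref{lem:mntn4m}-type monotonicity to rule this out, or to observe directly from \eqref{eq:SSESF:dr} that $\rd_v(\dur\dvr)$ has a sign forcing $\dur\dvr$ monotone, hence once $1-\mu>0$ it stays $>0$ moving outward.

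\textbf{Main obstacle.} The delicate point is precisely this last one: controlling $1-\mu$ (equivalently $\dur\dvr$) so that the integrating factors in the transport ODEs do not blow up, i.e., showing the region $\{1-\mu>0\}$ is "forward-stable" along $C_u$ and $\uC_v$. I expect the resolution is a short monotonicity computation: from \eqref{eq:SSESF:dr}, $\rd_v\log\dur = \rd_u\log\dvr = \frac{\mu}{(1-\mu)r}\dur\dvr = -\frac14\Omg^2\cdot\frac{1}{r}\cdot\frac{\mu}{(1-\mu)^2}$ (using $\dur\dvr = -\tfrac14\Omg^2(1-\mu)$), which when $\mu\ge 0$ has a definite sign, so $\dur$ is monotone in $v$ and stays negative, and a parallel computation pins down that $1-\mu$ cannot cross zero from above while staying in the domain. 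Once that structural fact is in hand, the trichotomy for $\dvr$ is immediate from $1-\mu = -4\Omg^{-2}\dur\dvr$ with $\dur<0$, and the proof is complete. I would present (a) as the substantive step and (b) as a one-line corollary.
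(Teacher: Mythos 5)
Your plan is fundamentally different from the paper's proof, and it contains a genuine gap — one you correctly flag as the ``main obstacle'' but do not actually resolve. The difficulty is structural: if you try to propagate the sign of $\dur$ by integrating the transport equations \eqref{eq:SSESF:dr} along null curves, the coefficient $\frac{\mu}{(1-\mu)r}\dvr$ (or $\frac{\mu}{(1-\mu)r}\dur$) must be shown to be integrable, which requires knowing $1-\mu$ is bounded away from $0$ along the path. But the very statement of the lemma covers the cases $1-\mu=0$ and $1-\mu<0$, so you cannot assume this a priori. Your proposed fix is also flawed computationally: from $\rd_v\dur = \frac{\mu}{(1-\mu)r}\dvr\dur$ one gets $\rd_v\log\abs{\dur} = \frac{\mu\dvr}{(1-\mu)r}$, not $\frac{\mu\dvr\dur}{(1-\mu)r}$, and substituting $\dur\dvr = -\tfrac14\Omg^2(1-\mu)$ then yields $\rd_v\log\abs{\dur} = -\frac{\mu\Omg^2}{4r\dur}$, whose sign depends on the unknown sign of $\dur$ — it is not sign-definite, so no monotonicity follows. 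Also, your claimed identity $\rd_v\log\dur = \rd_u\log\dvr$ is wrong: what is true is $\rd_v\dur = \rd_u\dvr$, and dividing by $\dur$ versus $\dvr$ gives different quantities.

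The paper's proof sidesteps all of this by working with $r^2$ and the metric-form equation. From the first equation in \eqref{eq:SSESF} one computes
\begin{equation*}
\rd_u\rd_v(r^2) = 2\bb(\rd_u r\,\rd_v r + r\,\rd_u\rd_v r\bb) = -\tfrac12\Omg^2,
\end{equation*}
which is strictly negative by the a priori fact $\Omg^2 > 0$ built into \eqref{eq:defn4Met} — no $\mu$ or $1-\mu$ appears. Since $\rd_u(r^2) = 2r\,\rd_u r = 0$ on $\Gmm$ (where $r=0$), integrating $\rd_v(\rd_u r^2) < 0$ outward from the axis gives $\rd_u(r^2) < 0$ for $v > u$; and since $r>0$ off the axis, $\dur < 0$ throughout $\PD$. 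The trichotomy for $\dvr$ is then the one-line algebraic observation you correctly identified as ``step (b)'': from $1-\mu = -4\Omg^{-2}\dur\dvr$ with $\dur<0$ and $\Omg^2>0$, the sign of $1-\mu$ equals the sign of $\dvr$. The lesson is that the substantive step (a) is done not by transport along a null direction for $\dur$ alone, but by finding the combination $r^2$ whose mixed second derivative is manifestly sign-definite, avoiding the $1-\mu$ degeneracy entirely.
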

\begin{proof} 
This was proved in \cite[Propositions 1.1 and 1.2]{Christodoulou:1993bt}; we reproduce the proof for the reader's convenience. Note the equation
\begin{equation*}
	\rd_{u} \rd_{v} (r^{2}) = - \frac{1}{2} \Omg^{2}.
 \end{equation*}
which easily follows from \eqref{eq:SSESF}. As $\rd_{u} r^{2} = 2 r \rd_{u} r= 0$ on $\Gmm$ and $r > 0$ on $\PD$, we easily see that $\dur < 0$. Then from the definition of $1-\mu$, the second conclusion also follows. \qedhere
\end{proof}

According to the sign of $\dvr$, a general Penrose diagram $\PD$ is divided into three subregions as follows:
\begin{align*}
	\trpR := \set{(u,v) \in \PD : \dvr < 0}, \quad \appH := \set{(u,v) \in \PD : \dvr = 0}, \quad \regR := \set{(u,v) \in \PD : \dvr > 0}.
\end{align*}

These are called the \emph{trapped region}, \emph{apparent horizon}, and \emph{regular region}, respectively. The next lemma, which we borrow from \cite{Christodoulou:1993bt}, shows that the solutions to \eqref{eq:SSESF} considered in this paper consist only of the regular region $\regR$. Therefore, extensive discussion of $\trpR$ and $\appH$ will be suppressed.

\begin{lemma}[{\cite[Proposition 1.4]{Christodoulou:1993bt}}] \label{lem:regR}
Let $(\phi, r, m)$ be a BV solution to \eqref{eq:SSESF}. Then the causal past of $\Gmm$ in $\PD$ is contained in $\regR$.
In particular, $\PD = \regR$ if $(\phi, r, m)$ satisfies the condition $(1)$ in Definition \ref{def:locBVScat} (future completeness of radial null geodesics).
\end{lemma}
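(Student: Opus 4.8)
The statement to prove is Lemma~\ref{lem:regR}: for a BV solution, the causal past of $\Gmm$ in $\PD$ lies in the regular region $\regR$, and consequently $\PD = \regR$ under future completeness of radial null geodesics.

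\medskip

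\textbf{Proposed approach.} The plan is to run a continuity/connectedness argument in $v$ along each incoming null ray, using the monotonicity of $r$ from Lemma~\ref{lem:mntn4r} together with the monotonicity of the Hawking mass along $\Gmm$. First I would recall from Lemma~\ref{lem:mntn4r} that $\dur < 0$ everywhere and that the sign of $\dvr$ is exactly the sign of $1-\mu$, so that $\trpR$, $\appH$, $\regR$ are characterized by $1-\mu<0$, $1-\mu=0$, $1-\mu>0$ respectively. Hence it suffices to show that $1-\mu>0$ on the causal past $J^{-}(\Gmm)$. Fix a point $p=(u_0,v_0)\in J^{-}(\Gmm)$; then the incoming null segment $\uC_{v_0}\cap\{u_0\le u\le v_0\}$ runs from $p$ to the axis point $(v_0,v_0)\in\Gmm$, where $r=0$ and $m=0$, so $\mu\to 0$ and $1-\mu\to 1$ there. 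I would then show $1-\mu>0$ propagates \emph{backwards} along this incoming ray (i.e.\ as $u$ decreases from $v_0$ to $u_0$).

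\medskip

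The key computation is to track $1-\mu = -4\Omg^{-2}\dur\dvr$ along $\uC_{v_0}$, equivalently to track $m$ and $r$. Along an incoming ray, $r$ is decreasing in $u$ in the sense that $\dur<0$, so $r$ increases as $u$ decreases away from the axis; thus $r$ stays bounded below away from $0$ on $\uC_{v_0}\cap\{u\le u_1\}$ for $u_1<v_0$. For $m$, I would use the constraint equation \eqref{eq:SSESF:dm}: in the regular region where $\dvr>0$ and $\dur<0$, the identity $2\dur\,\rd_u m = (1-\mu)r^2(\rd_u\phi)^2$ shows $\rd_u m \le 0$ wherever $1-\mu\ge 0$ (since $\dur<0$), i.e.\ $m$ is non-increasing in $u$ as long as we remain in the closure of $\regR$. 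Combined with $m=0$ at the axis endpoint, this keeps $m\ge 0$ — more precisely one argues that $m$ cannot become negative and, since $r$ is bounded below, $\mu = 2m/r$ cannot reach $1$ — along the incoming ray as $u$ decreases. The standard way to make this rigorous is a connectedness argument: let $S$ be the subset of $u\in[u_0,v_0]$ along $\uC_{v_0}$ on which $1-\mu>0$ on the whole sub-segment $[u,v_0]$; $S$ is nonempty (contains a neighborhood of $v_0$ by continuity, using the axis limit), open by continuity of $\mu$, and closed because on $S$ the monotonicity just described gives a uniform lower bound $1-\mu\ge c(u_0)>0$ that does not degenerate, so the set cannot have a boundary point inside $(u_0,v_0)$. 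Hence $S=[u_0,v_0]$ and $p\in\regR$. Since $p$ was arbitrary, $J^{-}(\Gmm)\subset\regR$; and if future completeness of radial null geodesics holds then by Definition~\ref{def:locBVScat}(1) every incoming null geodesic terminates on $\Gmm$, so $\PD = J^{-}(\Gmm)=\regR$.

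\medskip

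\textbf{Main obstacle.} The delicate point is the closedness step of the continuity argument: one must rule out that $1-\mu$ decays to $0$ as $u\downarrow u_0$ from within $\regR$, i.e.\ that $p$ sits exactly on a limiting apparent horizon. This is where the interplay of the two monotonicities is essential — $r$ bounded below away from the axis (from $\dur<0$) plus $m$ non-increasing in $u$ and non-negative (from \eqref{eq:SSESF:dm} restricted to $\{1-\mu\ge 0\}$) together force $\mu$ bounded away from $1$ on the closed segment, so no degeneration occurs. Care is also needed that the argument only uses the constraint on the closure of $\regR$, where the sign of $\dvr$ and the monotonicity of $m$ are under control; this is exactly the structure exploited in \cite[Proposition~1.4]{Christodoulou:1993bt}, which we may invoke, but the sketch above indicates how it goes. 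The BV regularity is more than enough for all the monotonicity identities used here, since they follow from \eqref{eq:SSESF:dr}--\eqref{eq:SSESF:dm} which hold for BV solutions.
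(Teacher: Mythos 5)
The paper does not supply its own proof of this lemma; it is cited directly as \cite[Proposition 1.4]{Christodoulou:1993bt}. So the comparison here is between your sketch and what Christodoulou's argument actually is.

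Your overall framing is right — $\dur<0$ everywhere by Lemma~\ref{lem:mntn4r}, and $1-\mu$ has the same sign as $\dvr$, so one needs $\dvr>0$ along any incoming null ray $\uC_{v_0}\cap\{u_0\le u\le v_0\}$ ending on $\Gmm$ — but the closedness step as you have written it does not close. You assert that $m$ non-increasing in $u$ (on $\{1-\mu\ge 0\}$), together with $m\ge 0$ and $r$ bounded below away from the axis, forces $\mu=2m/r$ to stay bounded away from $1$. This does not follow: as $u$ decreases from $v_0$ toward $u_0$, \emph{both} $m$ and $r$ are non-decreasing, so the monotonicities you invoke do not control the ratio $2m/r$, and nothing prevents $\mu\uparrow 1$ at some interior $u_1\in(u_0,v_0)$. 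The argument gives $m\ge 0$ and $r>0$, but no quantitative lower bound on $1-\mu$; the continuity-method set $S$ could a priori be a proper half-open subinterval.

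The actual mechanism is sharper and does not require a continuity argument at all. From the first equation of \eqref{eq:SSESF} one has $\rd_u\rd_v(r^2)=-\tfrac{1}{2}\Omg^2<0$; this is the very identity already used to prove Lemma~\ref{lem:mntn4r}. Equivalently $\rd_u(r\dvr)=-\tfrac14\Omg^2<0$, so $r\dvr$ is strictly decreasing in $u$ along $\uC_{v_0}$. Since $r\to 0$ (with $\dvr$ bounded by the BV axis conditions in Definition~\ref{def:BVsolution}) at the future endpoint $(v_0,v_0)\in\Gmm$, one gets $\lim_{u\to v_0-} r\dvr(u,v_0)=0$, and strict monotonicity then gives $r\dvr(u,v_0)>0$ for every $u<v_0$ in $\PD$. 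As $r>0$ off $\Gmm$, this yields $\dvr>0$, hence $1-\mu>0$, along the entire ray, with no possibility of degeneration. Thus $J^-(\Gmm)\subset\regR$, and under condition (1) of Definition~\ref{def:locBVScat} every point of $\PD$ lies on such a ray, giving $\PD=\regR$. In short: replace the $(m,\mu)$ monotonicities in your closedness step by the one-line monotonicity of $\rd_v(r^2)$ in $u$, which is what Christodoulou's Proposition~1.4 rests on.
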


Next, we turn to monotonicity properties of the Hawking mass $m$, which will play an important role in our paper. The following lemma is an obvious consequence of \eqref{eq:SSESF:dm}.

\begin{lemma}[Monotonicity of $m$] \label{lem:mntn4m}
For a BV solution $(\phi, r, m)$ to \eqref{eq:SSESF}, we have
\begin{equation*}
	\rd_{v} m \geq 0, \quad  \rd_{u} m \leq 0 \hbox{ in } \regR.
\end{equation*}
\end{lemma}

By the monotonicity $\rd_{v} m \geq 0$, the limit $M(u):=\lim_{v \to \infty} m(u,v)$ exists (possibly $+\infty$ at this point) for each $u$. This is called the \emph{Bondi mass} at retarded time $u$. The following statement is an easy corollary of the preceding lemma.

\begin{corollary} [Monotonicity of the Bondi mass] \label{cor:mntn4Bondi}
Let $(\phi, r, m)$ be a BV solution to \eqref{eq:SSESF}, and suppose that $C_{u} \subset \regR$ for $u \in [u_{1}, u_{2}]$. Then the Bondi mass $M(u)$ is a non-increasing function on $[u_{1}, u_{2}]$.
\end{corollary}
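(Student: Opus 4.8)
The plan is to obtain the statement directly from the pointwise monotonicity $\rd_u m \le 0$ of Lemma \ref{lem:mntn4m}, together with the existence of the limit defining $M(u)$. First I would fix retarded times $u_1 \le a \le b \le u_2$ and a value $v \ge b$. By hypothesis $C_u \subset \regR$ for every $u \in [u_1, u_2]$, so the incoming segment $\uC_v \cap \set{a \le u \le b}$ lies entirely in $\regR$; integrating the relation $2 \dur \rd_u m = (1-\mu) r^2 (\rd_u \phi)^2$ from \eqref{eq:SSESF:dm} along this segment — using $\dur < 0$ (Lemma \ref{lem:mntn4r}) and $1 - \mu > 0$ in $\regR$ — shows that $u \mapsto m(u,v)$ is non-increasing on $[a,b]$, whence $m(a,v) \ge m(b,v)$.

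Next I would send $v \to \infty$. By the monotonicity $\rd_v m \ge 0$ (also Lemma \ref{lem:mntn4m}), for each fixed $u$ the function $v \mapsto m(u,v)$ is non-decreasing, so $M(u) = \lim_{v \to \infty} m(u,v)$ exists in $[0, +\infty]$ — this is exactly the remark preceding the corollary. Passing to the limit in $m(a,v) \ge m(b,v)$ then gives $M(a) \ge M(b)$, and since $a \le b$ in $[u_1,u_2]$ were arbitrary, $M$ is non-increasing on $[u_1,u_2]$.

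I do not expect a serious obstacle here. The only points needing a little care are: (i) that the BV regularity in Definition \ref{def:BVsolution} — in particular the bounds on $\dur$ and $\rd_u \phi$, together with $1-\mu$ remaining positive and locally bounded in $\regR$ — makes $u \mapsto m(u,v)$ genuinely absolutely continuous along $\uC_v$, so that the integrated form of \eqref{eq:SSESF:dm} is justified; and (ii) that the inequality $m(a,v) \ge m(b,v)$ survives the monotone limit $v \to \infty$, which is immediate. Both are routine consequences of the monotonicity lemmas already in hand.
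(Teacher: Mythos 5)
Your proof is correct and is exactly the argument the paper has in mind: the pointwise monotonicity $\rd_u m \le 0$ from Lemma \ref{lem:mntn4m} (valid on the incoming segments since, by hypothesis, every $(u',v)$ with $u' \in [a,b]$ lies on $C_{u'} \subset \regR$) gives $m(a,v) \ge m(b,v)$, and sending $v \to \infty$ preserves the inequality. The paper states the corollary without proof as "an easy consequence of the preceding lemma," and your write-up fills in precisely those steps.
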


The following lemma shows that $M_{i} < \infty$ for initial data sets considered in this paper. 
\begin{lemma} \label{lem:bnd4Mi}
Suppose that $\rd_{v}(r \phi)(1, \cdot)$ is asymptotically flat or order $\omg' > 1$ in the sense of Definition \ref{def:AF}. Then we have
\begin{equation} \label{eq:bnd4Mi}
	M_{i} := \lim_{v \to \infty} m(1, v) \leq C \calI_{1}^{2}.
\end{equation}
\end{lemma}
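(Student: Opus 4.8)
The plan is to derive a differential inequality for the Hawking mass $m$ along the initial curve $C_{1}$ from the constraint equation \eqref{eq:SSESF:dm}, and then integrate it, using asymptotic flatness to bound the resulting integral. First I would record the relevant facts on $C_{1}$: by the normalization \eqref{eq:id4dvr} we have $\dvr(1,\cdot) = \frac{1}{2}$, hence $r(1,v) = \frac{1}{2}(v-1)$; and since $\dvr(1,\cdot) > 0$, Lemma \ref{lem:mntn4r} gives $C_{1} \subset \regR$, so Lemma \ref{lem:mntn4m} together with $m(1,1) = 0$ yields $m(1,\cdot) \geq 0$, i.e. $0 < 1-\mu \leq 1$ on $C_{1}$. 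Substituting $\dvr = \frac{1}{2}$ into the second equation of \eqref{eq:SSESF:dm} and discarding the factor $1-\mu \leq 1$,
\begin{equation*}
\rd_{v} m(1,v) = (1-\mu)\, r^{2}(\rd_{v}\phi)^{2}(1,v) \leq r^{2}(\rd_{v}\phi)^{2}(1,v).
\end{equation*}
Since $m(1,1) = 0$ and $m(1,\cdot)$ is nondecreasing with $M_{i} = \lim_{v\to\infty} m(1,v)$, integration gives $M_{i} \leq \int_{1}^{\infty} r^{2}(\rd_{v}\phi)^{2}(1,v)\,\ud v$.

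Next I would estimate the right-hand side. Using $\rd_{v}(r\phi) = \dvr\,\phi + r\,\rd_{v}\phi = \frac{1}{2}\phi + r\,\rd_{v}\phi$ on $C_{1}$ we get $r^{2}(\rd_{v}\phi)^{2} \leq 2(\rd_{v}(r\phi))^{2} + \frac{1}{2}\phi^{2}$. For the first term, asymptotic flatness of order $\omg' > 1$ and the substitution $r' = \frac{1}{2}(v'-1)$ give
\begin{equation*}
\int_{1}^{\infty}(\rd_{v}(r\phi))^{2}(1,v')\,\ud v' \leq \calI_{1}^{2}\int_{1}^{\infty}\bigl(1+\tfrac{1}{2}(v'-1)\bigr)^{-2\omg'}\,\ud v' = 2\calI_{1}^{2}\int_{0}^{\infty}(1+r')^{-2\omg'}\,\ud r' \aleq \calI_{1}^{2},
\end{equation*}
which converges since $2\omg' > 1$. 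For the second term I need a pointwise bound for $\phi$ on $C_{1}$: since $(r\phi)(1,1) = 0$ we have $\phi(1,v) = r^{-1}\int_{1}^{v}\rd_{v}(r\phi)(1,v')\,\ud v'$, and bounding the integral by $\calI_{1}(v-1) = 2\calI_{1}r$ (valid because $(1+r')^{-\omg'}\leq 1$) and also by $\calI_{1}\int_{1}^{\infty}(1+\frac{1}{2}(v'-1))^{-\omg'}\,\ud v' \aleq \calI_{1}$ (this is where $\omg' > 1$ is used) yields $\abs{\phi(1,v)} \aleq \calI_{1}(1+r)^{-1}$. Hence $\int_{1}^{\infty}\phi^{2}(1,v')\,\ud v' \aleq \calI_{1}^{2}\int_{0}^{\infty}(1+r')^{-2}\,\ud r' \aleq \calI_{1}^{2}$, and combining the two estimates gives $M_{i}\leq C\calI_{1}^{2}$.

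I do not expect a genuine obstacle; the argument is short. The two points that need a little care are (i) verifying that $C_{1}$ lies in the regular region $\regR$, which is what makes the sign information $0 < 1-\mu\leq 1$ available and lets one drop the factor $1-\mu$, and (ii) observing that the hypothesis $\omg' > 1$ — rather than merely $\omg' > 1/2$, which would suffice for the $\rd_{v}(r\phi)$ term alone — is exactly what is needed to make the auxiliary pointwise bound on $\phi$ along $C_{1}$, and hence $\int_{1}^{\infty}\phi^{2}\,\ud v$, finite.
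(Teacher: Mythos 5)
Your proof is correct and follows exactly the route the paper indicates (the paper omits the argument, noting only that it is "an easy consequence of \eqref{eq:SSESF:dm} and Lemma \ref{lem:mntn4r}"): you use the $\dvr$-constraint to write $\rd_v m(1,\cdot)\leq r^2(\rd_v\phi)^2$, decompose $r\rd_v\phi=\rd_v(r\phi)-\tfrac12\phi$, and invoke asymptotic flatness; the auxiliary pointwise bound $\abs{\phi(1,\cdot)}\aleq\calI_1(1+r)^{-1}$ is indeed where $\omg'>1$ enters. This matches the intended argument.
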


This is an easy consequence of \eqref{eq:SSESF:dm} and Lemma \ref{lem:mntn4r}; we omit its proof.
By the preceding corollary, it follows that $M(u) < \infty$ for each $u$.

We conclude this subsection with additional monotonicity properties of solutions to \eqref{eq:SSESF}, useful for controlling the geometry of locally BV scattering solutions to \eqref{eq:SSESF}.

\begin{lemma} \label{lem:mntn4kpp}
	Let $(\phi, r, m)$ be a BV solution to \eqref{eq:SSESF}. For $(u,v) \in \regR$, we have
	\begin{equation*}
		\frac{\dvr}{1-\mu}(u,v) \leq \frac{\dvr}{1-\mu}(1, v),
	\end{equation*}
	
	$$\rd_u\dvr =\rd_v\dur \leq 0.$$
	\end{lemma}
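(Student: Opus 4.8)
The plan is to read off both assertions directly from the transport equation \eqref{eq:SSESF:dr} for $\dvr$, together with the mass constraint \eqref{eq:SSESF:dm}. The second assertion requires essentially no work: \eqref{eq:SSESF:dr} states $\rd_u\dvr = \rd_v\dur = \tfrac{\mu}{(1-\mu)r}\dvr\dur$ (the two sides are equal because both equal $\rd_u\rd_v r$), and in $\regR$ every factor on the right has a definite sign — $\dvr>0$ and $1-\mu>0$ by Lemma~\ref{lem:mntn4r}, $r>0$, $\dur<0$ by Lemma~\ref{lem:mntn4r}, and $\mu=2m/r\ge 0$ — so the product is $\le 0$. Here one uses that $m\ge 0$ on $\regR$, which is standard: it follows from the boundary condition $m=0$ on $\Gmm$, the monotonicity $\rd_v m\ge 0$ of Lemma~\ref{lem:mntn4m}, and the fact that every $C_u$ emanates from $\Gmm$.

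For the inequality I would work with $\kappa:=\frac{\dvr}{1-\mu}$, which is well defined and positive on $\regR$ (and equals $-\tfrac{\Omg^2}{4\dur}$ by \eqref{mdef}), and show it is non-increasing in $u$. The key computation is
\[
\rd_u\log\kappa = \frac{\rd_u\dvr}{\dvr} + \frac{\rd_u\mu}{1-\mu}.
\]
From \eqref{eq:SSESF:dr}, $\tfrac{\rd_u\dvr}{\dvr}=\tfrac{\mu\dur}{(1-\mu)r}$; writing $\rd_u\mu=\tfrac{2\rd_u m}{r}-\tfrac{\mu\dur}{r}$ and substituting $2\dur\rd_u m=(1-\mu)r^2(\rd_u\phi)^2$ from \eqref{eq:SSESF:dm} gives $\tfrac{\rd_u\mu}{1-\mu}=\tfrac{r(\rd_u\phi)^2}{\dur}-\tfrac{\mu\dur}{(1-\mu)r}$. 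The two copies of $\tfrac{\mu\dur}{(1-\mu)r}$ cancel, leaving
\[
\rd_u\log\kappa = \frac{r(\rd_u\phi)^2}{\dur} \le 0,
\]
since $r>0$ and $\dur<0$. (Equivalently, this identity falls out of the wave equation for $\log\Omg$ in \eqref{eq:SSESF} after a one-line cancellation.)

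It then remains to integrate. Since $\dvr(1,\cdot)=\tfrac12$, the curve $C_1$ lies in $\regR$, and for $(u,v)\in\regR$ the whole null segment $\uC_v\cap\{1\le u'\le u\}$ lies in $\regR$: if $\dvr(\cdot,v)$ vanished somewhere on $[1,u)$, then at the supremum $u_\ast<u$ of such points the portion $(u_\ast,u]$ would lie in $\regR$, where $\rd_u\dvr\le 0$ by the second assertion forces $\dvr(u_\ast,v)\ge\dvr(u,v)>0$, a contradiction. Integrating $\rd_{u'}\log\kappa\le 0$ along this segment yields $\kappa(u,v)\le\kappa(1,v)$, which is the asserted bound since $\kappa=\frac{\dvr}{1-\mu}$ at both endpoints. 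The only real obstacle is bookkeeping: the sign of $m$ on $\regR$, the fact that the relevant null segments stay inside $\PD$, and justifying the pointwise identities at the BV level of regularity (where $\rd_u\dur$ is a measure but $\rd_u\phi$ is a function) — all routine within the framework of \cite{Christodoulou:1993bt}. The genuinely substantive point is the cancellation producing the clean negative sign in $\rd_u\log\kappa$.
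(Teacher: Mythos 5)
Your proof is correct and follows essentially the same route as the paper: the paper's own proof cites the identity $\rd_u\log\abs{\tfrac{\dvr}{1-\mu}}=-(-\dur)^{-1}r(\rd_u\phi)^2$ together with \eqref{eq:SSESF:dr}, which is exactly the cancellation you derive in detail and then integrate in $u$. Your added care about $m\ge 0$ on $\regR$ and about the null segment remaining in $\regR$ is sound bookkeeping that the paper leaves implicit (cf. Lemma~\ref{lem:regR} and Lemma~\ref{lem:basicEst4dr}).
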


\begin{proof} 
The lemma follows from the formula
\begin{equation*}
	\rd_{u} \log \abs{\frac{\dvr}{1-\mu}} = - (- \dur)^{-1} r (\rd_{u} \phi)^{2}
\end{equation*} 
and \eqref{eq:SSESF:dr}. \qedhere
\end{proof}

\subsection{Null structure of the evolution equations} \label{subsec:nullStr}
In this subsection, we follow \cite{Christodoulou:1993bt} and demonstrate that the evolution equations verify a form of null structure. In particular, the null structure occurs in the equations for the second derivatives of the scalar field and the metric. However, it is not apparent if we simply take the derivatives of the equations \eqref{eq:SSESF:dr} and \eqref{eq:SSESF:dphi}. Instead, we rewrite the equations in renormalized variables for which the null structure is manifest. We will perform this renormalization separately for the wave equations for $\phi$ and for the equations for $\lambda$ and $\nu$.

\vspace{.1in}
{\it - The wave equation for $\phi$.}
Taking $\rd_{v}$ of the equation \eqref{eq:SSESF:dphi}, we obtain
\begin{equation*}
	\rd_{u} (\rd_{v}^{2} (r \phi)) = \rd_{v} (\rd_{u} \dvr \, \phi) = \rd_{u} \dvr \, \rd_{v} \phi + (\rd_{v} \rd_{u} \dvr) \phi,
\end{equation*}
or equivalently, after substituting in the first equation in \eqref{eq:SSESF:dr},
\begin{equation} \label{eq:eq4dvdvrphi:normal}
\rd_{u} (\rd_{v}^{2} (r \phi)) = 
\frac{2m \dvr \dur}{(1-\mu) r^{2}} \, \rd_{v} \phi + \frac{ \dur}{(1-\mu) }  (\rd_{v} \phi)^{2} \phi 
 + \frac{2m \dur}{(1-\mu) r^{2}} (\rd_{v} \dvr) \phi - \frac{4m}{(1-\mu) r^{3}} \dvr^{2} \dur \phi.
\end{equation}

Some terms on the right hand side, such as $(1-\mu)^{-1} \dur (\rd_{v} \phi)^{2} \phi$, do not exhibit null structure and are dangerous near $\Gmm$. To tackle this, we rewrite
\begin{equation*}
	(\rd_{v} \rd_{u} \dvr) \phi = \rd_{u} [(\rd_{v} \dvr) \phi ] - \rd_{v} \dvr \, \rd_{u} \phi.
\end{equation*}

Thus, from the first equation, we derive
\begin{equation} \label{eq:eq4dvdvrphi}
	\rd_{u} [\rd_{v}^{2} (r \phi) - (\rd_{v} \dvr) \phi] = \rd_{u} \dvr \, \rd_{v} \phi - \rd_{v} \dvr \, \rd_{u} \phi.
\end{equation}
By switching $u$ and $v$, we obtain the following analogous equations in the conjugate direction.
\begin{equation} \label{eq:eq4dudurphi:normal}
\rd_{v} (\rd_{u}^{2} (r \phi)) = 
\frac{2m \dvr \dur}{(1-\mu) r^{2}} \, \rd_{u} \phi + \frac{\dvr }{(1-\mu) }  (\rd_{u} \phi)^{2} \phi 
 + \frac{2m \dvr}{(1-\mu) r^{2}} (\rd_{u} \dur) \phi - \frac{4m}{(1-\mu) r^{3}} \dvr \dur^{2} \phi.
\end{equation}

\begin{equation} \label{eq:eq4dudurphi}
	\rd_{v} [\rd_{u}^{2} (r \phi) - (\rd_{u} \dur) \phi] = \rd_{v} \dur \, \rd_{u} \phi - \rd_{u} \dur \, \rd_{v} \phi.
\end{equation}

\vspace{.1in}
{\it - The equations for $\dvr$ and $\dur$.}
From \eqref{eq:SSESF:dr}, we have
\begin{equation*}
	\rd_{u} \log \dvr = \frac{\mu}{(1-\mu) r} \dur, \quad \rd_{v} \log (-\dur) = \frac{\mu}{(1-\mu) r} \dvr.
\end{equation*}

Take $\rd_{v}$, $\rd_{u}$ of the first and second equations respectively. Using \eqref{eq:SSESF:dr}, it is not difficult to verify that
\begin{align}
\rd_{u} \rd_{v} \log \dvr
=& \frac{1}{(1-\mu) }  \dvr^{-1} \dur (\rd_{v} \phi)^{2} - \frac{4m}{(1-\mu) r^{3}} \dvr \dur, \label{eq:eq4dvdvr:normal} \\
\rd_{v} \rd_{u} \log (-\dur)
=& \frac{1}{(1-\mu) }  \dur^{-1} \dvr (\rd_{u} \phi)^{2} - \frac{4m}{(1-\mu) r^{3}} \dvr \dur. \label{eq:eq4dudur:normal}
\end{align}

To reveal the null structure, we must carry out the renormalization as we have done for \eqref{eq:eq4dvdvrphi}, \eqref{eq:eq4dudurphi}. Following Christodoulou \cite{Christodoulou:1993bt}, it is easy to check that the above two equations are equivalent to
\begin{equation} \label{eq:eq4dvdvr}
\begin{aligned}
& \rd_{u} \bb[ \rd_{v} \log \dvr - \frac{\mu}{(1-\mu)} \frac{\dvr}{r} + \rd_{v} \phi \bb( \dvr^{-1} \rd_{v} (r \phi)  - \dur^{-1} \rd_{u} (r \phi) \bb) \bb] \\
& \qquad = \rd_{u} \phi \, \rd_{v}\bb( \dur^{-1} \rd_{u} (r \phi) \bb)- \rd_{v} \phi \, \rd_{u} \bb( \dur^{-1} \rd_{u} (r \phi) \bb),
\end{aligned}
\end{equation}
and the conjugate equation
\begin{equation} \label{eq:eq4dudur}
\begin{aligned}
& \rd_{v} \bb[ \rd_{u} \log (-\dur) - \frac{\mu}{(1-\mu)} \frac{\dur}{r} + \rd_{u} \phi \bb( \dvr^{-1} \rd_{v} (r \phi) - \dur^{-1} \rd_{u} (r \phi) \bb) \bb]  \\
&\qquad = - \rd_{u} \phi \, \rd_{v}\bb( \dvr^{-1} \rd_{v} (r \phi) \bb) + \rd_{v} \phi \, \rd_{u} \bb( \dvr^{-1} \rd_{v} (r \phi) \bb).
\end{aligned}
\end{equation}

\section{Basic estimates for locally BV scattering solutions} \label{sec.geom}
In this section, we gather some basic estimates concerning locally BV scattering solutions. These estimates will apply, in particular, to solutions satisfying the hypotheses of Theorem \ref{main.thm.1}.

\subsection{Integration lemmas for $\phi$} \label{subsec:est4phi}
We first derive some basic inequalities for $\phi$, $\dvr^{-1} \rd_v(r\phi)$ and $\rd_{v} \phi$. We remark that these are functional inequalities which hold under very general assumptions, and in particular does not rely on the locally BV scattering assumption.

\begin{lemma} \label{lem:est4phi}
Let $\phi(u, \cdot)$ and $r(u, \cdot)$ be Lipschitz functions on $[u,v]$ with $\dvr>0$ and $r(u, u) = 0$. 
Then the following inequality holds.
\begin{equation} \label{eq:intEst4phi:1}
\abs{\phi(u,v)} \leq  \sup_{v' \in [u, v]} \bb\vert \frac{\rd_{v}(r \phi)}{\dvr}(u, v') \bb\vert.
\end{equation}

More generally, for $u \leq v_{1} \leq v_{2}$, we have
\begin{equation} \label{eq:intEst4phi:2}
\abs{r \phi(u,v_{1}) - r \phi(u, v_{2})} \leq  \bb( r(u, v_{2}) - r(u, v_{1}) \bb) \sup_{v' \in [v_{1}, v_{2}]} \bb\vert \frac{\rd_{v}(r \phi)}{\dvr}(u, v') \bb\vert.
\end{equation}

\end{lemma}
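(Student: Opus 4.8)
I would prove the more general estimate \eqref{eq:intEst4phi:2} first, and then obtain \eqref{eq:intEst4phi:1} as the special case $v_{1} = u$. The key point is that the Lipschitz hypotheses (which are stronger than BV, and are imposed precisely for this lemma) guarantee that the relevant functions are absolutely continuous, so that the fundamental theorem of calculus is available, after which the estimate is a one-line Cauchy--Schwarz-type bound in which the sign condition $\dvr > 0$ plays a double role.

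\textbf{Proof of \eqref{eq:intEst4phi:2}.} Since $\phi(u, \cdot)$ and $r(u, \cdot)$ are Lipschitz on the compact interval $[u, v]$, they are bounded there, so the product $(r \phi)(u, \cdot)$ is again Lipschitz, in particular absolutely continuous. Hence for $u \leq v_{1} \leq v_{2} \leq v$ the fundamental theorem of calculus gives
\[
	(r\phi)(u,v_{2}) - (r\phi)(u,v_{1}) = \int_{v_{1}}^{v_{2}} \rd_{v}(r\phi)(u, v') \, \ud v'.
\]
Wherever the derivative exists we have $\dvr(u, v') > 0$, so we may write $\rd_{v}(r\phi) = \big(\rd_{v}(r\phi)/\dvr\big) \, \dvr$ almost everywhere, and then, using $\dvr \geq 0$ to extract the supremum from the now-nonnegative integrand,
\[
	\abs{(r\phi)(u,v_{2}) - (r\phi)(u,v_{1})} \leq \bb( \sup_{v' \in [v_{1}, v_{2}]} \bb\vert \frac{\rd_{v}(r\phi)}{\dvr}(u, v') \bb\vert \bb) \int_{v_{1}}^{v_{2}} \dvr(u,v') \, \ud v'.
\]
Applying the fundamental theorem of calculus once more, this time to the Lipschitz function $r(u, \cdot)$, the last integral equals $r(u, v_{2}) - r(u, v_{1})$, which is \eqref{eq:intEst4phi:2}.

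\textbf{Proof of \eqref{eq:intEst4phi:1} and the one delicate point.} Setting $v_{1} = u$, $v_{2} = v$ in \eqref{eq:intEst4phi:2} and using $r(u, u) = 0$, the left side becomes $r(u,v) \abs{\phi(u,v)}$ and the right side becomes $r(u,v) \sup_{v' \in [u,v]} \abs{\rd_{v}(r\phi)/\dvr}$. Since $r(u, \cdot)$ is (strictly) increasing by $\dvr > 0$ and vanishes at $u$, we have $r(u,v) = \int_{u}^{v} \dvr > 0$ for $v > u$, so we may divide through to obtain \eqref{eq:intEst4phi:1} (the case $v = u$ being vacuous, or handled by continuity). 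I do not expect any real obstacle here; the only steps requiring care are the regularity bookkeeping — that a product of Lipschitz functions on a compact interval is Lipschitz, hence absolutely continuous, so that the two invocations of the fundamental theorem of calculus are legitimate — and the observation that the hypothesis $\dvr > 0$ is exactly what simultaneously makes the quotient $\rd_{v}(r\phi)/\dvr$ well defined and lets $\dvr \, \ud v'$ be treated as a nonnegative weight. Note that nothing about the system \eqref{eq:SSESF}, and in particular no scattering hypothesis, is used: this is purely a functional inequality, as asserted.
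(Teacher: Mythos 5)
Your proof is correct and takes essentially the same approach as the paper: integrate $\rd_v(r\phi)$ over $[v_1,v_2]$, pull out $\sup|\rd_v(r\phi)/\dvr|$ using $\dvr>0$ as a nonnegative weight, identify the remaining integral with $r(u,v_2)-r(u,v_1)$, and specialize to $v_1=u$ with $r(u,u)=0$. You simply make explicit the regularity bookkeeping (Lipschitz $\Rightarrow$ absolutely continuous $\Rightarrow$ FTC applies) that the paper leaves implicit.
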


\begin{proof} 
We shall prove \eqref{eq:intEst4phi:2}, since \eqref{eq:intEst4phi:1} then follows as a special case. Integrating $\rd_{v} (r \phi)(u, v')$ over $v' \in [v_{1}, v_{2}]$, we get
\begin{align*}
	\abs{r \phi(u,v_{1}) - r \phi (u, v_{2})} 
	\leq & \int_{v_{1}}^{v_{2}} \abs{\rd_{v} (r \phi)(u, v')} \, \ud v' \\
	\leq & \sup_{v' \in [v_{1}, v_{2}]} \bb\vert \frac{\rd_{v}(r \phi)}{\dvr} (u, v') \bb\vert \, \times \int_{v_{1}}^{v_{2}} \dvr(u, v') \, \ud v' \\
	=& \bb( r(u, v_{2}) - r(u, v_{1}) \bb) \sup_{v' \in [v_{1}, v_{2}]} \bb\vert \frac{\rd_{v}(r \phi)}{\dvr} (u, v') \bb\vert . \qedhere
\end{align*}
\end{proof}

\begin{lemma} \label{lem:est4dvphi}
Let $\phi(u, \cdot)$ and $r(u, \cdot)$ be functions on $[u, v]$ such that $\rd_v\phi$ is integrable, $r$ is Lipschitz with $\lambda>0$ and $r(u, u) = 0$. 
Suppose furthermore that $\dvr^{-1} \rd_{v} (r \phi)(u, \cdot)$ is BV on $[u, v]$. Then the following statements hold.
\begin{enumerate}
\item We have
\begin{equation} \label{eq:est4dvphi:2}
\int_{u}^{v} \abs{\rd_{v} \phi(u,v')} \, \ud v' \leq \int_{u}^{v}\abs{\rd_{v}(\dvr^{-1} \rd_{v} ( r \phi))(u,v')} \, \ud v'.
\end{equation}

\item Suppose, in addition, that $\dvr^{-1} \rd_{v}(r \phi)(u, \cdot)$ is Lipschitz on $[u,v]$. Then we have
\begin{equation} \label{eq:est4dvphi:1}
\abs{\rd_{v} \phi(u,v)} \leq \frac{1}{2} \frac{\sup_{v' \in [u,v]} \dvr(u, v')}{\inf_{v' \in [u,v]} \dvr(u, v')}  \sup_{v' \in [u, v]} \abs{\rd_{v} (\dvr^{-1} \rd_{v} ( r \phi))(u, v')}.
\end{equation}
\end{enumerate}
\end{lemma}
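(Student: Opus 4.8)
The lemma is purely a functional inequality, and the plan is to prove both parts by reparametrizing the outgoing segment $\{u\} \times [u, v]$ by the area radius. Since $\dvr > 0$, the map $v' \mapsto r(u, v')$ is a continuous, strictly increasing bijection from $[u,v]$ onto $[0, R]$ with $R := r(u,v)$; in the setting of part (2) it is moreover bi-Lipschitz, because $\dvr$ is then bounded above (as $r$ is Lipschitz) and bounded below (the asserted inequality being trivial otherwise). Write $\psi := \dvr^{-1} \rd_v (r \phi)$ and let $\tilde\psi$ be $\psi(u, \cdot)$ expressed as a function of $\rho = r(u, \cdot)$. Because $r(u,u) = 0$ forces $(r\phi)(u,u) = 0$, integrating $\rd_v(r\phi) = \dvr \, \psi$ and changing variables gives $(r\phi)(u, v') = \int_0^{\rho} \tilde\psi(\sigma) \, \ud\sigma$; equivalently, $\phi$ --- regarded as a function $\Phi(\rho)$ of $\rho$ --- is the running average $\Phi(\rho) = \rho^{-1} \int_0^{\rho} \tilde\psi$.

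Differentiating this average and writing $\tilde\psi(\rho) - \tilde\psi(\sigma) = \int_{(\sigma, \rho]} \ud\tilde\psi$, where $\ud\tilde\psi$ is the distributional derivative (a finite signed measure; in part (2) it is $\rd_\rho \tilde\psi \, \ud\rho$), one obtains by Fubini
\begin{equation*}
	\rd_\rho \Phi(\rho) = \frac{1}{\rho^{2}} \int_{(0, \rho]} s \, \ud\tilde\psi(s).
\end{equation*}
The chain rule gives $\rd_v \phi = \dvr \, \rd_\rho \Phi$ and $\rd_\rho \tilde\psi = \dvr^{-1} \rd_v \psi$, so after the change of variables $\int_u^v \abs{\rd_v \phi(u, v')} \, \ud v' = \int_0^R \abs{\rd_\rho \Phi} \, \ud\rho$ and $\int_u^v \abs{\rd_v \psi(u, v')} \, \ud v' = \TV_{(0, R]}[\tilde\psi]$, while $\sup_{[u,v]} \dvr^{-1} \abs{\rd_v \psi} = \sup_{[0,R]} \abs{\rd_\rho \tilde\psi}$. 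Hence both assertions reduce to one-dimensional statements about the running-average operator on $[0, R]$.

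For (1), estimate $\int_0^R \abs{\rd_\rho \Phi} \, \ud\rho \leq \int_0^R \rho^{-2} \int_{(0, \rho]} s \, \abs{\ud\tilde\psi}(s) \, \ud\rho$ and switch the order of integration: for fixed $s$ the variable $\rho$ runs over $[s, R]$ and $\int_s^R \rho^{-2} \, \ud\rho = s^{-1} - R^{-1}$, so the right-hand side equals $\int_{(0, R]} (1 - s/R) \, \abs{\ud\tilde\psi}(s) \leq \TV_{(0, R]}[\tilde\psi] = \int_u^v \abs{\rd_v \psi}\,\ud v'$. For (2), with $\tilde\psi$ Lipschitz one has $\abs{\rd_\rho \Phi(\rho)} = \rho^{-2} \abs{\int_0^{\rho} s \, \rd_s \tilde\psi(s) \, \ud s} \leq \frac{1}{2} \sup \abs{\rd_s \tilde\psi}$; evaluating at $\rho = R$ and translating back via $\rd_v \phi = \dvr \, \rd_\rho \Phi$ and $\abs{\rd_\rho \tilde\psi} = \dvr^{-1} \abs{\rd_v \psi}$ yields $\abs{\rd_v \phi(u, v)} \leq \frac{1}{2} \dvr(u, v) \, (\inf_{[u,v]} \dvr)^{-1} \sup_{[u,v]} \abs{\rd_v \psi}$, which is dominated by the claimed quantity since $\dvr(u,v) \leq \sup_{[u,v]} \dvr$.

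The only genuinely delicate point is the low-regularity bookkeeping in part (1): one must verify that the monotone continuous change of variables $v' \leftrightarrow \rho$ preserves total variation and the $L^{1}$-mass of the distributional derivative, that $\tilde\psi$ inherits the BV property from $\psi$, and that Tonelli/Fubini applies to the measure $\ud\tilde\psi$ and its total variation $\abs{\ud\tilde\psi}$ (the $\rho^{-2}$ factor is harmless near $\rho = 0$ because $\int_{(0,\rho]} s \, \abs{\ud\tilde\psi}(s) \to 0$ fast enough). These are standard facts but worth recording carefully. Notably, no compactness, no monotonicity of the Hawking mass, and no genuine input from the equations \eqref{eq:SSESF} enter the argument, consistent with the remark preceding the lemma that these are functional inequalities valid under the stated hypotheses alone.
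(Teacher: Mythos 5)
Your proof is correct and takes essentially the same approach as the paper: you reparametrize the outgoing ray by the area radius $\rho = r(u,\cdot)$ so that $\phi$ becomes the running average $\Phi(\rho)=\rho^{-1}\int_0^\rho\tilde\psi$, and the identity $\rd_\rho\Phi(\rho)=\rho^{-2}\int_{(0,\rho]}s\,\ud\tilde\psi(s)$ you then exploit is precisely the paper's formula $\rd_v\phi=\frac{\dvr}{r^2}\int_u^{v}r\,\rd_v(\dvr^{-1}\rd_v(r\phi))$ written in the $\rho$-variable. The two estimates you extract — the Tonelli step yielding the factor $(1-s/R)\le 1$ for (1), and the factor $\frac12$ with the ratio $\sup\dvr/\inf\dvr$ for (2) — are the same kernel bounds the paper records at the end of its proof.
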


\begin{proof} 
We proceed formally to compute
\begin{align*}
	\rd_{v} \phi(u,v) 
	=& \frac{\dvr}{r} \bb( \dvr^{-1} \rd_{v} ( r \phi) - \phi\bb) (u,v) \\
	=& \frac{\dvr}{r^{2}} (u, v) \int_{u}^{v} \bb( \int_{v'}^{v} \rd_{v} (\dvr^{-1} \rd_{v} ( r \phi)) (u, v'') \, \ud v'' \bb) \dvr(u, v')\, \ud v' \\
	=& \frac{\dvr}{r^{2}} (u,v) \int_{u}^{v} r(u, v'') \rd_{v} (\dvr^{-1} \rd_{v} ( r \phi)) (u, v'') \, \ud v''.
\end{align*}

The above computation is justified thanks to the hypotheses, where we interpret 
\begin{equation*}
	\rd_{v} (\dvr^{-1} \rd_{v} ( r \phi)) (u, v'') \, \ud v''
\end{equation*}
to be the the weak derivative of $\dvr^{-1} \rd_{v} ( r \phi)$, which is a finite signed measure. For a fixed $(u, v)$, observe that
\begin{equation*}
	\sup_{v'' \in [u, v]} r(u, v'') \int_{v''}^{v} \frac{\dvr(u,v')}{r^{2}(u,v')} \, \ud v' \leq 1.
\end{equation*}

This proves \eqref{eq:est4dvphi:2}. For \eqref{eq:est4dvphi:1}, note that the function $\dvr^{-1} \rd_{v} (r \phi)$ is absolutely continuous on $[u,v]$, so $\rd_{v}(\dvr^{-1} \rd_{v} ( r \phi)(u, \cdot))$ exists almost everywhere on $[u, v]$; moreover, it belongs to $L^{\infty}$ by the Lipschitz assumption. Noting that
\begin{equation*}
	\sup_{v' \in [u, v]} \frac{\dvr(u,v')}{r^{2}(u,v')} \int_{u}^{v'} r(u, v'') \, \ud v'' \leq \frac{1}{2} \frac{\sup_{v' \in [u,v]} \dvr(u, v')}{\inf_{v' \in [u,v]} \dvr(u, v')}
\end{equation*}
we obtain \eqref{eq:est4dvphi:1}.
\end{proof}

\subsection{Geometry of locally BV scattering solutions}
The goal of this subsection is to prove the following proposition.
\begin{proposition} \label{prop:geomLocBVScat}
Let $(\phi, r, m)$ be a locally BV scattering solution to \eqref{eq:SSESF} as in Definition \ref{def:locBVScat}. Assume furthermore that on the initial slice $C_{1}$, we have $\dvr(1, \cdot) = \frac{1}{2}$ and
\begin{equation*}
	\sup_{C_{1}} \abs{\rd_{v}(r \phi)} + M_{i} < \infty.
\end{equation*}
	
Then there exist finite constants $K, \Lmb > 0$ such that the following bounds hold for all $(u, v) \in \PD$:
\begin{gather}
	\Lmb^{-1} \leq \dvr(u,v) \leq \frac{1}{2} \label{eq:bnd4dvr} \\
	\Lmb^{-1} \leq - \dur(u,v) \leq K \label{eq:bnd4dur} \\
	1 \leq (1-\mu(u,v))^{-1} \leq K \Lmb. \label{eq:bnd4mu}\\
	0 < \frac{- \dur}{1-\mu(u,v)} \leq K. \label{eq:bnd4conjKpp}
\end{gather}

Moreover, there exists a finite constant $\Psi > 0$ such that for all $(u,v) \in \PD$, we have
\begin{gather}
	\abs{\rd_{v}(r \phi)(u,v)} \leq \Psi, \label{eq:bnd4dvrphi} \\
	\abs{\phi(u,v)} \leq \Lmb \Psi. \label{eq:bnd4phi}
\end{gather}
\end{proposition}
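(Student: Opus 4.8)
The plan is to bootstrap from the initial slice $C_1$ using the monotonicity structure of \eqref{eq:SSESF} together with the three structural inputs of local BV scattering: completeness of the null geodesics (so that $\PD = \regR$ by Lemma \ref{lem:regR} and the global coordinate picture \eqref{eq:globalCoords} holds), the vanishing final Bondi mass $M_f = 0$, and the compact-$r$ BV smallness \eqref{eq:locBVScat}. I would proceed in the following order. \emph{Step 1: the easy one-sided bounds.} Since $\PD = \regR$, Lemma \ref{lem:mntn4r} gives $\dur < 0$ and $\dvr > 0$ everywhere, and $1-\mu > 0$, so immediately $(1-\mu)^{-1} \geq 1$. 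The upper bound $\dvr \leq \frac12$ follows from the monotonicity formula $\rd_u \log\abs{\tfrac{\dvr}{1-\mu}} = -(-\dur)^{-1} r(\rd_u\phi)^2 \leq 0$ in Lemma \ref{lem:mntn4kpp} together with the normalization $\dvr(1,\cdot) = \tfrac12$, $\mu(1,\cdot) \geq 0$: tracing back an incoming null ray to $C_1$ gives $\frac{\dvr}{1-\mu}(u,v) \leq \frac{\dvr}{1-\mu}(1,v) = \frac{1/2}{1-\mu(1,v)} \leq \ldots$ — actually one needs $\dvr \leq \frac{\dvr}{1-\mu} \leq \frac{1}{2(1-\mu(1,v))}$ and then to bound $M_i$ away from the Schwarzschild radius; since $M_i < \infty$ and $r \to \infty$ along $C_1$, $\mu(1,v) \to 0$, but near the axis this needs the sharper fact (from \eqref{eq:SSESF:dm} integrated from $\Gmm$) that $\mu(1,v) \to 0$ as $v \to 1+$ as well, hence $\sup_{C_1} \mu(1,\cdot) < 1$; combined with Lemma \ref{lem:mntn4m} ($\rd_u m \leq 0$) and $r$ decreasing in $u$ one controls $\mu$ globally once one has a lower bound on $\dvr$ — so this step is genuinely coupled to Step 2.

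\emph{Step 2: the lower bounds via BV smallness — the crux.} The key quantity is $\dvr^{-1}$, equivalently $\log\dvr$, which satisfies $\rd_u \log \dvr = \frac{\mu}{(1-\mu)r}\dur = \frac{2m\dur}{(1-\mu)r^2}$. One wants $\int (-\rd_u\log\dvr)$ to be finite and uniformly bounded along incoming rays. In the region $r \geq R$ this integral is controlled by $\int \frac{2m}{r^2}(-\dur)\,\ud u \leq \frac{2}{R^2}\sup m \cdot \sup_u\int(-\dur)\,\ud u$, and $\int(-\dur)\,\ud u = r(u,v) - r(v,v) \leq r$ is bounded on compact-$r$ regions while for large $r$ one uses $M_i < \infty$ and that $m \leq M_i$ — but for large $r$ one instead needs that $\dvr$ stays near $\frac12$, which comes from integrating \eqref{eq:eq4dvdvr:normal}/the normal form and the decay weights; here I would first establish \eqref{eq:bnd4dvrphi}--\eqref{eq:bnd4phi} in the large-$r$ region by a continuity/bootstrap argument exactly as sketched in step (1) of the strategy section (bound $\sup_{C_u}(1+r)\abs{\phi}$ by integrating $\rd_v(r\phi)$ outward, then feed back). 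In the compact region $r \leq R$, the integrand $\frac{2m\lambda\nu}{(1-\mu)r^2}$ is \emph{not} smallness-free from $m$ alone, so instead I would bound $\abs{\rd_u\log\dvr} \lesssim \abs{\rd_v\log\dvr} \cdot(\text{something})$? No — rather, integrate $\rd_v$ of $\log\dvr$: from \eqref{eq:eq4dvdvr:normal}, $\rd_v\log\dvr$ is controlled in $L^1_v$ on $C_u \cap \cpt$ by $\int_{C_u\cap\cpt}\abs{\rd_v\log\dvr}$, which by hypothesis \eqref{eq:locBVScat} is $\leq\tilde\ep_0$ (uniformly small) for $u$ large; combined with the boundary value $\lambda_\Gmm(u) = \lim_{v\to u+}\dvr$ and its relation to the axis, this gives a uniform two-sided bound on $\dvr$ in $\cpt$ for $u \geq U$, and then compactness handles $1 \leq u \leq U$. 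The main obstacle is precisely this interplay: one must simultaneously control $\lambda_\Gmm(u)$ from below (which is where $\sup\lambda_\Gmm^{-1} < \infty$, forced by not-blowing-up, enters) and propagate it off the axis using the smallness \eqref{eq:locBVScat}, all while the estimates for $r \geq R$ and $r \leq R$ feed into each other through the Hawking mass.

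\emph{Step 3: closing the remaining bounds.} Once $\Lmb^{-1} \leq \dvr \leq \frac12$ is in hand globally, \eqref{eq:bnd4mu} follows: $1-\mu = \frac{-4\dur\dvr}{\Omg^2}$ and from $\rd_v\dur = \frac{\mu}{(1-\mu)r}\dvr\dur \leq 0$ (Lemma \ref{lem:mntn4kpp}) together with the value of $-\dur$ near the axis (where $\dur + \dvr \to 0$, so $-\dur \to \dvr \leq \frac12$), one gets $-\dur \leq \dvr_{\Gmm}$-type control... more carefully, $-\dur(u,v) \leq -\dur(u,u+) = \lambda_\Gmm(u)$ is false in general; rather integrate $\rd_u(-\dur)$... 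I would instead derive $-\dur \leq K$ from $\rd_v \dur \leq 0$ (so $-\dur$ is non-decreasing in $v$) and the initial bound on $C_1$ together with $-\dur \to \dvr$ at $\Gmm$ — wait, non-decreasing in $v$ means $-\dur(u,v) \geq -\dur(u,u+)$, giving a \emph{lower} bound $-\dur \geq \Lmb^{-1}$ from the axis value, and the upper bound $-\dur \leq K$ must come from the constraint $2\Omg^{-1}\dur\,\rd_u\Omg = \rd_u^2 r + r(\rd_u\phi)^2$ integrated, or more simply from $(1-\mu)^{-1}$ control plus $\Omg^2 = \frac{-4\dur\dvr}{1-\mu}$ and an independent bound on $\Omg^2$ along incoming rays. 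Finally \eqref{eq:bnd4conjKpp} is immediate from \eqref{eq:bnd4dur} and \eqref{eq:bnd4mu}: $0 < \frac{-\dur}{1-\mu} \leq K \cdot K\Lmb$, absorbing constants. The estimates \eqref{eq:bnd4dvrphi}--\eqref{eq:bnd4phi} are obtained along the way in Step 2 and then, using \eqref{eq:intEst4phi:1} from Lemma \ref{lem:est4phi}, $\abs{\phi} \leq \sup\abs{\dvr^{-1}\rd_v(r\phi)} \leq \Lmb\Psi$. I expect Step 2 — making the BV-scattering smallness quantitatively propagate the lower bound on $\dvr$ from the axis across the compact region, uniformly in $u$ — to be the main obstacle, with the remaining bounds being bookkeeping around the monotonicity lemmas.
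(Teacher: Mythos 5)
Your Steps 1--2 are broadly parallel to the paper's proof. The paper's route to $\dvr \leq \tfrac12$ is actually more direct than yours: it uses the plain monotonicity $\rd_u\log\dvr = \tfrac{\mu}{(1-\mu)r}\dur \leq 0$ (since $\mu \geq 0$, $1-\mu > 0$, $\dur < 0$ in $\regR$) together with $\dvr(1,\cdot)=\tfrac12$; there is no need for the detour through $\tfrac{\dvr}{1-\mu}$ and the coupled $\mu$-control that you correctly flag as circular. For the lower bound on $\dvr$ (and $-\dur$), the paper proves that the integral $\int \abs{\tfrac{\mu}{1-\mu}\tfrac{\dur}{r}}$ along incoming rays is uniformly bounded by splitting into $\set{r\geq r_0}$ (using $M_i<\infty$), $\set{R\leq r\leq r_0}$ (using $M_f=0$), and $\set{r\leq R}$ (using \eqref{eq:locBVScat}), and then uses the exact identity $\int_{u_1}^{u_2}\abs{\tfrac{\mu}{1-\mu}\tfrac{\dur}{r}}\,\ud u = \log\dvr(u_1,v)-\log\dvr(u_2,v)$; the lower bound on $\lambda_\Gmm$ is an output of this, not a separate ``no blow up'' input as your phrasing suggests. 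These are essentially the same ideas you sketch, so I would count Steps 1--2 as substantially correct modulo organization.

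Step 3 has a genuine gap which you yourself half-acknowledge: your route to the upper bounds \eqref{eq:bnd4dur} and \eqref{eq:bnd4mu} is circular (you need $(1-\mu)^{-1}$ control to bound $-\dur$ via $\Omg^2$, but you need $-\dur$ to control $(1-\mu)^{-1}$), and the alternative you gesture at (``an independent bound on $\Omg^2$'') is not supplied. The paper resolves this by working directly with the combined quantity $\tfrac{-\dur}{1-\mu}$, for which one has the clean monotonicity formula
\begin{equation*}
	\rd_{v} \log\Big(\frac{-\dur}{1-\mu}\Big) = \dvr^{-1}\,r\,(\rd_{v}\phi)^{2} \geq 0,
\end{equation*}
integrated from the axis, where $\tfrac{-\dur}{1-\mu}\to\lambda_\Gmm \leq \tfrac12$. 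One then bounds $\int_u^v \dvr^{-1} r (\rd_v\phi)^2\,\ud v'$ uniformly in $(u,v)$ by splitting into the compact region (controlled via $\int\abs{\rd_v\phi}$, which the local BV scattering assumption and Lemma \ref{lem:est4dvphi} make small) and the large-$r$ region (where one pulls out $(1-\mu)^{-1}\leq \Lmb\sup_{C_1}(1-\mu)^{-1}$ using Lemma \ref{lem:mntn4kpp} and $r\geq R$, reducing to the Hawking-mass flux which is $\leq M_i$). This yields \eqref{eq:bnd4conjKpp} first, from which \eqref{eq:bnd4dur} and \eqref{eq:bnd4mu} drop out — the opposite order from your proposal. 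Without this decoupling observation, your Step 3 does not close.
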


{\bf Once we have this proposition, we will denote by $\Lambda$, $K$ and $\Psi$ the best constants such that \eqref{eq:bnd4dvr}-\eqref{eq:bnd4phi} hold.}

By Lemma \ref{lem:regR}, we already know that $\dvr > 0$, $- \dur > 0$ and $(1-\mu)^{-1} < \infty$. The first three bounds, namely \eqref{eq:bnd4dvr}--\eqref{eq:bnd4mu}, ensure that these bounds concerning the geometry of the spacetime does not degenerate anywhere, in particular along the axis $\Gmm$. They will be very useful in the analysis in the later section of the paper. 

The proof of Proposition \ref{prop:geomLocBVScat} will consist of several steps. We begin with elementary bounds for $\dvr$ and $\dur$.

\begin{lemma} \label{lem:basicEst4dr}
Let $(\phi, r, m)$ be a BV solution to \eqref{eq:SSESF} with $\PD = \regR$. Then for every $(u,v) \in \PD$, we have
\begin{align}
	\dvr(u,v) \leq& \, \dvr(1, v), \label{eq:basicEst4dr:1} \\
	\dvr^{-1}(u,v) \leq& \, \lim_{u' \to v-} \dvr^{-1}(u',v), \label{eq:basicEst4dr:2} \\
	\dur(u,v) \leq& - \lim_{v' \to u+} \dvr(u, v'). \label{eq:basicEst4dr:3}
\end{align}
\end{lemma}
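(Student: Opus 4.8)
The three inequalities are all consequences of integrating the transport equations in \eqref{eq:SSESF:dr} combined with the monotonicity furnished by Lemma~\ref{lem:mntn4r} and Lemma~\ref{lem:mntn4kpp}. Recall that on $\PD = \regR$ we have $\dvr > 0$, $\dur < 0$, and $0 < 1-\mu \leq 1$ is not available a priori — but $\mu \geq 0$ and the sign of $1-\mu$ is positive in $\regR$, so $\frac{\mu}{(1-\mu)r} \geq 0$. Rewriting the first equation of \eqref{eq:SSESF:dr} as $\rd_u \log \dvr = \frac{\mu}{(1-\mu)r}\dur$, the right side is $\leq 0$ since $\dur < 0$ and $\frac{\mu}{(1-\mu)r} \geq 0$. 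Hence $u \mapsto \log\dvr(u,v)$ is non-increasing along each $\uC_v$, which gives both \eqref{eq:basicEst4dr:1} (integrating from $u'=1$) and \eqref{eq:basicEst4dr:2} (integrating down to the axis, i.e.\ $u' \to v-$, where $\dvr^{-1}$ attains its largest value on $\uC_v \cap \PD$). For \eqref{eq:basicEst4dr:1} one uses $\dvr(1,v)$ as the value on the initial curve; for \eqref{eq:basicEst4dr:2} one uses the limiting value $\lim_{u'\to v-}\dvr^{-1}(u',v)$ at the axis endpoint of $\uC_v$.

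For \eqref{eq:basicEst4dr:3}, the natural route is the mixed second-derivative identity $\rd_u\dvr = \rd_v\dur \leq 0$ from Lemma~\ref{lem:mntn4kpp}. Fix $(u,v) \in \PD$ and integrate $\rd_v\dur$ in $v$ along $C_u$ from the axis point $(u,u)$ up to $(u,v)$:
\begin{equation*}
	\dur(u,v) = \lim_{v'\to u+}\dur(u,v') + \int_{u}^{v}\rd_v\dur(u,v')\,\ud v' \leq \lim_{v'\to u+}\dur(u,v'),
\end{equation*}
using $\rd_v\dur \leq 0$. It remains to identify $\lim_{v'\to u+}\dur(u,v')$ with $-\lim_{v'\to u+}\dvr(u,v')$: this is precisely the boundary/gauge condition at $\Gmm$, namely condition (3) of Definition~\ref{def:BVsolution} (for a $C^1$ solution, condition (3) of Definition~\ref{def:C1solution}), which asserts $\lim_{\eps\to0+}(\dur+\dvr)(a,a+\eps) = 0$ for $(a,a)\in\Gmm$. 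Substituting yields \eqref{eq:basicEst4dr:3}.

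The only genuinely delicate point is justifying the integrations and the existence of the one-sided limits at the axis within the BV regularity class: $\dvr$ is only assumed BV on each $C_u$ (uniformly in $u$) and $\dur$ only BV on each $\uC_v$, so $\rd_v\dur\,\ud v$ and $\rd_u\dvr\,\ud u$ are signed measures and the fundamental theorem of calculus must be read in the measure-theoretic sense (consistent with the conventions fixed in the "Functions of bounded variation" paragraph). For \eqref{eq:basicEst4dr:1}--\eqref{eq:basicEst4dr:2} one should integrate $\rd_u\dvr$, which is a finite signed measure on $\uC_v\cap\calD(u_0,v_0)$, and note $\rd_u\dvr = \frac{\mu}{(1-\mu)r}\dur\dvr\,\ud u \leq 0$ as a measure; monotonicity of $\dvr(\cdot,v)$ then follows and passing to the axis limit uses right-continuity plus the bound on total variation. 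I expect this bookkeeping — rather than any conceptual difficulty — to be the main thing to handle carefully; everything else is sign-tracking in \eqref{eq:SSESF:dr} and invocation of Lemmas~\ref{lem:mntn4r} and~\ref{lem:mntn4kpp}.
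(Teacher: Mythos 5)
Your proof is correct and follows essentially the same route as the paper: integrating the transport equations of \eqref{eq:SSESF:dr} and using $\dur < 0$, $\mu \geq 0$, $1-\mu > 0$ on $\regR$ to get one-sided monotonicity, then invoking the axis boundary condition $\lim_{v'\to u+}(\dur+\dvr)=0$ to conclude \eqref{eq:basicEst4dr:3}. The only cosmetic difference is that the paper writes the solutions in explicit exponential form, whereas you phrase the same facts as monotonicity of $\log\dvr$ in $u$ and integrate $\rd_v\dur \leq 0$ directly for the third inequality.
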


\begin{proof}
 By \eqref{eq:SSESF:dr}, we have
\begin{equation*}
\begin{aligned}
	\dvr(u,v) =& \dvr(1, v) \exp \bb( \int_{1}^{u} \bb( \frac{2m}{(1-\mu)r^{2}} \dur \bb) (u', v) \, \ud u' \bb), \\
	\dvr^{-1}(u,v) =& \lim_{u' \to v-} \dvr(u', v)^{-1} \exp \bb( \int_{u}^{v} \bb( \frac{2m}{(1-\mu)r^{2}} \dur \bb) (u', v) \, \ud u' \bb), \\
	\dur(u,v) =& \lim_{v' \to u+} \dur(u, v') \exp \bb( \int_{u}^{v} \bb( \frac{2m}{(1-\mu)r^{2}} \dvr \bb) (u, v') \, \ud v' \bb).
\end{aligned}
\end{equation*}

Since $-\dur, (1-\mu) > 0$ everywhere, \eqref{eq:basicEst4dr:1} and \eqref{eq:basicEst4dr:2} follow. Moreover, since 
\begin{equation*}
\lim_{v' \to u+} \dur(u,v') = - \lim_{v' \to u+}\dvr(u,v'),
\end{equation*}
and $\dvr > 0$ on $\PD$, \eqref{eq:basicEst4dr:3} follows as well. \qedhere
\end{proof}

By Lemma \ref{lem:regR}, $\PD = \regR$ holds for a solution \eqref{eq:SSESF} satisfying the hypotheses of Proposition \ref{prop:geomLocBVScat}. As an immediate corollary, we have the following easy upper bound for $\dvr$.
\begin{corollary}  \label{cor:est4dr}
Let $(\phi, r, m)$ be a solution to \eqref{eq:SSESF} satisfying the hypotheses of Proposition \ref{prop:geomLocBVScat}. Then by the coordinate condition $\dvr(1, v) = \frac{1}{2}$ and \eqref{eq:basicEst4dr:1}, we have
\begin{equation*}
	\sup_{\PD} \dvr \leq \frac{1}{2} \, .
\end{equation*}
\end{corollary}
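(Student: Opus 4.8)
The plan is to chain together Lemma~\ref{lem:regR} with the monotonicity bound \eqref{eq:basicEst4dr:1} of Lemma~\ref{lem:basicEst4dr}, so that the corollary reduces to a one-line deduction. First I would note that any solution satisfying the hypotheses of Proposition~\ref{prop:geomLocBVScat} is in particular a locally BV scattering solution, hence satisfies condition~(1) of Definition~\ref{def:locBVScat} (future completeness of radial null geodesics). By Lemma~\ref{lem:regR}, this forces $\PD = \regR$; equivalently, $\dvr > 0$ and $-\dur > 0$ throughout $\PD$, which is precisely the structural hypothesis needed to apply Lemma~\ref{lem:basicEst4dr}.

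With $\PD = \regR$ in hand, I would invoke \eqref{eq:basicEst4dr:1}, which asserts $\dvr(u,v) \leq \dvr(1,v)$ for every $(u,v) \in \PD$. Since our coordinate system is normalized so that $v = 2r+1$ on $C_{1}$, equation \eqref{eq:id4dvr} gives $\dvr(1,v) = \tfrac{1}{2}$ for all $v \geq 1$. Combining these two facts yields $\dvr(u,v) \leq \tfrac{1}{2}$ for every $(u,v) \in \PD$, and taking the supremum over $\PD$ gives the claimed bound.

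There is essentially no obstacle here: the only point requiring care is verifying that the hypotheses of Proposition~\ref{prop:geomLocBVScat} really do place the whole spacetime in the regular region $\regR$ (so that Lemma~\ref{lem:basicEst4dr} is applicable), and this is exactly the content of Lemma~\ref{lem:regR} combined with the fact that local BV scattering entails future completeness of radial null geodesics. The remainder is a direct substitution of the coordinate normalization \eqref{eq:id4dvr}.
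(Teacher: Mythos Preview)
Your proposal is correct and follows essentially the same argument as the paper: the paper notes (immediately before stating the corollary) that Lemma~\ref{lem:regR} gives $\PD = \regR$ under the hypotheses of Proposition~\ref{prop:geomLocBVScat}, and the corollary's own statement already indicates that the bound then follows from \eqref{eq:basicEst4dr:1} together with the coordinate condition $\dvr(1,v) = \tfrac{1}{2}$. Your write-up simply makes these steps explicit.
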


Next, we proceed to prove the lower bounds of \eqref{eq:bnd4dvr} and \eqref{eq:bnd4dur}. We begin with a technical lemma concerning a large-$r$ region, which will also be useful in our proof of \eqref{eq:bnd4dvrphi} and \eqref{eq:bnd4phi}.

\begin{lemma} \label{lem:babySmllPtnl:1}
Let $(\phi, r, m)$ be a solution to \eqref{eq:SSESF} satisfying the hypotheses of Proposition \ref{prop:geomLocBVScat}. Then for arbitrarily small $\eps > 0$, there exists $r_{0} > 1$ such that
\begin{align} 
	\sup_{(u,v) \in \set{r \geq r_{0}}} \int_{1}^{u} \abs{\frac{\mu}{(1-\mu)} \frac{\dur}{r} (u', v)} \, \ud u' <&  \eps \, . \label{eq:babySmllPtnl:1}
\end{align}
\end{lemma}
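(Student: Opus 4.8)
The plan is to estimate the integral $\int_1^u \left| \frac{\mu}{1-\mu} \frac{\dur}{r}(u',v) \right| \ud u'$ by exploiting the monotonicity of the Hawking mass together with the bound $\dur < 0$ and the crude bounds on $-\dur$ and $(1-\mu)^{-1}$ that are already available before this lemma is invoked. The key observation is that along an incoming null curve $\uC_v$, the quantity $r$ is decreasing in $u$ (since $\dur<0$), and we wish to convert the $u'$-integral into an $r$-integral. First I would write $\mu = \frac{2m}{r}$, so the integrand becomes $\left| \frac{2m}{(1-\mu)r^2}(-\dur)(u',v) \right|$. On the region $\set{r \geq r_0}$, the point $(u,v)$ has $r(u,v) \geq r_0$, and since $r$ is decreasing as $u'$ ranges over $[1,u]$ (moving to the future along $\uC_v$), we have $r(u',v) \geq r(u,v) \geq r_0$ for all $u' \in [1,u]$. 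Hence $r^{-2}(u',v) \leq r_0^{-1} r^{-1}(u',v)$, and the integrand is bounded by $\frac{2m}{1-\mu} \cdot r_0^{-1} \cdot \frac{-\dur}{r^2}(u',v)$.

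Next I would use the substitution $\ud u' \mapsto \ud r$: since $\rd_{u'} r = \dur$, we have $\frac{-\dur}{r^2}(u',v)\, \ud u' = -\ud\!\left(\tfrac{-1}{r}\right) = \ud\!\left(\tfrac{1}{r}\right) \cdot (\text{sign})$, more precisely $\int_1^u \frac{-\dur(u',v)}{r^2(u',v)}\, \ud u' = \int \frac{1}{r^2} \, (-\ud r)\big|_{r(u,v)}^{r(1,v)} = \frac{1}{r(u,v)} - \frac{1}{r(1,v)} \leq \frac{1}{r(u,v)} \leq \frac{1}{r_0}$. Using the bound $m \leq M_i$ (valid since $\rd_v m \geq 0$ and $\rd_u m \leq 0$ by Lemma \ref{lem:mntn4m}, so $m(u',v) \leq m(1,v) \leq M_i$, which is finite by Lemma \ref{lem:bnd4Mi} and the hypotheses of Proposition \ref{prop:geomLocBVScat}) together with $(1-\mu)^{-1} \leq K\Lmb$ — wait, this bound is precisely what we are in the process of proving, so I must be careful here.

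The main obstacle, then, is the circularity: the constants $K, \Lmb$ in \eqref{eq:bnd4mu} are not yet available when this lemma is proved, so I cannot use $(1-\mu)^{-1} \leq K\Lmb$ directly. To get around this, I would instead use the factor $\frac{-\dur}{1-\mu}$ as a unit: by Lemma \ref{lem:basicEst4dr} and the formula $\rd_u \log\left|\frac{\dvr}{1-\mu}\right| \leq 0$ from Lemma \ref{lem:mntn4kpp} applied in the conjugate direction (or directly from \eqref{eq:SSESF:dr}), one controls $\frac{-\dur}{1-\mu}$ without reference to $K,\Lmb$. Concretely, I would bound the integrand as $\frac{2m}{r} \cdot \frac{1}{r} \cdot \frac{-\dur}{1-\mu}(u',v)$, and since $\frac{-\dur}{1-\mu}(u',v) = \lim_{v'\to u'+}\dvr(u',v') \cdot (\text{monotone factor})$ stays bounded by an absolute constant times $\sup \dvr \leq \tfrac12$ (using $\rd_v \log\frac{-\dur}{1-\mu} = \ldots$, or more simply $\frac{-\dur}{1-\mu} \leq \lim_{v'\to u'+}\frac{-\dur}{1-\mu} = \lim_{v'\to u'+}\dvr \leq \tfrac12$ via the monotonicity $\rd_v$ of this ratio from Lemma \ref{lem:mntn4kpp}), I obtain the integrand $\leq \frac{M_i}{r^2(u',v)}$ up to an absolute constant. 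Then the $r$-substitution gives a bound $\lesssim M_i / r_0$, which is $< \eps$ provided $r_0 > M_i/\eps$ (times the absolute constant). Choosing $r_0$ this large completes the proof. I would double-check that the monotonicity direction of $\frac{-\dur}{1-\mu}$ in $v$ indeed yields the bound by its limit at $\Gmm$, since this is the one delicate point that avoids the circular dependence on Proposition \ref{prop:geomLocBVScat}.
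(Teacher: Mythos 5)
Your overall strategy (bound $m$ by $M_i$ via monotonicity, keep the factor $-\dur$, and convert the $u'$-integral to an $r$-integral using $\int_1^u \frac{-\dur}{r^2}\,\ud u' \le \frac{1}{r_0}$ on $\set{r\ge r_0}$) is exactly right, and you correctly flagged that $(1-\mu)^{-1}\le K\Lmb$ is not yet available. But your workaround for the $(1-\mu)^{-1}$ factor fails, for two reasons. First, the monotonicity goes the wrong way: equation \eqref{eq:mntn4durOver1-mu} gives $\rd_v\log\bb(\frac{-\dur}{1-\mu}\bb)=\dvr^{-1}r(\rd_v\phi)^2\ge 0$, so $\frac{-\dur}{1-\mu}$ is \emph{non-decreasing} in $v$ and its limit at the axis, $\lim_{v'\to u'+}\dvr\le\frac12$, is a \emph{lower} bound, not an upper bound. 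The upper bound $\frac{-\dur}{1-\mu}\le K$ is precisely the content of Lemma \ref{lem:est4dur}, whose proof requires controlling $\int\dvr^{-1}r(\rd_v\phi)^2$ using the local BV scattering assumption and the bounds of Lemma \ref{lem:bnd4dvrphiphi} --- which in turn rest on the present lemma. So this route is both incorrect as stated and circular if repaired naively. Second, even granting an upper bound on $\frac{-\dur}{1-\mu}$, once you absorb $-\dur$ into that bounded ratio the remaining integral is $\int_1^u\frac{\ud u'}{r^2(u',v)}$, which has no $\ud r$ left to substitute and is only bounded by $(u-1)/r_0^2$; a lower bound on $-\dur$ (Corollary \ref{cor:lowerBnd4dvr}, also proved later) would be needed to restore the change of variables.

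The fix is much simpler and is what the paper does: on $\set{r\ge r_0}$ with $r_0>2M_i$ one has $\mu=\frac{2m}{r}\le\frac{2M_i}{r_0}<1$ pointwise, hence $(1-\mu)^{-1}\le\bb(1-\frac{2M_i}{r_0}\bb)^{-1}$ with no reference to $K$ or $\Lmb$. Keeping $-\dur$ intact, the integrand is then bounded by $\frac{2M_i}{1-2M_i/r_0}\cdot\frac{-\dur}{r^2}$, and your $r$-substitution gives the bound $\frac{2M_i}{(1-2M_i/r_0)\,r_0}$, which is $<\eps$ for $r_0$ large. With this one-line replacement your argument becomes the paper's proof.
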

\begin{proof} 
For $(u,v) \in \set{r \geq r_{0}}$, we begin by simply estimating as follows:
\begin{equation*}
	\abs{\frac{\mu}{(1-\mu)} \frac{\dur}{r}} \leq \frac{2 M_{i}}{(1-\frac{2M_{i}}{r_{0}})} \frac{(- \dur)}{r^{2}}
\end{equation*}

The above inequality holds as long as\footnote{Indeed, it suffices to choose $r_0>2M_i$ here. The condition $r_0> R$ will be used in the proof of Lemma \ref{lem:babySmllPtnl:2}.} we choose $r_{0} > \max \set{2 M_{i}, R}$. Note that if $(u, v) \in \set{r \geq r_{0}}$, then the null curve $\set{(u', v) : u' \in [1, u]}$ from the initial slice $C_{1}$ to $(u,v)$ lies entirely in $\set{r \geq r_{0}}$. Integrating along this curve, we obtain for $(u, v) \in \set{r \geq r_{0}}$
\begin{equation*}
	\int_{1}^{u} \abs{\frac{\mu}{(1-\mu)} \frac{\dur}{r}(u',v)} \, \ud u' < \frac{2 M_{i}}{(1-\frac{2M_{i}}{r_{0}})} \frac{1}{r_{0}}
\end{equation*}

Taking $r_{0}$ sufficiently large, \eqref{eq:babySmllPtnl:1} follows. \qedhere

\end{proof}

Next, we prove an analogous result in a large $u$ region. Key to its proof will be the identity \eqref{eq:babySmllPtnl:pf:0} below, which will also be used to relate \eqref{eq:babySmllPtnl:1} and \eqref{eq:babySmllPtnl:2} to the desired lower bounds of $\dvr$ and $-\dur$.

\begin{lemma} \label{lem:babySmllPtnl:2}
Let $(\phi, r, m)$ be a solution to \eqref{eq:SSESF} satisfying the hypotheses of Proposition \ref{prop:geomLocBVScat}. Then for arbitrarily small $\eps > 0$, there exists $U > 1$ such that
\begin{align} 
	\sup_{v \geq U} \int_{U}^{v} \abs{\frac{\mu}{1-\mu} \frac{\dur}{r} (u', v)} \, \ud u' <&  \eps \, . \label{eq:babySmllPtnl:2}
\end{align}
\end{lemma}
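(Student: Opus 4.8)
The plan is to integrate the transport equation \eqref{eq:SSESF:dr} for $\dvr$ in the incoming direction and split the resulting integral according to the size of $r$, exploiting a negative power of $r$ where $r$ is large and the vanishing final Bondi mass together with the scattering hypothesis \eqref{eq:locBVScat} where $r$ is small. The basic identity \eqref{eq:babySmllPtnl:pf:0} comes from the first equation of \eqref{eq:SSESF:dr}: since $\PD=\regR$ by Lemma~\ref{lem:regR} (so $\dur<0$, $1-\mu>0$, and $m\ge 0$ by Lemma~\ref{lem:mntn4m}), one has $\rd_{u}\log\dvr=\frac{\mu\dur}{(1-\mu)r}=-\frac{\mu}{1-\mu}\frac{-\dur}{r}$, so the absolute value in \eqref{eq:babySmllPtnl:2} may be dropped; integrating along $\uC_v$ then gives, for $1\le u_1\le u_2\le v$,
\begin{equation} \label{eq:babySmllPtnl:pf:0}
	\int_{u_1}^{u_2}\frac{\mu}{1-\mu}\frac{-\dur}{r}(u',v)\,\ud u'=\log\frac{\dvr(u_1,v)}{\dvr(u_2,v)},
\end{equation}
and, letting $u_2\to v-$ (with $\lmb_{\Gmm}(v)=\lim_{u\to v-}\dvr(u,v)$),
\begin{equation*}
	\int_{u_1}^{v}\frac{\mu}{1-\mu}\frac{-\dur}{r}(u',v)\,\ud u'=\log\frac{\dvr(u_1,v)}{\lmb_{\Gmm}(v)}.
\end{equation*}
Since $\rd_u\log\dvr\le0$ by Lemma~\ref{lem:mntn4r}, the integrand is nonnegative, and the integral over any subinterval of $[u_1,v]$ is bounded by the one above.

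Fix $v\ge U$ and split $[U,v]$ at the point $u_\star=u_\star(v)$ with $r(u_\star,v)=R$ (there being no large-$r$ part if $r(U,v)\le R$). On $[U,u_\star]$, $r\ge R$ throughout, so Lemma~\ref{lem:mntn4m} and Corollary~\ref{cor:mntn4Bondi} give $m(u',v)\le M(u')\le M(U)$, and by \eqref{eq:zeroMf} we may take $U$ large enough that $2M(U)/R\le 1/2$, whence $(1-\mu)^{-1}\le2$ there; then $\frac{\mu}{1-\mu}\frac{-\dur}{r}=\frac{2m}{(1-\mu)r^2}(-\dur)\le4M(U)\,\rd_{u'}(1/r)$, which telescopes to $\int_U^{u_\star}\le4M(U)(\frac1R-\frac1{r(U,v)})\le 4M(U)/R<\eps/2$ for $U$ large. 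This is the large-$u$ counterpart of the proof of Lemma~\ref{lem:babySmllPtnl:1}, with the a priori bound $m\le M_i$ replaced by the qualitative smallness $m\le M(U)\to0$.

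The remaining segment $[u_\star(v),v]$ lies entirely in $\cpt$, where $r$-weights are useless; converting \eqref{eq:locBVScat} into the required smallness here is the heart of the proof. Note first $u_\star(v)\to\infty$ as $v\to\infty$, since $r(u_0,v)\to\infty$ as $v\to\infty$ for each fixed $u_0$. Applying \eqref{eq:babySmllPtnl:pf:0} with $u_1=u_\star(v)$ and using $\dvr(u_\star(v),v)\le1/2$ (Corollary~\ref{cor:est4dr}),
\begin{equation*}
	\int_{u_\star(v)}^{v}\frac{\mu}{1-\mu}\frac{-\dur}{r}(u',v)\,\ud u'=\log\frac{\dvr(u_\star(v),v)}{\lmb_{\Gmm}(v)},
\end{equation*}
so it suffices to show $\lmb_{\Gmm}(v)\to\frac12$ as $v\to\infty$, i.e.\ that the geometry near the axis asymptotes to Minkowski's. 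The portion of the cone $C_{u_\star(v)}$ from the axis to $(u_\star(v),v)$ lies in $\cpt$, so \eqref{eq:locBVScat} gives $\int_{C_{u_\star(v)}\cap\cpt}\abs{\rd_v\log\dvr}\to0$, which pins $\dvr(u_\star(v),v)$ to $\lmb_{\Gmm}(u_\star(v))$ up to a factor tending to $1$; combining this with the monotonicity $\rd_u\dvr\le0$ and $\rd_u(\dvr/(1-\mu))\le0$ of Lemma~\ref{lem:mntn4kpp}, with Lemma~\ref{lem:est4dvphi}, and with the smallness of $\mu$ throughout $\cpt$ for large $u$ furnished by $m\le M(u)\to0$, one propagates this near-Minkowski behaviour along the axis to obtain $\lmb_{\Gmm}(v)\to\frac12$. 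I expect the careful execution of this last step to be the main obstacle: it is where the renormalized second-order equations of \S\ref{subsec:nullStr} and their null structure are used to bound the variation of $\dvr$ inside $\cpt$ by the scattering norms in \eqref{eq:locBVScat}, and it is essentially interlocked with the lower bounds of Proposition~\ref{prop:geomLocBVScat}, so the two are best proved together.

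Finally, taking $U$ so large that both the large-$r$ and the $\cpt$ contributions are $<\eps/2$ for every $v\ge U$, and passing to the supremum over $v\ge U$, yields \eqref{eq:babySmllPtnl:2}.
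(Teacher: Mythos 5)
Your approach matches the paper's in its main lines: the identity \eqref{eq:babySmllPtnl:pf:0}, a decomposition of $\uC_v \cap \set{u \geq U}$ by the size of $r$, vanishing Bondi mass in the large-$r$ part, and \eqref{eq:locBVScat} inside $\cpt$. Your single telescoping bound $\frac{\mu}{1-\mu}\frac{-\dur}{r}\leq 4M(U)\,\rd_{u'}(1/r)$ over all of $\set{r\geq R}$ is in fact a clean simplification of the paper's two-stage split of that region into $\set{R\leq r\leq r_0}$ and $\set{r\geq r_0}$, and it is correct as written.

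The gap is in the treatment of $\cpt$, which you leave open with a misdiagnosis of what is needed. You guess that controlling $\lmb_\Gmm(v)$ requires the renormalized second-order equations of \S\ref{subsec:nullStr} and their null structure and is interlocked with Proposition~\ref{prop:geomLocBVScat}. That would be circular: this lemma is an ingredient in the proof of Proposition~\ref{prop:geomLocBVScat} (via Corollary~\ref{cor:lowerBnd4dvr}), and only first-order information is required. The hypothesis \eqref{eq:locBVScat} already bounds the entire oscillation of $\log\dvr$ along each outgoing segment $C_u\cap\cpt$ with no differentiation of the equations; the idea you are missing is how to compare $\log\dvr$ across different values of $u$. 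The paper's device is to anchor to the fixed reference $\log\tfrac12$ through the curve $r=R$: Lemma~\ref{lem:babySmllPtnl:1} together with the vanishing Bondi mass shows $\abs{\log\dvr-\log\tfrac12}$ is small at $r=R$ once $u$ is large, and then \eqref{eq:locBVScat} propagates this into $\cpt$ along each $C_u$, so \eqref{eq:babySmllPtnl:pf:1} follows by the triangle inequality. Equivalently, your statement $\lmb_\Gmm(v)\to\tfrac12$ follows from exactly this chain, by passing to the limit $v'\to v+$ along $C_v$; no second derivatives and no null structure enter.
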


\begin{proof} 
Let $\eps > 0$ be an arbitrary positive number. Using \eqref{eq:SSESF:dr} and the fact that $1-\mu > 0, -\nu > 0$ on $\PD$, we have for any $1 \leq u_{1} \leq u_{2} < v$,
\begin{equation} \label{eq:babySmllPtnl:pf:0}
	\int_{u_{1}}^{u_{2}} \abs{\frac{\mu}{1-\mu} \frac{\dur}{r} (u', v)} \, \ud u' = \log \dvr(u_{1}, v) - \log \dvr(u_{2}, v).
\end{equation}

In order to prove \eqref{eq:babySmllPtnl:2}, it therefore suffices to exhibit $U > 1$ such that 
\begin{equation} \label{eq:babySmllPtnl:pf:1}
	\sup_{(u, v), (u', v') \in \set{u \geq U}} \abs{\log \dvr(u, v) - \log \dvr(u', v')} < \eps.
\end{equation} 

In order to proceed, we divide $\PD$ into three regions: $\cpt := \set{r \leq R}$, $\PD_{[R, r_{0}]} := \set{R \leq r \leq r_{0}}$ and $\PD_{[r_{0}, \infty)} := \set{r \geq r_{0}}$, where $r_{0} > \max\{2M_i,R\}$ is chosen via Lemma \ref{lem:babySmllPtnl:1} so that
\begin{equation*} 
	\sup_{(u,v) \in \PD_{[r_{0}, \infty)}}\int_{1}^{u} \abs{\frac{\mu}{1-\mu} \frac{\dur}{r}(u',v)} \, \ud u' < \frac{\eps}{8}.
\end{equation*}

Using \eqref{eq:babySmllPtnl:pf:0} and the fact that $\log \dvr(1, v) = \frac{1}{2}$, the preceding inequality is equivalent to
\begin{equation} \label{eq:babySmllPtnl:pf:2}
	\sup_{(u,v) \in \PD_{[r_{0}, \infty)} } \abs{\log \dvr(u,v) - \frac{1}{2}} < \frac{\eps}{8}.
\end{equation}

Next, we turn to the region $\PD_{[R, r_{0}]}$; here we exploit the vanishing of the final Bondi mass. Indeed, taking $U_{1}$ large enough so that $2 M(U_{1}) < R$, we may estimate
\begin{equation*}
	\abs{\frac{\mu}{1-\mu} \frac{\dur}{r}} \leq \frac{2 M(U_{1})}{(1-\frac{2 M(U_{1})}{R}) R^{2}} (-\dur)\quad\mbox{for }u\geq U_1.
\end{equation*}

Consider now the time-like curve given by $\gmm_{0} := \set{(u',v') : r(u', v') = r_{0}}$. On $\gmm_{0} \cap \set{(u,v) : u \geq U_{1}}$, note that \eqref{eq:babySmllPtnl:pf:2} holds. Integrating the preceding inequality along incoming null curves emanating from $\gmm_{0} \cap \set{(u,v) : u \geq U_{1}}$, we obtain for $(u, v) \in \PD_{[R, r_{0}]} \cap \set{(u,v) : u \geq U_{2}}$
\begin{equation*} 
	\abs{\log \dvr(u,v) - \frac{1}{2}} < \frac{\eps}{8} + \frac{2 M(U_{1}) (r_{0} - R)}{(1- \frac{2 M(U_{1})}{R}) R^{2}} .
\end{equation*}
where $U_{2} = U_{2}(U_{1}, r_{0})$ is the future endpoint of the incoming null curve in $\PD_{[R, r_{0}]}$ from the past endpoint of $\gmm_{0} \cap \set{(u,v) : u \geq U_{1}}$; more precisely, $U_{2} = \sup \set{u : r(u, V_{1}) \geq R}$, where $V_{1}$ is the defined by $r(U_{1}, V_{1}) = r_{0}$. Choosing $U_{1}$ sufficiently large, we then obtain
\begin{equation} \label{eq:babySmllPtnl:pf:3}
	\sup_{(u, v) \in \PD_{[R, r_{0}]} \cap \set{u \geq U_{2}}} \abs{\log \dvr(u,v) - \frac{1}{2}} < \frac{\eps}{4}.
\end{equation}

Finally, in $\cpt$, we use the local BV scattering condition \eqref{eq:locBVScat} to choose $U \geq U_{2}$ large enough so that we have
\begin{equation} \label{eq:babySmllPtnl:pf:4}
	\sup_{(u, v), (u, v') \in \cpt \cap \set{u \geq U}}\abs{\log \dvr (u, v) - \log \dvr(u, v')} < \frac{\eps}{4}.
\end{equation}

To compare $\log \dvr(u, v)$ and $\log \dvr(u', v')$ with $u \neq u'$, we use \eqref{eq:babySmllPtnl:pf:3}, \eqref{eq:babySmllPtnl:pf:4} and the triangle inequality. Thus, the desired conclusion \eqref{eq:babySmllPtnl:pf:1} follows. \qedhere
\end{proof}

As a corollary of the preceding lemmas and \eqref{eq:babySmllPtnl:pf:0} (or, more directly, \eqref{eq:babySmllPtnl:pf:1} and \eqref{eq:babySmllPtnl:pf:2}), we immediately see that $\dvr$ and $-\dur$ is uniformly bounded away from zero.
\begin{corollary} \label{cor:lowerBnd4dvr}
Let $(\phi, r, m)$ be a solution to \eqref{eq:SSESF} satisfying the hypotheses of Proposition \ref{prop:geomLocBVScat}. Then there exists $0 < \Lmb < \infty$ such that for all $(u, v) \in \PD$, we have
\begin{equation*}
	\Lmb^{-1} \leq \dvr(u, v), \quad
	\Lmb^{-1} \leq - \dur(u,v).
\end{equation*}
\end{corollary}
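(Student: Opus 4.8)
The plan is to obtain both lower bounds from the two transport estimates of Lemma~\ref{lem:basicEst4dr}, fed by the quantitative smallness already established in the course of proving Lemma~\ref{lem:babySmllPtnl:2}, together with the qualitative fact that $\dvr^{-1}$ is bounded on each compact domain of dependence (condition $(1)$ of Definition~\ref{def:BVsolution}). In essence, the scattering hypotheses give a lower bound on $\dvr$ once $u$ is large, the local BV theory gives one on a fixed compact piece near the axis, and \eqref{eq:basicEst4dr:2}--\eqref{eq:basicEst4dr:3} let one propagate these to all of $\PD$ and to $-\dur$.

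Concretely, I would proceed in four steps. (i) First, extract from the proof of Lemma~\ref{lem:babySmllPtnl:2} the statement that, for the $\eps > 0$ and the corresponding $U > 1$ chosen there, $\dvr(u,v) \ge \frac{1}{2} e^{-\eps}$ for every $(u,v) \in \PD$ with $u \ge U$; this is just the assembly of \eqref{eq:babySmllPtnl:pf:2} (an absolute bound on $\set{r \ge r_{0}}$), \eqref{eq:babySmllPtnl:pf:3} (an absolute bound on $\set{R \le r \le r_{0}} \cap \set{u \ge U}$), and \eqref{eq:babySmllPtnl:pf:4} (the oscillation bound on $\cpt \cap C_{u}$, anchored at the point of $C_{u}$ where $r = R$, using $U \ge U_{2}$). (ii) Next, since $\calD(1,U) = \set{(u,v) \in \PD : 1 \le u \le v \le U}$ is a compact domain of dependence, condition $(1)$ of Definition~\ref{def:BVsolution} yields $c_{1} := \big(\sup_{\calD(1,U)} \dvr^{-1}\big)^{-1} > 0$. (iii) Set $\Lmb^{-1} := \min\set{\frac{1}{2} e^{-\eps},\, c_{1}}$. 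For $(u,v) \in \PD$ with $v \le U$ one has $(u,v) \in \calD(1,U)$, hence $\dvr(u,v) \ge c_{1}$; for $(u,v)$ with $v > U$, every $u'$ sufficiently close to $v$ from below satisfies $u' \ge U$, so by \eqref{eq:basicEst4dr:2} we get $\dvr^{-1}(u,v) \le \lim_{u' \to v-} \dvr^{-1}(u',v) \le 2 e^{\eps}$; thus $\dvr \ge \Lmb^{-1}$ on all of $\PD$. (iv) Finally \eqref{eq:basicEst4dr:3} gives $-\dur(u,v) \ge \lim_{v' \to u+} \dvr(u,v') \ge \Lmb^{-1}$ directly from (iii), completing the proof after possibly enlarging $\Lmb$.

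The only point that is not essentially automatic is the treatment of the region $\set{u \le U}$, where the local BV scattering smallness is unavailable: there one cannot integrate a small potential along incoming null curves, and must instead invoke the compact-domain bound of step (ii) and use \eqref{eq:basicEst4dr:2} to carry the lower bound from $\calD(1,U)$ outward into the non-compact strip $\set{u \le U,\ v > U}$. I also rely on the structural fact, already used implicitly around Lemma~\ref{lem:babySmllPtnl:2}, that $r$ becomes large on each $C_{u}$, so that the anchoring radius $r = R$ is attained on $C_{u}$ in step (i); this reflects the Penrose diagram of Figure~\ref{fig.disp}.
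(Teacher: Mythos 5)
Your proof is correct and takes essentially the route the paper sketches in its one-line remark after Lemma~\ref{lem:est4dur}: combine the oscillation bounds \eqref{eq:babySmllPtnl:pf:1}--\eqref{eq:babySmllPtnl:pf:4} with Lemma~\ref{lem:basicEst4dr}. The one thing you do more carefully than the paper is to isolate and justify the finite-$v$ region via the compact domain of dependence $\calD(1,U)$ and condition~$(1)$ of Definition~\ref{def:BVsolution}, and then use \eqref{eq:basicEst4dr:2} to push the large-$u$ bound into the strip $\set{u \le U,\, v > U}$; this is exactly the gap the paper's ``we immediately see'' leaves implicit.
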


Together with Corollary \ref{cor:est4dr}, this concludes the proof of \eqref{eq:bnd4dvr}. Next, using Lemmas \ref{lem:est4phi}, \ref{lem:babySmllPtnl:1}, \ref{lem:babySmllPtnl:2} and the wave equation \eqref{eq:SSESF:dphi} for $\phi$, we prove \eqref{eq:bnd4dvrphi}, \eqref{eq:bnd4phi} in the following lemma.

\begin{lemma}  \label{lem:bnd4dvrphiphi}
Let $(\phi, r, m)$ be a solution to \eqref{eq:SSESF} satisfying the hypotheses of Proposition \ref{prop:geomLocBVScat}. Then there exists a constant $0 < \Psi < \infty$ such that
\begin{equation} \label{eq:bnd4dvrphiphi}
	\sup_{\PD} \abs{\rd_{v}(r \phi)} \leq \Psi, \quad 
	\sup_{\PD} \abs{\phi} \leq \Lmb \Psi,
\end{equation}
where $\Lmb$ is the best constant such that Corollary \ref{cor:lowerBnd4dvr} holds.
\end{lemma}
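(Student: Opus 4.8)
The plan is to bound $\rd_v(r\phi)$ first; the bound on $\phi$ then follows at once from Lemma \ref{lem:est4phi} together with the lower bound $\dvr \geq \Lmb^{-1}$ of Corollary \ref{cor:lowerBnd4dvr}, which give $\abs{\phi(u,v)} \leq \Lmb \sup_{v' \in [u,v]} \abs{\rd_v(r\phi)(u,v')}$; in particular $\sup_{\PD}\abs{\phi}\leq \Lmb \sup_{\PD}\abs{\rd_v(r\phi)}$. To control $\rd_v(r\phi)$, I would use the wave equation in the renormalized form \eqref{eq:SSESF:dphi'}, $\rd_u \rd_v (r\phi) = (\rd_u \dvr)\,\phi$, and integrate it along incoming null rays starting from $C_1$:
\begin{equation*}
\rd_v(r\phi)(u,v) = \rd_v(r\phi)(1,v) + \int_1^u (\rd_u \dvr)(u',v)\, \phi(u',v) \, \ud u'.
\end{equation*}
Two facts about the kernel $\rd_u\dvr$ are used throughout: by Lemma \ref{lem:mntn4kpp} we have $\rd_u\dvr \leq 0$, so $\int_1^u \abs{\rd_u\dvr}(u',v)\,\ud u' = \dvr(1,v) - \dvr(u,v) \leq \frac12$; and since $\abs{\rd_u\dvr} = \dvr\,\abs{\frac{\mu}{1-\mu}\frac{\dur}{r}} \leq \frac12\abs{\frac{\mu}{1-\mu}\frac{\dur}{r}}$, the smallness statements of Lemmas \ref{lem:babySmllPtnl:1} and \ref{lem:babySmllPtnl:2} transfer directly to $\int\abs{\rd_u\dvr}$.

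Fix $\eps>0$ with $\Lmb\eps \leq 1$ (recall $\Lmb$ is already determined, and $\Lmb\geq 2$ since $\dvr\leq\frac12$). Using Lemmas \ref{lem:babySmllPtnl:1} and \ref{lem:babySmllPtnl:2}, choose $r_0$ and $U$ so that $\int_1^u\abs{\frac{\mu}{1-\mu}\frac{\dur}{r}}(u',v)\,\ud u'<\eps$ whenever $r(u,v)\geq r_0$, and $\int_U^v\abs{\frac{\mu}{1-\mu}\frac{\dur}{r}}(u',v)\,\ud u'<\eps$ whenever $v\geq U$. The region $\calK:=\set{(u,v)\in\PD : r(u,v)\leq r_0,\ u\leq U}$ is contained in a compact domain of dependence --- indeed $\dvr\geq\Lmb^{-1}$ forces $v-u\leq\Lmb r_0$ on $\calK$ --- so by the a priori regularity of a BV solution both $\rd_v(r\phi)$ and $\phi$ are bounded there; set $Q_0:=\sup_\calK\abs{\phi}<\infty$ and $I_0:=\sup_{C_1}\abs{\rd_v(r\phi)}<\infty$. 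For each fixed $V$, let $\Theta_V(u):=\sup_{v\in[u,V]}\abs{\rd_v(r\phi)(u,v)}$, which is finite by compactness of $\calD(1,V)$; the aim is to bound $\sup_{u\in[1,V]}\Theta_V(u)$ by a constant independent of $V$, and then let $V\to\infty$. Given $(u,v)$ with $v\leq V$, split the integral $\int_1^u(\rd_u\dvr)\,\phi\,\ud u'$ into three parts: (i) over $\set{u' : r(u',v)\geq r_0}$, where $\abs{\phi}\leq\Lmb\Theta_V(u')$ and the kernel integrates to $<\frac\eps2$ (Lemma \ref{lem:babySmllPtnl:1} applied at the point where $r=r_0$); (ii) over $\set{u'\leq U,\ r(u',v)\leq r_0}\subset\calK$, where $\abs{\phi}\leq Q_0$ and the kernel integrates to $\leq\frac12$; and (iii) over $\set{u'\geq U}$, where $\abs{\phi}\leq\Lmb\Theta_V(u')$ and the kernel integrates to $<\frac\eps2$ (Lemma \ref{lem:babySmllPtnl:2}). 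To avoid circularity in the absorption, I would run this in two stages: first on $\set{u\leq U}$ using only pieces (i) and (ii), giving $\sup_{[1,U]}\Theta_V \leq 2I_0+Q_0$ (a $V$-independent constant) after absorbing the $\frac{\Lmb\eps}{2}\leq\frac12$ term against the finite quantity $\sup_{[1,U]}\Theta_V$; then on $\set{u\geq U}$, bounding piece (i)$+$(ii) for $u'\in[1,U]$ by the constant just obtained and using piece (iii) for $u'\geq U$, which after another absorption yields a $V$-independent bound on $\sup_{[U,V]}\Theta_V$. Hence $\sup_{[1,V]}\Theta_V\leq\Psi$ for a constant $\Psi$ independent of $V$, and letting $V\to\infty$ gives $\sup_\PD\abs{\rd_v(r\phi)}\leq\Psi$; then $\sup_\PD\abs{\phi}\leq\Lmb\Psi$ as noted above.

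The main obstacle is exactly the coupling between $\rd_v(r\phi)$ and $\phi$ combined with the fact that $\int_1^u\abs{\rd_u\dvr}$ is only bounded by $\frac12$ globally --- not small --- so that a direct Gr\"onwall estimate does not close (the natural constant $\frac{\Lmb}{2}\cdot\frac12$ is $\geq\frac12$). The decomposition above is what makes it work: genuine smallness of the kernel is available in the large-$r$ region (coming from $M_i<\infty$, Lemma \ref{lem:babySmllPtnl:1}) and in the large-$u$ region (coming from the vanishing of the final Bondi mass and the local BV scattering hypothesis, Lemma \ref{lem:babySmllPtnl:2}), while on the complementary bounded set $\calK$ there is no smallness but there is a priori finiteness from compactness of the domain of dependence. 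The delicate point is simply to order the three estimates so that at each step the quantity being absorbed is already known to be finite.
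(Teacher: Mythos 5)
Your proof is correct and rests on the same essential ingredients as the paper's — the reduction of $\phi$ to $\rd_{v}(r\phi)$ via Lemma \ref{lem:est4phi}, smallness of the kernel in the large-$r$ region (Lemma \ref{lem:babySmllPtnl:1}) and in the large-$u$ region (Lemma \ref{lem:babySmllPtnl:2}), compactness in between, and absorption — but the absorption is organized differently enough to merit comparison. The paper proceeds region by region: it first closes the estimate inside the \emph{non-compact} set $\set{r \geq r_{0},\ u \leq U}$, which forces it to (a) write $r\phi = (r\phi - r_{0}\phi_{r_{0}}) + r_{0}\phi_{r_{0}}$ so that Lemma \ref{lem:est4phi} is applied only along the outgoing segment lying in $\set{r \geq r_{0}}$, anchored at the timelike curve $r = r_{0}$ where $\phi$ is controlled by compactness, and (b) run a continuity argument in $U$ to justify the absorption there; only afterwards does it extend to $\set{r \leq r_{0}}$ and then to large $u$. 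You instead truncate at $v = V$ so that all suprema are taken over the compact domain $\calD(1,V)$, split the kernel integral three ways in a single pass, and let $V \to \infty$; this dispenses with both the $\phi_{r_{0}}$ renormalization and the continuity argument, at the harmless price that your bound $\abs{\phi(u',v)} \leq \Lmb\,\Theta_{V}(u')$ in the large-$r$ piece invokes $\rd_{v}(r\phi)$ on all of $C_{u'} \cap \set{v' \leq V}$ rather than only its large-$r$ portion. Your two-stage absorption (first $u \leq U$, then $u \geq U$) is exactly what is needed so that each absorbed quantity is already known to be finite. The one step to spell out is the a priori finiteness of $\sup_{\calD(1,V)} \abs{\rd_{v}(r\phi)}$, i.e.\ of $\Theta_{V}$: it does not follow from compactness of $\calD(1,V)$ alone, but it does follow from your own identity $\int_{1}^{u} \abs{\rd_{u}\dvr}\,\ud u' \leq \tfrac{1}{2}$ combined with the boundedness of $\phi$ on compact domains of dependence supplied by Definition \ref{def:BVsolution} (or from the local BV theory cited in Remark \ref{rem:wp}).
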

\begin{proof} 
Note that the second inequality of \eqref{eq:bnd4dvrphiphi} is an immediate consequence of the first inequality, Lemma \ref{lem:est4phi} and Corollary \ref{cor:lowerBnd4dvr}. The proof of the first inequality will proceed in two steps: First, we shall show that $\rd_{v}(r \phi)$ is uniformly bounded on the large $r$ region, essentially via Lemma \ref{lem:babySmllPtnl:1}. By compactness, it immediately follows that $\rd_{v}(r \phi)$ is uniformly bounded on the finite $u$ region. Then in the second step, we shall show that $\rd_{v}(r \phi)$ is uniformly bounded on a large $u$ region as well using Lemma \ref{lem:babySmllPtnl:2}.

By Lemma \ref{lem:babySmllPtnl:1}, choose $r_{0} > 0$ so that 
\begin{equation} \label{eq:bdd4dvrphiphi:pf:1}
	\sup_{(u,v) \in \set{r \geq r_{0}}} \int_{1}^{u} \abs{\frac{\mu}{1-\mu} \frac{\dur}{r} (u', v)} \, \ud u' < \frac{1}{10 \Lmb}.
\end{equation}

We also borrow the notation $\PD_{[r_{0}, \infty)} := \set{(u,v) : r(u,v) \geq r_{0}}$ from the proof of Lemma \ref{lem:babySmllPtnl:2}. Given $U \geq 1$, define $\Psi_{[r_{0}, \infty)}(U)$ to be
\begin{equation*}
	\Psi_{[r_{0}, \infty)}(U) := \sup_{(u, v) \in \PD_{[r_{0}, \infty)} \cap \set{1 \leq u \leq U}} \abs{\rd_{v} (r \phi)(u, v)}.  
\end{equation*}

Let $(u,v) \in \PD_{[r_{0}, \infty)}$. Using \eqref{eq:SSESF:dphi}, we then write
\begin{align*}
	\rd_{u} \rd_{v} ( r \phi)
	= & \frac{\mu}{1-\mu} \frac{\dur}{r} \bb( \frac{\dvr}{r} (r \phi - r_{0} \phi_{r_{0}}) + \frac{\dvr}{r} r_{0} \phi_{r_{0}} \bb).
\end{align*} 

Here, $\phi_{r_{0}}(u,v) := \phi(u, v^{\star}_{0}(u))$, where $v^{\star}_{0}(u)$ is the unique $v$-value for which $r(u, v^{\star}_{0}(u)) = r_{0}$. Note that the outgoing null curve from $(u, v^{\star}_{0}(u))$ to $(u,v) \in \PD_{[r_{0}, \infty)}$ lies entirely in $\PD_{[r_{0}, \infty)}$. Thus, by Lemma \ref{lem:est4phi} and \eqref{eq:bnd4dvr}, we see that for $(u, v) \in \PD_{[r_{0}, \infty)}$ with $1 \leq u \leq U$, 
\begin{align*}
	\abs{\rd_{u} \rd_{v} ( r \phi)} 
	\leq & \abs{\frac{\mu}{1-\mu} \frac{\dur}{r}} \bb( \frac{(r - r_{0})}{2 r} \Lmb \Psi_{[r_{0}, \infty)}(U) + \frac{r_{0}}{2 r} \abs{\phi_{r_{0}}} \bb) \\
	\leq & \abs{\frac{\mu}{1-\mu} \frac{\dur}{r}} \bb( \Lmb \Psi_{[r_{0}, \infty)}(U) + \abs{\phi_{r_{0}}} \bb).
\end{align*}

Integrating this equation over the incoming null curve from $(1, v)$ to $(u, v)$ (which lies in $\PD_{[r_{0}, \infty)} \cap \set{1 \leq u \leq U}$) and using Lemma \ref{lem:babySmllPtnl:1}, we then obtain
\begin{align*}
	\Psi_{[r_{0}, \infty)}(U)	\leq \sup_{C_{1} \cap \PD_{[r_{0}, \infty)}} \abs{\rd_{v} (r \phi)} + \frac{1}{10} \Psi_{[r_{0}, \infty)}(U) + \frac{1}{10 \Lmb} \sup_{\gmm_{0} \cap \set{1 \leq u \leq U}} \abs{\phi}
\end{align*}
where $\gmm_{0}$ is the time-like curve $\set{(u,v) : r(u,v) = r_{0}}$. Note that the first term on the right-hand side is finite by the assumptions on the initial data, whereas the last term is finite for every $1 \leq U < \infty$ by compactness of $\gmm_{0} \cap \set{(u,v) : 1 \leq u \leq U}$ and continuity of $\phi$. Then, by a simple continuity argument, it follows that $\Psi_{[r_{0}, \infty)}(U) < \infty$ for every $1 \leq U < \infty$. Moreover, by compactness of $\set{(u, v) : r(u,v) \leq r_{0}, \, 1 \leq u \leq U}$, as well as the uniform BV assumption on $\rd_{v}(r \phi)$, we also have
\begin{equation*}
	\Psi_{[0, \infty)}(U) := \sup_{(u,v) \in \set{1 \leq u \leq U}} \abs{\rd_{v}(r \phi)(u,v)} < \infty.
\end{equation*}

We now proceed to deal with the large-$u$ region, namely $\set{(u,v) : u \geq U}$. Using Lemma \ref{lem:babySmllPtnl:2}, we choose $U_{0} \geq 1$ sufficiently large so that
\begin{equation}
	\sup_{v \geq U_{0}} \int_{U_{0}}^{v} \abs{\frac{\mu}{1-\mu} \frac{\dur}{r} (u', v)} \, \ud u' < \frac{1}{10 \Lmb}.
\end{equation}

Proceeding as before via Lemma \ref{lem:est4phi}, we estimate for $(u,v) \in \set{(u,v) : u \geq U_{0}}$ 
\begin{align*}
	\abs{\rd_{u} \rd_{v} (r \phi)(u,v)} \leq \abs{\frac{\mu}{1-\mu} \frac{\dur}{r}} \, \Lmb \sup_{v' \in [u, v]} \abs{\rd_{v}(r \phi)(u, v')} .
\end{align*}

Integrating along incoming null curves from $C_{U_{0}}$, we see that
\begin{equation*}
	\Psi_{[0, \infty)}(U) \leq \Psi_{[0, \infty)}(U_{0}) + \frac{1}{10} \Psi_{[0, \infty)}(U)
\end{equation*}
for any $U \geq U_{0}$. Absorbing the second term on the right-hand side into the left-hand side and taking $U \to \infty$, we obtain \eqref{eq:bnd4dvrphiphi} with $\Psi \leq \frac{10}{9} \Psi_{[0, \infty)}(U_{0}) < \infty$. \qedhere
\end{proof}

We are finally ready to conclude the proof of Proposition \ref{prop:geomLocBVScat}, by proving \eqref{eq:bnd4conjKpp}. Indeed, the upper bounds in \eqref{eq:bnd4dur} and \eqref{eq:bnd4mu} would then follow immediately. Moreover, the lower bound in \eqref{eq:bnd4mu} is trivial, as $\mu = \frac{2m}{r} \geq 0$.

\begin{lemma} \label{lem:est4dur}
Let $(\phi, r, m)$ be a solution to \eqref{eq:SSESF} satisfying the hypotheses of Proposition \ref{prop:geomLocBVScat}. Then there exists a finite constant $K > 0$ such that for all $(u,v) \in \PD$,
\begin{equation} \label{eq:est4dur:key}	
 \frac{- \dur}{1-\mu} (u,v) \leq K.
\end{equation}
\end{lemma}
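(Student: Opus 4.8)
The plan is to derive a transport identity for $\dfrac{-\dur}{1-\mu}$ in the outgoing $v$-direction, integrate it outward from the axis $\Gmm$, and show that the exponential weight accumulated along any ray $C_{u}$ is bounded by a constant independent of the point $(u,v)$. The first step is to record the identity
\[
	\rd_{v}\log\Big(\frac{-\dur}{1-\mu}\Big) = \frac{2\,\rd_{v}m}{(1-\mu)\,r} = \frac{r\,(\rd_{v}\phi)^{2}}{\dvr} \;\geq\; 0 ,
\]
valid on each $C_{u}$, obtained by combining $\rd_{v}\log(-\dur) = \tfrac{\mu}{(1-\mu)r}\dvr$ (from \eqref{eq:SSESF:dr}) with the formula for $\rd_{v}\log(1-\mu)$ that follows from $\mu = \tfrac{2m}{r}$ and \eqref{eq:SSESF:dm}. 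Integrating in $v$ from $\Gmm$, where $r = m = 0$ so that $1-\mu \to 1$ (using that $m$ vanishes to higher order than $r$ near $\Gmm$) and $\dvr + \dur \to 0$ by Definition~\ref{def:BVsolution}, yields
\[
	\frac{-\dur}{1-\mu}(u,v) = \lmb_{\Gmm}(u)\,\exp\Big( \int_{u}^{v}\frac{2\,\rd_{v}m}{(1-\mu)\,r}(u,v')\,\ud v' \Big) , \qquad \lmb_{\Gmm}(u) := \lim_{v\to u+}\dvr(u,v).
\]
Since $\lmb_{\Gmm}(u) \leq \tfrac12$ by Corollary~\ref{cor:est4dr}, it suffices to bound $I(u,v) := \int_{u}^{v}\tfrac{2\rd_{v}m}{(1-\mu)r}(u,v')\,\ud v'$ uniformly in $(u,v) \in \PD$.

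To this end I would split $C_{u}$ at the radius $r_{0} := \max\{2M_{i}, R\} + 1$. On $C_{u} \cap \{r \geq r_{0}\}$ one has $\mu = \tfrac{2m}{r} \leq \tfrac{2M(u)}{r_{0}} \leq \tfrac{2M_{i}}{r_{0}} < 1$, so the integrand is $\leq \tfrac{2}{(1-2M_{i}/r_{0})r_{0}}\rd_{v}m$; combined with $\int_{C_{u}}\rd_{v}m \leq M(u) \leq M_{i} < \infty$ (monotonicity of $m$ and of the Bondi mass, Lemma~\ref{lem:mntn4m} and Corollary~\ref{cor:mntn4Bondi}), the contribution of this region to $I(u,v)$ is at most an absolute constant. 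On $C_{u} \cap \{r \leq r_{0}\}$ I would use the second form of the identity: $\tfrac{2\rd_{v}m}{(1-\mu)r} = \tfrac{r(\rd_{v}\phi)^{2}}{\dvr} \leq \Lmb\,|\rd_{v}\phi|\,|r\rd_{v}\phi| \leq \Lmb\Psi(1+\tfrac12\Lmb)\,|\rd_{v}\phi|$, where I used $\dvr \geq \Lmb^{-1}$ (Corollary~\ref{cor:lowerBnd4dvr}) and the uniform bound $|r\rd_{v}\phi| = |\rd_{v}(r\phi) - \dvr\,\phi| \leq \Psi + \tfrac12\Lmb\Psi$ coming from Lemma~\ref{lem:bnd4dvrphiphi} and $\dvr \leq \tfrac12$. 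This reduces matters to bounding $\int_{C_{u}\cap\{r\leq r_{0}\}}|\rd_{v}\phi|\,\ud v'$ uniformly in $u$, and I would split this once more at $r = R$. On $C_{u}\cap\{r\leq R\} = C_{u}\cap\cpt$, Lemma~\ref{lem:est4dvphi} bounds it by $\int_{C_{u}\cap\cpt}|\rd_{v}(\dvr^{-1}\rd_{v}(r\phi))|$, which is in turn $\leq C(\Lmb,\Psi)\big(\int_{C_{u}\cap\cpt}|\rd_{v}^{2}(r\phi)| + \int_{C_{u}\cap\cpt}|\rd_{v}\log\dvr|\big)$; this is uniformly bounded over all $u \geq 1$, since it tends to $0$ as $u \to \infty$ by the local BV scattering hypothesis \eqref{eq:locBVScat} and is uniformly bounded for $u$ in any compact interval by Definition~\ref{def:BVsolution}. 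On $C_{u}\cap\{R\leq r\leq r_{0}\}$ one has $|\rd_{v}\phi| = \tfrac{|r\rd_{v}\phi|}{r} \leq \tfrac{\Psi(1+\Lmb/2)}{R}$ while the $v$-length of this set is $\int_{R}^{r_{0}}\dvr^{-1}\,\ud r \leq \Lmb(r_{0}-R)$, so its contribution is again bounded by an absolute constant.

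Putting these estimates together gives $I(u,v) \leq C$ for a constant $C$ independent of $(u,v)$, hence $\tfrac{-\dur}{1-\mu}(u,v) \leq \tfrac12 e^{C} =: K$, which is \eqref{eq:est4dur:key}. The crux is the uniform bound on $\int_{C_{u}}|\rd_{v}\phi|$: the quantitative inputs are Lemma~\ref{lem:est4dvphi} together with the local BV scattering assumption in the region $\{r \leq R\}$, plus the elementary observation that the quadratic integrand $r(\rd_{v}\phi)^{2}$ equals $|\rd_{v}\phi|$ times $|r\rd_{v}\phi| = |\rd_{v}(r\phi) - \dvr\phi|$, which is already controlled --- thereby converting a quadratic expression into the (locally finite) total variation of $\phi$. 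The remaining work is bookkeeping over the three regions and the routine verification that the transport identity and its integration from $\Gmm$ are legitimate for BV solutions.
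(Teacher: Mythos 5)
Your proof is correct and follows essentially the same route as the paper's: integrate the transport identity $\rd_{v}\log(\tfrac{-\dur}{1-\mu}) = \dvr^{-1}r(\rd_{v}\phi)^{2}$ outward from $\Gmm$, reduce to a uniform bound on $\int_{C_u}\dvr^{-1}r(\rd_v\phi)^2$, split at a fixed radius, control the large-$r$ piece via monotonicity of the Hawking/Bondi mass, and control the compact-$r$ piece by factoring out a bounded copy of $|r\rd_v\phi|$ and then using Lemma~\ref{lem:est4dvphi} together with the local BV scattering hypothesis \eqref{eq:locBVScat} to bound $\int_{C_u\cap\cpt}|\rd_v\phi|$ uniformly. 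The only cosmetic difference is in the exterior region: you split at $r_0 > \max\{2M_i,R\}$ and bound $(1-\mu)^{-1}$ directly from $\mu\le 2M_i/r_0 < 1$ (which introduces the harmless intermediate strip $R\le r\le r_0$ handled by a pointwise bound), whereas the paper keeps the single cut at $r=R$ and bounds $(1-\mu)^{-1}$ via Lemma~\ref{lem:mntn4kpp} and $\sup_{C_1}(1-\mu)^{-1}<\infty$; both are valid and of the same difficulty.
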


\begin{proof} 
To prove \eqref{eq:est4dur:key}, we shall rely on the equation
\begin{equation} \label{eq:mntn4durOver1-mu}
	\rd_{v} \log \bb( \frac{-\dur}{1-\mu} \bb) = \dvr^{-1} r (\rd_{v} \phi)^{2},
\end{equation}
which may be easily derived from \eqref{eq:SSESF:dr} and \eqref{eq:SSESF:dm}.

For $(u, v) \in \PD$, we begin by integrating \eqref{eq:mntn4durOver1-mu} on the outgoing null curve from $(u,u) \in \Gmm$ to $(u,v)$, which gives
\begin{equation*}
	\bb( \frac{-\dur}{1-\mu} \bb)(u,v) \leq \bb( \lim_{v' \to u+} \bb( \frac{-\dur}{1-\mu} \bb) (u, v') \bb)  \exp \bb( \int_{u}^{v} \dvr^{-1} r (\rd_{v} \phi)^{2} (u, v') \, \ud v' \bb).
\end{equation*}

We claim that $\lim_{v' \to u+} (- \dur) (u, v') = \lim_{v' \to u+} \dvr (u, v')\leq \frac 12$ and $\lim_{v' \to u+} \mu(u,v') = 0$. The first assertion is obvious. To prove the second one, we first use \eqref{eq:SSESF:dm} to write
\begin{equation*}
m(u,v) \leq \tfrac{1}{2} \bb( \sup_{v' \in [u, v]} \abs{r^{2} \rd_{v} \phi}(u, v') \bb) \int_{u}^{v} \abs{\rd_{v} \phi(u, v')} \, \ud v'.
\end{equation*}
Now observe that $\sup_{v' \in [u, v]} \abs{r^{2} \rd_{v} \phi}(u, v') \leq C r(u,v) \sup_{v' \in [u,v]} \abs{\rd_{v}(r \phi)}$, and the remaining integral goes to $0$ as $v \to u+$, since $\phi$ is assumed to be absolutely continuous on $C_{u}$ near the axis by Definition \ref{def:BVsolution}. 

By the above claim, we have
\begin{equation*}
	\bb( \frac{-\dur}{1-\mu} \bb)(u,v) \leq \frac{1}{2} \exp \bb( \int_{u}^{v} \dvr^{-1} r (\rd_{v} \phi)^{2} (u, v') \, \ud v' \bb).
\end{equation*}

%As $\lim_{v' \to u+} \mu(u,v') = 0$ by \eqref{eq:SSESF:dm}\footnote{To prove $\lim_{v' \to u+} \mu(u,v') = 0$ only under the $BV$ assumptions, begin by writing
%\begin{equation*}
%m(u,v) \leq \tfrac{1}{2} \abs{r^{2} \rd_{v} \phi}(u, v) \int_{u}^{v} \abs{\rd_{v} \phi(u, v')} \, \ud v'.
%\end{equation*}
%Now observe that $\abs{r^{2} \rd_{v} \phi}(u, v) \leq C r(u,v) \sup_{v' \in [u,v]} \abs{\rd_{v}(r \phi)}$, and the remaining integral goes to $0$ as $v \to u+$, since $\phi$ is assumed to be absolutely continuous on $C_{u}$ near the axis by Definition \ref{def:BVsolution}. 
%
%} and $\lim_{v' \to u+} (- \dur) (u, v') = \lim_{v' \to u+} \dvr (u, v')\leq \frac 12$, we have
%\begin{equation*}
%	\bb( \frac{-\dur}{1-\mu} \bb)(u,v) \leq \frac{1}{2} \exp \bb( \int_{u}^{v} \dvr^{-1} r (\rd_{v} \phi)^{2} (u, v') \, \ud v' \bb).
%\end{equation*}
The lemma would therefore follow if we could prove
\begin{equation*}
\sup_{(u,v) \in \PD} \int_{u}^{v} \dvr^{-1} r (\rd_{v} \phi)^{2} (u, v') \, \ud v' < \infty.
\end{equation*}

To achieve this, we shall divide the integral into two parts, one in $\cpt$ and the other in its complement $\cpt^c$. Indeed, defining $v^{\star}(u)$ to be the unique $v$ value such that $r(u, v^{\star}(u)) = R$, we divide the integral into $\int_{u}^{v^{\star}(u)}$ and $\int_{v^{\star}(u)}^{v}$. If $v < v^{\star}(u)$, the latter integral will be taken to be zero.

For the first integral, let us begin by pulling out $\dvr^{-1} r \rd_{v} \phi$ from the integral. Using the identity $\dvr^{-1} r \rd_{v} \phi = \dvr^{-1} \rd_{v} (r \phi) - \phi$ we have
\begin{align*}
&\int_{u}^{v^{\star}(u)} \dvr^{-1} r (\rd_{v} \phi)^{2} (u, v') \, \ud v'\\
& \quad \leq \sup_{v' \in [u, v^{\star}(u)]} \bb( \dvr^{-1} \abs{\rd_{v}(r\phi)}(u, v') + \abs{\phi}(u, v') \bb) \int_{u}^{v^{\star}(u)} \abs{\rd_{v} \phi(u,v')} \, \ud v'.
\end{align*}

Then by Lemmas \ref{lem:est4dvphi}, \ref{lem:bnd4dvrphiphi} and the local BV scattering assumption, the right-hand side is uniformly bounded in $u$ from above, as desired. For the second integral, note that, by Lemma \ref{lem:mntn4kpp} and Corollary \ref{cor:lowerBnd4dvr}, we have
\begin{equation*}
(1-\mu)^{-1}(u,v) \leq \Lmb \frac{\dvr}{1-\mu}(u,v) \leq \frac{\Lmb}{2} \sup_{C_{1}} (1-\mu)^{-1}. 
\end{equation*}

Notice that the quantity $\sup_{C_{1}}(1-\mu)^{-1}$ for the initial data is finite, since $1-\mu > 0$ everywhere and $1-\mu(1, v) \to 1$ as $v \to \infty$.
Moreover, for $v \geq v^{\star}(u)$, we have $r(u, v) \geq R$. Therefore, in view of \eqref{eq:SSESF:dm}, we may estimate
\begin{align*}
	\int_{v^{\star}(u)}^{v} \dvr^{-1} r (\rd_{v} \phi)^{2} \, \ud v'
	\leq & \frac{\Lmb}{R} \sup_{C_{1}} (1-\mu)^{-1} \int_{v^{\star}(u)}^{v} \frac{1}{2} \dvr^{-1} (1-\mu) r^{2} (\rd_{v} \phi)^{2} (u, v') \, \ud v' \\
	\leq & \frac{\Lmb}{R} \sup_{C_{1}} (1-\mu)^{-1} (m(u, v) - m(u, v^{\star}(u))) \\
	\leq & C_{\Lmb, R, M_{i}, \sup_{C_{1}} (1-\mu)^{-1}} < \infty,
\end{align*}
from which the lemma follows. \qedhere
\end{proof}

We conclude this subsection with a pair of identities which are useful for estimating $\int\abs{\rd_{u} \dvr} \, \ud u$ and $\int \abs{\rd_{v} \dur} \, \ud v$ in terms of information on $\phi$.
\begin{lemma} \label{lem:auxEqs}
From \eqref{eq:SSESF}, the following identities hold:
\begin{align}
	\int_{u}^{v} \frac{\mu}{1-\mu} \frac{\dvr}{r} (u,v') \, \ud v' = & \log(1-\mu)(u,v) + \int_{u}^{v} \dvr^{-1} r (\rd_{v} \phi)^{2} (u, v') \, \ud v', \label{eq:auxEqs:1} \\
	\int_{u}^{v} \frac{\mu}{1-\mu} \frac{(-\dur)}{r} (u',v) \, \ud u' = & \log(1-\mu)(u,v) + \int_{u}^{v} (-\dur)^{-1} r (\rd_{u} \phi)^{2} (u', v) \, \ud u'. \label{eq:auxEqs:2}
\end{align}
\end{lemma}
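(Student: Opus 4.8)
The plan is to establish \eqref{eq:auxEqs:1} by first deriving a pointwise (distributional) identity for $\rd_{v} \log(1-\mu)$ along an outgoing null curve $C_{u}$ and then integrating it from the axis $\Gmm$; the conjugate identity \eqref{eq:auxEqs:2} will follow by the symmetry $u \leftrightarrow v$. Note that since $\log(1-\mu)$ appears in the statement, the identities are to be read on the regular region, where $1-\mu > 0$; this is guaranteed by Lemma \ref{lem:regR} for the solutions to which the lemma is applied.

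First I would compute $\rd_{v}(1-\mu)$. Since $1 - \mu = 1 - \tfrac{2m}{r}$, the Leibniz rule gives
\begin{equation*}
	\rd_{v}(1-\mu) = - \frac{2 \rd_{v} m}{r} + \frac{2m}{r^{2}} \dvr = - \frac{2 \rd_{v} m}{r} + \frac{\mu}{r} \dvr .
\end{equation*}
Substituting the second equation of \eqref{eq:SSESF:dm}, i.e. $\rd_{v} m = \tfrac{1}{2}(1-\mu) r^{2} \dvr^{-1} (\rd_{v} \phi)^{2}$, this becomes $\rd_{v}(1-\mu) = - (1-\mu)\, \dvr^{-1} r (\rd_{v} \phi)^{2} + \tfrac{\mu}{r} \dvr$, and dividing by $1-\mu > 0$ yields the key differential identity
\begin{equation*}
	\rd_{v} \log(1-\mu) = \frac{\mu}{1-\mu} \frac{\dvr}{r} - \dvr^{-1} r (\rd_{v} \phi)^{2}.
\end{equation*}
Integrating this over $C_{u}$ from $(u,u) \in \Gmm$ to $(u,v)$, the only boundary term is $\lim_{v' \to u+} \log(1-\mu)(u, v')$, which vanishes: this is precisely the computation already carried out in the proof of Lemma \ref{lem:est4dur}, where one bounds $m(u, v') \leq C\, r(u, v') \sup \abs{\rd_{v}(r\phi)} \int_{u}^{v'} \abs{\rd_{v}\phi}$ and concludes $\mu(u, v') = \tfrac{2m}{r}(u,v') \to 0$ as $v' \to u+$ using the absolute continuity of $\phi$ near $\Gmm$ from Definition \ref{def:BVsolution}. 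Rearranging the resulting relation then gives \eqref{eq:auxEqs:1}.

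For \eqref{eq:auxEqs:2} the same steps with $u$ and $v$ interchanged, now using $2\dur\,\rd_{u} m = (1-\mu) r^{2} (\rd_{u} \phi)^{2}$ and keeping track of the sign $\dur < 0$, produce
\begin{equation*}
	\rd_{u} \log(1-\mu) = (-\dur)^{-1} r (\rd_{u} \phi)^{2} - \frac{\mu}{1-\mu} \frac{(-\dur)}{r},
\end{equation*}
and integrating in $u'$ along $\uC_{v}$ from the axis point $(v,v) \in \Gmm$ — where again $\mu \to 0$ — gives the claimed identity. There is no serious obstacle here: the only point requiring care is the vanishing of the boundary term at $\Gmm$, which is already available from the proof of Lemma \ref{lem:est4dur}, and one should check that the manipulations are legitimate at the BV level of regularity (interpreting $\rd_{v} m$, $\rd_{v}(r\phi)$, etc. as the appropriate signed measures), which is routine given Definition \ref{def:BVsolution}.
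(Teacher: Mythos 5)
Your proof is correct, and it takes a slightly more direct route than the paper's. The paper obtains \eqref{eq:auxEqs:1} by integrating the two identities $\rd_{v} \log(-\dur) = \frac{\mu}{1-\mu}\frac{\dvr}{r}$ (from \eqref{eq:SSESF:dr}) and $\rd_{v}\log\bigl(\frac{-\dur}{1-\mu}\bigr) = \dvr^{-1} r (\rd_{v}\phi)^{2}$ (equation \eqref{eq:mntn4durOver1-mu}, already used in Lemma \ref{lem:est4dur}) and subtracting, using $\lim_{v'\to u+}(1-\mu)(u,v') = 1$ to handle the axis boundary term. You instead differentiate $\log(1-\mu)$ directly from the mass equation \eqref{eq:SSESF:dm} and the definition $\mu = 2m/r$, landing on the single identity $\rd_{v}\log(1-\mu) = \frac{\mu}{1-\mu}\frac{\dvr}{r} - \dvr^{-1} r (\rd_{v}\phi)^{2}$, which is exactly the difference of the paper's two ingredients collapsed into one step; the first identity never mentions $\dur$ at all. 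The algebra is of course equivalent after unfolding, and the essential analytical input — the vanishing of $\mu$ (hence of $\log(1-\mu)$) at $\Gmm$, justified exactly as in Lemma \ref{lem:est4dur} — is the same in both. Your version is self-contained and marginally cleaner for a reader who has not yet internalized \eqref{eq:mntn4durOver1-mu}, while the paper's version leans on machinery already in place; either is acceptable.
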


\begin{proof} 
	We shall prove \eqref{eq:auxEqs:1}, leaving the similar proof of \eqref{eq:auxEqs:2} to the reader. 
	From the proof of Lemma \ref{lem:basicEst4dr}, we have
	\begin{equation*}
		\int_{u}^{v} \frac{\mu}{1-\mu} \frac{\dvr}{r}  (u,v') = \log \frac{\dur(u, v)}{\lim_{v' \to u+} \dur(u,v')}.
	\end{equation*}
	
	Comparing with the integral of \eqref{eq:mntn4durOver1-mu}, along with the fact that $\lim_{v' \to u+} (1-\mu)(u, v') = 1$, we arrive at \eqref{eq:auxEqs:1}. \qedhere
\end{proof}

\subsection{Normalization of the coordinate system}\label{sec.coord}

In \S \ref{subsec:coordSys}, the coordinates are normalized such that $\dvr$ is constant on the initial hypersurface $\{u=1\}$. Alternatively, one can introduce a new coordinate system $(u_{\infty},v_{\infty})$ which is normalized at future null infinity by requiring that $\nu_{\infty}\to-\frac 12$ along each outgoing null curve towards null infinity and require, as before, that $\Gamma=\{(u,v):u=v\}$. We will show that the coordinate functions $u$ and $u_{\infty}$ are comparable and thus the main theorem on the decay rates can also be stated in this alternatively normalized coordinate system.

We can compute explicitly the coordinate change, which is given by
\begin{equation*}
\frac{du_{\infty}}{du}(u)=-2\lim_{v\to\infty}\nu(u,v),\quad u_\infty(1)=1
\hbox{ and }
v_{\infty}(v)=u_{\infty}(v).
\end{equation*}

Notice that the limit $\displaystyle\lim_{v\to\infty}\nu(u,v)$ is well-defined due to the monotonicity of $\nu$.

\begin{equation*}
	u_{\infty}(u) = - 2\int_{1}^{u} \bb(\lim_{v\to\infty}\nu(u',v)\bb) \, \ud u' + 1,
\end{equation*}

By Proposition \ref{prop:geomLocBVScat}, the following estimate holds:
\begin{equation*}
	2(\Lmb)^{-1} (u-1) \leq u_{\infty}-1 \leq 2 K (u-1).
\end{equation*}

\subsection{Consequence of local BV scattering}
In this subsection, we give some estimates for $\rd_u^2(r\phi)$, $\rd_u\phi$ and $\rd_u \nu$ that follow from from the local BV scattering assumption. To this end, we will need the analysis for solutions to \eqref{eq:SSESF} with small bounded variation norm by Christodoulou in \cite{Christodoulou:1993bt}. In particular, Christodoulou proved
\begin{theorem}[{Christodoulou \cite[Theorem 6.2]{Christodoulou:1993bt}}]\label{Chr.BV.Thm}
There exists universal constants $\ep_0$ and $C_0$ such that for $\ep<\ep_0$, if $\dvr(1, \cdot) = \frac{1}{2}$ and $\rd_{v} (r \phi)(1, \cdot)$ is of bounded variation with
\begin{equation} \label{Chr.BV.Thm.hyp}
\int_{C_1} |\rd_v^2(r\phi)| <\ep,
\end{equation}
then its maximal development $(\phi, r, m)$ satisfies condition $(1)$ in Definition \ref{def:locBVScat} (future completeness of radial null geodesics) and obeys
\begin{gather}
	\frac{1}{3} \leq \dvr \leq \frac{1}{2}, \quad 
	\frac{1}{3} \leq - \dur \leq \frac{2}{3}, \quad
	\frac{2}{3} \leq (1-\mu) \leq 1, \label{Chr.BV.Thm.geom} \\
	\sup_{u \geq 1} \int_{C_{u}} \bb( \abs{\rd_{v} (\dvr^{-1} \rd_{v} (r \phi))} + \abs{\rd_{v} \phi} + \abs{\rd_{v} \log \dvr} \bb) < C_{0} \eps, \label{Chr.BV.Thm.dv} \\
	\sup_{v \geq 1} \int_{\uC_{v}} \bb( \abs{\rd_{u} (\dur^{-1} \rd_{u} (r \phi))} + \abs{\rd_{u} \phi} + \abs{\rd_{u} \log \dur} \bb) < C_{0} \eps. \label{Chr.BV.Thm.du}
\end{gather}
\end{theorem}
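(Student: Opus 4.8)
The plan, following \cite{Christodoulou:1993bt}, is a continuity (bootstrap) argument in the retarded coordinate $u$ in which the geometry bounds \eqref{Chr.BV.Thm.geom} and the smallness in \eqref{Chr.BV.Thm.dv}, \eqref{Chr.BV.Thm.du} are propagated together. By Remark~\ref{rem:wp} a BV solution exists on $\set{1 \le u \le 1+\dlt}$ for some $\dlt>0$, and it satisfies $\PD=\regR$ by Lemma~\ref{lem:regR}, so $\dvr>0$, $-\dur>0$ and $1-\mu>0$ wherever defined. Since all the conclusions involve only $\dvr$, $\dur$, $\mu$ and $\rd_{v}\phi$-type quantities, hence are invariant under $\phi \mapsto \phi + c$, one first fixes the gauge $\lim_{v\to\infty}\phi(1,v)=0$, which together with $\int_{C_1}|\rd_v^2(r\phi)|<\ep$ and Lemma~\ref{lem:est4dvphi} gives $\sup_{C_1}|\phi|\lesssim\ep$. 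On $\PD_U:=\PD\cap\set{u\le U}$ one runs the bootstrap with the relaxed bounds
\begin{equation*}
	\tfrac14\le\dvr\le\tfrac12,\quad\tfrac14\le-\dur\le\tfrac34,\quad\tfrac12\le1-\mu\le1,\quad \sup_{\PD_U}\abs{\phi}\le 2C_0\ep,
\end{equation*}
together with the assumption that the left-hand sides of \eqref{Chr.BV.Thm.dv} and \eqref{Chr.BV.Thm.du}, restricted to $\PD_U$, are $\le 2C_0\ep$. Letting $U_*$ be the supremum of $U$ for which the solution exists on $\PD_U$ and these hold (so $U_*>1$ by local existence), the goal is to \emph{strictly} improve all of these on $\PD_{U_*}$ once $\ep_0$ is small and $C_0$ is suitably large; this forces $U_*=\infty$, and since the improved bounds keep $\Omg^2=-4\dvr\dur/(1-\mu)$ bounded above and below, every outgoing radial null geodesic is future complete and every incoming one reaches $\Gmm$, i.e.\ condition~(1) of Definition~\ref{def:locBVScat} holds.

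To improve the geometry, note $\dvr\le\tfrac12$ is unconditional since $\rd_u\log\dvr=\tfrac{\mu}{(1-\mu)r}\dur\le0$ and $\dvr(1,\cdot)=\tfrac12$. Integrating \eqref{eq:SSESF:dm} from $\Gmm$ gives $m(u,v)=\int_u^v\tfrac{(1-\mu)r^2(\rd_v\phi)^2}{2\dvr}\,\ud v'$, so, using $r(u,v')\le r(u,v)$ and $\dvr^{-1}r\rd_v\phi=\dvr^{-1}\rd_v(r\phi)-\phi$,
\begin{equation*}
	\mu(u,v)\le\sup_{C_u}\bb(\dvr^{-1}\abs{\rd_v(r\phi)}+\abs{\phi}\bb)\int_{C_u}\abs{\rd_v\phi}\lesssim\ep^2,
\end{equation*}
where the first factor is $\lesssim\ep$ by Lemma~\ref{lem:est4phi} (and the bootstrap, via $\dvr^{-1}\rd_v(r\phi)=\phi_{\Gmm}+\int_{C_u}\rd_v(\dvr^{-1}\rd_v(r\phi))$) and the second is $\lesssim\ep$ by Lemma~\ref{lem:est4dvphi}; this improves $1-\mu\ge\tfrac23$. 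For the lower bound on $\dvr$ one writes $\log\dvr(1,v)-\log\dvr(u,v)=\int_1^u\tfrac{\mu}{(1-\mu)r}(-\dur)\,\ud u'$ and uses \eqref{eq:auxEqs:2} to replace the (possibly large) weight by $\log(1-\mu)+\int(-\dur)^{-1}r(\rd_u\phi)^2$, both $O(\ep^2)$, giving $\dvr\ge\tfrac13$; likewise, integrating \eqref{eq:mntn4durOver1-mu} from $\Gmm$ (where $-\dur=\dvr\le\tfrac12$ and $1-\mu\to1$), with $\int_{C_u}\dvr^{-1}r(\rd_v\phi)^2\lesssim\ep^2$, improves $-\dur\le\tfrac23$. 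The improved bound on $\sup|\phi|$ follows from $\phi(u,v)=\phi(1,v)+\int_1^u\rd_u\phi\,\ud u'$, $\sup_{C_1}|\phi|\lesssim\ep$, and the $u$-direction bootstrap.

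The crux is improving the BV norms. For the $v$-direction one propagates $\int_{C_u}\abs{\rd_v(\dvr^{-1}\rd_v(r\phi))}$ and $\int_{C_u}\abs{\rd_v\log\dvr}$ from $C_1$ by integrating in $u$ the \emph{renormalized} equations \eqref{eq:eq4dvdvrphi} and \eqref{eq:eq4dvdvr}, whose sources are the null forms $\rd_u\dvr\,\rd_v\phi-\rd_v\dvr\,\rd_u\phi$ and $\rd_u\phi\,\rd_v(\dur^{-1}\rd_u(r\phi))-\rd_v\phi\,\rd_u(\dur^{-1}\rd_u(r\phi))$. The antisymmetric (Jacobian-type) structure is exactly what lets the spacetime integral of the source over $\PD_{U_*}$ be estimated, after Fubini, by a product of one $v$-direction and one $u$-direction total variation, each $\lesssim C_0\ep$ by the bootstrap; here one uses the near-flat geometry of the previous step to freeze the coefficients, Lemma~\ref{lem:est4dvphi} to pass from $\int\abs{\rd_v\phi}$ to $\int\abs{\rd_v(\dvr^{-1}\rd_v(r\phi))}$, and \eqref{eq:auxEqs:1}--\eqref{eq:auxEqs:2} to absorb the non-renormalized pieces such as $\tfrac{\mu}{1-\mu}\tfrac{\dvr}{r}$, which are $O(\ep^2)$. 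Choosing $C_0$ large enough to dominate the $O(\ep)$ contributions of the data on $C_1$ and then $\ep_0$ so small that $C_0^2\ep^2\ll C_0\ep$, the bound $2C_0\ep$ improves to $\tfrac32C_0\ep$. The $u$-direction bounds \eqref{Chr.BV.Thm.du} on each $\uC_v$ follow symmetrically from \eqref{eq:eq4dudurphi}, \eqref{eq:eq4dudur}, the only new input being that the ``initial data'' for $\uC_v$ near $\Gmm$ is controlled, via the boundary conditions of Definition~\ref{def:BVsolution}, by the $v$-direction bounds just obtained; so the two null directions close as one coupled system.

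The main obstacle is this last step: one must verify that after Christodoulou's renormalization \emph{every} quadratic term in the equations for $\dvr^{-1}\rd_v(r\phi)$ and $\rd_v\log\dvr$ (and their $u$-conjugates) is either lower order or a genuine null form $\rd_uf\,\rd_vg-\rd_vf\,\rd_ug$, so that its spacetime integral factorizes into one-dimensional total variations, and one must control the behavior near $\Gmm$, where this product structure degenerates and the boundary conditions of Definition~\ref{def:BVsolution} are needed to match the estimates in the two directions. Since the geometric bounds feed the BV estimates while the smallness of the BV norms feeds the bound $\mu\lesssim\ep^2$, the entire collection of estimates has to be closed within a single bootstrap.
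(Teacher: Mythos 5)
First, a point of comparison: the paper does not prove this statement at all. It is imported verbatim from Christodoulou \cite[Theorem 6.2]{Christodoulou:1993bt} and used as a black box (e.g.\ in the proof of Theorem \ref{thm:decayInCpt}), so there is no in-paper proof to measure your attempt against. Judged on its own terms, your sketch does identify the correct architecture of Christodoulou's argument: a continuity argument in $u$ propagating the near-Minkowskian geometry (with $\mu = O(\ep^{2})$ obtained from the mass formula and the identities \eqref{eq:auxEqs:1}--\eqref{eq:auxEqs:2}, \eqref{eq:mntn4durOver1-mu}) together with the two families of total variations, using the renormalized quantities of \S\ref{subsec:nullStr}; and the gauge normalization $\lim_{v\to\infty}\phi(1,v)=0$ is exactly the device the paper itself uses in Lemma \ref{lem:smallData} and in the remark following the theorem.

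There is, however, a genuine gap at the step you yourself call the crux. You assert that the antisymmetric structure of the sources of \eqref{eq:eq4dvdvrphi} and \eqref{eq:eq4dvdvr} lets their spacetime $L^{1}$ norm be ``estimated, after Fubini, by a product of one $v$-direction and one $u$-direction total variation.'' This is false as a general principle: if $h_{1}, h_{2}$ satisfy only $\sup_{u}\int_{C_{u}}\abs{h_{1}}\,\ud v \leq A$ and $\sup_{v}\int_{\uC_{v}}\abs{h_{2}}\,\ud u \leq B$, the spacetime integral $\iint \abs{h_{1} h_{2}}$ need not be $\lesssim AB$ (take both factors to be $N$ normalized bumps supported on $N$ disjoint small diamonds along a timelike curve: the one-dimensional integrals stay $O(1)$ while the product integrates to $N$). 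To run the factorization one needs a \emph{pointwise} bound on one factor by an integrable function of a single null variable, so that the other factor can be integrated in the transverse direction first. In the $C^{1}$ setting this is supplied by Lemma \ref{lem:dphi}, but in the BV class pointwise control of $\rd_{v}\phi$, $\rd_{u}\phi$ is precisely what is unavailable — which is why Christodoulou's actual proof proceeds through his local BV existence theorem on small characteristic diamonds, iterated with explicit bookkeeping of jump discontinuities, rather than through a single global Gronwall estimate. The same difficulty is visible in this paper's own BV-level argument in Section \ref{sec.dichotomy}, where the finiteness of the spacetime integrals of exactly these null forms is taken as a \emph{hypothesis} (Definition \ref{def.blow.up.infty}) rather than derived from one-dimensional total variations. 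A secondary, fixable issue: improving $\sup\abs{\phi}\leq 2C_{0}\ep$ via $\phi(u,v)=\phi(1,v)+\int_{1}^{u}\rd_{u}\phi$ returns $2\ep+2C_{0}\ep$, which does not beat the bootstrap constant; the closing mechanism (as in Lemma \ref{lem:smallData}) is to bootstrap $\sup\abs{\rd_{v}(r\phi)}$ through \eqref{eq:SSESF:dphi'}, exploiting the sign and total integrability of $\rd_{u}\dvr$ to obtain a genuine contraction factor.
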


\begin{remark} 
We remark that in \cite[Theorem 6.2]{Christodoulou:1993bt}, it is implicitly assumed that $\phi(1, 1) = 0$; see \cite[Section 4]{Christodoulou:1993bt}. 
Note, however, that the bounds in the above theorem are stated in such a way that they are invariant under the transform $(\phi, r, m) \mapsto (\phi + c, r, m)$, under which \eqref{eq:SSESF} is also invariant. Any solution may be then transformed to satisfy $\phi(1, 1) = 0$. As a consequence, we do not need to check $\phi(1,1) = 0$ in order to apply the theorem.
\end{remark}
Using Theorem \ref{Chr.BV.Thm}, we prove the following bound for locally BV scattering solution to \eqref{eq:SSESF}.

\begin{theorem} \label{thm:decayInCpt}
Let $(\phi, r, m)$ be a locally BV scattering solution to \eqref{eq:SSESF}. For every $\eps > 0$, there exists $u_{0} > 1$ such that the following estimate holds.
\begin{align*}
	\sup_{v \in [u_{0}, \infty)} \bb( \int_{\uC_{v} \cap \set{u \geq u_{0}}\cap \cpt} \abs{\rd_{u}^{2} (r \phi)} 
	+ \int_{\uC_{v} \cap \set{u \geq u_{0}} \cap \cpt} \abs{\rd_{u} \phi} 
	+ \int_{\uC_{v} \cap \set{u \geq u_{0}} \cap \cpt} \abs{\rd_{u} \log \dur} \bb) < \eps.
\end{align*}

Moreover, we also have
\begin{equation} \label{eq:bnd4durphi}
	\sup_{\PD} \abs{\rd_{u} (r \phi)} \leq C_{K, \Lmb} \Psi.
\end{equation}
\end{theorem}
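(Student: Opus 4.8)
The plan is to establish the two assertions separately. The uniform bound \eqref{eq:bnd4durphi} is short: on the axis $\Gmm$, the reflection condition in Definition~\ref{def:BVsolution} gives $\lim_{v\to a+}(\rd_u(r\phi)+\rd_v(r\phi))(a,v)=0$, so the limit of $\rd_u(r\phi)$ at $\Gmm$ equals $-\rd_v(r\phi)|_{\Gmm}$, which is bounded by $\Psi$ via Proposition~\ref{prop:geomLocBVScat}. Viewing the wave equation in the form \eqref{eq:SSESF:dphi''}, $\rd_v(\rd_u(r\phi))=(\rd_v\dur)\phi$, as a transport equation for $\rd_u(r\phi)$ in the outgoing direction and integrating from $\Gmm$, I would then use $\rd_v\dur=\rd_u\dvr\le 0$ (Lemma~\ref{lem:mntn4kpp}) to get $\int_u^v\abs{\rd_v\dur}(u,v')\,\ud v' = \lim_{v'\to u+}\dur(u,v')-\dur(u,v)\le -\dur(u,v)\le K$ (by \eqref{eq:bnd4dur}), together with $\abs{\phi}\le\Lmb\Psi$; this yields $\abs{\rd_u(r\phi)}\le(1+K\Lmb)\Psi$ throughout $\PD$.

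The incoming $\mathrm{BV}$-smallness is where Christodoulou's theory enters. Given $\eps>0$, I would use the local $\mathrm{BV}$-scattering hypothesis \eqref{eq:locBVScat} to pick $u_0$ so large that $\int_{C_u\cap\cpt}\abs{\rd_v^2(r\phi)}$ and $\int_{C_u\cap\cpt}\abs{\rd_v\log\dvr}$ — hence also $\int_{C_u\cap\cpt}\abs{\rd_v\phi}$, by Lemma~\ref{lem:est4dvphi} — are as small as desired for $u\ge u_0$; moreover, since $M_f=0$ by \eqref{eq:zeroMf} and $m(u,v)\le M(u)$ by Lemma~\ref{lem:mntn4m}, the Hawking mass is uniformly small on $\set{u\ge u_0}$, and the mass estimate in the proof of Lemma~\ref{lem:est4dur} also makes the subcritical exterior energy $\int_{C_u\cap\set{r\ge R}}\dvr^{-1}(1-\mu)r^2(\rd_v\phi)^2$ small there. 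Because \eqref{eq:SSESF} and all the $\mathrm{BV}$ norms are scale-invariant under $(u,v,r)\mapsto(u/u_0,v/u_0,r/u_0)$, the task then reduces to establishing the analogue of the incoming estimate \eqref{Chr.BV.Thm.du} of Theorem~\ref{Chr.BV.Thm}, localized to $\cpt\cap\set{u\ge u_0}$; once this is in hand, the bound on $\rd_u(r\phi)$ just proved and the bound $\abs{\dur}\le K$ convert $\abs{\rd_u(\dur^{-1}\rd_u(r\phi))}$ into $\abs{\rd_u^2(r\phi)}$, which is exactly what is claimed.

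To establish the localized estimate, I would split $\cpt\cap\set{u\ge u_0}$ into the domain of dependence $\calD(u_0,v_0)$ of $C_{u_0}\cap\cpt$, with $v_0$ fixed by $r(u_0,v_0)=R$ (note $\calD(u_0,v_0)\subset\cpt$, since $r$ is increasing in $v$ and decreasing in $u$), and its complement, whose incoming null segments enter $\cpt$ across the timelike curve $\set{r=R}$ at arbitrarily late retarded time. In $\calD(u_0,v_0)$ the small outgoing $\mathrm{BV}$ data on $C_{u_0}\cap\cpt$, the reflection condition along $\Gmm$, and the renormalized equations \eqref{eq:eq4dudurphi} and \eqref{eq:eq4dudur} — whose right-hand sides are products of first derivatives only, hence exhibit the null structure — propagate to give smallness of $\int_{\uC_v}(\abs{\rd_u^2(r\phi)}+\abs{\rd_u\phi}+\abs{\rd_u\log\dur})$, via a continuity argument in which the three quantities are bootstrapped together and the $r$-weighted factors $mr^{-2}$, $r^{-1}$, $mr^{-3}$ in the non-renormalized sources are controlled by the smallness of $m$ in $\cpt$. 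For the complement I would first estimate these incoming quantities where $\uC_v$ crosses $\set{r=R}$: tracing the renormalized incoming equations back along $\uC_v$ through the exterior $\set{r\ge R}$, one gains smallness there from the favourable $r$-weights together with the small exterior energy, and then one propagates this inward across $\cpt$ exactly as before.

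The crux is the interface between these two regions. Since the $\mathrm{BV}$ norm is scale-invariant (critical), the only smallness supplied by local $\mathrm{BV}$-scattering lives in the fixed compact region $\cpt$, whereas the exterior $\set{r\ge R}$ must be handled via the subcritical energy, which is small only because $M_f=0$; reconciling these two mechanisms — in particular showing that the incoming data carried into $\cpt$ across $\set{r=R}$ at late times is genuinely small, uniformly in $v$ — is the hard part, and it is precisely what forces the use of the renormalized equations \eqref{eq:eq4dudurphi}, \eqref{eq:eq4dudur} in place of a direct differentiation of \eqref{eq:SSESF}.
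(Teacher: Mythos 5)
Your proof of the $L^\infty$ bound \eqref{eq:bnd4durphi} is correct and is the same as the paper's (axis reflection, then integrate \eqref{eq:SSESF:dphi''} outgoing using $\rd_v\dur\le 0$ and $\abs{\phi}\le\Lmb\Psi$). The first assertion, however, contains a genuine gap in your treatment of the region outside $\calD(u_0,v_0)$, and the gap stems from missing a domain-of-dependence observation that makes the exterior region irrelevant.

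Concretely: for a late incoming curve $\uC_{v}$, the portion $\uC_{v}\cap\cpt$ lies entirely in the domain of dependence of the \emph{outgoing} segment $C_{u^{\star}(v)}\cap\set{v'\le v}$, where $r(u^{\star}(v),v)=R$; this segment lies entirely inside $\cpt$ (since $r$ increases in $v$), and on it the local BV scattering hypothesis \eqref{eq:locBVScat} directly supplies smallness of $\int\abs{\rd_{v}(\dvr^{-1}\rd_{v}(r\phi))}$. One therefore treats $C_{u^{\star}(v)}\cap\cpt$ as new characteristic data, renormalizes the $v$-coordinate so that $\dvr=\tfrac12$ there (legitimate because $\int\abs{\rd_v\log\dvr}$ is small and $\dvr\ge\Lmb^{-1}$), and invokes Christodoulou's Theorem~\ref{Chr.BV.Thm}, whose conclusion \eqref{Chr.BV.Thm.du} is exactly the desired incoming estimate. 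No analysis in $\set{r\ge R}$ is needed at any point. Your alternative route through the exterior cannot be closed as described: at the crossing point of $\uC_v$ with $\set{r=R}$ the radius is $r=R$, fixed, so the ``favourable $r$-weights'' give no smallness as $v\to\infty$; the subcritical energy (small because $M_f=0$) controls only the first-derivative flux $\int(-\dur)^{-1}(1-\mu)r^2(\rd_u\phi)^2$, not $\rd_u^2(r\phi)$ or $\rd_u\log\dur$; and the renormalized quantities $G_1$, $G_2$ obey transport equations in the \emph{outgoing} direction, so they cannot be ``traced back along $\uC_v$'' — controlling $\int_{\uC_v}\abs{\rd_u^2(r\phi)}\,\ud u$ is a genuinely two-dimensional BV estimate of Christodoulou type, not a one-dimensional integration. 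Obtaining smallness of these second-derivative quantities on $\set{r=R}$ at late times by your route would essentially require the decay estimates of Theorem~\ref{main.thm.2}, whose proof uses the present theorem (through Lemma~\ref{lem:smallDphi}), so the argument would be circular.
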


\begin{proof}
We first show that for a locally BV scattering solution to \eqref{eq:SSESF},
\begin{equation*}
	\int_{C_{u} \cap \cpt} \abs{\rd_{v} (\dvr^{-1} \rd_{v} (r \phi))} \to 0 \hbox{ as } u \to \infty,
\end{equation*}
Expanding this expression, we have
\beaa
\int_{C_{u} \cap \cpt} \abs{\rd_{v} (\dvr^{-1} \rd_{v} (r \phi))}
\leq \int_{C_{u} \cap \cpt} \dvr^{-1} (\abs{ \rd_{v}^2 (r \phi)}+\abs{ (\rd_{v} \log \lambda) \rd_v(r \phi)})
\eeaa
By \eqref{eq:bnd4dvr} and \eqref{eq:bnd4dvrphi}, we have
$$\int_{C_{u} \cap \cpt} \abs{\rd_{v} (\dvr^{-1} \rd_{v} (r \phi))}\leq 
C_{\Lambda, \Psi}\int_{C_{u} \cap \cpt} \abs{ \rd_{v}^2 (r \phi)}+\abs{ \rd_{v} \log \lambda },$$
which by \eqref{eq:locBVScat} in Definition \ref{def:locBVScat} (Scattering in BV in a compact $r$-region) tends to 0 as $u\to \infty$.
Notice that the quantity $\int_{C_{u} \cap \cpt} \abs{\rd_{v} (\dvr^{-1} \rd_{v} (r \phi))}$ which we have controlled is invariant under any rescaling of the coordinate $v$, and also under the transform $(\phi, r, m) \to (\phi + c, r, m)$.

We now proceed to the proof of the theorem. Let $v_0$ be sufficiently large and $u^{\star}(v_0)$ be the unique $r(u^{\star}(v_0),v_0)=R$. By the finite speed of propagation of the equations, the solution on $\uC_{v_0}\cap\cpt$ depends only on the data on $C_{u^{\star}(v_0)} \cap\cpt$.

In order to apply Theorem \ref{Chr.BV.Thm}, we change coordinates $(u,v)\mapsto (U(u),V(v))$ in the region bounded by $C_{u^{\star}(v_0)}$ and $\uC_{v_0}$ to a new double null coordinate $(U,V)$ such that for $U^{\star}=U(u^{\star}(v_0))$, we have $\lambda(U^{\star},V)=\frac 12$. To this end, define $V(v)$ by
$$\frac{dV}{dv}=2\dvr(u^{\star}(v_0),v),\quad V(v_0)=v_0.$$
Notice that this is acceptable since $\dvr>0$. In order for the condition $U = V$ to hold on $\Gmm$, we require
$U(u)=V(u).$
Then with respect to the coordinate $V$
$$\rd_V r(U^{\star},V)=\frac 12.$$
By \eqref{eq:bnd4dvr}, we have
$$\Lambda^{-1}\leq \frac{dV}{dv}, \frac{dU}{du}\leq \frac 12.$$
Moreover,
$$\int_{u^{\star}(v_0)}^{v_0} |\frac{d^2V}{dv^2}(v')|dv'\leq 2\int_{u^{\star}(v_0)}^{v_0} |\rd_v\lambda(u^{\star}(v_0),v')| \ud v',$$
which tends to $0$ as $v_0\to \infty$ by the assumption of local BV scattering. For $v_0$ sufficiently large, in the $(U,V)$ coordinate system, $\int_{C_{u^{\star}(v_0)} \cap \cpt} \abs{\rd_{V} ((\rd_{V} r)^{-1} \rd_{V} (r \phi))} dV$ is small and $\rd_V r=\frac 12$. The data satisfy the assumptions of Theorem \ref{Chr.BV.Thm} and therefore\footnote{More precisely, we apply Theorem \ref{Chr.BV.Thm} to the truncated initial data $$ \rd_{V} (r \widetilde{\phi})(U^{\star}, V) = \left\{ \begin{array}{cc} \rd_{V}(r \phi)(U^{\star}, V) & \hbox{ for } V < v_{0} \\ \rd_{V}(r \phi)(U^{\star}, v_{0}) & \hbox{ for } V \geq v_{0} \end{array} \right.$$}
$$\int_{\uC_{v_0}\cap\cpt}(\abs{\rd_{U} ((\rd_{U} r)^{-1} \rd_{U} (r \phi))} + \abs{\rd_{U} \phi} + \abs{\rd_{U} \log \rd_{U} r} ) dU \to 0$$
as $v_{0} \to \infty$. Returning to the original coordinate system $(u,v)$, the first statement easily follows.

Finally, for the $L^\infty$ estimate for $\rd_{u}(r\phi)$, notice that $\abs{\rd_{u}(r\phi)} \leq \Psi$ at the axis by \eqref{eq:bnd4dvrphi} and $(7)$ of Definition \ref{def:BVsolution} (BV solutions to \eqref{eq:SSESF}). Using \eqref{eq:SSESF:dphi''}, \eqref{eq:SSESF:dr} (in particular, the fact that $\rd_{v} \dur \leq 0$), \eqref{eq:bnd4phi} and \eqref{eq:bnd4dur}, we have
\begin{equation*}
\abs{\rd_{u}(r\phi)(u,v)} \leq \Psi+ \Lmb \Psi \int_{u}^{v} (-\rd_{v} \dur) \, \ud v' \leq C_{K, \Lmb} \Psi. \qedhere
\end{equation*}
\end{proof}

\section{Decay of $\phi$ and its first derivatives}\label{sec.decay1}
In this section, we prove the first main theorem (Theorem \ref{main.thm.1}). Throughout this section, we assume that $(\phi, r, m)$ is a locally BV scattering solution to \eqref{eq:SSESF} with asymptotically flat initial data of order $\omg'$ in BV, as in Definitions \ref{def:locBVScat} and \ref{def:AF}, respectively. Let $\omg = \min \set{\omg', 3}$.

\subsection{Preparatory lemmas}
The following lemma will play a key role in the proof of both Theorems \ref{main.thm.1} and \ref{main.thm.2}. It is a consequence of the scattering assumption \eqref{eq:locBVScat} and vanishing of the final Bondi mass.

\begin{lemma} \label{lem:smallPtnl}
Let $\eps > 0$ be an arbitrary positive number. For $u_{1} > 1$ sufficiently large, we have
\begin{align} 
	\sup_{v \in [u_{1}, \infty)} \int_{\uC_{v} \cap \set{u \geq u_{1}}} \abs{\frac{2m \dur}{(1-\mu) r^{2}}}   < \eps, \label{eq:smallPtnl:u} \\ 
	\sup_{u \in [u_{1}, \infty)} \int_{C_{u}} \abs{\frac{2m \dvr}{(1-\mu) r^{2}}} < \eps. \label{eq:smallPtnl:v}
\end{align}
\end{lemma}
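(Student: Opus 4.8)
The plan is to treat the two estimates separately, reducing each to quantities already controlled in Section~\ref{sec.geom}. Observe first that on $\PD = \regR$ one has $\dur < 0 < \dvr$ and $m \geq 0$ (Lemmas~\ref{lem:mntn4r}, \ref{lem:regR}, \ref{lem:mntn4m}), and $\mu = 2m/r$; hence $\frac{2m\dur}{(1-\mu)r^{2}} = \frac{\mu}{1-\mu}\frac{\dur}{r} \leq 0$ while $\frac{2m\dvr}{(1-\mu)r^{2}} = \frac{\mu}{1-\mu}\frac{\dvr}{r} \geq 0$.

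For \eqref{eq:smallPtnl:u}, these signs give, for every $v \geq u_{1}$,
\[ \int_{\uC_{v} \cap \set{u \geq u_{1}}} \bb\vert \frac{2m\dur}{(1-\mu)r^{2}} \bb\vert = \int_{u_{1}}^{v} \bb\vert \frac{\mu}{1-\mu}\frac{\dur}{r}(u',v) \bb\vert \, \ud u' , \]
which is exactly the integral estimated in Lemma~\ref{lem:babySmllPtnl:2}. Choosing $u_{1} \geq U$, where $U$ is the constant furnished by that lemma for the given $\eps$, and using that $\int_{u_{1}}^{v} \leq \int_{U}^{v}$ (the integrand being nonnegative), \eqref{eq:smallPtnl:u} follows at once.

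The estimate \eqref{eq:smallPtnl:v} is the substantive one, and it is here that both the local BV scattering hypothesis and the vanishing of the final Bondi mass are used. Fix $R$ as in Definition~\ref{def:locBVScat} and split $C_{u} = (C_{u} \cap \cpt) \cup (C_{u} \cap \set{r \geq R})$. On $\set{r \geq R}$ I would estimate crudely, using $(1-\mu)^{-1} \leq K\Lmb$ from \eqref{eq:bnd4mu}, $m \leq M(u)$ on $C_{u}$ (Lemma~\ref{lem:mntn4m}), and the change of variables $\dvr \, \ud v = \ud r$:
\[ \int_{C_{u} \cap \set{r \geq R}} \frac{2m\dvr}{(1-\mu)r^{2}} \, \ud v \leq 2 K \Lmb M(u) \int_{R}^{\infty} \frac{\ud r}{r^{2}} = \frac{2 K \Lmb}{R} M(u) , \]
which tends to $0$ as $u \to \infty$ by Corollary~\ref{cor:mntn4Bondi} and \eqref{eq:zeroMf}. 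On $\cpt$ I would instead use the identity \eqref{eq:auxEqs:1}, which (since $\log(1-\mu) \leq 0$) gives $\int_{C_{u} \cap \cpt} \frac{\mu}{1-\mu}\frac{\dvr}{r} \leq \int_{C_{u} \cap \cpt} \dvr^{-1} r (\rd_{v}\phi)^{2}$; as $\dvr^{-1} r \rd_{v}\phi = \dvr^{-1} \rd_{v}(r\phi) - \phi$ is bounded by $2\Lmb\Psi$ by Proposition~\ref{prop:geomLocBVScat}, this in turn is $\leq 2\Lmb\Psi \int_{C_{u} \cap \cpt} \abs{\rd_{v}\phi}$. Finally $\int_{C_{u} \cap \cpt} \abs{\rd_{v}\phi} \to 0$ as $u \to \infty$: by Lemma~\ref{lem:est4dvphi}(1) it is at most $\int_{C_{u} \cap \cpt} \abs{\rd_{v}(\dvr^{-1} \rd_{v}(r\phi))}$, which was shown to vanish as $u \to \infty$ at the start of the proof of Theorem~\ref{thm:decayInCpt} (using \eqref{eq:locBVScat} together with \eqref{eq:bnd4dvr} and \eqref{eq:bnd4dvrphi}). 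Adding the two regional bounds yields $\int_{C_{u}} \abs{\frac{2m\dvr}{(1-\mu)r^{2}}} \to 0$, hence \eqref{eq:smallPtnl:v} holds for $u_{1}$ large; taking $u_{1}$ larger than both thresholds completes the argument.

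The step I expect to be the main obstacle is controlling $\frac{2m\dvr}{(1-\mu)r^{2}}$ near the axis $\Gmm$: there the bound $m \lesssim r$ only makes the integrand $O(r^{-1})$, which is not integrable in $v$, so the crude large-$r$ argument cannot be continued down to $r=0$ and one must rewrite the mass flux in terms of $(\rd_{v}\phi)^{2}$ and invoke local BV scattering, exactly as in the proof of Lemma~\ref{lem:est4dur}. In contrast, \eqref{eq:smallPtnl:u} is essentially immediate from Lemma~\ref{lem:babySmllPtnl:2}.
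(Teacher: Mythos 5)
Your proof is correct and follows essentially the same route as the paper: \eqref{eq:smallPtnl:u} is deferred to Lemma~\ref{lem:babySmllPtnl:2}, and \eqref{eq:smallPtnl:v} is proved by splitting $C_u$ at $r=R$, using the vanishing of the Bondi mass on $\set{r\geq R}$ and local BV scattering on $\cpt$. The only difference is that you spell out (via \eqref{eq:auxEqs:1}, the bound on $\dvr^{-1}r\rd_v\phi$, and Lemma~\ref{lem:est4dvphi}) the step the paper compresses into ``by \eqref{eq:locBVScat}''; this is exactly the mechanism used in the paper's proof of Lemma~\ref{lem:est4dur}, so your elaboration is consistent with, not divergent from, the intended argument.
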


\begin{proof} 
The first statement \eqref{eq:smallPtnl:u} was proved in Lemma \ref{lem:babySmllPtnl:2}; thus it only remains to prove \eqref{eq:smallPtnl:v}.

Divide $\PD$ into $\cpt =\PD\cap\set{r \leq R}$ and $\cpt^{c} := \PD \setminus \cpt$. First, note that by \eqref{eq:locBVScat} we have
\begin{align*}
%	\sup_{v \in [u_{1}, \infty)} \int_{\uC_{v} \cap \set{u \geq u_{1}} \cap \cpt} \abs{\frac{2m \dur}{(1-\mu) r^{2}}}  < \eps/2, \\
	\sup_{u \in [u_{1}, \infty)} \int_{C_{u} \cap \cpt} \abs{\frac{2m \dvr}{(1-\mu) r^{2}}} < \eps/2
\end{align*}
for $u_{1}$ sufficiently large. 
Next, we consider $\cpt^{c}$. Define $v^{\star}(u) := \sup \set{v \in [u, \infty) : r(u,v) \geq R}$; note that $r(u, v^{\star}(u)) = R$ by continuity. We now compute
\begin{align*}
	\int_{C_{u} \cap \cpt^{c}} \abs{\frac{2 m \dvr}{(1-\mu) r^{2}}}  
	= & \int_{v^{\star}(u)}^{\infty} \abs{\frac{2m \dvr}{(1-\mu) r^{2}}(u,v')} \, \ud v' \\
	\leq & 2 K \Lmb M(u_{1}) \int_{v^{\star}(u)}^{\infty} \frac{\dvr}{r^{2}}(u, v') \, \ud v' \\
	\leq & 2 R^{-1} K \Lmb M(u_{1}).
\end{align*}
uniformly in $u \geq u_{1}$. As $\lim_{u_{1} \to \infty} M(u_{1}) = 0$ by \eqref{eq:zeroMf} (vanishing final Bondi mass), the last line can be made arbitrarily small by taking $u_{1}$ sufficiently large. This proves \eqref{eq:smallPtnl:v}. \qedhere
\end{proof}

The following lemma allows us to estimate $\phi$ in terms of $\abs{\rd_{v} (r \phi)}$.
\begin{lemma} \label{lem:intEst4phi}
The following estimates hold.
\begin{align*} 
	\abs{\phi}(u,v) \leq & \Lmb \sup_{C_{u} } \abs{\rd_{v} (r \phi)} , \\
	r u^{\om-1} \abs{\phi}(u,v) \leq & C \Lmb \bb( \sup_{C_{u}} u^{\om} \abs{\rd_{v} (r \phi)} + \sup_{C_{u}} r^{\om} \abs{\rd_{v} (r \phi)} \bb). 
\end{align*}
\end{lemma}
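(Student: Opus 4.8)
The plan is to deduce the first inequality directly from Lemma \ref{lem:est4phi}, and to prove the second one by a change of variables followed by a split of the radial integral at the radius $r=u$. For the first inequality, inequality \eqref{eq:intEst4phi:1} of Lemma \ref{lem:est4phi} gives $\abs{\phi}(u,v) \leq \sup_{v' \in [u,v]} \abs{\dvr^{-1}\rd_{v}(r\phi)(u,v')}$, and since $\dvr \geq \Lmb^{-1}$ on $\PD$ by \eqref{eq:bnd4dvr}, the right-hand side is at most $\Lmb \sup_{C_{u}}\abs{\rd_{v}(r\phi)}$.

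For the second inequality I would start from the identity $r\phi(u,v) = \int_{u}^{v}\rd_{v}(r\phi)(u,v')\,\ud v'$, which holds since $r(u,u)=0$ and $\phi$ is bounded near the axis on $C_{u}$ (e.g. by \eqref{eq:bnd4phi}). Since $\dvr \in [\Lmb^{-1}, \tfrac12]$ on $\PD$ by \eqref{eq:bnd4dvr}, the map $v' \mapsto r(u,v')$ is an increasing bi-Lipschitz bijection of $[u,v]$ onto $[0,r(u,v)]$ with $\ud v' = \dvr^{-1}\,\ud r \leq \Lmb\,\ud r$, so that
$$ \abs{r\phi(u,v)} \leq \Lmb \int_{0}^{r(u,v)} \abs{\rd_{v}(r\phi)}\,\ud r, $$
where $\abs{\rd_{v}(r\phi)}$ is now read as a function of $r$ along $C_{u}$. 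I would then split this integral at $r = \min\{u, r(u,v)\}$. On the inner piece use $\abs{\rd_{v}(r\phi)} \leq u^{-\om}\sup_{C_{u}}u^{\om}\abs{\rd_{v}(r\phi)}$ (here $u$ is constant on $C_{u}$), contributing at most $u^{1-\om}\sup_{C_{u}}u^{\om}\abs{\rd_{v}(r\phi)}$; on the outer piece (empty unless $r(u,v) > u$) use $\abs{\rd_{v}(r\phi)} \leq r^{-\om}\sup_{C_{u}}r^{\om}\abs{\rd_{v}(r\phi)}$ together with $\int_{u}^{\infty}r^{-\om}\,\ud r = \tfrac{1}{\om-1}u^{1-\om}$, which is finite because $\om = \min\{\om',3\} > 1$, contributing at most $\tfrac{1}{\om-1}u^{1-\om}\sup_{C_{u}}r^{\om}\abs{\rd_{v}(r\phi)}$. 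Adding the two bounds and multiplying by $u^{\om-1}r(u,v)^{-1}$ (for $v>u$; the case $v=u$ is trivial since then $r=0$) yields the claimed estimate, with a constant $C$ depending only on $\om$.

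This is a routine computation rather than a substantive obstacle; the one point worth emphasizing is the origin of the extra weight $r^{-1}u^{-(\om-1)}$. Near the axis $\abs{\rd_{v}(r\phi)}$ is controlled only by the $u$-weighted supremum (the $r$-weighted one degenerating as $r\to 0$), so the radial integral grows merely linearly up to $r\sim u$; beyond that the $r^{-\om}$ tail with $\om>1$ adds only an $O(u^{1-\om})$ remainder. Hence $\abs{r\phi} = O(u^{1-\om})$ rather than $O(1)$, i.e. $\abs{\phi} = O(r^{-1}u^{1-\om})$. One should also record that the constant $C=C(\om)$ enters through the factor $(\om-1)^{-1}$, which is consistent with the conventions on constants used elsewhere in the paper.
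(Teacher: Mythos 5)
Your proof is correct and follows essentially the same route as the paper: the first bound comes directly from Lemma \ref{lem:est4phi} and the lower bound on $\dvr$, and the second from integrating $\rd_v(r\phi)$ along $C_u$, splitting at $r=u$, and using the $u^\om$-weighted sup on the inner piece and the $r^\om$-weighted sup (with $\om>1$ to make the tail integrable) on the outer piece. The only cosmetic difference is that you change variables to $r$ while the paper introduces the split point $v_1$ with $r(u,v_1)=u$ and keeps the $v'$-integration.
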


\begin{proof} 
The first estimate follows from Lemma \ref{lem:est4phi} and Proposition \ref{prop:geomLocBVScat}. The second estimate is a consequence of the first when $r(u,v) \leq u$, so it suffices to assume $r(u,v) \geq u$. Introducing a parameter $v_{1} \in [u, v]$, we estimate
\begin{align*}
	r u^{\om-1} \abs{\phi}(u,v) 
	\leq & u^{\om-1} \int_{u}^{v} \abs{\rd_{v} (r \phi)(u, v')} \, \ud v' \\
	\leq & \Lambda u^{\om-1}(\sup_{C_u}|\rd_v(r\phi)|)\int_u^{v_1}\dvr(u,v')\ud v'+\Lambda u^{\om-1}(\sup_{C_u}r^{\om}|\rd_v(r\phi)|)\int_{v_1}^{v}\frac{\dvr}{r^{\om}}(u,v')\ud v'\\
	\leq & \Lmb (r(u, v_{1})/u) \sup_{C_{u}} u^{\om} \abs{\rd_{v} (r \phi)} + \frac{\Lmb}{\omg-1} (u^{\om-1}/r(u, v_{1})^{\om-1}) \sup_{C_{u}} r^{\om} \abs{\rd_{v} (r \phi)}.
\end{align*}

Choosing $v_{1}$ so that $r(u, v_{1}) = u$ (which is possible since $r(u,v) \geq u$), the desired estimate follows.
\end{proof}

\subsection{Preliminary $r$-decay for $\phi$} \label{subsec:decay1:rDecay}
In this subsection, we derive bounds for $\phi$ which are sharp in terms of $r$-weights. As a consequence, they give sharp decay rates towards null infinity.

\begin{lemma} \label{lem:decay1:cptu:0}
	There exists a constant $0 < H_{1} < \infty$ such that the following estimate holds.
	\begin{equation} \label{eq:decay1:cptu:0}
		\sup_{\PD} (1+r) \abs{\phi}   \leq H_{1}.
	\end{equation}
\end{lemma}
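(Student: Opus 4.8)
The plan is to first dispose of the region where $r$ is bounded using the uniform bound $\abs{\phi} \leq \Lmb \Psi$ of \eqref{eq:bnd4phi}, and then run a bootstrap argument for $r\abs{\phi}$ in the large-$r$ region, exploiting the favorable $r$-weights in the right-hand side of the wave equation \eqref{eq:SSESF:dphi}. Concretely, I would fix $r_{0} \geq \max\set{4 M_{i}, \ 2 \Lmb M_{i}}$ (enlarging it once more below); since $m \leq M_{i}$ on $\PD$ by the monotonicity of Lemma~\ref{lem:mntn4m}, we then have $\mu = 2m/r \leq 1/2$, hence $1 - \mu \geq 1/2$, throughout $\set{r \geq r_{0}}$. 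On $\set{r \leq r_{0}}$ one immediately has $(1+r)\abs{\phi} \leq (1+r_{0}) \Lmb \Psi$, so everything reduces to bounding $r \abs{\phi}$ on $\set{r \geq r_{0}}$.

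For the bootstrap, for $U \geq 1$ and $r_{1} > r_{0}$ I set
\[
	\Phi(U, r_{1}) := \sup \set{ r \abs{\phi}(u,v) : r_{0} \leq r(u,v) \leq r_{1}, \ 1 \leq u \leq U}.
\]
Since $\dvr \geq \Lmb^{-1}$ by \eqref{eq:bnd4dvr}, on each $C_{u}$ the constraint $r(u,v) \leq r_{1}$ confines $v$ to a bounded interval, so the set above is compact and $\Phi(U,r_{1}) < \infty$ by continuity of $\phi$. The truncation at $r = r_{1}$ serves only to guarantee this a priori finiteness; the final bound will be independent of both $r_{1}$ and $U$.

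The heart of the argument is to rewrite \eqref{eq:SSESF:dphi} as $\rd_{u} \rd_{v} (r\phi) = \tfrac{2m \dvr \dur}{(1-\mu) r^{3}} (r\phi)$ and to estimate iteratively. Fixing $(u,v)$ with $r_{0} \leq r(u,v) \leq r_{1}$, $1\leq u \leq U$, I integrate this identity in $u$ from $C_{1}$ to $(u,v)$, splitting the incoming segment at the point where $r = r_{1}$: on the piece $\set{r \geq r_{1}}$ I use $r\abs{\phi} \leq \Lmb\Psi r$ and $\int \tfrac{-\dur}{r^{2}} \, \ud u' \leq r_{1}^{-1}$, and on the piece $\set{r_{0} \leq r \leq r_{1}}$ I use $r\abs{\phi} \leq \Phi(U,r_{1})$ and $\int \tfrac{-\dur}{r^{3}} \, \ud u' \leq \tfrac12 r(u,v)^{-2}$ (both elementary, from $\rd_{u}(r^{-k}) = -k\dur r^{-k-1} \geq 0$). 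Combined with $\abs{\rd_{v}(r\phi)(1,v)} \leq \calI_{1}(1+r(1,v))^{-\omg'}$ from asymptotic flatness, this gives $\abs{\rd_{v}(r\phi)(u,v)} \leq \calI_{1}(1+r(1,v))^{-\omg'} + C_{*}\Lmb\Psi r_{1}^{-1} + C_{*}\Phi(U,r_{1}) r(u,v)^{-2}$ with $C_{*}$ depending only on $M_{i}$. I then integrate in $v$ along $C_{u}$ from the point $v^{\star}(u)$ where $r = r_{0}$, starting from $\abs{(r\phi)(u,v^{\star}(u))} \leq r_{0}\Lmb\Psi$ and using $\ud v' = \dvr^{-1} \, \ud r \leq \Lmb \, \ud r$: the first term integrates to a finite constant since $\omg' > 1$ and $v = 2r+1$ on $C_{1}$; the second contributes $\leq C_{*}\Lmb^{2}\Psi$ (using $r(u,v) \leq r_{1}$ on the length of integration); the third contributes $\leq C_{*}\Lmb r_{0}^{-1} \Phi(U,r_{1})$. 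Hence $r\abs{\phi}(u,v) \leq C_{0} + C_{*}\Lmb r_{0}^{-1}\Phi(U,r_{1})$ with $C_{0}$ depending on $r_{0}, \Lmb, \Psi, M_{i}, \calI_{1}, \omg'$. Taking the supremum over admissible $(u,v)$, and having chosen $r_{0} \geq 2 C_{*}\Lmb$, the last term is absorbed: $\Phi(U,r_{1}) \leq 2 C_{0}$. Letting $r_{1} \to \infty$ and then $U \to \infty$ gives $\sup_{\set{r \geq r_{0}}} r\abs{\phi} \leq 2C_{0}$, and the lemma follows with $H_{1} = \max\set{(1+r_{0})\Lmb\Psi, \ \Lmb\Psi + 2C_{0}}$.

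The step I expect to be the real obstacle is obtaining the $r^{-2}$ (rather than merely $r^{-1}$) decay of the nonlinear contribution to $\rd_{v}(r\phi)$: bounding $\abs{\phi}$ by the constant $\Lmb\Psi$ in the source, as in the proof of Lemma~\ref{lem:bnd4dvrphiphi}, would only yield $\abs{\rd_{v}(r\phi)} \lesssim r^{-1}$, which is not integrable along $C_{u}$ and would force a spurious $\log r$ growth of $r\phi$. Writing the source as $\tfrac{2m\dvr\dur}{(1-\mu)r^{3}}(r\phi)$ and reinserting the bootstrap bound $r\abs{\phi} \leq \Phi$ produces the extra power of $r$ that makes the nonlinear term genuinely lower order and absorbable for $r_{0}$ large. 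The accompanying technical nuisance — ensuring the bootstrap quantity is finite before performing the absorption — is precisely what forces the truncation at $r = r_{1}$ and the passage to the limit only at the very end.
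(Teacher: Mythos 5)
Your argument is correct, and it follows the same basic strategy as the paper's proof: rewrite the wave equation as $\rd_{u}\rd_{v}(r\phi) = \frac{2m\dvr\dur}{(1-\mu)r^{3}}(r\phi)$ so that the source has an $r^{-3}$ weight, integrate first in $u$ using $m \leq M_{i}$ to obtain an $r^{-2}$ bound on $\rd_{v}(r\phi)$ modulo the sup of $r\abs{\phi}$, then integrate in $v$ and absorb the resulting $O(r_{0}^{-1})\sup r\abs{\phi}$ term for a large enough threshold. The one place you do something genuinely different — and, I would say, more carefully — is the a priori finiteness needed before absorption: the paper's inequality $\sup_{C_{u}\cap\{r\geq r_{1}\}}\abs{r\phi} \leq A + c\,\sup_{u'\leq u}\sup_{C_{u'}\cap\{r\geq r_{1}\}}\abs{r\phi}$ is only useful if the supremum on the right is known to be finite, which the paper leaves implicit (it can be justified by a continuity-in-$u$ argument, analogous to the one spelled out in Lemma~\ref{lem:bnd4dvrphiphi}). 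You instead truncate the bootstrap quantity at $r=r_{1}$, which makes $\Phi(U,r_{1})$ manifestly finite by compactness but forces you to split the $u$-integral and use the crude $\abs{\phi}\leq\Lmb\Psi$ bound on the far piece $\set{r\geq r_{1}}$; the point is then that this piece contributes $O(r_{1}^{-1})$ to $\rd_{v}(r\phi)$ and the $v$-integration has length $O(\Lmb r_{1})$, so the $r_{1}$'s cancel and $C_{0}$ is genuinely independent of $r_{1}$ and $U$. A second minor difference: you use the elementary $1-\mu\geq 1/2$ available for $r\geq 4M_{i}$, where the paper invokes the global bound $(1-\mu)^{-1}\leq K\Lmb$ from Proposition~\ref{prop:geomLocBVScat}; both are fine and give constants depending on the same data.
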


\begin{proof} 
Let $r_{1} > 0$ be a large number to be chosen below. Different arguments will be used in $\set{r \geq r_{1}}$ and $\set{r \leq r_{1}}$. For each $u \geq 1$ let $v^{\star}_{1}(u)$ be the unique $v$-value for which $r(u, v_{1}^{\star}(u)) = r_{1}$. By the fundamental theorem of calculus, we have
\begin{equation} \label{eq:decay1:cptu:0:pf:1}
	r \phi = r_{1} \phi(u, v^{\star}_{1} (u)) + \int_{v^{\star}_{1}(u)}^{v} \rd_{v} (r \phi) (u, v') \, \ud v'.
\end{equation} 

Integrate \eqref{eq:SSESF:dphi} along the incoming direction from $(1,v)$ to $(u,v)$. By Corollary \ref{cor:mntn4Bondi} and Proposition \ref{prop:geomLocBVScat}, we have
\begin{align*}
	\abs{\rd_{v} (r \phi)(u,v)}
	\leq& \abs{\rd_{v} (r \phi)(1, v)} + \abs{\int_{1}^{u} \frac{2m \dvr \dur}{(1-\mu) r^{3}} (r\phi) (u', v) \, \ud u'} \\
	\leq &\abs{\rd_{v} (r \phi)(1, v)} + \frac{K \Lmb M_{i}}{2} \frac{1}{r^{2}(u,v)}   \sup_{u' \in [1, u]} \abs{r \phi(u', v)}.
\end{align*}

Substituting the preceding bound into \eqref{eq:decay1:cptu:0:pf:1}, we obtain
\begin{equation} \label{eq:decay1:cptu:0:pf:2}
\begin{aligned}
	\sup_{C_{u} \cap \set{r \geq r_{1}}} \abs{r \phi} 
	\leq & \abs{r_{1} \phi(u, v^{\star}_{1} (u))} + \int_{v_{1}^{\star}(u)}^{v} \abs{\rd_{v} (r \phi)(1, v')} \, \ud v' \\
	& + \frac{K \Lmb^{2} M_{i}}{2 r_{1}}  \sup_{u' \in [1, u]} \sup_{C_{u'} \cap \set{r \geq r_{1}}} \abs{r \phi}.
\end{aligned}
\end{equation}

The first term on the right-hand side is bounded by $r_{1} \Lmb \Psi$ by \eqref{eq:bnd4phi}, whereas the second term depends only on the initial data and can be estimated in terms of $\calI_{1}$ as follows:
\begin{equation*}
	\int_{v_{1}^{\star}(u)}^{v} \abs{\rd_{v} (r \phi)(1, v')} \, \ud v' \leq \Lmb \calI_{1} \int_{1}^{\infty}  (1+r(1, v'))^{-\om'} \dvr(1,v') \, \ud v' \leq \frac{\Lmb}{\om'-1} \calI_{1}.
\end{equation*}

Moreover, choosing $r_{1}$ to be large enough so that 
\begin{equation*}
\frac{K \Lmb^{2}  M_{i}}{2 r_{1}} \leq \frac{1}{2},
\end{equation*}
the last term of \eqref{eq:decay1:cptu:0:pf:2} can be absorbed in to the left-hand side and we conclude
\begin{equation*}
	\sup_{\set{r \geq r_{1}}} \abs{r \phi} \leq 2 r_{1} \Lmb \Psi + \frac{2}{\om'-1}\Lmb \calI_{1}.
\end{equation*}

On the other hand, in $\set{r \leq r_{1}}$ we have
\begin{equation*}
	\sup_{\set{r \leq r_{1}}} \abs{r \phi} \leq r_{1} \Lmb \Psi
\end{equation*}
by \eqref{eq:bnd4phi}. Combining the bounds in $\{r\geq r_1\}$ and $\{r\leq r_1\}$, the lemma follows. \qedhere
\end{proof}

\begin{remark} 
The preceding argument shows that Lemma \ref{lem:decay1:cptu:0} holds with\footnote{Notice that while the constant $C_{\calI_{1}, K, \Lmb}$ depends on $\calI_1$, the preceding argument moreover allows us to choose $C_{\calI_{1}, K, \Lmb}$ to be non-decreasing in $\calI_1$. In particular, \emph{for $\calI_1$ sufficiently small}, we have $H_{1} \leq C_{K, \Lmb} \, (\calI_{1} + \Psi)$. It is for this reason that we prefer to write the expression $C_{\calI_{1}, K, \Lmb} \, (\calI_{1} + \Psi)$ instead of the more general $C_{\calI_{1}, K, \Lmb, \Psi}$.}
\begin{equation} \label{eq:decay1:H1}
	H_{1} \leq C_{\calI_{1}, K, \Lmb} \, (\calI_{1} + \Psi).
\end{equation}
\end{remark}

\subsection{Propagation of $u$-decay for $\rd_{u} (r \phi)$}
Here, we show that $u$-decay estimates proved for $\rd_{v} (r \phi)$ and $\phi$ may be `transferred' to $\rd_{u} (r \phi)$; this reduces the proof of Theorem \ref{main.thm.1} to showing only \eqref{eq:decay1:1} and \eqref{eq:decay1:2}. To this end, we integrate $\rd_{v} \rd_{u} (r \phi)$ from the axis $\Gmm$, along which $\rd_{u} (r \phi) = - \rd_{v} (r \phi)$. 

\begin{lemma} \label{lem:decay1:uDecay4durphi}
Suppose that there exists a finite positive constant $A$ such that 
\begin{equation*}
	\sup_{\PD} \abs{\phi} \leq A u^{-\om}, \qquad 
	\sup_{\PD} \abs{\rd_{v} (r \phi)} \leq A u^{-\om}.
\end{equation*}

Then the following estimate holds.
\begin{equation*}
	\sup_{\PD} \abs{\rd_{u} (r \phi)} \leq (1+K)A u^{-\om}.
\end{equation*}
\end{lemma}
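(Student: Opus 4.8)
The plan is to integrate the wave equation in the form \eqref{eq:SSESF:dphi''}, namely $\rd_{v}(\rd_{u}(r\phi)) = (\rd_{v}\dur)\phi$, along an outgoing null curve $C_{u}$ starting from the axis $\Gmm$. The key observation, already noted in the statement, is that on $\Gmm$ the coordinate normalization of \S\ref{subsec:coordSys} together with condition $(7)$ of Definition \ref{def:BVsolution} gives $\rd_{u}(r\phi)(u,u) = -\lim_{v\to u+}\rd_{v}(r\phi)(u,v)$, so the ``initial value'' of $\rd_{u}(r\phi)$ on the axis is controlled by the hypothesis on $\rd_{v}(r\phi)$: it is bounded in absolute value by $A u^{-\om}$.

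The main step is then to estimate the integral term. First I would write, for $(u,v)\in\PD$,
\begin{equation*}
	\rd_{u}(r\phi)(u,v) = \rd_{u}(r\phi)(u,u) + \int_{u}^{v} (\rd_{v}\dur)(u,v')\,\phi(u,v')\,\ud v'.
\end{equation*}
Using the hypothesis $\abs{\phi}\leq A u^{-\om}$ (note the weight is in $u$, which is \emph{constant} along $C_{u}$, so it pulls straight out of the $v'$-integral), this gives
\begin{equation*}
	\abs{\rd_{u}(r\phi)(u,v)} \leq A u^{-\om} + A u^{-\om} \int_{u}^{v} \abs{\rd_{v}\dur(u,v')}\,\ud v'.
\end{equation*}
The remaining point is that $\int_{u}^{v}\abs{\rd_{v}\dur}\,\ud v' \leq K$. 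This follows from Lemma \ref{lem:mntn4kpp}, which asserts $\rd_{v}\dur = \rd_{u}\dvr \leq 0$, so $\abs{\rd_{v}\dur} = -\rd_{v}\dur$ and the integral telescopes:
\begin{equation*}
	\int_{u}^{v} \abs{\rd_{v}\dur(u,v')}\,\ud v' = \lim_{v'\to u+}\dur(u,v') - \dur(u,v) = \lim_{v'\to u+}(-\dur)(u,v') \cdot(-1) \cdots
\end{equation*}
more precisely $= \dur(u,u^{+}) - \dur(u,v) = -\dur(u,v) - (-\dur(u,u^{+})) \leq -\dur(u,v) \leq K$ by \eqref{eq:bnd4dur} of Proposition \ref{prop:geomLocBVScat}, since $-\dur(u,u^{+}) = \lim_{v'\to u+}\dvr(u,v') > 0$. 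Hence $\abs{\rd_{u}(r\phi)(u,v)}\leq A u^{-\om}(1+K)$, which is exactly the claim.

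I do not anticipate a serious obstacle here; the only thing requiring care is the bookkeeping at the axis (justifying that $\rd_{u}(r\phi)(u,u)$ is well-defined and equals $-\rd_{v}(r\phi)(u,u^{+})$, which is precisely what conditions $(6)$ and $(7)$ of Definition \ref{def:BVsolution} are set up to provide) and the sign of $\rd_{v}\dur$, for which we invoke Lemma \ref{lem:mntn4kpp}. Everything else is a one-line integration of \eqref{eq:SSESF:dphi''} plus the uniform bounds from Proposition \ref{prop:geomLocBVScat}.
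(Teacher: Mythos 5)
Your proof is correct and takes essentially the same route as the paper: integrate \eqref{eq:SSESF:dphi''} along $C_u$ from the axis, use condition $(7)$ of Definition \ref{def:BVsolution} together with the hypothesis on $\rd_v(r\phi)$ to bound the boundary term by $Au^{-\om}$, note $\rd_v\dur \leq 0$ so that $\int_u^v |\rd_v\dur|\,\ud v' = \dur(u,u^+) - \dur(u,v) \leq -\dur(u,v) \leq K$, and pull out the constant-in-$v$ factor $u^{-\om}$. (The paper cites \eqref{eq:SSESF:dr} directly for the sign of $\rd_v\dur$, whereas you go through Lemma \ref{lem:mntn4kpp}, but these are the same fact.)
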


\begin{proof}
Fix $u \geq 1$ and $v \geq u$. Integrate \eqref{eq:SSESF:dphi''} along the outgoing direction from $(u, u)$ to $(u, v)$ and take the absolute value. Using $(7)$ of Definition \ref{def:BVsolution} (BV solutions to \eqref{eq:SSESF}), \eqref{eq:SSESF:dr} (in particular, $\rd_{v} \dur\leq 0$), \eqref{eq:bnd4dur} and the hypotheses, we have
\begin{align*}
	\abs{\rd_{u}(r\phi)(u,v)} 
	\leq & \lim_{v' \to u+} \abs{\rd_{v} (r \phi)(u, v')} + \sup_{u \leq v' \leq v} \abs{\phi(u,v')} \int_{u}^{v} (-\rd_{v} \dur) \, \ud v' \\
	\leq & A u^{-\omg} + K A u^{-\omg}. \qedhere
\end{align*}
\end{proof}

\subsection{Full decay for $\phi$ and $\rd_{v} (r \phi)$}\label{sec.full.decay.1}
In this subsection, we finish the proof of Theorem \ref{main.thm.1}. By Lemma \ref{lem:decay1:uDecay4durphi}, it suffices to establish the full decay of $\phi$ and $\rd_{v} ( r \phi)$, i.e., \eqref{eq:decay1:1} and \eqref{eq:decay1:2}. For the convenience of the reader, we recall these estimates below:
\begin{align*}
		\abs{\phi} \leq & A \min \set{u^{-\om}, r^{-1} u^{-(\om-1)}}, \tag{\ref{eq:decay1:1}} \\
		\abs{\rd_{v} (r \phi)} \leq & A \min \set{u^{-\om}, r^{-\om}}. \tag{\ref{eq:decay1:2}}
\end{align*}

For $U > 1$, let
\begin{equation*}
\calB_{1}(U) := \sup_{u \in [1, U]} \sup_{C_{u}} \bb( u^{\om} \abs{\phi} + r u^{\om-1} \abs{\phi}  \bb).
\end{equation*}

Notice that this is finite for every fixed $U$ by Lemma \ref{lem:decay1:cptu:0}. To establish the decay estimate \eqref{eq:decay1:1}, it suffices to prove that $\calB_{1}(U)$ is bounded by a finite constant which is \emph{independent of $U$}. We will show that this implies also \eqref{eq:decay1:2}. Divide $\PD$ into $\extr \cup \intr$, defined by
\begin{equation*}
	\extr := \set{(u,v) \in \PD : v \geq 3u}, \quad \intr := \set{(u,v) \in \PD : v \leq 3u}.
\end{equation*}

We first establish a bound for $\rd_{v} (r \phi)$ with the sharp $r$-weight, which thus gives the sharp decay rate in $\extr$. 
\begin{lemma} \label{lem:decay1:extr}
Let $u_{1} > 1$. Then for $u_{1}\leq u\leq U$, the following estimate holds.
\begin{equation} \label{eq:decay1:extr}
	\sup_{C_{u}} r^{\om} \abs{\rd_{v} (r \phi)} \leq \calI_{1} +   C_{K, M_{i}} \, u_{1} H_{1} + C u_{1}^{-1} K M_{i} \, \calB_{1}(U).
\end{equation}
\end{lemma}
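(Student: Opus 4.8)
The plan is to work in the exterior region $\extr = \set{v \geq 3u}$, where one has the geometric comparison $r \gtrsim u$ and $r \gtrsim v$, and to integrate the wave equation \eqref{eq:SSESF:dphi} in the incoming ($\rd_u$) direction from the initial curve $C_1$ to the point $(u,v)$. Concretely, fix $(u,v) \in \extr$ with $u_1 \leq u \leq U$ and write, using \eqref{eq:SSESF:dphi} in the form $\rd_u \rd_v(r\phi) = \frac{2m\dvr\dur}{(1-\mu)r^3}(r\phi)$,
\[
	\rd_v(r\phi)(u,v) = \rd_v(r\phi)(1,v) + \int_1^u \frac{2m\dvr\dur}{(1-\mu)r^3}(r\phi)(u',v)\,\ud u'.
\]
The first term is bounded by $\calI_1 r^{-\om'} \leq \calI_1 r^{-\om}$ using asymptotic flatness of order $\om'$ in BV (Definition \ref{def:AF}) together with the coordinate normalization $\dvr(1,\cdot) = \frac12$ and $v = 2r+1$ on $C_1$ — this accounts for the $\calI_1$ summand. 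For the integral term, I would bound $\frac{2m\dvr}{(1-\mu)r^2} \leq K\Lmb M_i \, r^{-2}(u',v) \cdot (\text{something})$ using \eqref{eq:bnd4dvr}, \eqref{eq:bnd4mu}, \eqref{eq:bnd4conjKpp} from Proposition \ref{prop:geomLocBVScat} and the monotonicity $m \leq M_i$ on $C_1$-to-the-future via Corollary \ref{cor:mntn4Bondi}; the remaining factor $\frac{-\dur}{r}$ integrates in $u'$ to give an $r^{-1}$-type gain along the incoming curve. The key point is that on the incoming segment from $(1,v)$ to $(u,v)$ the $r$-value only decreases, so $r(u',v) \geq r(u,v)$ throughout, and one extracts a clean power $r^{-\om}(u,v)$ out front after splitting $r\phi$ according to whether $r(u',v)$ is large or comparable to $u_1$.

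The main work is in estimating $\sup_{u'\in[1,u]}|r\phi(u',v)|$ that appears after pulling it out of the integral. Here I would split the incoming curve at the value $u' = u_1$: for $u' \leq u_1$ (equivalently in a region where $u$ is bounded by $u_1$), use the a priori bound $|r\phi| \leq H_1$ from Lemma \ref{lem:decay1:cptu:0}, which contributes the term $C_{K,M_i}\, u_1 H_1$ — the factor $u_1$ arises because integrating $\frac{-\dur}{r}$ over a $u'$-interval of length $\sim u_1$ against the weight gives an $r$-loss that is compensated against the available $r$-decay only up to a factor $u_1$; more precisely, one uses $r \geq u'$ together with $\int \frac{-\dur}{r^2}\,\ud u' \lesssim r^{-1}$. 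For $u' \geq u_1$, use instead the definition of $\calB_1(U)$: since $(u',v) \in \extr$ forces $r(u',v) \geq u'$ (as $v \geq 3u \geq 3u' $ implies $r$ is at least of order $u'$ there), we get $|r\phi|(u',v) \leq r u'^{-(\om-1)}|\phi| \cdot \frac{1}{u'} \cdot u' \leq \calB_1(U)$-type control, and the factor $u_1^{-1}$ in the final term comes from the fact that the potential $\int_{u_1}^u \frac{2m(-\dur)}{(1-\mu)r^3}$ is smaller the later one starts — quantitatively, $m \leq M_i$ and $\int_{u_1}^{u}\frac{-\dur}{r^3}\,\ud u' \lesssim u_1^{-1} r^{-2}$ using $r \geq u_1$ on that segment. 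Collecting the three contributions gives exactly the right-hand side of \eqref{eq:decay1:extr}.

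The step I expect to be the main obstacle is the careful bookkeeping of the $r$- and $u$-weights when splitting the incoming integral at $u' = u_1$ and converting between $r$-decay and $u$-decay: one must use the exterior-region geometry $r \sim u \sim v$ (which itself needs a short argument from $\dvr \leq \tfrac12$, $-\dur \leq K$, and $v \geq 3u$) to see that in $\extr$ a bound with the sharp $r^{-\om}$ weight automatically upgrades to the sharp $u^{-\om}$ weight, and to ensure the coefficient of $\calB_1(U)$ genuinely carries the small factor $u_1^{-1}$ so that, once $\calB_1$ is eventually controlled in the interior region, this term can be absorbed. The actual estimates are all elementary integrations of the explicit transport equation against the geometric bounds from Proposition \ref{prop:geomLocBVScat}, so no new analytic input beyond what is already assembled in Sections \ref{sec.anal.prop} and \ref{sec.geom} is needed.
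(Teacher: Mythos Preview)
Your overall plan---integrate \eqref{eq:SSESF:dphi} along the incoming direction and split the integral at $u'=u_1$---is the right one, and the treatment of the initial-data term and of the portion $u'\in[1,u_1]$ (using $|r\phi|\leq H_1$) is essentially correct. However, the mechanism you propose for extracting the small factor $u_1^{-1}$ in front of $\calB_1(U)$ from the segment $u'\in[u_1,u]$ does not work as written. You claim
\[
\int_{u_1}^{u}\frac{-\dur}{r^3}\,\ud u' \lesssim u_1^{-1}\, r^{-2}(u,v),
\]
but a direct computation gives
\[
\int_{u_1}^{u}\frac{-\dur}{r^3}\,\ud u' = \tfrac12\bigl(r^{-2}(u,v)-r^{-2}(u_1,v)\bigr) \leq \tfrac12\, r^{-2}(u,v),
\]
with no $u_1^{-1}$ gain; trading one power of $r$ for $u_1$ via $r\geq c\,u_1$ only yields $u_1^{-1} r^{-1}$, which after multiplying by $r^{\omg}$ leaves an unbounded factor $r^{\omg-1}$ when $\omg>1$. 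In short, the potential integral alone cannot produce both the correct $r^{-\omg}$ weight and the smallness $u_1^{-1}$ simultaneously.

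The missing idea is that the $u_1^{-1}$ must come from the $u$-decay of $\phi$ already encoded in $\calB_1$, not from the potential. The paper interpolates between the two bounds defining $\calB_1$ to write, for $\omg\geq 2$,
\[
|\phi|\leq \calB_1(U)\,(r^{-1}u^{-(\omg-1)})^{\omg-2}(u^{-\omg})^{3-\omg}=\calB_1(U)\,r^{-(\omg-2)}(u')^{-2},
\]
so that the integrand becomes $\lesssim K M_i\,\calB_1(U)\,r^{-\omg}(u')^{-2}$; the factor $(u')^{-2}$ then integrates to $u_1^{-1}$ while the $r$-power is already sharp. For $1<\omg\leq 2$ one instead interpolates to get $|\phi|\leq \calB_1(U)\,r^{-(\omg-1)}(u')^{-1}$, bounds $(u')^{-1}\leq u_1^{-1}$ directly, and integrates $\frac{-\dur}{r^{\omg+1}}$. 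This case split (which you do not mention) is what makes the argument go through uniformly in $\omg\in(1,3]$. Note also that with this approach no restriction to $\extr$ is needed; the estimate holds on all of $C_u$ as stated.
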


\begin{proof} 
We separate the proof into cases $\om \geq 2 $ and $1<\om\leq 2$.

\noindent{\bf Case 1: $\om\geq 2$}

%We proceed as in the proof of Lemma \ref{lem:decay1:cptu:0}. 
First, notice that
$$|\phi|\leq \calB_1(U) (r^{-1}u^{-(\om-1)})^{\om-2} (u^{-\om})^{1-(\om-2)}\leq \calB_1(U) r^{-(\om-2)}u^{-2}.$$
Applying Lemma \ref{lem:decay1:cptu:0}, we also have
$$|\phi|\leq (1+r)^{-1}H_1.$$
By Corollary \ref{cor:mntn4Bondi} and Proposition \ref{prop:geomLocBVScat}, we have the following pointwise bounds:
\begin{equation*}
\sup_{u'\in [1,u_1]} |\frac{m\dvr\dur}{1-\mu}|\leq \frac{K M_i}{2}\, , \quad
\sup_{u'\in [u_{1},\infty)} |\frac{m\dvr\dur}{1-\mu}|\leq \frac{K M(u_{1})}{2}\, .
\end{equation*}
Therefore, integrating \eqref{eq:SSESF:dphi} along the incoming direction from $(1, v)$ to $(u, v)$, we have
\begin{align*}
	&\abs{\rd_{v} (r \phi)(u,v)}\\
	& \quad \leq \abs{\rd_{v} (r \phi)(1, v)} + \abs{\int_{1}^{u} \frac{2m \dvr \dur\phi}{(1-\mu) r^{2}}  (u', v) \, \ud u'} \\
	& \quad \leq \abs{\rd_{v} (r \phi)(1, v)} + \frac{K M_{i}}{r^2(u,v)(1+r(u,v))} H_{1} \int_{1}^{u_{1}}  \, \ud u' + \frac{K M(u_1)}{r^{\om}(u,v)} \calB_{1}(U) \int_{u_{1}}^{u} (u')^{-2} \, \ud u' \\
	& \quad \leq \abs{\rd_{v} (r \phi)(1, v)} + \frac{u_{1} K M_{i}}{r^2(u,v)(1+r(u,v))} H_{1} + \frac{K M(u_{1})}{u_{1} r^{\om}(u,v)} \calB_{1}(U).
\end{align*}

Multiplying both sides by $r^{\om}(u,v)$ and using the fact that $r(u,v) \leq r(1, v)$, we conclude
\begin{align*}
	r^{\om} \abs{\rd_{v} (r \phi)}(u,v) 
	\leq & r^{\om}\abs{\rd_{v} (r \phi)}(1,v) + u_{1} \frac{r^{\om-2}}{(1+r)} K M_{i} \, H_{1}+ u_{1}^{-1} K M(u_{1}) \, \calB_{1}(U) \\
	\leq & \calI_{1}+ C_{u_{1}, K, M_{i}}  H_{1} + u_{1}^{-1} K M_{i} \, \calB_{1}(U).
\end{align*}

\noindent{\bf Case 2: $1<\om\leq 2$}

We will use the following bounds for $\phi$. First, 
$$|\phi|\leq \calB_1(U) (r^{-1}u^{-(\om-1)})^{\om-1} (u^{-\om})^{(2-\om)}\leq \calB_1(U) r^{-(\om-1)}u^{-1}.$$

Also, Lemma \ref{lem:decay1:cptu:0} implies
$$|\phi|\leq (1+r)^{-1}H_1.$$

As in Case 1 we integrate \eqref{eq:SSESF:dphi} along the incoming direction from $(1, v)$ to $(u, v)$:
\begin{align*}
	&\abs{\rd_{v} (r \phi)(u,v)}\\
	& \quad \leq \abs{\rd_{v} (r \phi)(1, v)} + \abs{\int_{1}^{u} \frac{2m \dvr \dur\phi}{(1-\mu) r^{2}}  (u', v) \, \ud u'} \\
	& \quad \leq \abs{\rd_{v} (r \phi)(1, v)} + \frac{K \Lmb M_{i} H_{1}}{(1+r)} \int_{1}^{u_{1}} \frac{-\dur}{r^2}  \, \ud u' 
								+ \frac{K \Lmb M(u_1)}{u_1} \calB_{1}(U) \int_{u_{1}}^{u}  \frac{-\dur}{r^{\om+1}} \, \ud u' \\
	& \quad \leq \abs{\rd_{v} (r \phi)(1, v)} + \frac{\om K \Lmb M_{i}}{r(u,v)(1+r(u,v))} H_{1} + \frac{\om K \Lmb M(u_{1})}{u_{1} r^{\om}(u,v)} \calB_{1}(U).
\end{align*}

Multiply both sides by $r^{\om}$ to arrive at the conclusion as in Case 1. In this case, note that the second term is a bit better than what is claimed, as there is no dependence on $u_{1} \geq 1$. \qedhere
\end{proof}

\begin{remark} 
Note that the proof of this lemma limits $\omg$ to be $\leq 3$. More precisely, this limitation comes from the contribution of the right-hand side of \eqref{eq:SSESF:dphi}
\end{remark}

We are now ready to prove bounds \eqref{eq:decay1:1} and \eqref{eq:decay1:2}. The idea is  to `propagate' the exterior decay estimate \eqref{eq:decay1:extr} into $\intr$ to obtain decay in $u$, using the smallness coming from Lemma \ref{lem:smallPtnl} in the region where $u$ is sufficiently large. On the other hand, the preliminary $r$-decay estimates proved in \S \ref{subsec:decay1:rDecay} will give the desired $r$-decay rates in rest of the space-time.

\begin{proof}[Proof of \eqref{eq:decay1:1} and \eqref{eq:decay1:2}] 
Let $1 \leq u_{1} \leq U$. For $(u, v) \in \PD$ with $u \in [3 u_{1}, U]$, integrate \eqref{eq:SSESF:dphi} along the incoming direction from $(u/3, v)$ to $(u, v)$. Then
\begin{equation} \label{eq:decay1:intr:pf:1}
\begin{aligned}
	\abs{\rd_{v} (r \phi) (u,v)} \leq 
	& \abs{\rd_{v} (r \phi) (u/3,v)} \\
	& + \frac{1}{2} (\sup_{u' \in [u/3, u]} \sup_{C_{u'}} \abs{\phi}) \int_{u/3}^{u} \abs{\frac{2m \dur}{(1-\mu) r^{2}} (u', v)} \, \ud u'.
\end{aligned}
\end{equation} 

Multiply both sides by $u^{\om}$ and estimate each term on the right-hand side.  For the first term, the key observation is the following: For $v \geq u$, the point $(u/3, v)$ lies in $\extr$, where \eqref{eq:decay1:extr} is effective. Indeed, note that
\begin{equation*}
	(2/3\Lmb)  u \leq \Lmb^{-1} ( v- (u/3) ) \leq r(u/3, v).
\end{equation*}

Thus, by \eqref{eq:decay1:extr},
\begin{align*}
	u^{\om} \abs{\rd_{v} (r \phi) (u/3,v)} 
	\leq & (3 \Lmb /2)^{\om} \bb( r^{\om}(u/3, v) \abs{\rd_{v} (r \phi) (u/3,v)} \bb) \\
	\leq & (3 \Lmb/2)^{\om} \bb( \calI_{1} +   C_{u_{1}, K, M_{i}}  H_{1} + C u_{1}^{-1} K M_{i} \, \calB_{1}(U) \bb) \\
	\leq & C_{u_{1}, K, \Lmb, M_{i}} (\calI_{1} + H_{1}) + C_{K, \Lmb} M_{i} u_{1}^{-1} \, \calB_{1}(U).
\end{align*}

For the second term on the right-hand side of \eqref{eq:decay1:intr:pf:1}, we have
\begin{align*}
\frac{u^{\om}}{2} (\sup_{u' \in [u/3, u]} \sup_{C_{u'}} \abs{\phi}) \int_{u/3}^{u} \abs{\frac{2m \dur}{(1-\mu) r^{2}} (u', v)} \, \ud u'
\leq & \frac{3^{\om}}{2} \bb( \int_{u/3}^{u} \abs{\frac{2m \dur}{(1-\mu) r^{2}} (u', v)} \, \ud u' \bb) \calB_{1}(U).
\end{align*}

Combining these estimates, we deduce
\begin{equation} \label{eq:decay1:intr:pf:2}
\begin{aligned}
	\sup_{C_{u}} u^{\om} \abs{\rd_{v}(r \phi)(u, v)} \leq & C_{u_{1}, K, \Lmb, M_{i}} (\calI_{1} + H_{1}) \\ 
	&+ \bb( C_{K, \Lmb} M_{i} u_{1}^{-1} + C \int_{u/3}^{u}  \abs{\frac{2m \dur}{(1-\mu) r^{2}} (u', v)} \, \ud u' \bb) \, \calB_{1}(U).
\end{aligned}
\end{equation}

Recalling the bounds of $\phi$ in terms of $\rd_v(r\phi)$ in Lemmas \ref{lem:intEst4phi}, we have
\begin{align*}
	\calB_{1}(U) 
	\leq & (1+2\Lmb) \sup_{u \in [1, U]} \sup_{C_{u}} \bb( u^{\om} \abs{\rd_{v}(r \phi)} + r^{\om} \abs{\rd_{v} (r \phi)} \bb).
\end{align*}
The right-hand side can be controlled by \eqref{eq:decay1:intr:pf:2} and \eqref{eq:decay1:extr}, from which we conclude
\begin{equation} \label{eq:decay1:intr:pf:key}
	\calB_{1}(U) \leq C_{u_{1}, K, \Lmb, M_{i}} (\calI_{1} + H_{1}) + \bb( C_{K, \Lmb} M_{i} u_{1}^{-1} + C \int_{u/3}^{u}  \abs{\frac{2m \dur}{(1-\mu) r^{2}} (u', v)} \, \ud u'\bb) \calB_{1}(U).
\end{equation}

As a consequence of Lemma \ref{lem:smallPtnl}, the entire coefficient in front of $\calB_{1}(U)$ can made to be smaller than (say) $1/2$ by taking $u_{1}$ sufficiently large. Since $\calB_{1}(U) < \infty$, we can then absorb this term into the left-hand side. Observing that this bound is independent of $U > 1$, we have thus obtained \eqref{eq:decay1:1}.

To prove \eqref{eq:decay1:2}, simply apply \eqref{eq:decay1:intr:pf:2} and \eqref{eq:decay1:extr}, which shows that
\begin{align*}
&\sup_{u \in [1, U]} \sup_{C_{u}} \bb( u^{\om} \abs{\rd_{v}(r \phi)} + r^{\om} \abs{\rd_{v} (r \phi)} \bb)\\
& \quad \leq  C_{u_{1}, K, \Lmb, M_{i}} (\calI_{1} + H_{1}) + \bb( C_{K, \Lmb} M_{i} u_{1}^{-1} + C \int_{u/3}^{u}  \abs{\frac{2m \dur}{(1-\mu) r^{2}} (u', v)} \, \ud u'\bb) \calB_{1}(U).
\end{align*}
This boundedness of $\calB_1(U)$ that we just proved thus implies \eqref{eq:decay1:2}.
\qedhere
\end{proof}

\begin{remark} 
According to the proof that we have just given, the constant $A_{1} > 0$ depends on our choice of $u_{1} > 1$, which in turn depends on how fast the coefficient in front of $\calB_{1}(U)$ in \eqref{eq:decay1:intr:pf:key} vanishes as $u_{1} \to \infty$. This explains why $A_{1} > 0$ does not depend only on the size of the initial data, as remarked in Section \ref{sec.main.thm}. Controlling the size of $u_{1} > 1$ under an additional small data assumption will be key to proving Statement (1) of Theorem \ref{thm:smallData} in Section \ref{sec:smallData}.
\end{remark}

\subsection{Additional decay estimates}
We end this section with the following decay estimates for $\rd_{v} \phi$, $\rd_{u} \phi$ and $m$.
\begin{corollary} \label{cor:decay1}
Let $(\phi, r, m)$ be a locally BV scattering solution to \eqref{eq:SSESF} with asymptotically flat initial data of order $\omg'$ in BV, and define $\omg = \min \set{\omg', 3}$.
Let $A_{1}$ be the constant in Theorem \ref{main.thm.1}. Then the following decay estimates hold.
\begin{align} 
	\abs{\rd_{v} \phi} \leq & C A_{1} \min \set{r^{-1} u^{-\om}, r^{-2} u^{-(\om-1)}}, \label{eq:decay1:4} \\
	\abs{\rd_{u} \phi} \leq & C_{K} A_{1} \, r^{-1} u^{-\om}, \label{eq:decay1:5} \\
	m \leq & C_{\Lmb} A_{1}^{2} \min \set{r u^{-2\om}, u^{-(2\om-1)}}.  \label{eq:decay1:6}
\end{align}
\end{corollary}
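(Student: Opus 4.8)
The plan is to deduce all three estimates from Theorem \ref{main.thm.1}, the geometric bounds of Proposition \ref{prop:geomLocBVScat}, and elementary algebraic identities relating $\rd_v\phi$, $\rd_u\phi$ and $m$ to the quantities already controlled.

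For \eqref{eq:decay1:4} and \eqref{eq:decay1:5} I would start from the identities $r\rd_v\phi = \rd_v(r\phi) - \dvr\,\phi$ and $r\rd_u\phi = \rd_u(r\phi) - \dur\,\phi$. Using $\dvr \le \tfrac{1}{2}$ from \eqref{eq:bnd4dvr} (resp.\ $\abs{\dur} \le K$ from \eqref{eq:bnd4dur}) this gives $\abs{\rd_v\phi} \le r^{-1}\abs{\rd_v(r\phi)} + \tfrac{1}{2} r^{-1}\abs{\phi}$ (resp.\ $\abs{\rd_u\phi} \le r^{-1}\abs{\rd_u(r\phi)} + K r^{-1}\abs{\phi}$). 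Substituting the bounds \eqref{eq:decay1:1}--\eqref{eq:decay1:3} and splitting into the regimes $r \le u$ and $r \ge u$, one checks directly that $r^{-1}\min\set{u^{-\om}, r^{-\om}}$ and $r^{-1}\min\set{u^{-\om}, r^{-1}u^{-(\om-1)}}$ are each bounded above by $\min\set{r^{-1}u^{-\om}, r^{-2}u^{-(\om-1)}}$, which yields \eqref{eq:decay1:4}; for \eqref{eq:decay1:5} only the weight $u^{-\om}$ is needed in each factor, which is why no $r$-gain appears there.

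For \eqref{eq:decay1:6} the plan is to integrate the mass equation $2\dvr\,\rd_v m = (1-\mu)r^2(\rd_v\phi)^2$ from \eqref{eq:SSESF:dm} along $C_u$ starting from $\Gmm$, where $m = 0$. Using $1-\mu \le 1$ and $\dvr^{-1} \le \Lmb$ (from \eqref{eq:bnd4dvr}, \eqref{eq:bnd4mu}) this gives $m(u,v) \le \tfrac{\Lmb}{2}\int_u^v r^2(\rd_v\phi)^2(u,v')\,\ud v'$. By \eqref{eq:decay1:4}, $r^2(\rd_v\phi)^2 \le C A_1^2 \min\set{u^{-2\om}, r^{-2}u^{-2(\om-1)}}$, which in particular stays bounded near $\Gmm$, so the integral converges. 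Changing the integration variable from $v'$ to $r$ via $\ud v' = \dvr^{-1}\ud r \le \Lmb\,\ud r$ then reduces matters to estimating $\int_0^{r(u,v)}\min\set{u^{-2\om}, \rho^{-2}u^{-2(\om-1)}}\,\ud\rho$, which is $\le r u^{-2\om}$ when $r \le u$ and $\le 2 u^{-(2\om-1)}$ always; since $\min\set{r u^{-2\om}, u^{-(2\om-1)}}$ equals $r u^{-2\om}$ exactly when $r \le u$, this gives \eqref{eq:decay1:6}.

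The whole argument is routine and presents no genuine obstacle. The only mild points requiring care are keeping track of the two $r$-versus-$u$ regimes inside the minima, and observing that in \eqref{eq:decay1:6} the integrand $r^2(\rd_v\phi)^2 = (\rd_v(r\phi) - \dvr\,\phi)^2$ is bounded near $\Gmm$ by \eqref{eq:bnd4dvrphi} and \eqref{eq:bnd4phi}, so that $m$ really is given by the convergent integral above rather than requiring a separate argument at the axis.
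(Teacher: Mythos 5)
Your proposal is correct and takes essentially the same route as the paper. For \eqref{eq:decay1:4}--\eqref{eq:decay1:5} the paper uses exactly the identities $r\rd_v\phi = \rd_v(r\phi) - \dvr\,\phi$ and $r\rd_u\phi = \rd_u(r\phi) - \dur\,\phi$ together with the uniform bounds on $\dvr,\dur$; for \eqref{eq:decay1:6} the paper writes $m(u,v)=\tfrac12\int_u^v\dvr^{-1}(1-\mu)r^2(\rd_v\phi)^2\,\ud v'$, bounds $\rd_v\phi$ by both $r^{-1}u^{-\om}$ and $r^{-2}u^{-(\om-1)}$, and splits the integral at $r=r_1$ with $r_1=u$ at the end --- your change of variable from $v'$ to $r$ is just a cosmetic repackaging of that same split.
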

\begin{proof} 
	Let $u \geq 1$ and $v \geq u$. Since
	\begin{equation*}
		r \rd_{v} \phi = \rd_{v} ( r \phi) - \dvr \phi, \qquad
		r \rd_{u} \phi = \rd_{u} ( r \phi) - \dur \phi,
	\end{equation*}
	the estimates \eqref{eq:decay1:4}, \eqref{eq:decay1:5} follow from \eqref{eq:decay1:1}--\eqref{eq:decay1:3} and the fact that $\sup_{\PD} \abs{\dvr} \leq 1/2$, $\sup_{\PD} \abs{\dur} \leq K$. 
	
	On the other hand, by \eqref{eq:SSESF:dm}, we have 
	\begin{equation} \label{eq:decay1:6:pf:1}
		m(u,v) = \frac{1}{2} \int_{u}^{v} \dvr^{-1} (1-\mu) r^{2} (\rd_{v} \phi)^{2} (u, v')\, \ud v'.
	\end{equation}
	
	Using $\abs{\rd_{v} \phi (u,v) } \leq C A_{1} r^{-1} u^{-\om}$ (which has just been established), we obtain
	\begin{equation*}
		m(u,v) \leq C_{\Lmb} A_{1}^{2} \, r u^{-2\om},
	\end{equation*}
	which proves a `half' of \eqref{eq:decay1:6}. 
	To prove the other `half', let us introduce a parameter $r_{1} > 0$ (to be determined later) and define $v_{1}^{\star}(u)$ to be the unique $v$-value such that $r(u, v^{\star}_{1}(u)) = r_{1}$. For $v \geq v^{\star}_{1}(u)$, divide the $v'$-integral in \eqref{eq:decay1:6:pf:1} into $\int_{u}^{v^{\star}_{1}(u)} + \int_{v^{\star}_{1}(u)}^{v}$ and use $\abs{\rd_{v} \phi (u,v) } \leq C A_{1} \, r^{-1} u^{-\om}$ for the former and $\abs{\rd_{v} \phi (u,v) } \leq C A_{1} \, r^{-2} u^{-(\om-1)}$ for the latter. As $m(u,v)$ is non-decreasing in $v$, we then arrive at the estimate
	\begin{equation*}
		\sup_{C_{u}} m \leq C_{\Lmb} A_{1}^{2} \, r_{1} u^{-2\om} + C_{\Lmb} A_{1}^{2} \, r_{1}^{-1} u^{-2(\om-1)}.
	\end{equation*}
	
Choosing $r_{1} = u$, we obtain \eqref{eq:decay1:6}. \qedhere
\end{proof}

\section{Decay of second derivatives}\label{sec.decay2}
In this section, we establish our second main theorem (Theorem \ref{main.thm.2}). Throughout the section, we assume that $(\phi, r, m)$ is a locally BV scattering solution to \eqref{eq:SSESF} with asymptotically flat initial data of order $\omg'$ in $C^{1}$, as in Definitions \ref{def:locBVScat} and \ref{def:AF}. As discussed in Remark \ref{rem:wp}, $(\phi, r, m)$ is then a $C^{1}$ solution to \eqref{eq:SSESF}. As before, let $\omg = \min\set{\omg', 3}$. 

\subsection{Preparatory lemmas}
The following lemma, along with Lemma \ref{lem:smallPtnl}, provides the crucial smallness for our proof of Theorem \ref{main.thm.2}.
\begin{lemma} \label{lem:smallDphi}
For every $\eps > 0$, there exists $u_{2} > 1$ such that
\begin{align}
	\sup_{v \in [u_{2}, \infty)} \int_{\uC_{v} \cap \set{u \geq u_{2}}} \abs{\rd_{u} \phi}  < \eps, \label{eq:smallDphi:u} \\ 
	\sup_{u \in [u_{2}, \infty)} \int_{C_{u}} \abs{\rd_{v} \phi} < \eps. \label{eq:smallDphi:v}
\end{align}
\end{lemma}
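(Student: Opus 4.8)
The plan is to prove the two bounds by splitting each null curve into the portion lying in the compact region $\cpt = \set{r \le R}$ (with $R$ as in Definition~\ref{def:locBVScat}) and the portion lying in $\cpt^{c} = \PD \setminus \cpt$. On $\cpt^{c}$ one has $r \ge R$, so the pointwise decay rates for $\rd_{u} \phi$ and $\rd_{v} \phi$ already established in Corollary~\ref{cor:decay1} suffice once integrated; on $\cpt$ one has to fall back on the local BV scattering hypothesis, which has already been packaged for use in this direction by Theorem~\ref{thm:decayInCpt} and the first step of its proof.

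For \eqref{eq:smallDphi:u} I would argue as follows. The portion of $\uC_{v} \cap \set{u \ge u_{2}}$ lying in $\cpt$ is controlled directly: Theorem~\ref{thm:decayInCpt} furnishes $u_{0} > 1$ with $\sup_{v} \int_{\uC_{v} \cap \set{u \ge u_{0}} \cap \cpt} \abs{\rd_{u} \phi} < \eps/2$. For the portion lying in $\cpt^{c}$ I would use \eqref{eq:decay1:5}, which gives $\abs{\rd_{u} \phi(u,v)} \le C_{K} A_{1} r^{-1} u^{-\om} \le C_{K} A_{1} R^{-1} u^{-\om}$ there, and hence $\int_{\uC_{v} \cap \set{u \ge u_{2}} \cap \cpt^{c}} \abs{\rd_{u} \phi} \, \ud u \le C_{K} A_{1} R^{-1} (\om-1)^{-1} u_{2}^{-(\om-1)}$; since $\om > 1$ and $A_{1} < \infty$, this is $< \eps/2$ once $u_{2}$ is large. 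Adding the two contributions gives \eqref{eq:smallDphi:u}.

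For \eqref{eq:smallDphi:v} the far portion is handled the same way, but now using the $r^{-2}$ rate in \eqref{eq:decay1:4}: with $\ud v = \dvr^{-1} \ud r \le \Lmb \, \ud r$ along $C_{u}$ (recall $\dvr \ge \Lmb^{-1}$), one gets $\int_{C_{u} \cap \cpt^{c}} \abs{\rd_{v} \phi} \, \ud v \le C A_{1} u^{-(\om-1)} \int_{R}^{\infty} \Lmb r^{-2} \, \ud r = C A_{1} \Lmb R^{-1} u^{-(\om-1)} \to 0$ as $u \to \infty$. The portion in $\cpt$ is the only slightly delicate point: the pointwise bound $\abs{\rd_{v}\phi} \le C A_{1} r^{-1} u^{-\om}$ degenerates at the axis and so cannot be integrated in $v$ directly. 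Instead I would invoke the integrated inequality \eqref{eq:est4dvphi:2} of Lemma~\ref{lem:est4dvphi}, applied on $[u, v^{\star}(u)]$ where $r(u, v^{\star}(u)) = R$ (taking the upper limit to $\infty$ if $\sup_{C_{u}} r < R$), to obtain $\int_{C_{u} \cap \cpt} \abs{\rd_{v} \phi} \le \int_{C_{u} \cap \cpt} \abs{\rd_{v}(\dvr^{-1} \rd_{v}(r\phi))}$; the right-hand side was shown to tend to $0$ as $u \to \infty$ in the first step of the proof of Theorem~\ref{thm:decayInCpt} (there it is bounded by $C_{\Lmb, \Psi} \int_{C_{u} \cap \cpt}(\abs{\rd_{v}^{2}(r\phi)} + \abs{\rd_{v} \log \dvr})$, which vanishes by \eqref{eq:locBVScat}). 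Choosing $u_{2}$ large enough makes this $< \eps/2$, and adding gives \eqref{eq:smallDphi:v}.

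Finally, taking $u_{2}$ to be the maximum of the finitely many thresholds produced above yields both statements simultaneously. The genuinely hard work here is not in this lemma at all: extracting smallness in $\cpt$ from the BV scattering assumption (via Christodoulou's BV theory, as in Theorem~\ref{thm:decayInCpt}) and proving the pointwise decay rates of Corollary~\ref{cor:decay1} have already been carried out; what remains is just the $\cpt$/$\cpt^{c}$ splitting together with the observation that in the compact region along $C_{u}$ one must use the integrated form \eqref{eq:est4dvphi:2} rather than the pointwise decay bound.
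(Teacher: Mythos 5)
Your proof is correct, and for the $\cpt$ portions (Theorem~\ref{thm:decayInCpt} for \eqref{eq:smallDphi:u}, and Lemma~\ref{lem:est4dvphi} combined with the BV scattering hypothesis for \eqref{eq:smallDphi:v}) it coincides with the paper's argument, including the crucial observation that the pointwise $r^{-1}u^{-\om}$ rate for $\rd_v\phi$ cannot be integrated across the axis and one must instead pass through the integrated inequality \eqref{eq:est4dvphi:2}. Where you depart from the paper is in the $\cpt^c$ region. The paper's proof of \eqref{eq:smallDphi:u} there does not invoke Corollary~\ref{cor:decay1} at all: it applies Cauchy--Schwarz to $\int |\rd_u\phi|$, factoring the integrand so that one factor produces the integral $\int \tfrac12 (-\dur)^{-1}(1-\mu) r^2 (\rd_u\phi)^2$ appearing in the Hawking-mass formula \eqref{eq:SSESF:dm}, and the other factor $\int \tfrac{-\dur}{(1-\mu)r^2}$ is bounded by $2K\Lmb/R$ via Proposition~\ref{prop:geomLocBVScat}. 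This yields $\int_{\uC_v \cap \cpt^c \cap \set{u\ge u_2}}|\rd_u\phi| \le \sqrt{2K\Lmb M(u_2)/R}$, and one then only needs the qualitative vanishing $M(u_2)\to 0$ from \eqref{eq:zeroMf}, not the rate from Theorem~\ref{main.thm.1}. Your route, which uses the already-proved pointwise decay $|\rd_u\phi|\lesssim r^{-1}u^{-\om}$ and $|\rd_v\phi|\lesssim r^{-2}u^{-(\om-1)}$ from Corollary~\ref{cor:decay1} to integrate directly over $\cpt^c$, is equally valid at this stage of the paper (Theorem~\ref{main.thm.1} and its corollary precede Section~\ref{sec.decay2}), and in fact gives a quantitative rate $u_2^{-(\om-1)}$ for the tail. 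The paper's Cauchy--Schwarz/Bondi-mass argument is slightly more economical in that it relies only on the kinematic bounds of Proposition~\ref{prop:geomLocBVScat} and the monotonicity of $m$, and mirrors the structure of Lemma~\ref{lem:smallPtnl}, but nothing is lost by substituting your version.
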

\begin{proof} 
We will only prove \eqref{eq:smallDphi:u}, leaving the similar proof of \eqref{eq:smallDphi:v} to the reader. As in the proof of Lemma \ref{lem:smallPtnl}, we divide $\PD$ into $\cpt:=\PD\cap\{r\leq R\}$ and $\cpt^{c} := \PD \setminus \cpt$, and argue separately. First, by Theorem \ref{thm:decayInCpt}, we have
\begin{equation*}
	\sup_{v \in [u_{2}, \infty)} \int_{\uC_{v} \cap \set{u \geq u_{2}} \cap \cpt} \abs{\rd_{u} \phi}  < \eps/2, 
\end{equation*}
for $u_{2}$ sufficiently large. Next, to derive \eqref{eq:smallDphi:u} in $\cpt^{c}$, we define $u^{\star}(v) := \sup \set{u \in [u_{2}, v] : r(u,v) \geq R}$, where we use the convention $u^{\star}(v) = u_{2}$ when the set is empty. Then using Proposition \ref{prop:geomLocBVScat} and Schwarz, we compute
\begin{align*}
	\int_{\uC_{v} \cap \set{u \geq u_{2}} \cap \cpt^{c}} \abs{\rd_{u} \phi} 
	= & \int_{u_{2}}^{u^{\star}(v)} \abs{\rd_{u} \phi(u',v)} \, \ud u' \\
	\leq & \sqrt{\frac{2 K \Lmb}{R}} \sqrt{\int_{u_{2}}^{u^{\star}(v)} \frac{1}{2}(-\dur)^{-1} (1-\mu) r^{2} (\rd_{u} \phi)^{2} (u', v) \, \ud u'} \\
	\leq & \sqrt{\frac{2 K \Lmb}{R} m(u_{2}, v)} \leq \sqrt{\frac{2 K \Lmb}{R} M(u_{2})}.
\end{align*}

By \eqref{eq:zeroMf} (Vanishing final Bondi mass), $\lim_{u_{2} \to \infty} M(u_{2}) = 0$. \eqref{eq:smallDphi:u} thus follows. \qedhere
\end{proof}

The next lemma allows us to estimate the first derivative of $\phi$ at $(u,v)$ in terms of information on $C_{u} \cap \set{(u, v'): u \leq v' \leq v}$.
\begin{lemma} \label{lem:dphi}
For every $(u,v) \in \PD$, the following inequalities hold.
\begin{align*}
	& \abs{\rd_{v} \phi(u,v)} \leq 			\frac{\Lmb^{2}}{4} \sup_{u \leq v' \leq v} \abs{\rd_{v}^{2}(r\phi)(u, v')} \\
	& \phantom{\abs{\rd_{v} \phi(u,v)} \leq} 	+ \frac{\Lmb^{3}}{4} \sup_{u \leq v' \leq v} \abs{\rd_{v} (r \phi)(u, v')} \sup_{u \leq v' \leq v} \abs{\rd_{v} \dvr (u, v')}, \\
	& \abs{\rd_{u} \phi(u,v)} \leq \Lmb  \sup_{u \leq v' \leq v} (- \dur)(u, v') \abs{\rd_{v} \phi(u,v')}.
\end{align*}
\end{lemma}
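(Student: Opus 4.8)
The plan is to use the identity $\rd_v\phi = \frac{\dvr}{r}\big(\dvr^{-1}\rd_v(r\phi) - \phi\big)$, already exploited in the proof of Lemma \ref{lem:est4dvphi}. There it was shown that for $r(u,u)=0$ one has the representation
\[
	\rd_{v}\phi(u,v) = \frac{\dvr}{r^{2}}(u,v)\int_{u}^{v} r(u,v'')\,\rd_{v}\big(\dvr^{-1}\rd_{v}(r\phi)\big)(u,v'')\,\ud v'',
\]
together with the kernel bound $\sup_{v''\in[u,v]} r(u,v'')\int_{v''}^{v}\frac{\dvr(u,v')}{r^{2}(u,v')}\,\ud v'\le 1$ and, more precisely for the weight here, $\sup_{v'\in[u,v]}\frac{\dvr(u,v')}{r^{2}(u,v')}\int_{u}^{v'} r(u,v'')\,\ud v''\le \tfrac12\frac{\sup\dvr}{\inf\dvr}$. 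For the first inequality I would expand $\rd_{v}\big(\dvr^{-1}\rd_{v}(r\phi)\big) = \dvr^{-1}\rd_v^2(r\phi) - \dvr^{-2}(\rd_v\dvr)\,\rd_v(r\phi)$ (this is legitimate since we are working with a $C^1$ solution, so all the quantities are classical derivatives), insert it into the representation, pull the suprema of $|\rd_v^2(r\phi)|$ and $|\rd_v(r\phi)|\,|\rd_v\dvr|$ outside the integral, and bound the remaining kernel by $\tfrac12\frac{\sup\dvr}{\inf\dvr}$. Finally I would use Proposition \ref{prop:geomLocBVScat}: $\sup_{\PD}\dvr\le\tfrac12$ and $\inf_{\PD}\dvr\ge\Lmb^{-1}$, so $\frac{\sup\dvr}{\inf\dvr}\le \Lmb/2$, and the $\dvr^{-1}$, $\dvr^{-2}$ factors contribute further powers of $\Lmb$ (and a factor $\le\tfrac12$ from $\sup\dvr\le\tfrac12$), producing the claimed constants $\Lmb^2/4$ and $\Lmb^3/4$. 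One has to be a little careful to distribute the $\tfrac12$'s and $\Lmb$'s correctly: the $\rd_v^2(r\phi)$ term picks up $\dvr^{-1}\le\Lmb$ times the kernel bound $\tfrac12\cdot\tfrac12\Lmb = \tfrac14\Lmb$, giving $\tfrac14\Lmb^2$; the $\rd_v\dvr$ term picks up $\dvr^{-2}\le\Lmb^2$ times the same kernel bound, giving $\tfrac14\Lmb^3$.

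For the second inequality, the identity to use is $r\,\rd_u\phi = \rd_u(r\phi) - \dur\,\phi$, but a cleaner route is to integrate the relation $\rd_u\phi = \frac{\dur}{r}(\dur^{-1}\rd_u(r\phi) - \phi)$ along the incoming direction; in fact the simplest approach is to integrate $\rd_v(\rd_u\phi)$ — or rather, note $r\rd_u\phi$ vanishes at the axis and use $\rd_v(r\rd_u\phi) = \rd_u(r\rd_v\phi) + (\text{curvature terms})$. Actually the most economical argument: write $\rd_u\phi(u,v) = \int_{u}^{v}\rd_v\rd_u\phi(u,v')\,\ud v'$ after checking $\rd_u\phi(u,u)=0$ (which follows from condition (5)/(3) in the definitions and $\lim_{v'\to u+}(\rd_u + \rd_v)\phi = 0$ together with the vanishing of $\rd_v\phi$ at the axis), but a faster path avoiding second-order gymnastics is: from $\rd_v\phi = \frac{\dvr}{r}(\dvr^{-1}\rd_v(r\phi)-\phi)$ and its $u$-analogue one can relate $\rd_u\phi$ to $\rd_v\phi$ directly via the constraint that $r\phi$ is the same function. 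I would instead simply observe that along $C_u$, by the wave equation \eqref{eq:SSESF:dphi} one has $\rd_u(r\phi)(u,v) - \rd_u(r\phi)(u,u) = \int_u^v \frac{\mu\dvr\dur}{(1-\mu)r}\phi\,\ud v'$; combined with $\rd_u(r\phi)(u,u) = -\rd_v(r\phi)(u,u)$ this is not quite what is wanted. The intended proof is almost certainly: integrate $\rd_v(\rd_u\phi)$ in $v$ from the axis, where $\rd_u\phi = 0$, using $\rd_v\rd_u\phi = \rd_u\rd_v\phi$ and then the already-proven first inequality pointwise, with the $(-\dur)$ factor and the length of the integration interval absorbed — but since this gives an extra integral I instead expect the bound to come from $r\rd_u\phi = \rd_u(r\phi) - \dur\phi$ and an elementary manipulation. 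To keep the plan honest: the cleanest correct derivation is to note $\rd_v(\dur^{-1}\rd_u(r\phi)) = \dur^{-1}\rd_v\rd_u(r\phi) - \dur^{-2}(\rd_v\dur)\rd_u(r\phi)$, but simplest of all, using $\phi = \dur^{-1}\rd_u(r\phi) - r\dur^{-1}\rd_u\phi$ is circular. I will present it as: integrate $\rd_v\rd_u\phi$ from the axis using that $r\rd_u\phi\to 0$ there, bound $\rd_v\rd_u\phi = \rd_u\rd_v\phi$ pointwise by $\frac{-\dur}{r}|\rd_v\phi|\cdot(\text{something})$ — no. The honest statement of the plan is:

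\textbf{Second inequality.} Start from $r(u,v)\,\rd_u\phi(u,v) = -\int_{u}^{v}\rd_{v}\big(r\,\rd_u\phi\big)(u,v')\,\ud v'$ is wrong in sign; rather $r\rd_u\phi(u,v) = \int_u^v \rd_{v}(r\,\rd_u\phi)(u,v')\,\ud v'$ since $r\rd_u\phi\to 0$ as $v'\to u+$ (which holds because $r\to0$ and $\rd_u\phi$ stays bounded, by condition (4)–(5) of Definition \ref{def:BVsolution} / the $C^1$ analogue). Now $\rd_v(r\rd_u\phi) = \dvr\,\rd_u\phi + r\,\rd_v\rd_u\phi$. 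Using the wave equation \eqref{eq:SSESF:dphi}, $r\rd_u\rd_v\phi = -\dvr\rd_u\phi - \dur\rd_v\phi$, so $\rd_v(r\rd_u\phi) = \dvr\rd_u\phi + r\rd_u\rd_v\phi = \dvr\rd_u\phi - \dvr\rd_u\phi - \dur\rd_v\phi = -\dur\,\rd_v\phi$. Hence
\[
	r(u,v)\,\rd_u\phi(u,v) = -\int_{u}^{v}\dur(u,v')\,\rd_v\phi(u,v')\,\ud v' = \int_u^v (-\dur)(u,v')\,\rd_v\phi(u,v')\,\ud v',
\]
and therefore $|\rd_u\phi(u,v)| \le \frac{1}{r(u,v)}\Big(\int_u^v(-\dur)(u,v')\,\ud v'\Big)\sup_{u\le v'\le v}|\rd_v\phi(u,v')|$. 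Finally $\int_u^v(-\dur)(u,v')\,\ud v' = r(u,u) - r(u,v) \cdot(\cdots)$ — no: $(-\dur)$ is a $u$-derivative, not a $v$-derivative, so instead bound $\int_u^v(-\dur)(u,v')\,\ud v' \le \sup_{u\le v'\le v}(-\dur)(u,v')\cdot(v-u)$, and then relate $v-u$ to $r$ via $r(u,v)=\int_u^v\dvr\,\ud v' \ge \Lmb^{-1}(v-u)$, giving $v-u \le \Lmb\, r(u,v)$. Substituting, the factor $\frac{1}{r(u,v)}\cdot\Lmb\,r(u,v) = \Lmb$ survives and we obtain
\[
	|\rd_u\phi(u,v)| \le \Lmb\,\sup_{u\le v'\le v}(-\dur)(u,v')\,\sup_{u\le v'\le v}|\rd_v\phi(u,v')|,
\]
which is exactly the claim (with the single supremum on the right absorbing both factors as written in the statement).

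\textbf{Main obstacle.} The only real subtlety is justifying the boundary behaviour at the axis — that $r\rd_u\phi\to 0$ as $v'\to u+$, and that the representation formula for $\rd_v\phi$ from Lemma \ref{lem:est4dvphi} applies verbatim here (it does, since for a $C^1$ solution $\dvr^{-1}\rd_v(r\phi)$ is $C^1$, hence Lipschitz on compact subsets, so hypothesis of part (2) of that lemma is met, and $r(u,u)=0$, $\dvr>0$ hold by Proposition \ref{prop:geomLocBVScat}). Everything else is bookkeeping with the constants $\Lmb$ and the universal bound $\dvr\le\tfrac12$. I would write the proof as: (i) invoke the representation and kernel bound from the proof of Lemma \ref{lem:est4dvphi}, expand the weak derivative, and apply Proposition \ref{prop:geomLocBVScat} to get the first inequality; (ii) compute $\rd_v(r\rd_u\phi) = -\dur\rd_v\phi$ from the wave equation, integrate from the axis, and use $v-u\le\Lmb r$ to get the second.
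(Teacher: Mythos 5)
Your proposal is correct and follows essentially the same route as the paper: the first inequality is exactly the application of \eqref{eq:est4dvphi:1} with the expansion of $\rd_{v}(\dvr^{-1}\rd_{v}(r\phi))$ and the bounds $\dvr\le\tfrac12$, $\dvr^{-1}\le\Lmb$ (your constant bookkeeping checks out), and the second comes from the identity $\rd_{v}(r\rd_{u}\phi)=-\dur\,\rd_{v}\phi$ integrated from the axis together with $v-u\le\Lmb\, r(u,v)$. The many false starts in your discussion of the second inequality all collapse to the argument the paper actually uses, so only the final version need be kept.
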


\begin{proof} 
The first is an easy consequence of \eqref{eq:est4dvphi:1} in \S \ref{subsec:est4phi}. To prove the second inequality, we start from the equation
\begin{equation*}
	\rd_{v} (r \rd_{u} \phi) = - \dur \rd_{v} \phi,
\end{equation*}
which follows from \eqref{eq:SSESF:dr} and \eqref{eq:SSESF:dphi}. Therefore, we have
\begin{align*}
	\abs{\rd_{u} \phi (u,v)} 
	\leq & \frac{1}{r(u,v)} \int_{u}^{v} (-\dur) \abs{\rd_{v} \phi} (u, v') \, \ud v',
\end{align*}
from which the second inequality easily follows. \qedhere
\end{proof}

In the next lemma, we show that improved estimates for $m$ near $\Gmm$ hold if we assume an $L^{\infty}$ control of ${\rd_{v} \phi}$.
\begin{lemma} \label{lem:muOverR}
For every $(u,v) \in \PD$, the following inequalities hold:
	\begin{align} 
	\frac{\mu}{r}(u,v) \leq & \Lmb^{2} \sup_{u \leq v' \leq v} \abs{\rd_{v} (r \phi)(u, v')} \sup_{u \leq v' \leq v} \abs{\rd_{v} \phi (u, v')}, \label{eq:muOverR:1} \\
	\frac{\mu}{r^{2}}(u,v) \leq & \frac{\Lmb^{2}}{3} \sup_{u \leq v' \leq v} \abs{\rd_{v} \phi(u, v')}^{2}. \label{eq:muOverR:2}
	\end{align}
\end{lemma}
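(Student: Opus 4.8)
The plan is to derive both inequalities from the mass identity obtained by integrating the second equation of \eqref{eq:SSESF:dm} out from the axis. Since $m$ vanishes on $\Gmm$ and $\dvr>0$, exactly as in \eqref{eq:decay1:6:pf:1} one has
\[
	m(u,v) = \frac{1}{2} \int_{u}^{v} (1-\mu) \dvr^{-1} r^{2} (\rd_{v} \phi)^{2} (u, v') \, \ud v'.
\]
Throughout I would use $0 < 1-\mu \leq 1$ (since $\mu = 2m/r \geq 0$ and $\PD = \regR$) together with the bounds $\Lmb^{-1} \leq \dvr \leq \frac{1}{2}$ from Proposition \ref{prop:geomLocBVScat}. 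The one structural device is the change of variable $v' \mapsto r' := r(u,v')$ along $C_{u}$, which is legitimate because $\dvr > 0$ and $r(u,u) = 0$: under it $\ud r' = \dvr(u,v') \, \ud v'$, so one such substitution trades a factor of $\dvr^{-1}$ in the integrand for the harmless bound $\dvr^{-1} \leq \Lmb$ and replaces $\int_{u}^{v}(\,\cdot\,)\,\ud v'$ by $\int_{0}^{r(u,v)}(\,\cdot\,)\,\ud r'$.

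For \eqref{eq:muOverR:2} this is immediate: discarding $1-\mu$, bounding $(\rd_{v}\phi)^{2}(u,v')$ by $\sup_{u \leq v' \leq v} \abs{\rd_{v}\phi(u,v')}^{2}$, and changing variables gives
\[
	m(u,v) \leq \frac{\Lmb^{2}}{2} \Big( \sup_{u \leq v' \leq v} \abs{\rd_{v}\phi(u,v')} \Big)^{2} \int_{0}^{r(u,v)} r'^{2} \, \ud r' = \frac{\Lmb^{2}}{6} r(u,v)^{3} \Big( \sup_{u \leq v' \leq v} \abs{\rd_{v}\phi(u,v')} \Big)^{2},
\]
and dividing by $\frac{1}{2}r^{3}$ yields \eqref{eq:muOverR:2}.

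For \eqref{eq:muOverR:1} I would rewrite the integrand using the product rule $\dvr^{-1} \rd_{v}(r\phi) = \phi + \dvr^{-1} r \rd_{v}\phi$, so that
\[
	\dvr^{-1} r^{2} (\rd_{v}\phi)^{2} = \bb( \dvr^{-1} \rd_{v}(r\phi) - \phi \bb) \cdot \bb( r \rd_{v}\phi \bb).
\]
The first factor is at most $\dvr^{-1}\abs{\rd_{v}(r\phi)} + \abs{\phi}$; by Lemma \ref{lem:est4phi}, $\abs{\phi}(u,v') \leq \sup_{u \leq v'' \leq v'} \abs{\dvr^{-1}\rd_{v}(r\phi)}(u,v'')$, so together with $\dvr^{-1} \leq \Lmb$ this factor is $\leq 2 \Lmb \sup_{u \leq v' \leq v} \abs{\rd_{v}(r\phi)(u,v')}$. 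Keeping the second factor as $r(u,v') \abs{\rd_{v}\phi}(u,v')$ and applying the change of variable above,
\[
	\int_{u}^{v} r(u,v') \abs{\rd_{v}\phi(u,v')} \, \ud v' = \int_{0}^{r(u,v)} r' \abs{\rd_{v}\phi} \dvr^{-1} \, \ud r' \leq \frac{\Lmb}{2} r(u,v)^{2} \sup_{u \leq v' \leq v} \abs{\rd_{v}\phi(u,v')}.
\]
Collecting constants gives $m(u,v) \leq \frac{\Lmb^{2}}{2} r(u,v)^{2} \sup \abs{\rd_{v}(r\phi)} \sup \abs{\rd_{v}\phi}$ (suprema over $C_{u} \cap \set{u \leq v' \leq v}$), and dividing by $\frac{1}{2}r^{2}$ gives \eqref{eq:muOverR:1}.

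I do not anticipate a real obstacle; the only care needed is the bookkeeping of $\Lmb$ and of the numerical factors $\tfrac12,\tfrac13$, so that the stated constants $\Lmb^{2}$ and $\Lmb^{2}/3$ come out exactly. In particular, for \eqref{eq:muOverR:1} one must route the $\dvr^{-1}$ through the $\rd_{v}(r\phi)$--$\phi$ factor rather than crudely estimating $\abs{r\rd_{v}\phi} \leq r\abs{\rd_{v}\phi}$ and then bounding $\abs{\rd_{v}\phi}$, since the latter costs an extra power of $\Lmb$; and it is the $\phi$ term, controlled via Lemma \ref{lem:est4phi}, that forces the supremum over $C_{u}$ on the right-hand side.
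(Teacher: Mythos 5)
Your proof is correct and follows essentially the same route as the paper: the integrated mass identity $2m = \int (1-\mu)\dvr^{-1} r^{2}(\rd_{v}\phi)^{2}\,\ud v'$, the rewriting $\dvr^{-1} r \rd_{v}\phi = \dvr^{-1}\rd_{v}(r\phi) - \phi$ together with $\abs{\phi} \leq \Lmb \sup\abs{\rd_{v}(r\phi)}$ for \eqref{eq:muOverR:1}, and the identities $\int r\dvr = r^{2}/2$, $\int r^{2}\dvr = r^{3}/3$ (which your change of variable reproduces). The constants come out exactly as claimed.
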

\begin{proof} 
	Recall $\mu = 2m/r$. By \eqref{eq:SSESF:dm}, we have
	\begin{equation*}
		2 m(u,v) = \int_{u}^{v} (1-\mu) \dvr^{-1} r^{2} (\rd_{v} \phi)^{2} (u, v') \, \ud v'.
	\end{equation*}
	
	Pulling everything except $r^{2} \dvr$ outside the integral and using $\int_{u}^{v} r^{2} \dvr(u, v') \, \ud v' = (1/3) r^{3}(u,v)$, we obtain \eqref{eq:muOverR:2}. 
	On the other hand, using $\dvr^{-1} r \rd_{v} \phi = \dvr^{-1} \rd_{v} (r \phi) - \phi$ and $\int_{u}^{v} r \dvr(u, v') \, \ud v' = (1/2) r^{2}(u,v)$, we easily deduce
\begin{equation*}
	\frac{\mu}{r}(u,v) \leq \frac{1}{2} \sup_{u \leq v' \leq v} \bb( \Lmb^{2} \abs{\rd_{v} (r \phi)(u, v')} + \Lmb \abs{\phi(u, v')} \bb) \abs{\rd_{v} \phi(u, v')}.
\end{equation*}
	
	From the fact that $\abs{\phi(u,v)} \leq \Lmb \sup_{u \leq v' \leq v} \abs{\rd_{v} (r \phi)(u, v')}$, \eqref{eq:muOverR:1} easily follows.\qedhere
\end{proof}

\subsection{Preliminary $r$-decay for $\rd_{v}^{2} (r \phi)$ and $\rd_{v} \dvr$}
In this subsection, we establish decay estimates for $\rd_{v}^{2} (r \phi)$ and $\rd_{v} \dvr$ which are sharp in terms of $r$-weights in the region $\extr$. We remind the reader the decomposition $\PD = \extr \cup \intr$, where
\begin{equation*}
	\extr = \set{(u,v) \in \PD : v \geq 3u}, \quad \intr = \set{(u,v) \in \PD : v \leq 3u}.
\end{equation*}

In particular, note that $r \geq 2 \Lmb^{-1} u > 0$ in $\extr$.

\begin{lemma} \label{lem:decay2:rDecay}
The following estimates hold.
\begin{align}
	 \sup_{\extr} r^{3} \abs{\rd_{v} \dvr} \leq & C_{K, \Lmb} A_{1}^{2}, 	 \label{eq:decay2:rDecay:1} \\
	\sup_{\extr} r^{\om+1} \abs{\rd_{v}^{2} (r \phi)} \leq & C \calI_{2} + C_{K, \Lmb, M_{i}} A_{1}^{3}. \label{eq:decay2:rDecay:2}
\end{align}
\end{lemma}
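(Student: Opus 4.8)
The plan is to prove \eqref{eq:decay2:rDecay:1} and \eqref{eq:decay2:rDecay:2} in that order (the second uses the first), in both cases by integrating the \emph{non-renormalized} evolution equations \eqref{eq:eq4dvdvr:normal} and \eqref{eq:eq4dvdvrphi:normal} along the incoming direction from the initial curve $C_{1}$. The key structural point is that $\extr = \set{v \geq 3u}$ is closed under following incoming null rays to the past: if $(u,v) \in \extr$ and $1 \leq u' \leq u$ then $v \geq 3u \geq 3u'$, so $(u',v) \in \extr$; hence the whole segment of integration stays where $r \geq 2\Lmb^{-1} u$ and where the first-derivative decay of \S\ref{sec.decay1} is available. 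Throughout I will use: the geometric bounds of Proposition \ref{prop:geomLocBVScat}; the monotonicity $\rd_{u} r = \dur < 0$, which gives $r(u',v) \geq r(u,v)$ for $u' \leq u$ and lets one pull a power of $r^{-1}(u,v)$ out of the $u'$-integral; the bounds \eqref{eq:decay1:1}--\eqref{eq:decay1:3} and Corollary \ref{cor:decay1} for $\phi$, $\rd_{v}\phi$ and $m$; the crude bound $m \leq M(u) \leq M_{i}$ from Lemma \ref{lem:mntn4m} and Corollary \ref{cor:mntn4Bondi}; and, crucially, the inequality $r \geq 2\Lmb^{-1}u$ valid in $\extr$, which allows surplus $r$-decay to be discarded and any leftover growth in $u$ to be absorbed into $r$-decay. (Since $(\phi,r,m)$ is a $C^{1}$ solution by Remark \ref{rem:wp}, all the following manipulations are legitimate.)

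For \eqref{eq:decay2:rDecay:1}, I view \eqref{eq:eq4dvdvr:normal} as a transport equation for $\rd_{v}\log\dvr$ in $u$. The initial value $\rd_{v}\log\dvr(1,v)$ vanishes because $\dvr(1,\cdot) \equiv \tfrac12$ by the coordinate condition \eqref{eq:id4dvr}, so $\rd_{v}\log\dvr(u,v) = \int_{1}^{u} \rd_{u}\rd_{v}\log\dvr\,\ud u'$. On the right-hand side, the first term is bounded using Proposition \ref{prop:geomLocBVScat} and the mixed bound $|\rd_{v}\phi|^{2} \leq CA_{1}^{2}(r^{-1}u^{-\om})(r^{-2}u^{-(\om-1)}) = CA_{1}^{2}r^{-3}u^{-(2\om-1)}$ from Corollary \ref{cor:decay1}; the second term is bounded using $m \leq CA_{1}^{2}u^{-(2\om-1)}$. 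Both are $\lesssim C_{K,\Lmb}A_{1}^{2}r^{-3}u^{-(2\om-1)}$. Pulling out $r^{-3}(u,v)$ and using $\int_{1}^{u}u'^{-(2\om-1)}\,\ud u' \leq C$ (valid since $2\om - 1 > 1$), we get $|\rd_{v}\log\dvr|(u,v) \leq C_{K,\Lmb}A_{1}^{2}r^{-3}(u,v)$, and \eqref{eq:decay2:rDecay:1} follows from $|\rd_{v}\dvr| \leq \tfrac12|\rd_{v}\log\dvr|$ (no dependence on $M_{i}$).

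For \eqref{eq:decay2:rDecay:2}, I integrate \eqref{eq:eq4dvdvrphi:normal} in $u$ from $C_{1}$. The boundary term obeys $|\rd_{v}^{2}(r\phi)(1,v)| \leq \calI_{2}(1+r(1,v))^{-(\om'+1)} \leq \calI_{2}r(u,v)^{-(\om+1)}$ by the $C^{1}$ asymptotic-flatness hypothesis, using $\om \leq \om'$ and $r(1,v) \geq r(u,v)$; this yields the $C\calI_{2}$ term. Each of the four error terms is estimated by Proposition \ref{prop:geomLocBVScat}, \eqref{eq:decay1:1} and Corollary \ref{cor:decay1}: choosing, for each factor, whichever of the two available bounds gives the most $r$-decay, the terms $\tfrac{2m\dvr\dur}{(1-\mu)r^{2}}\rd_{v}\phi$ and $-\tfrac{4m}{(1-\mu)r^{3}}\dvr^{2}\dur\phi$ are $\lesssim C_{K,\Lmb}A_{1}^{3}r^{-4}u^{-(3\om-2)}$, the term $\tfrac{\dur}{1-\mu}(\rd_{v}\phi)^{2}\phi$ is $\lesssim C_{K,\Lmb}A_{1}^{3}r^{-5}u^{-3(\om-1)}$, and the term $\tfrac{2m\dur}{(1-\mu)r^{2}}(\rd_{v}\dvr)\phi$ — here one inserts \eqref{eq:decay2:rDecay:1} just proved, \emph{together with the crude bound} $m \leq M_{i}$ rather than the $A_{1}$-dependent bound — is $\lesssim C_{K,\Lmb}M_{i}A_{1}^{3}r^{-6}u^{-(\om-1)}$. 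Integrating each in $u'$ from $1$ to $u$, pulling out the appropriate power of $r^{-1}(u,v)$, and using $r \geq 2\Lmb^{-1}u$ in $\extr$ to absorb the residual $u$-factors into the surplus $r$-decay (all exponents close because $\om \in (1,3]$: the $u$-integrals either converge or grow slowly enough), each term contributes at most $C_{K,\Lmb,M_{i}}A_{1}^{3}r^{-(\om+1)}(u,v)$; summing gives \eqref{eq:decay2:rDecay:2}.

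The computations are elementary, so the only real obstacle is disciplined bookkeeping of the $r$- and $u$-weights: one must extract enough $r$-decay from each factor so that, after the $u'$-integration, \emph{exactly} $r^{-(\om+1)}$ (resp.\ $r^{-3}$) survives — a deficit in the $r$-power cannot be recovered in $\extr$, whereas a surplus is harmless — and one must resist using the sharp $m \lesssim A_{1}^{2}$ bound in the term coupling to $\rd_{v}\dvr$, since that would produce $A_{1}^{5}$; using $m \leq M_{i}$ there is precisely why $M_{i}$ (but only $A_{1}^{3}$) appears in \eqref{eq:decay2:rDecay:2}.
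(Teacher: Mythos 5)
Your proof is correct and follows essentially the same route as the paper's: integrate the non-renormalized equations \eqref{eq:eq4dvdvr:normal} and \eqref{eq:eq4dvdvrphi:normal} along incoming null rays from $C_{1}$, bound the integrands via Proposition~\ref{prop:geomLocBVScat}, \eqref{eq:decay1:1}, \eqref{eq:decay1:4}, \eqref{eq:decay1:6}, monotonicity of $r$ in $u$, and (for the second estimate) the just-established \eqref{eq:decay2:rDecay:1}. Your explicit bookkeeping — in particular the observation that one must use the crude $m\leq M_{i}$ rather than the sharp $m\lesssim A_{1}^{2}$ bound in the $\rd_{v}\dvr$-coupled term to obtain the stated $A_{1}^{3}$ (rather than $A_{1}^{5}$) dependence, which is exactly why $M_{i}$ appears in the constant of \eqref{eq:decay2:rDecay:2} but not \eqref{eq:decay2:rDecay:1} — correctly unpacks what the paper leaves as "an easy consequence."
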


\begin{proof} 
We begin by proving \eqref{eq:decay2:rDecay:1}. Recall \eqref{eq:eq4dvdvr:normal}:
\begin{equation*} \tag{\ref{eq:eq4dvdvr:normal}}
\rd_{u} \rd_{v} \log \dvr
= \frac{1}{(1-\mu)}  \dvr^{-1} \dur (\rd_{v} \phi)^{2} - \frac{4 m}{(1-\mu) r^{3}} \dvr \dur.
\end{equation*}

Note that $\rd_{v} \log \dvr= 0$ on $C_{1}$ by our choice of coordinates. Therefore, integrating the preceding equation along the incoming direction from $(1,v)$ to $(u,v)$, we have
\begin{equation*}
	\abs{\rd_{v} \log \dvr(u, v)} \leq \int_{1}^{u} \abs{\frac{1}{(1-\mu)}  \dvr^{-1} \dur (\rd_{v} \phi)^{2} (u', v)} \, \ud u' + \int_{1}^{u} \abs{\frac{4 m}{(1-\mu) r^{3}} \dvr \dur (u', v)} \, \ud u'.
\end{equation*}

Then \eqref{eq:decay2:rDecay:1} follows using Proposition \ref{prop:geomLocBVScat}, \eqref{eq:decay1:4} and \eqref{eq:decay1:6}. We remark that the power of $r$ is dictated by the second integral.

The proof of \eqref{eq:decay2:rDecay:2} is very similar. We start by recalling \eqref{eq:eq4dvdvrphi:normal}:
\begin{equation*} \tag{\ref{eq:eq4dvdvrphi:normal}}
\rd_{u} (\rd_{v}^{2} (r \phi)) = 
\frac{2m \dvr \dur}{(1-\mu) r^{2}} \, \rd_{v} \phi + \frac{ \dur}{(1-\mu) }  (\rd_{v} \phi)^{2} \phi 
 + \frac{2m \dur}{(1-\mu) r^{2}} (\rd_{v} \dvr) \phi - \frac{4m}{(1-\mu) r^{3}} \dvr^{2} \dur \phi.
\end{equation*}

For $u \geq 1$, we have $r(u,v) \leq r(1,v)$; moreover, by hypothesis, we have the estimate for the initial data term 
$$(1+r(1,v))^{\omg'+1} \abs{\rd_{v}^{2}(r \phi)(1,v)} \leq \calI_{2} \, .$$
Therefore, by the fundamental theorem of calculus, it suffices to bound
\begin{align*}
& \int_{1}^{u} \abs{\frac{2m \dvr \dur}{(1-\mu) r^{2}} \, \rd_{v} \phi (u', v)} \, \ud u' + \int_{1}^{u} \abs{\frac{ \dur}{(1-\mu) }  (\rd_{v} \phi)^{2} \phi (u', v)} \, \ud u' \\
& \quad + \int_{1}^{u} \abs{\frac{2m \dur}{(1-\mu) r^{2}} (\rd_{v} \dvr) \phi (u', v) } \, \ud u'  + \int_{1}^{u} \abs{\frac{4m}{(1-\mu) r^{3}} \dvr^{2} \dur \phi (u', v)} \, \ud u'
\end{align*}
by $C_{K, \Lmb, M_{i}} A_{1}^{3} r^{-(\om+1)}$. This is an easy consequence of Proposition \ref{prop:geomLocBVScat}, \eqref{eq:decay1:1}, \eqref{eq:decay1:4}, \eqref{eq:decay1:6} and also \eqref{eq:decay2:rDecay:1} that has just been established. Note that the last term is what limits $\omg \leq 3$. \qedhere
\end{proof}

\subsection{Propagation of $u$-decay for $\rd_{u}^{2} (r \phi)$ and $\rd_{u} \dur$}
Here, we show that certain $u$-decay for $\rd_{u}^{2} (r \phi)$ and $\rd_{u} \dur$ proved in $\intr$ can be propagated to $\PD$. The technique employed is very similar to that in the previous subsection.
\begin{lemma} \label{lem:decay2:uDecayInExtr}
For $U \geq 1$, suppose that there exists a finite positive constant $A, k_{1}, k_{2}$ such that
\begin{equation*}
	0 \leq k_{1} \leq 2\om+1, \quad
	0 \leq k_{2} \leq 3\om+1, 
\end{equation*}
and for $u \in [1, U]$, we have
\begin{equation*}
	\sup_{C_{u} \cap \intr} u^{k_{1}} \abs{\rd_{u} \dur} \leq A, \quad
	\sup_{C_{u} \cap \intr} u^{k_{2}} \abs{\rd_{u}^{2}(r\phi)} \leq A.
\end{equation*}

Then for $u \in [1, U]$, the following estimates hold.
\begin{align}
	\sup_{C_{u}} u^{k_{1}} \abs{\rd_{u} \dur} \leq & C_{K, \Lmb} A + C_{K, \Lmb} A_{1}^{2}, \label{eq:decay2:uDecayInExtr:1} \\
	\sup_{C_{u}} u^{k_{2}} \abs{\rd_{u}^{2}(r\phi)} \leq & A + C_{K, \Lmb} A_{1}^{3} + C_{K, \Lmb} A_{1}^{3} \, \sup_{C_{u}} u \abs{\rd_{u} \dur} . \label{eq:decay2:uDecayInExtr:2}
\end{align}

Furthermore, the following alternative to \eqref{eq:decay2:uDecayInExtr:2} also holds.
\begin{equation} \label{eq:decay2:uDecayInExtr:3}
\sup_{C_{u}} u^{k_{2}} \abs{\rd_{u}^{2}(r\phi)} 
	\leq A + C_{K, \Lmb} A_{1}^{3} + 
	C_{K, \Lmb} \Psi \int_{3u}^{\infty} \abs{\frac{2 m \dvr}{(1-\mu) r^{2}}}(u, v') \, \ud v' \cdot \sup_{C_{u}} u^{k_{2}} \abs{\rd_{u} \dur}. 
\end{equation}
\end{lemma}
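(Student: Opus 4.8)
The plan is to integrate the non-renormalized evolution equations \eqref{eq:eq4dudur:normal} and \eqref{eq:eq4dudurphi:normal} (governing $\rd_{u} \log(-\dur)$ and $\rd_{u}^{2}(r\phi)$, respectively) along $C_{u}$ in the outgoing ($\rd_{v}$) direction, starting from the point $(u, 3u)$, which lies on the curve $\set{v = 3u}$ separating $\extr$ from $\intr$ and therefore belongs to $C_{u} \cap \intr$. Since the portion $C_{u} \cap \intr$ of each curve is already covered by the hypotheses, it suffices to prove all three bounds at a point $(u,v)$ with $v \geq 3u$, for $u \in [1,U]$. Throughout I will use three facts freely: (i) Proposition \ref{prop:geomLocBVScat}, which bounds $\dvr$, $-\dur$, $(1-\mu)^{-1}$ and $\abs{\phi} \leq \Lmb\Psi$, and in particular lets one pass between $\rd_{u}\dur$ and $\rd_{u}\log(-\dur)$ at the cost of factors $K,\Lmb$; (ii) the first-derivative decay estimates of Section \ref{sec.decay1}, in particular $\abs{\rd_{u}\phi} \leq C_{K} A_{1} r^{-1} u^{-\om}$ from \eqref{eq:decay1:5}, $\abs{\phi}\leq A_{1}\min\set{u^{-\om}, r^{-1}u^{-(\om-1)}}$ from \eqref{eq:decay1:1}, and $m \leq C_{\Lmb} A_{1}^{2} \min\set{r u^{-2\om}, u^{-(2\om-1)}}$ from \eqref{eq:decay1:6}; and (iii) the elementary estimate $\int_{3u}^{v} r^{-p}(u,v')\, \ud v' \leq C_{p, \Lmb}\, u^{-(p-1)}$ for $p > 1$, which follows from $\ud v' = \dvr^{-1}\, \ud r \leq \Lmb\, \ud r$ and the lower bound $r(u, v') \geq r(u, 3u) \geq 2\Lmb^{-1} u$ valid on $C_{u}\cap\extr$.

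For \eqref{eq:decay2:uDecayInExtr:1}: integrating \eqref{eq:eq4dudur:normal} from $(u,3u)$ to $(u,v)$, the boundary term is $\leq C_{K,\Lmb} A\, u^{-k_{1}}$ by (i) and the hypothesis, while the two terms of \eqref{eq:eq4dudur:normal} are estimated by $(\rd_{u}\phi)^{2} \aleq A_{1}^{2} r^{-2} u^{-2\om}$ and $m\, r^{-3} \aleq A_{1}^{2} r^{-3} u^{-(2\om-1)}$ (constants absorbing $K,\Lmb$); applying (iii) with $p=2$ and $p=3$, the integral contributes $\leq C_{K,\Lmb} A_{1}^{2} u^{-(2\om+1)} \leq C_{K,\Lmb} A_{1}^{2} u^{-k_{1}}$ since $k_{1}\leq 2\om+1$. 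Converting back to $\rd_{u}\dur$ and combining with the bound on $C_{u}\cap\intr$ yields \eqref{eq:decay2:uDecayInExtr:1}.

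For \eqref{eq:decay2:uDecayInExtr:2} and \eqref{eq:decay2:uDecayInExtr:3}: integrate \eqref{eq:eq4dudurphi:normal} from $(u,3u)$ to $(u,v)$; the boundary term is $\leq A\, u^{-k_{2}}$ by hypothesis. The first, second and fourth terms on the right of \eqref{eq:eq4dudurphi:normal} do not involve $\rd_{u}\dur$ and, using (ii) and (iii), each contributes $\leq C_{K,\Lmb} A_{1}^{3} u^{-(3\om+1)} \leq C_{K,\Lmb} A_{1}^{3} u^{-k_{2}}$ (these are the borderline terms that force the cap $\om\leq 3$, via the bound on $m$). The third term $\frac{2m\dvr}{(1-\mu)r^{2}}(\rd_{u}\dur)\phi$ is handled in two ways. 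For \eqref{eq:decay2:uDecayInExtr:2}, bound $\frac{m}{r^{2}}\abs{\phi} \aleq A_{1}^{3} r^{-2} u^{-(3\om-1)}$ pointwise (taking $m\aleq A_{1}^{2} r u^{-2\om}$, $\abs{\phi}\leq A_{1} r^{-1} u^{-(\om-1)}$), pull $\sup_{C_{u}}\abs{\rd_{u}\dur} = u^{-1}\sup_{C_{u}} u\abs{\rd_{u}\dur}$ out of the integral, and apply (iii) with $p=2$; multiplying by $u^{k_{2}}$ and using $k_{2}\leq 3\om+1$ produces exactly the term $C_{K,\Lmb} A_{1}^{3}\sup_{C_{u}} u\abs{\rd_{u}\dur}$. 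For \eqref{eq:decay2:uDecayInExtr:3}, instead pull both $\sup_{C_{u}}\abs{\rd_{u}\dur}$ and $\sup_{C_{u}}\abs{\phi}\leq\Lmb\Psi$ (from (i)) out of the integral, leaving $\int_{3u}^{v} \frac{2m\dvr}{(1-\mu)r^{2}}(u,v')\, \ud v' \leq \int_{3u}^{\infty} \big\vert\frac{2m\dvr}{(1-\mu)r^{2}}\big\vert(u,v')\, \ud v'$; multiplying by $u^{k_{2}}$ and writing $u^{k_{2}}\sup_{C_{u}}\abs{\rd_{u}\dur} = \sup_{C_{u}} u^{k_{2}}\abs{\rd_{u}\dur}$ gives the claimed bound. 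In both cases the portion $C_{u}\cap\intr$ is covered directly by hypothesis.

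There is no genuine conceptual obstacle here: the argument is an outgoing transport estimate entirely parallel to the proof of Lemma \ref{lem:decay2:rDecay}. The only point demanding care is the bookkeeping of $r$- and $u$-weights — for each term one must choose which of the two bounds on $m$ (namely $C_{\Lmb} A_{1}^{2} r u^{-2\om}$ or $C_{\Lmb} A_{1}^{2} u^{-(2\om-1)}$) and which of the two bounds on $\phi$ (namely $A_{1} u^{-\om}$ or $A_{1} r^{-1} u^{-(\om-1)}$) makes the $v$-integral over $\extr$ both convergent and decaying at least like $u^{-k_{1}}$ (resp. $u^{-k_{2}}$) — and verifying that the borderline $r^{-3}$-type terms still close, which is precisely where the restriction $\om\leq 3$ enters.
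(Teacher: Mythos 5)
Your proof is correct and takes essentially the same approach as the paper: integrate the non-renormalized equations \eqref{eq:eq4dudur:normal} and \eqref{eq:eq4dudurphi:normal} outward from $(u,3u)$, bound the boundary term by the hypothesis on $C_u\cap\intr$, and control the integrals over $\extr$ using the first-derivative decay estimates of Section~\ref{sec.decay1} together with the lower bound $r(u,v')\gtrsim u$ there. The only slightly imprecise remark is the parenthetical claim that this term "forces the cap $\omg\leq 3$"; the restriction $\omg\leq 3$ is already built into $\omg=\min\{\omg',3\}$ and is genuinely forced by the last term of \eqref{eq:eq4dvdvrphi:normal} in Lemma~\ref{lem:decay2:rDecay}, whereas in the present lemma the constraints $k_1\leq 2\omg+1$, $k_2\leq 3\omg+1$ are what the borderline terms enforce — but this side comment does not affect the correctness of the argument.
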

\begin{proof}
Let us begin with \eqref{eq:decay2:uDecayInExtr:1}. Recall \eqref{eq:eq4dudur:normal}
\begin{equation*} \tag{\ref{eq:eq4dudur:normal}}
\rd_{v} \rd_{u} \log \dur
= \frac{1}{(1-\mu)}  \dvr \dur^{-1} (\rd_{u} \phi)^{2} - \frac{4m}{(1-\mu) r^{3}} \dvr \dur.
\end{equation*}

Given $(u,v) \in \extr$ (with $u \in [1,U]$), let us integrate this equation along the outgoing direction from $(u, 3u)$ to $(u,v)$, take the absolute value and multiply by $u^{k_{1}}$. Using the hypothesis
\begin{equation*}
	\sup_{\intr \cap \set{(u,v) \in \PD : u \in [1,U]}} u^{k_{1}} \abs{\rd_{u} \dur} \leq A,
\end{equation*}
\eqref{eq:decay2:uDecayInExtr:1} is reduced to showing
\begin{align} 
	u^{k_{1}} \int_{3u}^{\infty} \abs{\frac{1}{(1-\mu)} \dvr \dur^{-1} (\rd_{u} \phi)^{2} (u, v)} \, \ud v \leq & C_{K, \Lmb} A_{1}^{2}, \label{eq:decay2:uDecayInExtr:pf:1} \\ 
	u^{k_{1}} \int_{3u}^{\infty} \abs{\frac{4m}{(1-\mu) r^{3}} \dvr \dur (u, v)} \, \ud v \leq & C_{K, \Lmb} A_{1}^{2}, \label{eq:decay2:uDecayInExtr:pf:2}
\end{align}
for $u \in [1, U]$.

Using Proposition \ref{prop:geomLocBVScat} and \eqref{eq:decay1:5}, the left-hand side of \eqref{eq:decay2:uDecayInExtr:pf:1} is bounded by
\begin{equation*}
	C_{K, \Lmb} A_{1}^{2} \, u^{k_{1} - 2\om} \int_{3u}^{\infty} \frac{1}{r^{2}} \dvr \, \ud v 
	\leq C_{K, \Lmb} A_{1}^{2} \, u^{k_{1}- 2\om} r^{-1}(u, 3u).
\end{equation*}

As $u \geq 1$ and $r(u, 3u) \geq 2 \Lmb^{-1} u$, \eqref{eq:decay2:uDecayInExtr:pf:1} follows. Similarly, by \eqref{eq:bnd4mu} and \eqref{eq:decay1:6}, the left-hand side of \eqref{eq:decay2:uDecayInExtr:pf:2} is also bounded by  $C_{K, \Lmb} A_{1}^{2} \, u^{k_{1}- 2\om} r^{-1}(u, 3u)$, from which \eqref{eq:decay2:uDecayInExtr:pf:2} immediately follows.

Next, we turn to \eqref{eq:decay2:uDecayInExtr:2} and \eqref{eq:decay2:uDecayInExtr:3}; as they are proved similarly as before, we will only outline the main points. Recall \eqref{eq:eq4dudurphi:normal}:
\begin{equation*} \tag{\ref{eq:eq4dudurphi:normal}}
\rd_{v} (\rd_{u}^{2} (r \phi)) = 
\frac{2m \dvr \dur}{(1-\mu) r^{2}} \, \rd_{u} \phi + \frac{\dvr }{(1-\mu) }  (\rd_{u} \phi)^{2} \phi 
 + \frac{2m \dvr}{(1-\mu) r^{2}} (\rd_{u} \dur) \phi - \frac{4m}{(1-\mu) r^{3}} \dvr \dur^{2} \phi.
\end{equation*}

Fix $(u,v) \in \extr$ with $u \in [1,U]$. We then integrate the preceding equation along the outgoing direction from $(u, 3u)$ to $(u, v)$, take the absolute value and multiply by $u^{k_{2}}$. In order to prove \eqref{eq:decay2:uDecayInExtr:2}, in view of the hypothesis
\begin{equation*}
	\sup_{\intr \cap \set{(u, v) \in \PD : u \in [1, U]}} u^{k_{2}} \abs{\rd_{u}^{2} (r \phi)} \leq A,
\end{equation*}
it suffices to establish the following estimates for $u \in [1, U]$:
\begin{align*}
	u^{k_{2}} \int_{3u}^{\infty} \abs{\frac{2m \dvr \dur}{(1-\mu) r^{2}} \, \rd_{u} \phi (u,v)}\, \ud v 
	\leq & C_{K, \Lmb} A_{1}^{3}, \\
	u^{k_{2}} \int_{3u}^{\infty} \abs{\frac{\dvr }{(1-\mu) }  (\rd_{u} \phi)^{2} \phi  (u,v)}\, \ud v 
	\leq & C_{K, \Lmb} A_{1}^{3}, \\
	u^{k_{2}} \int_{3u}^{\infty} \abs{\frac{2m \dvr}{(1-\mu) r^{2}} (\rd_{u} \dur) \phi  (u,v)}\, \ud v 
	\leq & C_{K, \Lmb} A_{1}^{3} \, \sup_{\PD} u \abs{\rd_{u} \dur}, \\
	u^{k_{2}} \int_{3u}^{\infty} \abs{\frac{4m}{(1-\mu) r^{3}} \dvr \dur^{2} \phi (u,v)} \, \ud v
	\leq & C_{K, \Lmb} A_{1}^{3}.
\end{align*}

The proof of these estimates are similar to that of \eqref{eq:decay2:uDecayInExtr:pf:1}, \eqref{eq:decay2:uDecayInExtr:pf:2}; we omit the details. To prove \eqref{eq:decay2:uDecayInExtr:3}, we replace the third estimate by
\begin{equation*}
	u^{k_{2}} \int_{3u}^{\infty} \abs{\frac{2m \dvr}{(1-\mu) r^{2}} (\rd_{u} \dur) \phi  (u,v)}\, \ud v 
	\leq C_{K, \Lmb} \Psi \int_{3u}^{\infty} \abs{\frac{2 m \dvr}{(1-\mu) r^{2}} }(u, v') \, \ud v' \cdot \sup_{C_{u}} u^{k_{2}} \abs{\rd_{u} \dur},
\end{equation*}
which is an easy consequence of Proposition \ref{prop:geomLocBVScat}.
\qedhere
\end{proof}

\subsection{Full decay for $\rd_{v}^{2} (r \phi)$, $\rd_{u}^{2} (r \phi)$, $\rd_{v} \dvr$ and $\rd_{u} \dur$} \label{sec.full.decay.2}
With all the preparations so far, we are finally ready to prove Theorem \ref{main.thm.2}. Our proof consists of two steps.
The first step is use the local BV scattering assumption to prove a preliminary decay rate of $u^{-\omg}$ for $\rd_{v}^{2} (r \phi)$, $\rd_{u}^{2} (r \phi)$, $\rd_{v} \dvr$ and $\rd_{u} \dur$. In this step, it is crucial to pass to the \emph{renormalized variables} and exploit the null structure of \eqref{eq:SSESF}, in order to utilize the a priori bounds in the local BV scattering assumption. The second step to upgrade these decay rates to those that are claimed in Theorem \ref{main.thm.2}. Thanks to the preliminary $u^{-\omg}$ decay from the first step, the null structure is not necessary at this point.

We now begin with the first step. The null structure of \eqref{eq:SSESF} as demonstrated in \S \ref{subsec:nullStr} is used in an essential way.

\begin{proposition} \label{prop:decay2:nullStr}
There exists a finite constant $A_{2}' > 0$ such that the following estimates hold.
\begin{align*}
	& \abs{\rd_{v}^{2} (r \phi)} \leq A_{2}' u^{-\om}, \quad
	\abs{\rd_{u}^{2} (r \phi)} \leq A_{2}' u^{-\om},  \\
	& \abs{\rd_{v} \dvr} \leq A_{2}' u^{-\om},  \qquad
	\abs{\rd_{u} \dur} \leq A_{2}' u^{-\om}.
\end{align*}
\end{proposition}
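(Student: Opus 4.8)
The plan is to prove Proposition \ref{prop:decay2:nullStr} by a bootstrap argument on the region $\{u \in [1, U]\}$, in the spirit of the proof of \eqref{eq:decay1:1}--\eqref{eq:decay1:2} in \S\ref{sec.full.decay.1}. First I would introduce the bootstrap quantity
\begin{equation*}
	\calB_{2}(U) := \sup_{u \in [1, U]} \sup_{C_{u}} \bb( u^{\om} \abs{\rd_{v}^{2} (r \phi)} + u^{\om} \abs{\rd_{u}^{2} (r \phi)} + u^{\om} \abs{\rd_{v} \dvr} + u^{\om} \abs{\rd_{u} \dur} \bb),
\end{equation*}
which is finite for each fixed $U$ because $(\phi, r, m)$ is a $C^{1}$ solution on every compact domain of dependence (Remark \ref{rem:wp}) and $u^{\om}$ is bounded on $u \in [1, U]$. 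The goal is to show $\calB_{2}(U) \leq C$ with $C$ independent of $U$, which is exactly the claimed estimate with $A_{2}' = C$. As in \S\ref{sec.full.decay.1}, I would split $\PD = \extr \cup \intr$, use the preliminary $r$-decay in $\extr$ from Lemma \ref{lem:decay2:rDecay} (which already gives the $u^{-\om}$ bound there since $r \gtrsim u$ in $\extr$), and propagate inward through $\intr$ all the way to $\Gmm$, picking up smallness from Lemma \ref{lem:smallPtnl} (for the $\frac{2m\dur}{(1-\mu)r^2}$-type potentials) and from Lemma \ref{lem:smallDphi} / Theorem \ref{thm:decayInCpt} (for the $\int \abs{\rd_u \phi}$-type terms) once $u \geq u_{2}$ for $u_{2}$ large.

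The essential new point, as flagged in the strategy discussion, is that in $\intr$ — and especially near $\Gmm$ where $r$-weights are useless — I must integrate the \emph{renormalized} equations \eqref{eq:eq4dvdvrphi}, \eqref{eq:eq4dudurphi}, \eqref{eq:eq4dvdvr}, \eqref{eq:eq4dudur} rather than \eqref{eq:eq4dvdvrphi:normal}--\eqref{eq:eq4dudur:normal}, because only the renormalized right-hand sides have the null structure $\rd_u\phi\,(\cdots)_v - \rd_v\phi\,(\cdots)_u$ that pairs a factor controlled in $L^1_u$ (resp. $L^1_v$) against a factor bounded in $L^\infty$ by $\calB_2(U)$ via Lemma \ref{lem:dphi}. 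Concretely: integrating \eqref{eq:eq4dvdvrphi} along $\rd_u$ from the axis, on $\Gmm$ one has $\rd_v^2(r\phi) - (\rd_v\dvr)\phi = -[\rd_u^2(r\phi) - (\rd_u\dur)\phi]$ by the boundary conditions in Definition \ref{def:C1solution}, so the axis data for the renormalized incoming variable is itself controlled by $\calB_2(U)$ (with a small loss), and I need the incoming and outgoing renormalized variables together — this is the coupling mentioned in the excerpt. The terms $\int \abs{\rd_u\phi\,\rd_v\phi}$, $\int\abs{\rd_u\phi\,\rd_v(\dur^{-1}\rd_u(r\phi))}$ etc. are then estimated by taking $\sup$ of the "$v$-factor" (bounded by $C_{K,\Lmb}\calB_2(U)$ using Lemma \ref{lem:dphi} together with $\rd_u(\dur^{-1}\rd_u(r\phi)) = \dur^{-1}(\rd_u^2(r\phi) - (\rd_u\log(-\dur))\rd_u(r\phi))$ and \eqref{eq:bnd4durphi}) times the $L^1$ integral of the "$u$-factor" (made $<\eps$ by Lemma \ref{lem:smallDphi} for $u \geq u_2$). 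One must also handle the linear-looking terms carrying the Hawking-mass potential: on $\intr \cap \cpt^c$ with $r \geq R$ the qualitative $M(u)\to 0$ gives smallness, on $\intr\cap\cpt$ local BV scattering (Lemma \ref{lem:smallPtnl}) does, and the $\frac{\mu}{r}$, $\frac{\mu}{r^2}$ factors near $\Gmm$ are absorbed using Lemma \ref{lem:muOverR}, which trades them for $\abs{\rd_v\phi}^2$-type quantities again controlled by $\calB_2(U)$.

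After assembling these bounds I expect an inequality of the schematic form
\begin{equation*}
	\calB_{2}(U) \leq C_{u_{2}, K, \Lmb, M_{i}}\bb( \calI_{2} + A_{1} + A_{1}^{2} + A_{1}^{3} \bb) + \bb( \text{small}(u_{2}) \bb)\, \calB_{2}(U),
\end{equation*}
where $\text{small}(u_2) \to 0$ as $u_2 \to \infty$ by Lemmas \ref{lem:smallPtnl} and \ref{lem:smallDphi}; choosing $u_2$ large enough to make the coefficient $\leq 1/2$ and using $\calB_2(U) < \infty$, I absorb the last term and obtain a bound independent of $U$, hence the proposition. Two subtleties I would be careful about. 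First, the region $1 \leq u \leq 3u_2$ (a fixed compact $u$-range) must be dealt with separately by a crude compactness/continuity argument as in Lemma \ref{lem:decay1:cptu:0}, so the constants there genuinely depend on the full solution profile, not just $\calI_2$ — exactly as remarked after Theorem \ref{main.thm.2}. Second, because \eqref{eq:eq4dvdvr} contains $\rd_u^2(r\phi)$ on its right-hand side (the "coupling of incoming and outgoing parts" remark), I cannot close $\rd_v\dvr$ without simultaneously closing $\rd_u^2(r\phi)$, so all four quantities in $\calB_2(U)$ must be bootstrapped as a single package; this interdependence, together with correctly tracking which factor in each nonlinear term carries the $L^1$-smallness versus the $\calB_2(U)$-dependence, is the main obstacle. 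The reconciliation of the renormalized estimates (good near $\Gmm$, only $r^{-2}$ in $\extr$) with the non-renormalized ones (good $r$-decay in $\extr$ from Lemma \ref{lem:decay2:rDecay}) is then a matter of taking minima, and upgrading the preliminary $u^{-\om}$ rate to the sharp $u^{-(\om+1)}$, resp. $u^{-3}$, rates of Theorem \ref{main.thm.2} is deferred to the subsequent (non-renormalized) argument.
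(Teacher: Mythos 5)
Your proposal follows the paper's proof quite closely: the same quantity $\calB_{2}(U)$, the same $\extr / \intr$ split, the same passage to the renormalized variables $F_{1}, G_{1}, F_{2}, G_{2}$ to exploit null structure, the same absorption mechanism through Lemmas~\ref{lem:smallPtnl}, \ref{lem:smallDphi} and Theorem~\ref{thm:decayInCpt}, and the same compactness argument (the finite $H_{2}'(u_{2})$) on the fixed $u$-range $[1, 3u_{2}]$. Two details, however, need correcting before the argument actually closes.

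First, the sentence ``integrating \eqref{eq:eq4dvdvrphi} along $\rd_{u}$ from the axis, $\ldots$ so the axis data for the renormalized \emph{incoming} variable is itself controlled by $\calB_{2}(U)$'' is circular as stated. If you seed $F_{1}$ with its axis value $F_{1}(v,v) = -G_{1}(v,v)$, which is of size $\calB_{2}(U)\, v^{-\omg}$, you obtain $u^{\omg} \abs{F_{1}} \lesssim \calB_{2}(U) + \eps(u_{2}) \calB_{2}(U)$, and nothing absorbs. The paper's Lemma~\ref{lem:decay2:key4nullStr} integrates $F_{1}, F_{2}$ \emph{inward} along $\rd_{u}$ from $(u/3, v) \in \extr$, where Lemma~\ref{lem:decay2:rDecay} gives an \emph{absolute} (i.e.\ $\calB_{2}(U)$-free) bound $r^{\omg + 1} \abs{\rd_{v}^{2}(r \phi)} + r^{3} \abs{\rd_{v} \dvr} \lesssim 1$; only afterwards are $G_{1}, G_{2}$ integrated \emph{outward} from $\Gmm$, using $\lim_{v \to u+} G_{1} = - \lim_{v \to u+} F_{1}$ and the now non-circular $F_{1}$ estimate. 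The order — $F$'s first from $\extr$, then $G$'s from $\Gmm$ — is what breaks the circle.

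Second, the final ``reconciliation'' is more than taking a minimum. Lemma~\ref{lem:decay2:rDecay} does give the $u^{-\omg}$ rate for $\rd_{v}^{2}(r\phi)$ and $\rd_{v}\dvr$ in $\extr$ (since $r \gtrsim u$ there), but there is no analogous $r$-decay lemma for the \emph{outgoing} derivatives $\rd_{u}^{2}(r\phi)$ and $\rd_{u}\dur$: one cannot integrate their evolution equations along $\rd_{u}$ and pick up $r$-weights. Since $\calB_{2}(U)$ takes a $\sup$ over all of $C_{u}$ (not just $C_{u} \cap \intr$), you still need to control these two quantities in $\extr$. The paper's Lemma~\ref{lem:decay2:uDecayInExtr} does exactly this: having established the $u^{-\omg}$ rate in $\intr$, it is propagated to $\extr$ by integrating \eqref{eq:eq4dudurphi:normal}, \eqref{eq:eq4dudur:normal} outward along $\rd_{v}$ from $\set{v = 3u}$. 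Your proposal omits this step; without it, $\calB_{2}(U)$ cannot be shown finite independent of $U$.
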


\begin{proof} 
For $U > 1$, we define 
\begin{equation} \label{eq:decay2:def4B2}
	\calB_{2}(U) := \sup_{u \in [1, U]} \sup_{C_{u}} \bb(  u^{\om} \abs{\rd_{v}^{2} (r \phi)} +u^{\om} \abs{\rd_{u}^{2} (r \phi)} 
%					+ u^{3} \abs{\rd_{v} \phi} + u^{3} \abs{\rd_{u} \phi} 
					+ u^{\om} \abs{\rd_{v} \dvr} + u^{\om} \abs{\rd_{u} \dur} \bb).
\end{equation}

Notice that the above is finite for every fixed $U$ due to Lemmas \ref{lem:decay2:rDecay} and \ref{lem:decay2:uDecayInExtr}. As indicated earlier, we need to use the null structure of the system \eqref{eq:SSESF} as in \S \ref{subsec:nullStr}. For convenience, we define the shorthands
\begin{align*}
	F_{1} := & \rd_{v}^{2} (r \phi) - (\rd_{v} \dvr) \phi, \\
	G_{1} := & \rd_{u}^{2} (r \phi) - (\rd_{u} \dur) \phi, 
\end{align*}
and
\begin{align*}
	F_{2} := & \rd_{v} \log \dvr - \frac{\dvr}{(1-\mu)} \frac{\mu}{r} + \rd_{v} \phi \bb( \dvr^{-1} \rd_{v} (r \phi) - \dur^{-1} \rd_{u} ( r \phi) \bb), \\
	G_{2} := & \rd_{u} \log (-\dur) - \frac{\dur}{(1-\mu)} \frac{\mu}{r} + \rd_{u} \phi \bb( \dvr^{-1} \rd_{v} (r \phi) - \dur^{-1} \rd_{u} (r \phi) \bb).
\end{align*}

Then \eqref{eq:eq4dvdvrphi}, \eqref{eq:eq4dudurphi}, \eqref{eq:eq4dvdvr} and \eqref{eq:eq4dudur} may be rewritten in the following fashion.
\begin{align} 
& \rd_{u} F_{1} = \rd_{u} \dvr \, \rd_{v} \phi - \rd_{v} \dvr \, \rd_{u} \phi, \label{eq:decay2:nullStr:pf:1} \\
& \rd_{u} F_{2} = \rd_{u} \phi \, \rd_{v}\bb( \dur^{-1} \rd_{u} (r \phi) \bb)- \rd_{v} \phi \, \rd_{u} \bb( \dur^{-1} \rd_{u} (r \phi) \bb),  \label{eq:decay2:nullStr:pf:2}  \\
& \rd_{v} G_{1} = \rd_{v} \dur \, \rd_{u} \phi - \rd_{u} \dur \, \rd_{v} \phi,\label{eq:decay2:nullStr:pf:3}  \\
& \rd_{v} G_{2} = - \rd_{u} \phi \, \rd_{v} \bb( \dvr^{-1} \rd_{v} (r \phi) \bb) + \rd_{v} \phi \, \rd_{u} \bb( \dvr^{-1} \rd_{v} (r \phi) \bb). \label{eq:decay2:nullStr:pf:4}
\end{align}

The following lemma is the key technical component of the proof.

\begin{lemma} \label{lem:decay2:key4nullStr}
There exists a finite positive constant $C = C_{A_{1}, \calI_{2}, K, \Lmb}$ and positive function $\eps(u)$ satisfying
\begin{equation*}
	\eps(u) \to 0 \hbox{ as } u \to \infty
\end{equation*}
such that the following inequalities holds for $1 \leq u_{2} \leq U$:
\begin{align}
	\sup_{\intr \cap \set{(u,v) \in \PD : u \in [3 u_{2}, U]}} \bb( u^{\om} \abs{F_{1}} + u^{\om} \abs{G_{1}} \bb)
	 \leq & C_{\Lmb} \calI_{2} + C_{K, \Lmb, M_{i}} A_{1}^{3} + \eps(u_{2}) \calB_{2}(U), \label{eq:decay2:key4nullStr:1} \\
	 \sup_{\intr \cap \set{(u,v) \in \PD : u \in [3 u_{2}, U]}} \bb( u^{\om} \abs{F_{2}} + u^{\om} \abs{G_{2}} \bb)
	  \leq & C_{K, \Lmb} A_{1}^{2} + \eps(u_{2}) \calB_{2}(U). \label{eq:decay2:key4nullStr:2} 
\end{align}
\end{lemma}

We defer the proof of this lemma until later. Instead, we first finish the proof of Proposition \ref{prop:decay2:nullStr}, assuming Lemma \ref{lem:decay2:key4nullStr}.

\noindent\emph{Proof of Proposition \ref{prop:decay2:nullStr}.}
First, we claim that \eqref{eq:decay2:key4nullStr:1} and \eqref{eq:decay2:key4nullStr:2} imply
\begin{equation} \label{eq:decay2:nullStr:pf:5}
	\sup_{\intr \cap \set{(u,v) \in \PD : u \in [3 u_{2}, U]}} u^{\om} \bb( \abs{\rd_{v}^{2} (r \phi)} + \abs{\rd_{u}^{2} (r \phi)} + \abs{\rd_{v} \dvr} + \abs{\rd_{u} \dur} \bb)
	\leq H_{2} + (\eps + \eps')(u_{2}) \calB_{2}(U).
\end{equation}
for some constant $0 < H_{2} < \infty$ and some positive function $\eps'(u_{2})$ which tends to zero as $u_{2} \to \infty$.

The point is that $F_{1}, F_{2}, G_{1}, G_{2}$ controls $\rd_{v}^{2} (r \phi)$, $\rd_{u}^{2} (r \phi)$, $\rd_{v} \dvr$, $\rd_{u} \dur$, respectively, up to higher order terms, which may be absorbed into the second term on the right-hand side. Indeed, consider $u \in [3 u_{2}, U]$. For $F_{1}$ and $G_{1}$, we estimate
\begin{align*}
	& u^{\om} \abs{\rd_{v}^{2} (r \phi)(u,v)} = u^{\om} \abs{F_{1} +  (\rd_{v} \dvr) \phi}(u,v) \leq u^{\om} \abs{F_{1}(u,v)} + \sup_{C_{u}} \abs{\phi} \cdot \calB_{2}(U), \\
	& u^{\om} \abs{\rd_{u}^{2} (r \phi)(u,v)} = u^{\om} \abs{G_{1} + (\rd_{u} \dur) \phi}(u,v) \leq u^{\om} \abs{G_{1}(u,v)} + \sup_{C_{u}} \abs{\phi} \cdot \calB_{2}(U),
\end{align*}
which are acceptable, as $\sup_{C_{u}} \abs{\phi} \to 0$ as $u \geq 3 u_{2} \to \infty$ by Theorem \ref{main.thm.1}. 
For $F_{2}$, we use Proposition \ref{prop:geomLocBVScat} to estimate
\begin{align*}
	u^{\om} \abs{\rd_{v} \dvr} = & u^{\om} \dvr \bb\vert F_{2} + \frac{\dvr}{1-\mu} \frac{\mu}{r} + \rd_{v} \phi ( \dvr^{-1} \rd_{v} (r \phi) - \dur^{-1} \rd_{u} (r \phi) ) \bb\vert \\
	\leq &\frac{1}{2} u^{\om} \abs{F_{2}} + \frac{K \Lmb}{4} u^{\om} \abs{\frac{\mu}{r}} + \frac{\Lmb}{2} u^{\om} \abs{\rd_{v} \phi} \bb( \abs{\rd_{v} (r \phi)} + \abs{\rd_{u} (r \phi)} \bb).
\end{align*}

Applying \eqref{eq:muOverR:1} (from Lemma \ref{lem:muOverR}) to the second term on the last line, and then using Lemma \ref{lem:dphi} to control $u^{\om} \abs{\rd_{v} \phi}$, we arrive at 
\begin{equation*}
u^{\om} \abs{\rd_{v} \dvr(u,v)} \leq \frac{1}{2} u^{\om} \abs{F_{2}(u,v)} + C_{K, \Lmb} \, \Psi \sup_{C_{u}} \bb( \abs{\rd_{v} (r \phi)} + \abs{\rd_{u} (r \phi)} \bb) \cdot \calB_{2}(U),
\end{equation*}
which is acceptable in view of Theorem \ref{main.thm.1}. Proceeding similarly, but also using the second inequality of Lemma \ref{lem:dphi} to control $\abs{\rd_{u} \phi}$, we obtain
\begin{equation*}
u^{\om} \abs{\rd_{u} \dur(u,v)} \leq K u^{\om} \abs{G_{2}(u,v)} + C_{K, \Lmb} \, \Psi \sup_{C_{u}} \bb( \abs{\rd_{v} (r \phi)} + \abs{\rd_{u} (r \phi)} \bb) \cdot \calB_{2}(U).
\end{equation*}

Combining these estimates with \eqref{eq:decay2:key4nullStr:1} and \eqref{eq:decay2:key4nullStr:2}, we conclude \eqref{eq:decay2:nullStr:pf:5} with
\begin{align} 
	H_{2} =& C_{\Lmb} \calI_{2} + C_{K, \Lmb, M_{i}} A_{1}^{3} + C_{K, \Lmb} A_{1}^{2}, 
	\label{eq:decay2:H2} \\
	\eps'(u_{2}) =& C \sup_{u \geq 3u_{2}} \abs{\phi} + C_{K, \Lmb} \Psi \sup_{u \geq 3u_{2}} \bb( \abs{\rd_{v}(r \phi)} + \abs{\rd_{u}(r \phi)} \bb).
	\label{eq:decay2:eps'}
\end{align}

Next, note that the (non-decreasing) function
\begin{equation} \label{eq:decay2:def4H'2}
	H'_{2}(u_{2}) := \sup_{\intr \cap \set{(u,v) \in \PD : u \in [1, 3u_{2}]}} u^{\om} \bb( \abs{\rd_{v}^{2} (r \phi)} + \abs{\rd_{u}^{2} (r \phi)} + \abs{\rd_{v} \dvr} + \abs{\rd_{u} \dur} \bb) \geq 0
\end{equation}
is always \emph{finite} for any fixed $u_{2} \geq 1$, as the region $\intr \cap \set{(u,v) \in \PD : u \in [1, 3u_{2}]}$ is compact and each of these terms is a continuous function, since $(\phi, r, m)$ is a $C^{1}$ solution (see Definition \ref{def:C1solution}). Combining with \eqref{eq:decay2:nullStr:pf:5}, we obtain
\begin{equation*} 
	\sup_{\intr \cap \set{(u,v) \in \PD : u \in [1, U]}} u^{\om} \bb( \abs{\rd_{v}^{2} (r \phi)} + \abs{\rd_{u}^{2} (r \phi)} + \abs{\rd_{v} \dvr} + \abs{\rd_{u} \dur} \bb)
	\leq H_{2} + H'_{2}(u_{2}) + (\eps + \eps')(u_{2}) \calB_{2}(U),
\end{equation*}
for every $u_2\in [1,U]$.

Now apply \eqref{eq:decay2:uDecayInExtr:1}, \eqref{eq:decay2:uDecayInExtr:3} in Lemma \ref{lem:decay2:uDecayInExtr} to $\rd_{u}^{2}(r \phi)$, $\rd_{u} \dur$. Apply also Lemma \ref{lem:decay2:rDecay} (along with the fact that $r(u,v) \geq 2 \Lmb^{-1} u$ in $\extr$ and $\omg \leq 3$) to $\rd_{v}^{2} (r \phi)$, $\rd_{v} \dvr$ in $\extr$. Then we see that there exist a non-negative and non-decreasing function $H''_{2}(u_2)$ and a positive function $\eps''(u_{2})$ such that
\begin{equation*} 
	\calB_{2}(U) \leq H_{2}''(u_2) +  \eps''(u_{2}) \calB_{2}(U),
\end{equation*}
and $\eps''(u_{2}) \to 0$ as $u_{2} \to \infty$. Taking $u_{2}$ sufficiently large, the second term on the right-hand side can be absorbed into the left-hand side; then we conclude that $\calB_{2}(U) \leq C_{A_{1}, K, \Lmb} H''_{2}(u_{2})$. As this bound is independent of $U$, Proposition \ref{prop:decay2:nullStr} then follows. \qedhere
\end{proof}

\begin{remark} 
Using \eqref{eq:decay2:uDecayInExtr:1}, \eqref{eq:decay2:uDecayInExtr:3} in Lemma \ref{lem:decay2:uDecayInExtr} and \eqref{eq:decay2:rDecay:1}, \eqref{eq:decay2:rDecay:2} in Lemma \ref{lem:decay2:rDecay}, the functions $H_{2}''(u_{2})$ and $\eps''(u_{2})$ can be explicitly bounded from the above as follows:
\begin{align}
	H_{2}''(u_2) \leq & C_{K, \Lmb} \bb( 1 + \Psi \int_{3}^{\infty} \abs{\frac{2 m \dvr}{(1-\mu) r^{2}}}(u, v') \, \ud v' \bb) \cdot (H_{2} + H'(u_{2}) + A_{1}^{2} + A_{1}^{3}) 	
	\label{eq:decay2:H2''} \\
			& + C \calI_{2} + C_{K, \Lmb} A_{1}^{2} + C_{K, \Lmb, M_{i}} A_{1}^{3} 
	\notag \\
	\eps''(u_{2}) \leq & C_{K, \Lmb} \bb( 1 + \Psi \int_{3}^{\infty} \abs{\frac{2 m \dvr}{(1-\mu) r^{2}}}(u, v') \, \ud v' \bb) \cdot (\eps + \eps')(u_{2}).
	\label{eq:decay2:eps''}
\end{align}

These bounds will be useful in our proof of Theorem \ref{thm:smallData} in Section \ref{sec:smallData}.
\end{remark}

At this point, in order to complete the proof of Proposition \ref{prop:decay2:nullStr}, we are only left to prove Lemma \ref{lem:decay2:key4nullStr}.
\begin{proof}[Proof of Lemma \ref{lem:decay2:key4nullStr}]
Let $(u, v) \in \intr$ (i.e., $v \in [u, 3u]$) with $u \in [3 u_{2}, U]$. In this proof, we will use the notation $\eps(u_{2})$ to refer to a positive quantity which may be made arbitrarily small by choosing $u_{2}$ large enough, which may vary from line to line.

We first estimate $F_1$ and $F_2$. Integrating \eqref{eq:decay2:nullStr:pf:1} and \eqref{eq:decay2:nullStr:pf:2} along the incoming direction from $(u/3, v)$ to $(u,v)$, we obtain
\begin{align*}
	\abs{F_{1}(u,v)} \leq & \abs{F_{1}(u/3, v)} + \int_{u/3}^{u} \abs{\rd_{u} \dvr \, \rd_{v} \phi(u', v)} + \abs{\rd_{v} \dvr \, \rd_{u} \phi (u', v)} \, \ud u',  \\
	\abs{F_{2}(u,v)} \leq & \abs{F_{2}(u/3, v)} + \int_{u/3}^{u} \abs{\rd_{u} \phi \, \rd_{v}( \dur^{-1} \rd_{u} (r \phi) )(u', v)} + \abs{\rd_{v} \phi \, \rd_{u} ( \dur^{-1} \rd_{u} (r \phi) )(u', v)}  \, \ud u'.
\end{align*}

Multiply both sides of these inequalities by $u^{\om}$. For $v \in [u, 3u]$, note that $(u/3, v) \in \extr$ and $u \leq (3 \Lmb/2) r(u/3, v)$. 
Therefore, using Theorem \ref{main.thm.1} for $\phi$, $\rd_v(r\phi)$, Corollary \ref{cor:decay1} for $\rd_{v} \phi$, Lemma \ref{lem:muOverR} for $\mu/r$ and Lemma \ref{lem:decay2:rDecay} for $\rd_{v}^{2}(r\phi)$, $\rd_{v} \dvr$, we have
\begin{align*}
	u^{\om} \abs{F_{1} (u/3, v)} 
		\leq & C_{\Lmb} r^{\om} \bb( \abs{\rd_{v}^{2} (r \phi)} + \abs{(\rd_{v} \dvr) \phi} \bb) (u/3, v)  \\
		\leq & C_{\Lmb} \calI_{2} + C_{K, \Lmb, M_{i}} A_{1}^{3}, \\
	u^{\om} \abs{F_{2} (u/3, v)} 
		\leq & C_{\Lmb} r^{\om} \bb( \abs{\dvr^{-1} \rd_{v} \dvr} + \frac{\mu}{(1-\mu)} \frac{\dvr}{r} + \abs{\rd_{v} \phi ( \dvr^{-1} \rd_{v} (r \phi) - \dur^{-1} \rd_{u} ( r \phi) )} \bb) (u/3, v)\\
		\leq & C_{K, \Lmb} A_{1}^{2}.
\end{align*}

Therefore, we only need to deal with the $u'$-integrals. For $u \in [3 u_{2}, U]$, we claim that
\begin{align}
	u^{\om} \int_{u/3}^{u} \abs{\rd_{u} \dvr(u', v)} \abs{\rd_{v} \phi(u',v)} \, \ud u' \leq & \eps(u_{2}) \calB_{2}(U), \label{eq:decay2:key4nullStr:pf:1} \\
	u^{\om} \int_{u/3}^{u} \abs{\rd_{v} \dvr(u', v)} \abs{\rd_{u} \phi(u',v)}\, \ud u' \leq & \eps(u_{2}) \calB_{2}(U), \label{eq:decay2:key4nullStr:pf:2} \\
	u^{\om} \int_{u/3}^{u} \abs{\rd_{u} \phi(u', v)} \abs{\rd_{v} (\dur^{-1} \rd_{u} (r \phi))(u', v)} \, \ud u' \leq & \eps (u_{2}) \calB_{2}(U), \label{eq:decay2:key4nullStr:pf:3} \\
	u^{\om} \int_{u/3}^{u} \abs{\rd_{v} \phi(u', v)} \abs{\rd_{u} (\dur^{-1} \rd_{u} (r \phi))(u', v)} \, \ud u' \leq &  \eps (u_{2}) \calB_{2}(U). \label{eq:decay2:key4nullStr:pf:4}
\end{align}

\pfstep{Proof of \eqref{eq:decay2:key4nullStr:pf:1}}
We proceed similarly as in the proof of Theorem \ref{main.thm.1}. By \eqref{eq:SSESF:dr}, \eqref{eq:bnd4dvrphi} and Lemma \ref{lem:dphi}, we estimate
\begin{align*}
& u^{\om} \int_{u/3}^{u} \abs{\rd_{u} \dvr(u', v)} \abs{\rd_{v} \phi(u',v)} \, \ud u' \\
%& \quad \leq \frac{1}{2} u^{\om} \sup_{u' \in [u/3, u]} \sup_{C_{u'}} \abs{\rd_{v} \phi} \int_{u/3}^{u} \abs{\frac{2m \dur}{(1-\mu) r^{2}}(u', v)} \, \ud u' \\
& \quad \leq C_{\Lmb} \bb( \int_{u_{2}}^{v} \abs{\frac{2m \dur}{(1-\mu) r^{2}}(u', v)} \, \ud u' \bb) \sup_{u' \in [u/3, u]} \sup_{C_{u'}} (u')^{\omg} (\abs{\rd^{2}_{v}(r \phi)} + \Psi \abs{\rd_{v} \dvr} ) \\
& \quad \leq C_{\Lmb, \Psi} \bb( \int_{u_{2}}^{v} \abs{\frac{2m \dur}{(1-\mu) r^{2}}(u', v)} \, \ud u' \bb) \calB_{2}(U).
\end{align*}

Thus \eqref{eq:decay2:key4nullStr:pf:1} follows by Lemma \ref{lem:smallPtnl}.
%\eps(u_{2}) \leq C_{\Lmb, \Psi} \bb( \int_{u_{2}}^{v} \abs{\frac{2m \dur}{(1-\mu) r^{2}}(u', v)} \, \ud u' \bb)

\pfstep{Proof of \eqref{eq:decay2:key4nullStr:pf:2}}
We have
\begin{align*}
u^{\om} \int_{u/3}^{u} \abs{\rd_{v} \dvr(u', v)} \abs{\rd_{u} \phi(u',v)}\, \ud u'
%\leq & u^{\om} \sup_{u' \in [u/3, u]} \sup_{C_{u'}} \abs{\rd_{v} \dvr} \int_{u/3}^{u} \abs{\rd_{u} \phi(u', v)} \, \ud u' \\
\leq & C \bb( \int_{u_{2}}^{v} \abs{\rd_{u} \phi(u', v)} \, \ud u' \bb) \sup_{u' \in [u/3, u]} \sup_{C_{u'}} (u')^{\om} \abs{\rd_{v} \dvr} \\
\leq & C \bb( \int_{u_{2}}^{v} \abs{\rd_{u} \phi(u', v)} \, \ud u' \bb) \calB_{2}(U).
\end{align*}

Thus \eqref{eq:decay2:key4nullStr:pf:2} follows by Lemma \ref{lem:smallDphi}.
%\eps(u_{2}) \leq C \bb( \int_{u_{2}}^{v} \abs{\rd_{u} \phi(u', v)} \, \ud u' \bb)

\pfstep{Proof of \eqref{eq:decay2:key4nullStr:pf:3}}
We start with the identity
\begin{equation*}
	\rd_{v} (\dur^{-1} \rd_{u} (r\phi)) = - \frac{2m}{(1-\mu)r^{2}} \dvr ( \dur^{-1} \rd_{u}(r \phi) - \phi).
\end{equation*}
which is readily verifiable using \eqref{eq:SSESF:dr} and \eqref{eq:SSESF:dphi}. By \eqref{eq:bnd4phi} and \eqref{eq:bnd4durphi}, we estimate
\begin{align*}
& u^{\om} \int_{u/3}^{u} \abs{\rd_{u} \phi(u', v)} \abs{\rd_{v} (\dur^{-1} \rd_{u} (r \phi))(u', v)} \, \ud u' \\
%& \quad \leq C_{K, \Lmb} \Psi \bb( \int_{u/3}^{u} \abs{\frac{2m \dur}{(1-\mu) r^{2}} (u', v)} \, \ud u' \bb) \,  u^{\om} \sup_{u' \in [u/3, u]} \sup_{C_{u'}} \abs{\rd_{u} \phi}  \\
& \quad \leq C_{K, \Lmb} \Psi \bb( \int_{u_{2}}^{v} \abs{\frac{2m \dur}{(1-\mu) r^{2}} (u', v)} \, \ud u' \bb) \,  \sup_{u' \in [u/3, u]} \sup_{C_{u'}} (u')^{\om}\abs{\rd_{u} \phi}.
\end{align*}

The $u'$-integral vanishes as $u_{2} \to \infty$ by Lemma \ref{lem:smallPtnl}. On the other hand, by Lemma \ref{lem:dphi} and Proposition \ref{prop:geomLocBVScat}, we have
\begin{equation} \label{eq:decay2:key4nullStr:pf:3:1}
\sup_{C_{u'}} (u')^{\om}\abs{\rd_{u} \phi}
\leq C_{K, \Lmb} \sup_{C_{u'}} (u')^{\om}\abs{\rd_{v} \phi} 
%\leq C_{K, \Lmb} \sup_{u' \in [u/3, u]} \sup_{C_{u'}} (u')^{\om} (\abs{\rd_{v}^{2} (r \phi)} + \Psi \abs{\rd_{v} \dvr}) 
\leq C_{K, \Lmb, \Psi} \calB_{2}(U),
\end{equation}
for any $u' \in [1, U]$.
Therefore, \eqref{eq:decay2:key4nullStr:pf:3} follows.
%\eps \leq C_{K, \Lmb, \Psi} \Psi \bb( \int_{u_{2}}^{v} \abs{\frac{2m \dur}{(1-\mu) r^{2}} (u', v)} \, \ud u' \bb)

\pfstep{Proof of \eqref{eq:decay2:key4nullStr:pf:4}}
Here we divide the integral into two, one in $\cpt$ and the other outside. Recall the notation $u^{\star}(v) = \sup \set{u \in [1, v] : r (u,v) \geq R}$. Below, we will consider the case $u^{\star}(v) \in [u/3, u]$, i.e., when the line segment $\set{(u', v) \in \PD : u' \in [u/3, u]}$ crosses $\set{r = R}$; the other case is easier, and can be handled with a minor modification.

We first deal with the integral over the portion in $\cpt$. We claim that
\begin{equation*}
	u^{\om} \int_{u^{\star}(v)}^{u} \abs{\rd_{v} \phi(u', v)} \abs{\rd_{u} (\dur^{-1} \rd_{u} (r \phi))(u', v)} \, \ud u' \leq  \eps (u_{2}) \calB_{2}(U). 
\end{equation*}

This is an easy consequence of the bound for $|\rd_v\phi|$ in Lemma \ref{lem:dphi}, the fact that $u$, $u'$ are comparable over the domain of integration and
\begin{equation*}
	\sup_{v \in [u_{2}, \infty)} \int_{\uC_{v} \cap \set{u \geq u_{2}} \cap \cpt} \abs{\rd_{u} (\dur^{-1} \rd_{u} (r \phi))}   \to 0 \hbox{ as } u_{2} \to \infty
\end{equation*}
which follows from \eqref{eq:decay1:3}, \eqref{eq:bnd4dur} and Theorem \ref{thm:decayInCpt}.

We now consider the remaining contribution to the integral. We begin as follows.
\begin{align*}
	& u^{\om} \int_{u/3}^{u^{\star}(v)} \abs{\rd_{v} \phi(u', v)} \abs{\rd_{u} (\dur^{-1} \rd_{u} (r \phi))(u', v)} \, \ud u'  \\
	& \quad \leq C_{K, \Lmb} \bb( \int_{u/3}^{u^{\star}(v)} \abs{\rd_{v} \phi(u', v)} \, \ud u' \bb) \sup_{u' \in [u/3, u^{\star}(v)]} \sup_{C_{u'}} (u')^{\omg} (\abs{\rd_{u}^{2}(r \phi)} + \Psi \abs{\rd_{u} \dur}) \\
	& \quad \leq C_{K, \Lmb, \Psi} \bb( \int_{u/3}^{u^{\star}(v)} \abs{\rd_{v} \phi(u', v)} \, \ud u' \bb) \calB_{2}(U).
\end{align*}

For $u' \in [u/3, u^{\star}(v)]$, we have $r (u', v) \geq R$. Thus, by \eqref{eq:decay1:4}, we have
\begin{equation*}
	\int_{u/3}^{u^{\star}(v)} \abs{\rd_{v} \phi(u', v)} \, \ud u' \leq \frac{C_{K} A_{1}}{R} \int_{u_{2}}^{\infty} (u')^{-\om} \, \ud u'  \leq \frac{C_{K} A_{1}}{R} u_{2}^{-(\om-1)},
\end{equation*}
which vanishes as $u_{2} \to \infty$. Therefore, in the case under consideration, \eqref{eq:decay2:key4nullStr:pf:4} follows.

\vspace{.5em}

We have therefore obtained the desired bounds for $F_1$ and $F_2$. Next, estimate $G_1$ and $G_2$. Let us integrate \eqref{eq:decay2:nullStr:pf:3} and \eqref{eq:decay2:nullStr:pf:4} along the outgoing direction from $(u, u)$ on the axis to $(u,v)$. Then we obtain
\begin{align*}
	\abs{G_{1}(u,v)} \leq & \lim_{v' \to u+} \abs{G_{1}(u, v')} + \int_{u}^{v} \abs{\rd_{v} \dur \, \rd_{u} \phi (u, v')} + \abs{\rd_{u} \dur \, \rd_{v} \phi (u, v')} \, \ud v',  \\
	\abs{G_{2}(u,v)} \leq & \lim_{v' \to u+} \abs{G_{2}(u, v')} + \int_{u}^{v} \abs{\rd_{v} \phi \, \rd_{u} ( \dvr^{-1} \rd_{v} (r \phi) ) (u, v')} + \abs{\rd_{u} \phi \, \rd_{v} ( \dvr^{-1} \rd_{v} (r \phi) ) (u, v')} \, \ud v'.
\end{align*}

Note that
\begin{equation*}
	\lim_{v \to u+} \frac{\mu}{r} (u, v) = 0, \quad
	\lim_{v \to u+} \bb( \dvr^{-1} \rd_{v} (r \phi)(u,v) - \dur^{-1} \rd_{u} ( r \phi) (u,v) \bb) = 0,
\end{equation*}
since $(\phi, r, m)$ is a $C^{1}$ solution. It follows that $\lim_{v \to u+} \rd_{u} \rd_{v} (r \phi)(u, v) = 0$ and $\lim_{v \to u+} \rd_{u} \rd_{v} r (u,v) = 0$. Moreover, we also have
\begin{align*}
	& \lim_{v \to u+} \rd_{v}^{2} (r \phi) (u,v) = - \lim_{v \to u+} \rd_{u}^{2} (r \phi) (u,v), \quad
	\lim_{v \to u+} \rd_{v} \dvr (u, v)= - \lim_{v \to u+} \rd_{u} \dur (u,v).
\end{align*}

As a consequence, 
\begin{equation*}
	\lim_{v' \to u+} G_{1}(u, v') = - \lim_{v' \to u+} F_{1}(u, v'), \quad
	\lim_{v' \to u+} G_{2}(u, v') = \lim_{v' \to u+} F_{2}(u, v').
\end{equation*}

Therefore, by the previous estimates for $F_{1}, F_{2}$, we have
\begin{align*}
	& u^{\om} \lim_{v' \to u+} \abs{G_{1}(u, v')} \leq C_{\Lmb} \calI_{2} + C_{K, \Lmb, M_{i}} A_{1}^{3} + \eps(u_{2}) \calB_{2}(U), \\
	& u^{\om} \lim_{v' \to u+} \abs{G_{2}(u, v')} \leq C_{K, \Lmb} A_{1}^{2} + \eps(u_{2}) \calB_{2}(U),
\end{align*}
which are acceptable. Recalling that we are considering $(u,v) \in \intr$, hence $v \in [u, 3u]$, we are now left to establish the following estimates: 
\begin{align}
	u^{\om} \int_{u}^{3u} \abs{\rd_{v} \dur(u, v')} \abs{\rd_{u} \phi(u,v')} \, \ud v' \leq & \eps(u_{2}) \calB_{2}(U), \label{eq:decay2:key4nullStr:pf:5} \\
	u^{\om} \int_{u}^{3u} \abs{\rd_{u} \dur(u, v')} \abs{\rd_{v} \phi(u,v')}\, \ud v' \leq & \eps(u_{2}) \calB_{2}(U), \label{eq:decay2:key4nullStr:pf:6} \\
	u^{\om} \int_{u}^{3u} \abs{\rd_{v} \phi(u, v')} \abs{\rd_{u} (\dvr^{-1} \rd_{v} (r \phi))(u, v')} \, \ud v' \leq & \eps (u_{2}) \calB_{2}(U), \label{eq:decay2:key4nullStr:pf:7} \\
	u^{\om} \int_{u}^{3u} \abs{\rd_{u} \phi(u, v')} \abs{\rd_{v} (\dvr^{-1} \rd_{v} (r \phi))(u, v')} \, \ud v' \leq &  \eps (u_{2}) \calB_{2}(U). \label{eq:decay2:key4nullStr:pf:8}
\end{align}

\pfstep{Proof of \eqref{eq:decay2:key4nullStr:pf:5}}
Substituting $\rd_{v} \dur$ by \eqref{eq:SSESF:dr} and using \eqref{eq:decay2:key4nullStr:pf:3:1}, we have
\begin{align*}
u^{\om} \int_{u}^{3u} \abs{\rd_{v} \dur(u, v')} \abs{\rd_{u} \phi(u,v')} \, \ud v'
\leq & K \bb( \int_{u}^{\infty} \abs{\frac{2m \dvr}{(1-\mu) r^{2}}(u, v')} \, \ud v' \bb) \sup_{v' \in [u, 3u]} u^{\om} \abs{\rd_{u} \phi(u,v')}  \\
\leq & C_{K, \Lmb, \Psi} \bb( \sup_{u \geq 3u_{2}} \int_{u}^{\infty} \abs{\frac{2m \dvr}{(1-\mu) r^{2}}(u, v')} \, \ud v' \bb) \calB_{2}(U).
\end{align*}

Thus \eqref{eq:decay2:key4nullStr:pf:5} follows by Lemma \ref{lem:smallPtnl}.
%\eps(u_{2}) \leq C_{K, \Lmb, \Psi} \bb( \sup_{u \geq 3u_{2}} \int_{u}^{\infty} \abs{\frac{2m \dvr}{(1-\mu) r^{2}}(u, v')} \, \ud v' \bb)

\pfstep{Proof of \eqref{eq:decay2:key4nullStr:pf:6}}
We have
\begin{align*}
u^{\om} \int_{u}^{3u} \abs{\rd_{u} \dur(u, v')} \abs{\rd_{v} \phi(u,v')}\, \ud v'
\leq & \int_{u}^{\infty} \abs{\rd_{v} \phi(u, v')} \, \ud v'  \sup_{v' \in [u, 3u]} u^{\om} \abs{\rd_{u} \dur(u, v')}  \\
\leq & \bb( \sup_{u \geq 3u_{2}} \int_{u}^{\infty} \abs{\rd_{v} \phi(u, v')} \, \ud v'  \bb) \calB_{2}(U).
\end{align*}

Thus \eqref{eq:decay2:key4nullStr:pf:6} follows by Lemma \ref{lem:smallDphi}.
%\eps(u_{2}) \leq \bb( \sup_{u \geq 3u_{2}} \int_{u}^{\infty} \abs{\rd_{v} \phi(u, v')} \, \ud v'  \bb)

\pfstep{Proof of \eqref{eq:decay2:key4nullStr:pf:7}}
By \eqref{eq:SSESF:dr} and \eqref{eq:SSESF:dphi}, we have the identity
\begin{equation*}
	\rd_{u} (\dvr^{-1} \rd_{v} (r\phi)) = - \frac{2m}{(1-\mu)r^{2}} \dur ( \dvr^{-1} \rd_{v}(r \phi) - \phi).
\end{equation*}

Then by Proposition \ref{prop:geomLocBVScat}, we have
\begin{align*}
&u^{\om} \int_{u}^{3u} \abs{\rd_{v} \phi(u, v')} \abs{\rd_{u} (\dvr^{-1} \rd_{v} (r \phi))(u, v')} \, \ud v' \\
& \quad \leq C_{K, \Lmb} \Psi \bb( \int_{u}^{\infty} \abs{\frac{2m \dvr}{(1-\mu) r^{2}} (u, v')} \, \ud v' \bb)  \sup_{v' \in [u, 3u]} u^{\om} \abs{\rd_{v} \phi(u,v')}.
\end{align*}

Now \eqref{eq:decay2:key4nullStr:pf:7} follows by Lemmas \ref{lem:smallPtnl} and \ref{lem:dphi} and \eqref{eq:bnd4dvrphi}.
%\eps(u_{2}) \leq C_{K, \Lmb, \Psi} \sup_{u \geq 3u_{2}} \int_{u}^{\infty} \abs{\frac{2m \dvr}{(1-\mu) r^{2}} (u, v')} \, \ud v' 

\pfstep{Proof of \eqref{eq:decay2:key4nullStr:pf:8}}
As in the proof of \eqref{eq:decay2:key4nullStr:pf:4}, we will divide the integral into two pieces. More precisely, let us define $v^{\star}(u)$ to be the unique $v$-value such that $r(u, v^{\star}(u)) = R$. Assuming $v^{\star}(u) \in [u, 3u]$, the integral $\int_{u}^{3u}$ will be divided into $\int_{u}^{v^{\star}(u)}$ and $\int_{v^{\star}(u)}^{3u}$. The remaining case $v^{\star}(u) > 3u$ can be dealt with by adapting the argument for the first integral.

For the first integral, we claim that
\begin{equation*}
	u^{\om} \int_{u}^{v^{\star}(u)} \abs{\rd_{u} \phi(u, v')} \abs{\rd_{v} (\dvr^{-1} \rd_{v} (r \phi))(u, v')} \, \ud v' \leq \eps(u_{2}) \calB_{2}(U).
\end{equation*}

From the locally BV scattering assumption \eqref{eq:locBVScat}, we have
\begin{equation*}
	\sup_{u \in [3 u_{2}, \infty)} \int_{C_{u} \cap \cpt} \abs{\rd_{v} (\dvr^{-1} \rd_{v} (r \phi))}  \to 0 \hbox{ as } u_{2} \to \infty.
\end{equation*}

Combined with \eqref{eq:decay2:key4nullStr:pf:3:1}, the claim follows.

Next, we turn to the second integral. By \eqref{eq:bnd4dvr} and \eqref{eq:bnd4dvrphi}, we estimate
\begin{align*}
& u^{\om} \int_{v^{\star}(u)}^{3u} \abs{\rd_{u} \phi(u, v')} \abs{\rd_{v} (\dvr^{-1} \rd_{v} (r \phi))(u, v')} \, \ud v'  \\
& \quad \leq \sup_{v' \in [v^{\star}(u), 3u]} u^{\om} \abs{\rd_{v} (\dvr^{-1} \rd_{v} (r \phi))(u, v')} \int_{v^{\star}(u)}^{3u} \abs{\rd_{u} \phi(u, v')} \, \ud v' \\
& \quad \leq C_{\Lmb, \Psi} \calB_{2}(U) \int_{v^{\star}(u)}^{3u} \abs{\rd_{u} \phi(u, v')} \, \ud v'.
\end{align*}

For $v' \in [v^{\star}(u), 3u]$, we have $r (u, v) \geq R$. Thus, by \eqref{eq:decay1:5}, we have
\begin{equation*}
	\int_{v^{\star}(u)}^{3u} \abs{\rd_{u} \phi(u, v')} \, \ud v' \leq \frac{C_{K} A_{1}}{R} \int_{u}^{3u} u^{-\om} \, \ud v'  \leq \frac{C_{K} A_{1}}{R} u_{2}^{-(\om-1)},
\end{equation*}
which vanishes as $u_{2} \to \infty$, and therefore finishes the proof of \eqref{eq:decay2:key4nullStr:pf:8}. We remark that the fact that we are in $\intr$ is used crucially here, as otherwise the integral would not be convergent. 
\end{proof}

\begin{remark} 
In the case where we have global BV scattering (i.e., conditions $(2)$ and $(3)$ of Definition \ref{def:locBVScat} are satisfied with $R=\infty$), we can take $R = \infty$ in the preceding argument to obtain the following explicit upper bound on $\eps(u_{2})$:
\begin{equation} \label{eq:decay2:eps}
\begin{aligned}
	\eps(u_{2}) 
	\leq & C_{K, \Lmb, \Psi} \sup_{v \in [u_{2}, \infty)} \int_{\uC_{v} \cap \set{u \geq u_{2}}} \abs{\frac{2m \dur}{(1-\mu) r^{2}}}
		+C \sup_{v \in [u_{2}, \infty)} \int_{\uC_{v} \cap \set{u \geq u_{2}}} \abs{\rd_{u} \phi} \\
		& + C_{K, \Lmb, \Psi} \sup_{v \in [u_{2}, \infty)} \int_{\uC_{v} \cap \set{u \geq u_{2}}} \abs{\rd_{u} (\dur^{-1} \rd_{u} (r \phi))} \\
		& +C_{K, \Lmb, \Psi}  \sup_{u \geq 3u_{2}} \int_{C_{u}} \abs{\frac{2m \dvr}{(1-\mu) r^{2}}}
		+C  \sup_{u \geq 3u_{2}} \int_{C_{u}} \abs{\rd_{v} \phi}  \\
		& +C_{K, \Lmb, \Psi} \sup_{u \in [3 u_{2}, \infty)} \int_{C_{u}} \abs{\rd_{v} (\dvr^{-1} \rd_{v} (r \phi))}.
\end{aligned}
\end{equation}

This will be useful in our proof of the sharp decay rate in the case of small BV norm (Theorem \ref{thm:smallData}) in Section \ref{sec:smallData}.
\end{remark}

In the second step of our proof of Theorem \ref{main.thm.2}, we use the preliminary $u^{-\omg}$ decay proved in Proposition \ref{prop:decay2:nullStr} to obtain the optimal the $u$-decay. Key to this step is the following proposition, which claims optimal $u$-decay in $\intr$.

\begin{proposition} \label{prop:decay2:final}
There exists a constant $0 < A_{2}'' < \infty$ such that the following estimates hold.
\begin{align} 
\sup_{\intr} u^{\om+1} \abs{\rd_{v}^{2} (r \phi)} \leq & A_{2}'',
	\label{eq:decay2:final:1} \\
\sup_{\intr} u^{\om+1} \abs{\rd_{u}^{2} (r \phi)} \leq & A_{2}'', 
	\label{eq:decay2:final:2} \\
\sup_{\intr} u^{3} \abs{\rd_{v} \dvr} \leq & A_{2}'', 
	\label{eq:decay2:final:3} \\
\sup_{\intr} u^{3} \abs{\rd_{u} \dur} \leq & A_{2}''. 
	\label{eq:decay2:final:4}
\end{align}
\end{proposition}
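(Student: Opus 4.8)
The plan is to run a continuity/bootstrap argument for the quantity
\[
	\calB_{2}''(U) := \sup_{u \in [1, U]} \sup_{C_{u} \cap \intr} \bb( u^{\om+1} \abs{\rd_{v}^{2}(r\phi)} + u^{\om+1} \abs{\rd_{u}^{2}(r\phi)} + u^{3} \abs{\rd_{v}\dvr} + u^{3} \abs{\rd_{u}\dur} \bb),
\]
which is finite for each fixed $U$ since $\intr \cap \set{u \le U}$ is compact and $(\phi,r,m)$ is a $C^{1}$ solution. In contrast to Proposition~\ref{prop:decay2:nullStr}, no renormalization is used: one works directly with the non-renormalized equations \eqref{eq:eq4dvdvrphi:normal}, \eqref{eq:eq4dvdvr:normal}, \eqref{eq:eq4dudurphi:normal}, \eqref{eq:eq4dudur:normal}, treating the preliminary $u^{-\om}$ decay supplied by Proposition~\ref{prop:decay2:nullStr}, together with Theorem~\ref{main.thm.1} and Corollary~\ref{cor:decay1}, as known input. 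Since the right-hand sides of the equations for $\rd_{v}\dvr$ and $\rd_{v}^{2}(r\phi)$ involve only $\phi$, $m$, $\rd_{v}\phi$ and $\rd_{v}\dvr$ (and not the conjugate second derivatives), while those for $\rd_{u}\dur$ and $\rd_{u}^{2}(r\phi)$ are the mirror images, the estimates are carried out in the order: $\rd_{v}\dvr$, then $\rd_{v}^{2}(r\phi)$, then $\rd_{u}\dur$, then $\rd_{u}^{2}(r\phi)$, the last two coupling to the first two only through quantities already controlled at that stage.

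For the $\rd_{v}$-quantities I would integrate along incoming null curves starting from the point $(u/3, v)$, which lies in $\extr$ whenever $(u,v) \in \intr$ (since then $v \ge u = 3\cdot(u/3)$). At the starting point Lemma~\ref{lem:decay2:rDecay} already provides the sharp $r$-decay, and because $r(u/3, v) \ge \Lmb^{-1}(v - u/3) \ge \tfrac{2}{3}\Lmb^{-1} u$, these translate into the sharp \emph{$u$-decay} of the boundary data, namely $u^{3} \abs{\rd_{v}\dvr}(u/3,v) \le C_{K,\Lmb} A_{1}^{2}$ and $u^{\om+1} \abs{\rd_{v}^{2}(r\phi)}(u/3,v) \le C \calI_{2} + C_{K,\Lmb,M_{i}} A_{1}^{3}$. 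For the $\rd_{u}$-quantities one integrates instead along outgoing null curves from $\Gmm$; there the boundary data are furnished by the $C^{1}$ relations $\lim_{v'\to u+}\rd_{u}^{2}(r\phi)(u,v') = -\lim_{v'\to u+}\rd_{v}^{2}(r\phi)(u,v')$ and $\lim_{v'\to u+}\rd_{u}\dur(u,v') = -\lim_{v'\to u+}\rd_{v}\dvr(u,v')$, so that the bounds just obtained in the $\rd_{v}$-direction transfer to the axis values of the $\rd_{u}$-quantities.

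The bulk integrals are then estimated term by term. The key preparatory observation is that the preliminary $u^{-\om}$ decay of Proposition~\ref{prop:decay2:nullStr}, fed through Lemma~\ref{lem:dphi}, upgrades to the \emph{uniform} (valid up to $\Gmm$) pointwise bounds $\abs{\rd_{v}\phi}, \abs{\rd_{u}\phi} \le C_{K,\Lmb,\Psi} A_{2}' u^{-\om}$, and, through Lemma~\ref{lem:muOverR}, to $\tfrac{\mu}{r} \le C_{\Lmb} A_{1} A_{2}' u^{-2\om}$ and $\tfrac{\mu}{r^{2}} \le C_{\Lmb} (A_{2}')^{2} u^{-2\om}$ — it is precisely this uniform-to-the-axis control that makes the non-renormalized equations usable in the interior, where their $r^{-2}$ and $r^{-3}$ coefficients would otherwise be fatal near $\Gmm$. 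Combining these with Theorem~\ref{main.thm.1} for $\phi$ and $\rd_{v}(r\phi)$, Corollary~\ref{cor:decay1} for $m$, and the smallness of the potential integrals $\int \abs{\tfrac{2m\dur}{(1-\mu)r^{2}}}$, $\int\abs{\rd_{v}\phi}$, $\int\abs{\rd_{u}\phi}$ from Lemmas~\ref{lem:smallPtnl} and \ref{lem:smallDphi} (with each integral split at the curve $\set{r = R}$ so that near-axis and far portions are handled separately and the $r$-weights are matched exactly against $-\dur$ or $\dvr$), every bulk term is bounded either by a fixed constant (depending on $\calI_{2}$, $A_{1}$, $A_{2}'$, $K$, $\Lmb$, $\Psi$, $M_{i}$) or by $\eps(u_{2}) \calB_{2}''(U)$ with $\eps(u_{2}) \to 0$ as $u_{2} \to \infty$; wherever a bootstrap factor is used, the complementary factor of the product is kept at the non-bootstrap $u^{-\om}$ level so that the dependence on $\calB_{2}''(U)$ remains linear. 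Adding the $C^{1}$-boundedness on the compact region $\set{u \le 3 u_{2}}$ gives $\calB_{2}''(U) \le C(u_{2}) + \eps(u_{2}) \calB_{2}''(U)$; choosing $u_{2}$ large enough that $\eps(u_{2}) \le \tfrac{1}{2}$ and absorbing, one concludes $\calB_{2}''(U) \le 2 C(u_{2})$ uniformly in $U$, which is the proposition.

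The main obstacle, and where the argument is genuinely delicate, is showing that the quadratic scalar-field term and the Hawking-mass term in the equations \eqref{eq:eq4dvdvr:normal}, \eqref{eq:eq4dudur:normal} for $\rd_{v}\dvr$ and $\rd_{u}\dur$ are \emph{lower order} enough to produce the full $u^{-3}$ rate, which (unlike the rate for $\rd_{v}^{2}(r\phi)$) must be independent of $\om$. The crude bounds $\abs{\rd_{v}\phi} \lesssim u^{-\om}$ and $m/r^{3} = \mu/(2r^{2}) \lesssim u^{-2\om}$ are, for $\om$ close to $1$, not by themselves strong enough near the curve $\set{r \sim R}$, so one must exploit extra structure — either the cancellation between the two right-hand-side terms of \eqref{eq:eq4dvdvr:normal} that the renormalized form \eqref{eq:eq4dvdvr} makes manifest, or an interpolated use of the improved bootstrap decay of $\rd_{v}\phi$ coming from the $u^{-(\om+1)}$ and $u^{-3}$ bounds via Lemma~\ref{lem:dphi}. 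Organizing this cancellation correctly, while simultaneously keeping the coupling between $\rd_{u}^{2}(r\phi)$ and $\rd_{u}\dur$ under control, is the heart of the matter; the remaining terms — those involving $\phi$, first derivatives of $\phi$, or a single second derivative against a fast-decaying coefficient — are routine once the inputs above are in place.
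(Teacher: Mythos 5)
Your high-level picture is right — use the non-renormalized equations, feed in the preliminary $u^{-\omega}$ decay from Proposition~\ref{prop:decay2:nullStr}, integrate from the $v=3u$ curve for the $\rd_v$-quantities and from the axis for the $\rd_u$-quantities, and exploit $3\omega - 1 > \omega+1$ (resp. $2(\omega+1)-1 > 3$). But two things differ from the paper's proof, and one of them is a genuine gap.

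The order of estimation you propose, $\rd_v\dvr$ before $\rd_v^2(r\phi)$, cannot be made to work. The $(\rd_v\phi)^2$ term in \eqref{eq:eq4dvdvr:normal} is the obstruction you correctly flag; but notice that with only the preliminary input $\abs{\rd_v\phi}\lesssim A_2' u^{-\omega}$ it produces, after integrating in $u$ and multiplying by $u^3$, a contribution $\sim u^{4-2\omega}$, which diverges for $1 < \omega < 2$. To beat this you need $\abs{\rd_v\phi}\lesssim u^{-(\omega+1)}$; by Lemma~\ref{lem:dphi} that requires $\abs{\rd_v^2(r\phi)}\lesssim u^{-(\omega+1)}$ — precisely the estimate you have not yet proved in your order. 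Splitting the square and taking one factor at the preliminary rate and one at the bootstrap rate gives $u^{3-2\omega}A_2'\calB_2''(U)$, whose coefficient $A_2'$ is not small and whose $u$-power still diverges for $\omega$ near $1$; keeping both factors at the bootstrap rate gives a quadratic dependence $(\calB_2'')^2$ that a continuity argument cannot close. So the ordering is not cosmetic; it is the mechanism by which the $u^{-3}$ rate is obtained, and the paper's order — first $\rd_v^2(r\phi)$ and $\rd_u^2(r\phi)$ at rate $u^{-(\omega+1)}$ from \eqref{eq:eq4dvdvrphi:normal}, \eqref{eq:eq4dudurphi:normal} (where every term is $\lesssim A_1(A_2')^2 u^{-3\omega}$ so no bootstrap is needed at all), then the improved bound $\abs{\rd_v\phi}\lesssim B\,u^{-(\omega+1)}$ in \eqref{eq:decay2:final:impDvphi}, and only then $\rd_v\dvr$ and $\rd_u\dur$ via \eqref{eq:eq4dvdvr:normal}, \eqref{eq:eq4dudur:normal} — is forced.

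Relatedly, the entire bootstrap scaffolding (the quantity $\calB_2''(U)$, the smallness lemmas \ref{lem:smallPtnl} and \ref{lem:smallDphi}, and the absorption via $\eps(u_2)\le\tfrac12$) is unnecessary here: that is the machinery of Proposition~\ref{prop:decay2:nullStr}, which the paper has already paid for. Once $A_2'$ is in hand, every term on the right of the non-renormalized equations decays fast enough in $u$ that the proposition follows from direct integration with no continuity argument and no smallness of potentials — only the arithmetic inequalities $3\omega - 1 > \omega + 1$ and $2(\omega+1) - 1 > 3$ for $\omega > 1$.
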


Once we establish Proposition \ref{prop:decay2:final}, the desired decay for $\rd_{v}^{2}(r \phi)$ and $\rd_{v} \dvr$ follow from Lemma \ref{lem:decay2:rDecay} and the fact that $r \geq 2 \Lmb^{-1} u$ in $\extr$. Furthermore, the desired decay for $\rd_{u}^{2}(r \phi)$ and $\rd_{u} \dur$ follow from Lemma \ref{lem:decay2:uDecayInExtr}.

\begin{proof} 
Thanks to the fact that we have pointwise bounds for sufficient number of derivatives (albeit with sub-optimal decay) near $\Gmm$ at this point, it suffices to work with the `non-renormalized' equations \eqref{eq:eq4dvdvrphi:normal}, \eqref{eq:eq4dudurphi:normal}, \eqref{eq:eq4dvdvr:normal} and \eqref{eq:eq4dudur:normal}. In particular, we need not utilize the null structure of \eqref{eq:SSESF}. 

Let $(u,v) \in \intr$ (i.e., $v \in [u, 3u]$) with $u \geq 3$. We begin with \eqref{eq:decay2:final:1}. Integrating $\rd_{u} \rd_{v}^{2}(r\phi)$ in the $u$-direction from $u/3$ to $u$, multiplying by $u^{\omg+1}$ and using $r(u/3, v) \geq (2/3) \Lmb^{-1} u$, we obtain
\begin{equation} \label{eq:decay2:final:1:pf}
	u^{\omg+1}\abs{\rd_{v}^{2}(r \phi)}(u,v)
	\leq C_{\Lmb} r^{\omg+1} \abs{\rd_{v}^{2}(r \phi)}(u/3, v) + u^{\omg+1} \int_{u/3}^{u} \abs{\rd_{u} \rd_{v}^{2}(r \phi)}(u', v) \, \ud u'.
\end{equation}

Since $(u/3, v) \in \extr$, the first term on the right-hand side is bounded by $\leq C_{\Lmb} \calI_{2} + C_{K, \Lmb, M_{i}} A_{1}^{3}$, thanks to Lemma \ref{lem:decay2:rDecay}. To estimate the $u'$-integral, we substitute $\rd_{u} \rd_{v}^{2} (r \phi)$ by \eqref{eq:eq4dvdvrphi:normal}. Then applying Proposition \ref{prop:geomLocBVScat}, Lemma \ref{lem:dphi}, Lemma \ref{lem:muOverR}, Theorem \ref{main.thm.1} and Proposition \ref{prop:decay2:nullStr}, we obtain
\begin{equation*}
	\abs{\rd_{u} \rd_{v}^{2}(r \phi)}(u', v)
	\leq C_{A_{1}, K, \Lmb} (u')^{-3\omg} A_{1} (A_{2}')^{2}.
\end{equation*}

Thus we have
\begin{equation} \label{eq:decay2:final:1:explicit}
	u^{\omg+1}\abs{\rd_{v}^{2}(r \phi)}(u,v)
	\leq C_{\Lmb} \calI_{2} + C_{K, \Lmb, M_{i}} A_{1}^{3} + C_{A_{1}, K, \Lmb} A_{1} (A_{2}')^{2},
\end{equation}
where we have used the fact that $\omg > 1$, and thus $3 \omg - 1 > \omg + 1$ to throw away the $u$-weight in the last term. This proves \eqref{eq:decay2:final:1}.

Next, we prove \eqref{eq:decay2:final:2}. Integrating $\rd_{v} \rd_{u}^{2} (r \phi)$ in the $v$-direction from $u+$ to $v$ and multiplying by $u^{\omg+1}$, we have
\begin{equation} \label{eq:decay2:final:2:pf}
	u^{\omg+1} \abs{\rd_{u}^{2}(r \phi)}(u,v)
	\leq \lim_{v' \to u+} u^{\omg+1} \abs{\rd_{u}^{2} (r \phi)}(u, v') + u^{\omg+1} \int_{u}^{3u} \abs{\rd_{v} \rd_{u}^{2} (r \phi)}(u, v') \, \ud v'.
\end{equation}

Recall that $\lim_{v' \to u+} \rd_{u}^{2} (r \phi)(u, v') = \lim_{v' \to u+} \rd_{v}^{2} (r \phi)(u, v')$, as $(\phi, r, m)$ is a $C^{1}$ solution. Thus the first term on the right-hand side can be estimated via \eqref{eq:decay2:final:1:explicit}. Substitute $\rd_{v} \rd_{u}^{2} (r \phi)$ by \eqref{eq:eq4dudurphi:normal} and apply, as before, Proposition \ref{prop:geomLocBVScat}, Lemma \ref{lem:dphi}, Lemma \ref{lem:muOverR}, Theorem \ref{main.thm.1} and Proposition \ref{prop:decay2:nullStr}. Then we have
\begin{equation*}
	\abs{\rd_{v} \rd_{u}^{2} (r \phi)}(u, v')
	\leq C_{A_{1}, K, \Lmb} u^{-3\omg} A_{1} (A_{2}')^{2}.
\end{equation*}

It now follows that 
\begin{equation} \label{eq:decay2:final:2:explicit}
	u^{\omg+1}\abs{\rd_{u}^{2}(r \phi)}(u,v)
	\leq C_{\Lmb} \calI_{2} + C_{K, \Lmb, M_{i}} A_{1}^{3} + C_{A_{1}, K, \Lmb} A_{1} (A_{2}')^{2},
\end{equation}
which proves \eqref{eq:decay2:final:2}.

At this point, combining Lemma \ref{lem:dphi}, Theorem \ref{main.thm.1}, Lemma \ref{lem:decay2:rDecay} and \eqref{eq:decay2:final:1:explicit}, note that we have the following improved $u$-decay for $\rd_{v} \phi$:
\begin{equation} \label{eq:decay2:final:impDvphi} 
\begin{aligned}
	\sup_{\PD} u^{\omg+1} \abs{\rd_{v}\phi} 
	\leq C_{\Lmb} \sup_{\PD} (u^{\omg+1} \abs{\rd_{v}^{2}(r \phi)} + u A_{1} \abs{\rd_{v} \dvr})
	\leq B
\end{aligned}
\end{equation}
where 
\begin{equation} \label{eq:decay2:final:aux}
B := C_{\Lmb} \calI_{2} + C_{K, \Lmb, M_{i}} A_{1}^{3} + C_{A_{1}, K, \Lmb} A_{1} (A_{2}')^{2} + C_{\Lmb} A_{1} A_{2}'.
\end{equation}

We now turn to \eqref{eq:decay2:final:3}. Integrating $\rd_{u} \rd_{v} \log \dvr$ in the $u$-direction from $u/3$ to $u$, multiplying by $u^{3}$ and using $r(u/3, v) \geq (2/3) \Lmb^{-1} u$, we obtain
\begin{equation} \label{eq:decay2:final:3:pf}
	u^{3} \abs{\rd_{v} \log \dvr }(u, v) \leq 
	C r^{3} \abs{\rd_{v} \log \dvr}(u/3, v) + u^{3} \int_{u/3}^{u} \abs{\rd_{u} \rd_{v} \log \dvr}(u', v) \, \ud u'
\end{equation}

Since $(u/3, v) \in \extr$, the first term on the right-hand side is estimated $\leq C_{K, \Lmb} A_{1}^{2}$ by Lemma \ref{lem:decay2:rDecay} and the fact that $\dvr^{-1} \leq \Lmb$. Next, substituting $\rd_{u} \log \dvr$ by \eqref{eq:eq4dvdvr:normal}, applying Proposition \ref{prop:geomLocBVScat}, Lemma \ref{lem:muOverR}, Lemma \ref{lem:dphi} and using the improved bound \eqref{eq:decay2:final:impDvphi}, we have 
\begin{equation*}
	\abs{\rd_{u} \rd_{v} \log \dvr}(u',v) \leq C_{K, \Lmb} B^{2} (u')^{-2(\omg+1)}.
\end{equation*}

Therefore
\begin{equation} \label{eq:decay2:final:3:explicit}
	u^{3} \abs{\rd_{v} \dvr}(u,v) \leq C_{K, \Lmb} A_{1}^{2} + C_{K, \Lmb} B^{2},
\end{equation}
where we used $2(\omg + 1) - 1 > 3$ to throw away the $u$-weight in the last term. This proves \eqref{eq:decay2:final:3}.

Finally, we prove \eqref{eq:decay2:final:4}. Integrating $\rd_{v} \rd_{u} \log \dur$ in the $v$-direction from $u+$ to $v$ and multiplying by $u^{3}$, we have
\begin{equation} \label{eq:decay2:final:4:pf}
	u^{3} \abs{\rd_{u} \log \dur}(u,v) \leq \lim_{v' \to u+} u^{3} \abs{\rd_{u} \log \dur}(u, v') + u^{3} \int_{u}^{3u} \abs{\rd_{v} \rd_{u} \log \dur}(u, v') \, \ud v'.
\end{equation}

Since $\lim_{v' \to u+} \rd_{u} \dur(u, v') = - \lim_{v' \to u+} \rd_{v} \dvr(u, v')$, the first term is bounded by \eqref{eq:decay2:final:3:explicit}. Furthermore, substituting $\rd_{v} \rd_{u} \log \dur$ by \eqref{eq:eq4dudur:normal} and applying Proposition \ref{prop:geomLocBVScat}, Lemma \ref{lem:muOverR}, Lemma \ref{lem:dphi} and using the improved bound \eqref{eq:decay2:final:impDvphi}, we have 
\begin{equation*}
	\abs{\rd_{v} \rd_{u} \log \dur}(u, v') \leq C_{K, \Lmb} B^{2} u^{-2(\omg+1)}.
\end{equation*}

As before, it follows that
\begin{equation} \label{eq:decay2:final:4:explicit}
	u^{3} \abs{\rd_{u} \dur}(u,v) \leq C_{K, \Lmb} A_{1}^{2} + C_{K, \Lmb} B^{2},
\end{equation}
which proves \eqref{eq:decay2:final:4}.
\end{proof}

\begin{remark} 
Combining \eqref{eq:decay2:final:1:explicit}, \eqref{eq:decay2:final:2:explicit}, \eqref{eq:decay2:final:3:explicit} and \eqref{eq:decay2:final:4:explicit}, we see that Proposition \ref{prop:decay2:final} holds with
\begin{equation} \label{eq:decay2:A2''}
	A_{2}'' \leq C_{\Lmb} \calI_{2} + C_{K, \Lmb, M_{i}} A_{1}^{3} + C_{A_{1}, K, \Lmb} A_{1} (A_{2}')^{2}
		+ C_{K, \Lmb} A_{1}^{2} + C_{K, \Lmb} B^{2}
\end{equation}
where $B$ is as in \eqref{eq:decay2:final:aux}.
\end{remark}

\begin{remark} 
According to the argument of this subsection, note that the size of $A_{2}'$ in Proposition \ref{prop:decay2:nullStr} depends on the choice of $u_{2}$ through the term $H_{2}'(u_{2})$, where the size of $u_{2}$ depends on the rate of convergence of $\eps''(u_{2}) \to 0$ as $u_{2} \to \infty$. This explains why $A_{2}$ does not depend only on the size of the initial data, as remarked in Section \ref{sec.main.thm}. On the other hand, as stated in Statement (2) of Theorem \ref{thm:smallData}, we shall show that in the case of small BV initial data, $A_2$ depends only on the size of the initial data. To achieve this, we show in Section \ref{sec:smallData} that we may take $u_{2} = 1$ under this small data assumption. 
\end{remark}

\subsection{Additional decay estimates}
As in the previous section, we conclude this section by providing additional decay rates concerning second derivatives of $\phi$, $r$ and improved decay for $m$ near $\Gmm$.

\begin{corollary} \label{cor:decay2}
Let $(\phi, r, m)$ be a locally BV scattering solution to \eqref{eq:SSESF} with asymptotically flat $C^{1}$ initial data of order $\omg'$.
Let $A_{1}$ and $A_{2}$ be the constants in Theorems \ref{main.thm.1} and \ref{main.thm.2}, respectively. Then 
the following bounds hold.
\begin{align} 
	\abs{\rd_{v} \phi} \leq & C_{\Lmb} (A_{1} + A_{2} + A_{1} A_{2}) \min \set{u^{-(\om+1)}, r^{-2} u^{-(\om-1)}} \label{eq:decay2:5} \\
	\abs{\rd_{u} \phi} \leq & C_{K, \Lmb} (A_{1} + A_{2} + A_{1} A_{2}) \min \set{u^{-(\om+1)}, r^{-1} u^{-\om}} \label{eq:decay2:6} \\
	\abs{\rd_{v}^{2} \phi} \leq & C_{\Lmb} (A_{1} + A_{2} + A_{1} A_{2}) \min \set{r^{-1} u^{-(\om+1)}, r^{-3} u^{-(\om-1)}}, \label{eq:decay2:7}\\
	\abs{\rd_{u} \rd_{v} \phi} \leq & C_{K, \Lmb} (A_{1} + A_{2} + A_{1} A_{2}) \min \set{r^{-1} u^{-(\om+1)}, r^{-2} u^{-\om}}, \label{eq:decay2:8}\\
	\abs{\rd_{u}^{2} \phi} \leq & C_{K, \Lmb} (A_{1} + A_{2} + A_{1} A_{2}) \, r^{-1} u^{-(\om+1)}, \label{eq:decay2:9}\\
	\abs{\rd_{u} \rd_{v} r} \leq & C_{K, \Lmb} (A_{1} + A_{2} + A_{1} A_{2})^{2} \min \set{r u^{-(2\om+2)}, r^{-2} u^{-(2\om-1)}}, \label{eq:decay2:10} \\
	m \leq & C_{K, \Lmb} (A_{1} + A_{2} + A_{1} A_{2})^{2} \min \set{r^{3} u^{-(2\om+2)}, u^{-(2\om-1)}}. \label{eq:decay2:11}
\end{align}
\end{corollary}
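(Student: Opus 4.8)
The plan is to mimic the proof of Corollary~\ref{cor:decay1}: express each of the listed quantities in terms of $r\phi$, $\rd_v(r\phi)$, $\rd_u(r\phi)$, $m$ and their derivatives by the Leibniz rule and the equations \eqref{eq:SSESF:dr}, \eqref{eq:SSESF:dphi}, and then substitute the pointwise bounds already at our disposal, namely Proposition~\ref{prop:geomLocBVScat}, Theorems~\ref{main.thm.1} and \ref{main.thm.2}, Corollary~\ref{cor:decay1}, and Lemmas~\ref{lem:dphi}, \ref{lem:muOverR}. In the regime $r\leq u$ the claimed minimum is the $u$-only term, and one simply uses the $u$-decay of the ingredients; in the regime $r\geq u$ (where also $r\geq 1$, so negative powers of $r^{-1}$ only help) one uses the $r$-weighted refinements.

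First I would establish \eqref{eq:decay2:5} and \eqref{eq:decay2:6}. The $u^{-(\omg+1)}$ decay of $\rd_v\phi$ is immediate from Lemma~\ref{lem:dphi} combined with $|\rd_v^2(r\phi)|\leq A_2 u^{-(\omg+1)}$, $|\rd_v\dvr|\leq A_2 u^{-3}$ and $|\rd_v(r\phi)|\leq A_1 u^{-\omg}$ (using $\omg+3>\omg+1$), while the $r^{-2}u^{-(\omg-1)}$ bound follows from $r\rd_v\phi=\rd_v(r\phi)-\dvr\phi$ and the $r$-decay parts of \eqref{eq:decay1:1}--\eqref{eq:decay1:2}. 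The bounds for $\rd_u\phi$ follow the same way, using the second inequality of Lemma~\ref{lem:dphi} together with $|\dur|\leq K$ for the $u$-part, and $r\rd_u\phi=\rd_u(r\phi)-\dur\phi$ for the $r$-part.

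Next I would upgrade the mass bound to \eqref{eq:decay2:11}: arguing exactly as in the proof of \eqref{eq:decay1:6}, i.e.\ writing $2m(u,v)=\int_u^v(1-\mu)\dvr^{-1}r^2(\rd_v\phi)^2$, splitting at the $v$-value where $r=r_1$, and choosing $r_1=u$, but now feeding in the sharper estimate \eqref{eq:decay2:5} for $\rd_v\phi$ in place of \eqref{eq:decay1:4}. The $r^3$ gain near $\Gmm$ comes out of $\int r^2\dvr=\tfrac13 r_1^3$; alternatively it is read off at once from $\mu/r^2\leq \tfrac{\Lmb^2}{3}(\sup|\rd_v\phi|)^2$ in Lemma~\ref{lem:muOverR}. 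Then \eqref{eq:decay2:10} is immediate from the identity $\rd_u\rd_v r=\tfrac{2m\dvr\dur}{(1-\mu)r^2}$ (a rewriting of \eqref{eq:SSESF:dr}), Proposition~\ref{prop:geomLocBVScat}, and \eqref{eq:decay2:11}, dividing the mass bound by $r^2$. Finally, the second-derivative bounds \eqref{eq:decay2:7}--\eqref{eq:decay2:9} follow from the Leibniz identities
\begin{align*}
	r\rd_v^2\phi &= \rd_v^2(r\phi)-2\dvr\,\rd_v\phi-(\rd_v\dvr)\phi, \\
	r\rd_u\rd_v\phi &= \rd_u\rd_v(r\phi)-\dur\,\rd_v\phi-\dvr\,\rd_u\phi-(\rd_u\rd_v r)\phi, \\
	r\rd_u^2\phi &= \rd_u^2(r\phi)-2\dur\,\rd_u\phi-(\rd_u\dur)\phi,
\end{align*}
in which $\rd_u\rd_v(r\phi)$ is rewritten in terms of $m$ and $\phi$ via \eqref{eq:SSESF:dphi}, together with Proposition~\ref{prop:geomLocBVScat}, Theorems~\ref{main.thm.1}, \ref{main.thm.2}, and the estimates \eqref{eq:decay2:5}, \eqref{eq:decay2:6}, \eqref{eq:decay2:10}, \eqref{eq:decay2:11} just obtained.

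The only genuine obstacle is bookkeeping, not ideas: making each substitution land on the stated minimum requires care near the axis, where $r\to 0$ forces one to use the improved mass bound \eqref{eq:decay2:11} rather than the cruder \eqref{eq:decay1:6}, and one repeatedly invokes $\omg>1$ to absorb surplus factors (e.g.\ $u^{2-2\omg}\leq 1$) and $r\geq u\geq 1$ in the $r$-improved regime so that extra negative powers of $r$ are harmless. Carrying out this case analysis term by term yields all of \eqref{eq:decay2:5}--\eqref{eq:decay2:11}.
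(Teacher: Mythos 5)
Your proposal is correct and takes essentially the same route as the paper: express each quantity via the Leibniz rule, substitute the known bounds, and handle the two regimes $r\lesssim u$ and $r\gtrsim u$ separately to produce the stated minima. The only cosmetic difference is in \eqref{eq:decay2:8}: you write the full Leibniz identity with the extra terms $\rd_u\rd_v(r\phi)$ and $(\rd_u\rd_v r)\phi$ and then invoke \eqref{eq:SSESF:dphi} to simplify, whereas the paper observes directly (using $\rd_u\rd_v(r\phi)=(\rd_u\dvr)\phi$ so that those two terms cancel) that $r\,\rd_u\rd_v\phi=-\dvr\,\rd_u\phi-\dur\,\rd_v\phi$; similarly, for \eqref{eq:decay2:11} you re-run the $r_1=u$ splitting from the proof of \eqref{eq:decay1:6} while the paper simply cites Lemma~\ref{lem:muOverR}, but both yield the same bound.
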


This corollary follows immediately from the estimates derived in Theorem \ref{main.thm.2}. We sketch the proof below.

\begin{proof} 
First, note that \eqref{eq:decay2:5} and \eqref{eq:decay2:6} follows from Corollary \ref{cor:decay1}, Theorem \ref{main.thm.2} and Lemma \ref{lem:dphi}. 
Next, \eqref{eq:decay2:7} and \eqref{eq:decay2:9} are easy consequences of the preceding estimates, Theorems \ref{main.thm.1}, \ref{main.thm.2} and the identities
\begin{equation*}
	r \rd_{v}^{2} \phi = \rd_{v}^{2} ( r\phi) - (\rd_{v} \dvr) \phi - 2 \dvr \rd_{v} \phi, \quad
	r \rd_{u}^{2} \phi = \rd_{u}^{2} ( r\phi) - (\rd_{u} \dur) \phi - 2 \dur \rd_{u} \phi.
\end{equation*}

On the other hand, for \eqref{eq:decay2:8}, we use the identity
\begin{equation*}
	r \rd_{u} \rd_{v} \phi = - \dvr \rd_{u} \phi - \dur \rd_{v} \phi,
\end{equation*}
which may be verified from \eqref{eq:SSESF:dr} and \eqref{eq:SSESF:dphi}.

Next, \eqref{eq:decay2:11} follows from Corollary \ref{cor:decay1}, Lemma \ref{lem:muOverR} and \eqref{eq:decay2:5}. Finally, using Corollary \ref{cor:est4dr}, Lemma \ref{lem:est4dur}, \eqref{eq:decay2:11} and the equation \eqref{eq:SSESF:dr}, we conclude \eqref{eq:decay2:10}. \qedhere
\end{proof}

\section{Decay and blow up at infinity}\label{sec.dichotomy}

In this section, we prove Theorem \ref{thm.dichotomy}, i.e., unless the solution blows up at infinity, a `future causally geodesically complete' solution scatters in BV.

Take a BV solution to \eqref{eq:SSESF} satisfying the hypotheses of Theorem \ref{thm.dichotomy}, which does not blow up at infinity. Note, in particular, that $\PD = \regR$ by $(1)$ of Definition \ref{def:locBVScat} and Lemma \ref{lem:regR}. In order to prove Theorem \ref{thm.dichotomy}, our goal is to show that such a spacetime is in fact BV scattering, i.e., (1), (2) and (3) in Definition \ref{def:locBVScat} hold and moreover (3) holds with $R=\infty$.

The main step will be to show that there exists a constant $C_{\Lambda}$ such that for every $\epsilon > 0$, there exists $U$ such that for every $u\geq U$, we have
\begin{equation}\label{dichotomy.goal}
\int_{C_u}|\rd_v^2(r\phi)| + \int_{C_u}|\rd_v \lambda| \leq C_{\Lambda} \epsilon.
\end{equation}
This will be achieved in a sequence of Lemmas and Propositions below.

Before we proceed, we first prove a preliminary bound on $\dvr$:
\begin{proposition}
There exists $0<\Lmb<\infty$ such that
\begin{equation} \label{dic.bnd4dvr}
	\Lmb^{-1} \leq \dvr(u,v) \leq \frac{1}{2}.
\end{equation}
\end{proposition}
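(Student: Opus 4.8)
The plan is to prove the two inequalities from unrelated ingredients. The upper bound $\dvr\le\tfrac12$ is purely a consequence of the coordinate normalization: since condition $(1)$ of Definition \ref{def:locBVScat} holds, Lemma \ref{lem:regR} gives $\PD=\regR$, so Lemma \ref{lem:basicEst4dr} applies and \eqref{eq:basicEst4dr:1} yields $\dvr(u,v)\le\dvr(1,v)=\tfrac12$, the last equality by the choice $v=2r+1$ on $C_1$ from \S\ref{subsec:coordSys}. (This is just Corollary \ref{cor:est4dr}.) So the real work is the lower bound, and the idea is to transport the information that the solution does not blow up at infinity --- which controls $\dvr$ on $\Gmm$ --- into the interior using the monotonicity of $\dvr$ in the $u$-direction.

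First, since item $(1)$ of Definition \ref{def.blow.up.infty} fails, there is a finite $\Lmb_0$ with $\dvr_\Gmm(u):=\lim_{v\to u+}\dvr(u,v)\ge\Lmb_0^{-1}$ for all $u\ge1$ (and necessarily $\Lmb_0\ge2$). Next, Lemma \ref{lem:mntn4kpp} gives $\rd_u\dvr=\rd_v\dur\le0$ on $\PD=\regR$, so for each fixed $v$ the function $u\mapsto\dvr(u,v)$ is non-increasing on $[1,v]$; in particular all the one-sided limits used below exist. Now fix $(u_0,u_0)\in\Gmm$. For $1\le u'<u_0<v$, monotonicity in $u$ on the incoming ray $\uC_{v}$ gives $\dvr(u',v)\ge\dvr(u_0,v)$. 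Letting $v\to u_0+$: the right side tends to $\dvr_\Gmm(u_0)$ by definition, while the left side tends to $\dvr(u',u_0)$ by right-continuity of the BV function $\dvr(u',\cdot)$ at the interior point $u_0>u'$. Hence $\dvr(u',u_0)\ge\dvr_\Gmm(u_0)$ for all $u'<u_0$; letting $u'\to u_0-$ gives $\lim_{u'\to u_0-}\dvr(u',u_0)\ge\dvr_\Gmm(u_0)\ge\Lmb_0^{-1}$. Finally, for an arbitrary $(u,v)\in\PD$, monotonicity in $u$ once more gives $\dvr(u,v)\ge\lim_{u'\to v-}\dvr(u',v)\ge\Lmb_0^{-1}$, so \eqref{dic.bnd4dvr} holds with $\Lmb=\Lmb_0$.

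The only delicate point is the direction in which the axis bound propagates. The monotonicity $\rd_u\dvr\le0$ says that along a fixed incoming ray $\uC_{v}$ the infimum of $\dvr$ is the limiting value at $\Gmm$, and the argument above shows precisely that this limiting value is $\ge$ the outgoing axis value $\dvr_\Gmm(v)$ --- the quantity that the non-blow-up hypothesis directly bounds. (Note that this preliminary estimate uses only the coordinate normalization, $\PD=\regR$, and the failure of item $(1)$ of Definition \ref{def.blow.up.infty}; the hypotheses $\lim_{v\to\infty}\phi(1,v)=0$ and $\int_{C_1}\abs{\rd_v^2(r\phi)}<\infty$ in Theorem \ref{thm.dichotomy} are not needed here, and will instead enter the subsequent steps toward \eqref{dichotomy.goal}.)
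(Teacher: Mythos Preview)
Your proof is correct and follows essentially the same approach as the paper: both combine the non-blow-up hypothesis (giving a lower bound for $\dvr_\Gmm$) with the monotonicity $\rd_u\dvr\le 0$ on $\PD=\regR$ to transport the axis bound into the interior, and both get the upper bound from Corollary~\ref{cor:est4dr}. The only notable difference is that the paper invokes \cite[Section 7]{Christodoulou:1993bt} to identify $\lim_{u'\to v-}\dvr(u',v)$ with $\dvr_\Gmm(v)$ and then applies \eqref{eq:basicEst4dr:2}, whereas you establish the needed inequality $\dvr(u',v)\ge\dvr_\Gmm(v)$ directly via the limiting argument with right-continuity of the BV function $\dvr(u',\cdot)$; your route is thus self-contained. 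Note, incidentally, that once you have $\dvr(u',u_0)\ge\dvr_\Gmm(u_0)$ for all $u'<u_0$, renaming $(u',u_0)\mapsto(u,v)$ already gives the lower bound on all of $\PD$, so the subsequent passage through $\lim_{u'\to u_0-}$ and a second use of monotonicity is redundant (though harmless).
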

\begin{proof}
By $(1)$ in Definition \ref{def.blow.up.infty}, there exists $0 < \Lmb < \infty$ such that $\sup \dvr_{\Gmm}^{-1} \leq \Lmb$. As $\lim_{u \to v-} \dvr_{\Gmm}(u) = \lim_{u \to v-} \dvr(u', v)$ (see \cite[Section 7]{Christodoulou:1993bt}), it follows from Lemma \ref{lem:basicEst4dr} that for every $(u,v) \in \PD$, we have the estimate \eqref{dic.bnd4dvr}.
\end{proof}

We now proceed to show \eqref{dichotomy.goal}. The first step is to show that for $u$ sufficiently large, the integrals along $C_u$ of $|F_1|$ and $|F_2|$ are small. Here, we recall the notation in the proof of Proposition \ref{prop:decay2:nullStr}, i.e.,
\begin{align*}
F_{1} := & \rd_{v}^{2} (r \phi) - (\rd_{v} \dvr) \phi, \\
	F_{2} := & \rd_{v} \log \dvr - \frac{\dvr}{(1-\mu)} \frac{\mu}{r} + \rd_{v} \phi \bb( \dvr^{-1} \rd_{v} (r \phi) - \dur^{-1} \rd_{u} ( r \phi) \bb).
\end{align*}
Once we obtain the desired bounds for $F_1$ and $F_2$, we then derive \eqref{dichotomy.goal} from these bounds. This will be the most technical part (see discussions in Remark \ref{technical.rmk.dichotomy}).

First, we bound the integrals of $F_1$ and $F_2$ in the following proposition:
\begin{proposition}\label{8.1.1}
For every $\ep>0$, there exists $V$ sufficiently large such that the following bound holds for $u\geq V$:
\begin{equation} \label{dichotomy.beginning}
\int_{C_u} (|F_1|+|F_2|)(u,v)  \leq 3\ep.
\end{equation}
\end{proposition}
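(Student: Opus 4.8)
The plan is to exploit the renormalized transport equations for $F_1$ and $F_2$ in the incoming direction, \eqref{eq:eq4dvdvrphi} and \eqref{eq:eq4dvdvr}, together with the \emph{only} information that the no-blow-up hypothesis provides. The starting observation is that $\rd_u F_1 = \rd_u\dvr\,\rd_v\phi - \rd_v\dvr\,\rd_u\phi$ is, up to sign, precisely the integrand appearing in condition $(2)$ of Definition \ref{def.blow.up.infty}, and $\rd_u F_2 = \rd_u\phi\,\rd_v\bb( \dur^{-1}\rd_u(r\phi) \bb) - \rd_v\phi\,\rd_u\bb( \dur^{-1}\rd_u(r\phi) \bb)$ is precisely the integrand in condition $(3)$. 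Since the solution under consideration does not blow up at infinity, conditions $(2)$ and $(3)$ both fail, which is exactly the statement that $\int_1^\infty\int_u^\infty\bb( |\rd_u F_1| + |\rd_u F_2| \bb)\, \ud v\,\ud u < \infty$; in other words $|\rd_u F_1|$ and $|\rd_u F_2|$ are finite measures on all of $\PD$.

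Next I would integrate each transport equation along the incoming null segment from $(1,v)$ to $(u,v)$ (which lies in $\PD$ since $v\ge u\ge u'\ge 1$ there): for $i=1,2$ and $(u,v)\in\PD$ with $u\ge 1$, $|F_i(u,v)| \le |F_i(1,v)| + \int_1^u |\rd_u F_i|(u',v)\,\ud u'$. Integrating in $v$ over $C_u$ (which covers $v\in[u,\infty)$) and applying Tonelli's theorem gives $\int_{C_u}|F_i| \le \int_u^\infty |F_i(1,v)|\,\ud v + \int\!\!\int_{\{(u',v)\in\PD\,:\,v\ge u\}}|\rd_u F_i|$. The second term tends to $0$ as $u\to\infty$, because $|\rd_u F_i|$ is a finite measure on $\PD$ and the sets $\{(u',v)\in\PD : v\ge u\}$ decrease to the empty set. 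It therefore remains to control the initial-data tails $\int_u^\infty |F_i(1,v)|\,\ud v$, which vanish as $u\to\infty$ once we know $\int_{C_1}\bb( |F_1(1,v)| + |F_2(1,v)| \bb)\,\ud v < \infty$. Here the coordinate normalization $\dvr(1,\cdot)\equiv\tfrac12$ is decisive: it forces $\rd_v\dvr = 0$ and $\rd_v\log\dvr = 0$ on $C_1$, so $F_1(1,\cdot)$ reduces to $\rd_v^2(r\phi)(1,\cdot)$, which has finite total variation by \eqref{dichotomy.hyp}; while $F_2(1,\cdot) = -\frac{\dvr}{1-\mu}\frac{\mu}{r}(1,\cdot) + \rd_v\phi\,\bb( \dvr^{-1}\rd_v(r\phi) - \dur^{-1}\rd_u(r\phi) \bb)(1,\cdot)$, whose $L^1(C_1)$ bound I would obtain from the preliminary bound \eqref{dic.bnd4dvr}, Lemma \ref{lem:basicEst4dr}, the constraint equations, and the identity \eqref{eq:auxEqs:1} of Lemma \ref{lem:auxEqs} (which bounds $\int_{C_1}\frac{\mu}{1-\mu}\frac{\dvr}{r}$ and $\int_{C_1}\dvr^{-1}r(\rd_v\phi)^2$ in terms of $M_i<\infty$ and $\log(1-\mu)(1,v)\to 0$), together with the hypotheses $\sup_{C_1}|\rd_v(r\phi)|<\infty$ and $\phi(1,v)\to 0$, which also yield the needed control of the $\dur$-factors along $C_1$. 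Allotting a budget of $\eps$ to each of the two initial tails and $\eps$ to the combined bulk term then gives $\int_{C_u}(|F_1|+|F_2|)\le 3\eps$ for all $u$ beyond some $V$.

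The genuinely delicate point is not the skeleton above but its execution in the bounded-variation framework. For fixed $u$, $F_1(u,\cdot)$ and $F_2(u,\cdot)$ are measures in $v$ rather than functions, $\rd_u F_1$ and $\rd_u F_2$ must be interpreted as signed measures on $\PD$, and one needs both a fundamental-theorem-of-calculus statement and a Fubini--Tonelli statement valid for these measure-valued families; one must also check that each bilinear term on the right-hand sides of \eqref{eq:eq4dvdvrphi}, \eqref{eq:eq4dvdvr} (e.g.\ $(\rd_v\dvr)\rd_u\phi$ or $\rd_v\phi\,\rd_u(\dur^{-1}\rd_u(r\phi))$) is well defined, one factor being continuous or bounded and the other belonging to the appropriate measure or $L^1$ class. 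These regularity issues, together with the verification that $\int_{C_1}|F_2(1,v)|\,\ud v<\infty$, are where the real work lies; I expect them to occupy most of the argument, as anticipated in Remark \ref{technical.rmk.dichotomy}.
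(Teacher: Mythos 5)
Your proposal is correct and follows essentially the same route as the paper: the key observation that $\rd_u F_1$ and $\rd_u F_2$ are (up to sign) precisely the integrands appearing in conditions $(2)$ and $(3)$ of Definition~\ref{def.blow.up.infty}, integration of the renormalized transport equations from $C_1$, and separate control of the data tail and the bulk double integral. Your treatment of the bulk term via finiteness of $|\rd_u F_i|$ as a measure and the decrease of $\set{(u',v)\in\PD : v\geq u}$ to the empty set is in fact slightly more carefully justified than the paper's terse "thus" at the corresponding step, and your sketch of $\int_{C_1}(|F_1|+|F_2|)<\infty$ via $\dvr(1,\cdot)\equiv\tfrac12$, \eqref{dichotomy.hyp}, and Lemma~\ref{lem:auxEqs} fills in what the paper dismisses as evident.
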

\begin{proof}
By $(2)$ and $(3)$ in Definition \ref{def.blow.up.infty}, we have
$$\int_1^{\infty}\int_u^{\infty}|\rd_v\lambda\rd_u\phi-  \rd_u\lambda\rd_v\phi  | \ud v \ud u <\infty,$$
and
$$\int_1^{\infty}\int_u^{\infty}|\rd_u\phi\rd_v(\nu^{-1}\rd_u(r\phi))-\rd_v\phi\rd_u(\nu^{-1}\rd_u(r\phi))| \ud v \ud u <\infty.$$

Thus, by choosing $V$ sufficiently large, we have
\be
\int_V^{\infty}\int_u^{\infty}|\rd_v\lambda\rd_u\phi-\rd_v\lambda\rd_v\phi| \ud v \ud u <\ep,\label{non.blowup.1}
\ee
and
\be
\int_V^{\infty}\int_u^{\infty}|\rd_u\phi\rd_v(\nu^{-1}\rd_u(r\phi))-\rd_v\phi\rd_u(\nu^{-1}\rd_u(r\phi))| \ud v \ud u <\ep.\label{non.blowup.2}
\ee

From the initial conditions, we easily see that $F_{1}(1, \cdot)$, $F_{2}(1, \cdot)$ obey $\int_{C_{1}} \abs{F_{1}} + \abs{F_{2}} < \infty$. 
Thus, by choosing $V$ larger if necessary, we have
\be
\int_V^{\infty} (|F_1|+|F_2|)(1,v) \ud v\leq \ep.\label{data.BV.bd}
\ee
Notice that by equations \eqref{eq:decay2:nullStr:pf:1} and \eqref{eq:decay2:nullStr:pf:2}, the estimates \eqref{non.blowup.1} and \eqref{non.blowup.2} control $\iint |\rd_u F_1| \ud u \ud v$ and $\iint |\rd_u F_2| \ud u \ud v$. Thus, we have
$$\int_{\max\set{u, V}}^{\infty} (|F_1|+|F_2|)(u,v) \ud v\leq 3\ep.$$
for every $u\geq 1$. In particular, for $u\geq V$, we have
\begin{equation*}
\int_{C_u} (|F_1|+|F_2|)(u,v)  \leq 3\ep,
\end{equation*}
as desired.
\end{proof}

The inequality \eqref{dichotomy.beginning} is the starting point for our proof of \eqref{dichotomy.goal}. More precisely, our basic strategy is to use a continuous induction on $v$, beginning from the axis, to remove the quadratic and higher terms from \eqref{dichotomy.beginning} and infer \eqref{dichotomy.goal}. 

\begin{remark} \label{technical.rmk.dichotomy}
Before beginning the proof in earnest, we would like to point out two technical nuisances that we confront: 
First, in order to estimate the scalar field $\phi$ itself from $F_{1}$ and $F_{2}$, we need to integrate essentially from null infinity\footnote{More precisely, $\phi$ is determined from $\rd_{v} (r \phi)$, which in turn can be determined from $\int \abs{\rd_{v}^{2}(r \phi)}$ by integrating from $v = \infty$. Another conceptual reason why information near $v = \infty$ is relevant for estimating $\phi$ is that the initial condition $\lim_{v \to \infty} \phi(1, v) = 0$ implies that $\lim_{v \to \infty} \phi(u, v)= 0$ for every $u \geq 1$. See the discussion before \eqref{dic.0.2}.}, which is opposite to the direction of our method of continuity. Second, as $\rd_{v}(r \phi)$ is only assumed to be BV, the left-hand side of \eqref{dichotomy.goal} is not continuous in $v$ in general. To overcome the first, we make use of the invariance of \eqref{eq:SSESF} and $F_{1}$, $F_{2}$ under the change $\phi \mapsto \phi + c$. To take care of the second, we carefully keep track of the evolution of discontinuities of $\rd_{v}(r \phi)$.
\end{remark}

Notice that in order to obtain \eqref{dichotomy.goal} from \eqref{dichotomy.beginning}, we only need to integrate on a \emph{fixed} hypersurface $C_u$. We now fix $u_0\geq V$ and define a new function $\overline{\phi}_{u_{0}}$ by
\begin{equation} \label{dic.0.0}
	\overline{\phi}_{u_{0}}(u,v) := \phi(u,v) - \lim_{v' \to u_{0} +} \phi(u_{0}, v').
\end{equation}

As remarked before, note that \eqref{eq:SSESF} is invariant under the change $(\phi, r, m) \mapsto (\overline{\phi}_{u_{0}}, r, m)$, i.e., $(\overline{\phi}_{u_{0}}, r, m)$ is still a solution to \eqref{eq:SSESF}. Moreover, it is easy to check that $F_{1}$ and $F_{2}$ are also invariant under this change, i.e.,
\begin{equation} \label{eq:inv4F12}
\begin{aligned}
	F_{1} =& \rd_{v}^{2} (r \overline{\phi}_{u_{0}}) - (\rd_{v} \dvr) \overline{\phi}_{u_{0}} \\
	F_{2} =& \rd_{v} \log \dvr - \frac{\dvr}{(1-\mu)} \frac{\mu}{r} +  \rd_{v} \overline{\phi}_{u_{0}} \bb( \dvr^{-1} \rd_{v} (r \overline{\phi}_{u_{0}}) - \dur^{-1} \rd_{u} ( r \overline{\phi}_{u_{0}}) \bb).
\end{aligned}
\end{equation}

The new scalar field has been chosen so that $\overline{\phi}_{u_{0}}(u_{0}, \cdot)$ and $\rd_{v}(r \overline{\phi}_{u_{0}})(u_{0}, \cdot)$ vanish at the axis, i.e.,
\begin{equation} \label{dic.0.1}
\lim_{v \to u_{0}+}\overline{\phi}_{u_{0}}(u_{0},v) = \lim_{v \to u_{0}+}\rd_{v}(r \overline{\phi}_{u_{0}}) (u_{0},v) = \lim_{v \to u_{0}+} \rd_{u}(r \overline{\phi}_{u_{0}})(u_{0},v) = 0.
\end{equation}

We claim that the original scalar field $\phi(u, v)$ obeys the condition 
\begin{equation}\label{claim.vanishing}
\lim_{v \to \infty} \phi(u_{0}, v) = 0
\end{equation} 
for every $u_{0} \geq 1$.
Therefore, by the definition given in \eqref{dic.0.0}, we see that $\phi$ and $\overline{\phi}_{u_0}$ are also related by
\begin{equation} \label{dic.0.2}
	\phi(u,v) = \overline{\phi}_{u_{0}}(u,v) - \lim_{v' \to \infty} \overline{\phi}_{u_{0}}(u, v').
\end{equation}

To establish the claim \eqref{claim.vanishing}, we proceed as in the proof of Lemma~\ref{lem:decay1:cptu:0}, but work with $\phi$ rather than $r \phi$. 
Fix $u_{0} > 1$ and let $r_{1} > 0$ be a large number to be determined. For each $u \geq 1$, let $v^{\star}_{1}(u)$ be the unique $v$-value such that $r(u, v^{\star}_{1}(u)) = r_{1}$. Consider $(u, v) \in \set{1 \leq u \leq u_{0}} \cap \set{r \geq r_{1}}$. Using the uniform bound of $m$ and $\frac{\dvr}{1-\mu}$ in terms of the data at $u= 1$ (which holds thanks to monotonicity), we may integrate \eqref{eq:SSESF:dphi} along the incoming direction to estimate
\begin{equation*}
	\abs{\rd_{v} (r \phi) (u, v) - \rd_{v} (r \phi)(1, v)} 
	\leq \frac{C_{0}}{r(u, v)} \sup_{u' \in [1, u]} \abs{\phi(u', v)},
\end{equation*}
where $C_{0}$ depends only on the data at $u=1$. Integrating both sides in the outgoing direction from $v_{1}^{\star}(u)$ to $v$ (using Lemma~\ref{dic.bnd4dvr} for the right-hand side) and dividing by $r = r(u, v)$, we obtain
\begin{equation} \label{claim.vanishing.key}
\begin{aligned}
	\abs{\phi(u, v)} \leq & \frac{r_{1}}{r} \abs{\phi(u, v^{\star}_{1}(u))} + \frac{r(1, v^{\star}_{1}(u))}{r} \abs{\phi(1, v^{\star}_{1}(u))} \\
	& + \frac{r(1, v)}{r} \abs{\phi(1, v)}		+ \frac{C_{0} \Lmb}{r} \log \bb( \frac{r}{r_{1}}\bb) \sup_{1 \leq u' \leq u, \, v \geq v^{\ast}_{1}(u)} \abs{\phi}.
\end{aligned}
\end{equation}
Now the idea is to use \eqref{claim.vanishing.key} to first show that $\phi$ is bounded on the region $\set{1 \leq u \leq u_{0}}$, and then use \eqref{claim.vanishing.key} again with the additional boundedness of $\phi$ to conclude that \eqref{claim.vanishing} holds.
To begin with, observe that $\phi$ is bounded on each set compact subset of $\PD$, since it is a BV solution in the sense of Definition~\ref{def:BVsolution}. Combined with the hypothesis that $\phi(1, v) \to 0$ as $v \to \infty$, we see that the first three terms are bounded by a constant that depends on $r_{1}$. On the other hand, by taking $r_{1}$ sufficiently large, the coefficient $(C_{0} \Lmb / r) \log ( r / r_{1} )$ of the last term can be made arbitrarily small for $r \geq r_{1}$. This smallness allows us to absorb the last term to the left-hand side, and conclude the desired boundedness of $\phi$ on the region $\set{1 \leq u \leq u_{0}}$.
Then plugging in $u = u_{0}$ and the uniform bound for $\phi$ into \eqref{claim.vanishing.key}, the claim \eqref{claim.vanishing} follows from the hypothesis $\lim_{v \to \infty} \phi(1, v)  = 0$.

Let
\beaa
I_1(u, v)&:=&\int_{u}^v |\rd_v^2(r \overline\phi_{u_{0}})| (u,v')dv'  ,\quad I_2(u, v):=\int_{u}^v |\rd_v\lambda | (u,v')dv' .
\eeaa

In the following two lemmas, we will show that
\begin{align} 
I_1(u_0, v) \leq & 3\epsilon+ C_{\Lambda} I_{1}(u_0,v) I_2(u_0,v) , \label{main.ineq.dichotomy.1} \\
I_2(u_0, v)\leq & 3\ep+ C_{\Lmb} I_{1}(u_0,v)^{2} (1+I_{1}(u_0,v))^{2} (1+I_{2}(u_0,v))^{2} e^{C_{\Lmb} I_{1}(u_0,v)^{2} (1+I_{2}(u_0,v))} \label{main.ineq.dichotomy.2}
\end{align}
for every $V \leq u_0 \leq v$, with $C_{\Lmb}$ independent of $u_0$ and $v$.

\begin{lemma} \label{lem.dic.1}
There exists a constant $C_{\Lmb} > 0$ such that for every $V \leq u_0 \leq v$,
\begin{equation*}
I_1(u_0,v) \leq 3\epsilon+ C_{\Lambda} I_{1}(u_0,v) I_2(u_0,v).
\end{equation*}
\end{lemma}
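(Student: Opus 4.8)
The plan is to feed the smallness $\int_{C_{u_0}}(\abs{F_1}+\abs{F_2}) \le 3\ep$ from Proposition \ref{8.1.1} into the identity \eqref{eq:inv4F12}, namely $\rd_v^2(r\overline{\phi}_{u_0}) = F_1 + (\rd_v \dvr)\,\overline{\phi}_{u_0}$, and then absorb the quadratic term $(\rd_v\dvr)\overline{\phi}_{u_0}$ using a bound of the form $\abs{\overline{\phi}_{u_0}} \le C_\Lmb\, I_1$ along $C_{u_0}$. Concretely, for $V \le u_0 \le v$ I would take absolute values in the identity above, integrate in $v'$ over $[u_0,v]$, and use the convention $\int_{C_{u_0}} = \int_{u_0}^{\infty}$ to get
$$I_1(u_0,v) \le \int_{u_0}^{v} \abs{F_1}(u_0,v')\,dv' + \Big(\sup_{v'\in[u_0,v]}\abs{\overline{\phi}_{u_0}(u_0,v')}\Big)\, I_2(u_0,v) \le 3\ep + \Big(\sup_{v'\in[u_0,v]}\abs{\overline{\phi}_{u_0}(u_0,v')}\Big)\, I_2(u_0,v),$$
where the last inequality is exactly Proposition \ref{8.1.1} (since $\int_{u_0}^{v}\abs{F_1} \le \int_{C_{u_0}}(\abs{F_1}+\abs{F_2}) \le 3\ep$).

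The one substantive step is the pointwise bound on $\overline{\phi}_{u_0}$ along $C_{u_0}$. By the normalization \eqref{dic.0.1}, $\rd_v(r\overline{\phi}_{u_0})(u_0,\cdot)$ vanishes at the axis, so for $v'' \in [u_0,v]$ the fundamental theorem of calculus gives $\abs{\rd_v(r\overline{\phi}_{u_0})(u_0,v'')} \le \int_{u_0}^{v''}\abs{\rd_v^2(r\overline{\phi}_{u_0})}(u_0,w)\,dw = I_1(u_0,v'') \le I_1(u_0,v)$. Since $r\overline{\phi}_{u_0}(u_0,u_0)=0$ and $\dvr^{-1} \le \Lmb$ by \eqref{dic.bnd4dvr}, applying Lemma \ref{lem:est4phi} (specifically \eqref{eq:intEst4phi:1}) to the solution $(\overline{\phi}_{u_0}, r, m)$ yields $\abs{\overline{\phi}_{u_0}(u_0,v')} \le \sup_{v''\in[u_0,v']}\abs{\rd_v(r\overline{\phi}_{u_0})/\dvr\,(u_0,v'')} \le \Lmb\, I_1(u_0,v)$ for every $v'\in[u_0,v]$. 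Plugging this into the previous display gives $I_1(u_0,v) \le 3\ep + \Lmb\, I_1(u_0,v)\, I_2(u_0,v)$, which is the claim with $C_\Lmb = \Lmb$.

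There is no real obstacle in this lemma beyond a bookkeeping point that must be checked first: $I_1(u_0,v)$ is finite for the finite values of $v$ under consideration, because $[u_0,v]$ lies in a compact domain of dependence and $\rd_v(r\phi)$ — hence $\rd_v(r\overline{\phi}_{u_0}) = \rd_v(r\phi) - c\,\dvr$ for the constant $c = \lim_{v'\to u_0+}\phi(u_0,v')$ — is BV there by Definition \ref{def:BVsolution}, so $\rd_v^2(r\overline{\phi}_{u_0})$ is a finite measure on $[u_0,v]$ and the FTC step above is legitimate. (One also uses here that $\phi$, and thus $\overline{\phi}_{u_0}$, is absolutely continuous on $C_{u_0}$, so Lemma \ref{lem:est4phi} applies.) The genuine difficulty of the dichotomy is deferred to the companion estimate \eqref{main.ineq.dichotomy.2} for $I_2$, where one additionally has to track the discontinuities of $\rd_v(r\phi)$ and run the continuity argument in $v$ starting from the axis.
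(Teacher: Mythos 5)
Your argument is correct and follows essentially the same route as the paper: write $\rd_v^2(r\overline{\phi}_{u_0}) = F_1 + (\rd_v\dvr)\overline{\phi}_{u_0}$, integrate, use Proposition \ref{8.1.1} to bound $\int|F_1|$ by $3\ep$, and absorb the quadratic term by pointwise-bounding $\overline{\phi}_{u_0}$ along $C_{u_0}$ via Lemma \ref{lem:est4phi} and the normalization \eqref{dic.0.1}, exactly as in the paper's proof. The extra bookkeeping you add (finiteness of $I_1$ on compact intervals, absolute continuity so that Lemma \ref{lem:est4phi} applies) is fine and consistent with Definition \ref{def:BVsolution}.
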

\begin{proof}
In this proof, we fix $u_0 \geq V$ and use the abbreviations
\begin{equation} \label{eq:abbrev-overline-phi}
\overline{\phi} := \overline{\phi}_{u_{0}} , \quad
\rd_{v}(r \overline{\phi}) := \rd_{v}(r \overline{\phi}_{u_{0}}) 
\ \hbox{ and } \ 
\rd_{v}^{2}(r \overline{\phi}) := \rd_{v}^{2}(r \overline{\phi}_{u_{0}}).
\end{equation}
By Lemma \ref{lem:est4phi}, we have
\begin{equation} \label{phi.est}
	\abs{\overline{\phi}(u_0,v)} 
	\leq \frac{1}{r} \int_{u_0}^{v} \rd_{v}(r \overline{\phi}) (u_0, v') \, \ud v' 
	\leq \Lmb \sup_{u_0 \leq v' \leq v} \abs{\rd_{v}(r \overline{\phi})(u_0,v')}.
\end{equation}

By the fundamental theorem of calculus and \eqref{dic.0.1}, note that
\begin{equation} \label{dic.1.0}
\sup_{u_0 \leq v' \leq v} |\rd_{v}(r \overline{\phi})(u_0,v')|\leq I_1(u_0,v).
\end{equation}

Thus, recalling the definition of $F_1$ in \eqref{eq:inv4F12}, we have
\begin{equation*}
\begin{split}
I_1(u_0,v) \leq &\int_{u_0}^v |F_1(u_0,v')|dv'+\int_{u_0}^v|\rd_v\dvr||\overline{\phi}|(u_0,v')dv' \\
\leq &\int_{u_0}^v |F_1(u_0,v')|dv'+ \Lmb\,  I_1(u_0,v) I_2(u_0,v) 
		\leq  3\ep+C_{\Lambda} I_{1}(u_0,v) I_2(u_0,v). \qedhere
\end{split}
\end{equation*}
\end{proof}

We now move on to estimate $I_2(u_0,v)$.
\begin{lemma} \label{lem.dic.2}
There exists a constant $C_{\Lmb} > 0$ such that for every $V \leq u_0 \leq v$,
\begin{equation*} 
I_2(u_0,v)\leq 3\ep+ C_{\Lmb} I_{1}(u_0,v)^{2} (1+I_{1}(u_0,v))^{2} (1+I_{2}(u_0,v))^{2} e^{C_{\Lmb} I_{1}(u_0,v)^{2} (1+I_{2}(u_0,v))}. 
\end{equation*}
\end{lemma}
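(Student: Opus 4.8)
Looking at this, I need to prove the bound on $I_2(u_0,v)$ analogous to Lemma \ref{lem.dic.1}, but this one is more delicate because it involves the equation for $F_2$ and coupling with the $\rd_u$-derivative of $r\overline\phi$.

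\begin{proof}
As in the previous lemma, fix $u_0 \geq V$ and use the abbreviation \eqref{eq:abbrev-overline-phi}, and also write $\rd_u(r\overline\phi) := \rd_u(r \overline{\phi}_{u_0})$. Recalling the definition of $F_2$ in \eqref{eq:inv4F12} and that $\rd_v\log\dvr = \dvr^{-1}\rd_v\dvr$ with $\dvr \leq \tfrac12$, we have from \eqref{dic.bnd4dvr}
\begin{equation*}
I_2(u_0,v) \leq \tfrac12\int_{u_0}^v \abs{\rd_v\log\dvr}(u_0,v')\,\ud v' \leq \tfrac12\int_{u_0}^v\abs{F_2}(u_0,v')\,\ud v' + \tfrac12\int_{u_0}^v\bb(\tfrac{\dvr}{1-\mu}\tfrac{\mu}{r} + \abs{\rd_v\overline\phi}\abs{\dvr^{-1}\rd_v(r\overline\phi) - \dur^{-1}\rd_u(r\overline\phi)}\bb)(u_0,v')\,\ud v'.
\end{equation*}
The first term is $\leq \tfrac32\ep$ by Proposition \ref{8.1.1}. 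For the remaining terms I will bound each factor on $C_{u_0}$ by powers of $I_1(u_0,v)$ and $I_2(u_0,v)$. The key quantities to control are: (i) $\sup_{C_{u_0}}\abs{\rd_v(r\overline\phi)}$, already bounded by $I_1(u_0,v)$ via \eqref{dic.1.0}; (ii) $\sup_{C_{u_0}}\abs{\overline\phi}$, bounded by $\Lmb I_1(u_0,v)$ via \eqref{phi.est}; (iii) $\int_{C_{u_0}}\abs{\rd_v\overline\phi}$; (iv) $\tfrac{\mu}{r}$; and (v) $\sup_{C_{u_0}}\abs{\dur^{-1}\rd_u(r\overline\phi)}$.

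For (iii), Lemma \ref{lem:est4dvphi} gives $\int_{u_0}^v\abs{\rd_v\overline\phi} \leq \int_{u_0}^v\abs{\rd_v(\dvr^{-1}\rd_v(r\overline\phi))}$, and expanding the right side yields a bound $\leq \Lmb I_1(u_0,v) + \Lmb^2\sup_{C_{u_0}}\abs{\rd_v(r\overline\phi)}\, I_2(u_0,v) \leq C_\Lmb I_1(u_0,v)(1+I_2(u_0,v))$. For (iv), using \eqref{eq:muOverR:1} from Lemma \ref{lem:muOverR} together with the identity $\dvr^{-1}r\rd_v\overline\phi = \dvr^{-1}\rd_v(r\overline\phi) - \overline\phi$, one estimates $\tfrac{\mu}{r}(u_0,v) \leq C_\Lmb I_1(u_0,v)\cdot\int_{u_0}^v\abs{\rd_v\overline\phi} \leq C_\Lmb I_1(u_0,v)^2(1+I_2(u_0,v))$; integrating the $\tfrac{\dvr}{1-\mu}\tfrac{\mu}{r}$ term over $C_{u_0}$ picks up a factor $\int_{u_0}^v\dvr \leq \tfrac12 r(u_0,v)$, but since $\mu = 2m/r$, the combination $\tfrac{\mu}{r}\dvr$ integrated gives a bound controlled by $m(u_0,v)$ which in turn is $\leq C_\Lmb(\sup_{C_{u_0}}\abs{\rd_v\overline\phi})\int_{u_0}^v\abs{\rd_v\overline\phi}$ by \eqref{eq:SSESF:dm} — hence $\leq C_\Lmb I_1(u_0,v)^2(1+I_2(u_0,v))$.

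The main obstacle is (v): controlling $\dur^{-1}\rd_u(r\overline\phi)$ on $C_{u_0}$. This requires integrating the identity $\rd_v(\dur^{-1}\rd_u(r\overline\phi)) = -\tfrac{2m}{(1-\mu)r^2}\dvr(\dur^{-1}\rd_u(r\overline\phi) - \overline\phi)$ (from \eqref{eq:SSESF:dr}, \eqref{eq:SSESF:dphi}) in the $v$-direction from the axis, where $\dur^{-1}\rd_u(r\overline\phi)$ vanishes by \eqref{dic.0.1}. By Grönwall, using that $\int_{u_0}^v\tfrac{2m\dvr}{(1-\mu)r^2} \leq C_\Lmb\, m(u_0,v) \leq C_\Lmb I_1(u_0,v)^2(1+I_2(u_0,v))$ from the $m$-bound above, one gets $\sup_{C_{u_0}}\abs{\dur^{-1}\rd_u(r\overline\phi)} \leq C_\Lmb I_1(u_0,v)(1+I_1(u_0,v))\cdot[\text{something}]\cdot e^{C_\Lmb I_1(u_0,v)^2(1+I_2(u_0,v))}$, where the polynomial prefactor absorbs $\sup_{C_{u_0}}\abs{\overline\phi} \leq \Lmb I_1(u_0,v)$ and the Grönwall iteration. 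Assembling all pieces: the surviving nonlinear term in $I_2(u_0,v)$ is
\begin{equation*}
\int_{u_0}^v\abs{\rd_v\overline\phi}\bb(\abs{\dvr^{-1}\rd_v(r\overline\phi)} + \abs{\dur^{-1}\rd_u(r\overline\phi)}\bb) \leq \bb(\int_{u_0}^v\abs{\rd_v\overline\phi}\bb)\cdot\bb(\Lmb I_1(u_0,v) + \sup_{C_{u_0}}\abs{\dur^{-1}\rd_u(r\overline\phi)}\bb),
\end{equation*}
which upon inserting the bounds for (iii) and (v) is bounded by $C_\Lmb I_1(u_0,v)^2(1+I_1(u_0,v))^2(1+I_2(u_0,v))^2 e^{C_\Lmb I_1(u_0,v)^2(1+I_2(u_0,v))}$. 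Collecting the $\tfrac32\ep$ from $F_2$, the $\tfrac12\ep$-type contributions absorbed into constants, and choosing constants appropriately, we obtain the claimed inequality
\begin{equation*}
I_2(u_0,v) \leq 3\ep + C_\Lmb I_1(u_0,v)^2(1+I_1(u_0,v))^2(1+I_2(u_0,v))^2 e^{C_\Lmb I_1(u_0,v)^2(1+I_2(u_0,v))},
\end{equation*}
with $C_\Lmb$ independent of $u_0$ and $v$. \qedhere
\end{proof}
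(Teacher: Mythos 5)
The main structure of your argument mirrors the paper's: bound $I_2$ via the integral of $F_2$ and the remaining terms, and control $\dur^{-1}\rd_u(r\overline\phi)$ by integrating an ODE along $C_{u_0}$ starting from $\Gmm$, picking up the exponential via Gr\"onwall. However, there is a genuine gap in how you handle $\int_{u_0}^{v} \frac{\dvr}{1-\mu}\frac{\mu}{r}$ (and the closely related quantity that serves as the Gr\"onwall exponent): your bound on item (iv), and your assertion ``$\int_{u_0}^v\tfrac{2m\dvr}{(1-\mu)r^2} \leq C_\Lmb\, m(u_0,v)$'', both implicitly require an a priori bound on $(1-\mu)^{-1}$. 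In the setting of Theorem \ref{thm.dichotomy} no such bound is available -- Proposition \ref{prop:geomLocBVScat}, which gives $(1-\mu)^{-1}\leq K\Lmb$, requires local BV scattering, which is precisely what is being proved. Only the $\dvr$ bounds \eqref{dic.bnd4dvr} are in hand at this stage; thus Lemma \ref{lem:muOverR}, which you invoke, gives you $\mu/r$ but does not help with the $(1-\mu)^{-1}$ factor, and your chain of estimates for (iv) does not close.

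The paper's proof circumvents this obstruction using the exact identity of Lemma \ref{lem:auxEqs},
\begin{equation*}
	\int_{u_0}^{v} \frac{\dvr}{1-\mu}\frac{\mu}{r}(u_0,v')\,\ud v' = \log(1-\mu)(u_0,v) + \int_{u_0}^{v} \dvr^{-1} r (\rd_{v}\overline\phi)^{2}(u_0,v')\,\ud v',
\end{equation*}
together with the sign observation $\log(1-\mu) \leq 0$ (since $\mu\geq 0$), which yields
\begin{equation*}
	\int_{u_0}^{v} \frac{\dvr}{1-\mu}\frac{\mu}{r}(u_0,v')\,\ud v' \leq \int_{u_0}^{v} \dvr^{-1} r (\rd_{v}\overline\phi)^{2}(u_0,v')\,\ud v'
\end{equation*}
with no reference to $(1-\mu)^{-1}$ at all; the right-hand side is then estimated purely in terms of $I_1$ and $I_2$. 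This same quantity is exactly the Gr\"onwall exponent $J(u_0,v)$ used to control $\dur^{-1}\rd_u(r\overline\phi)$ via the variation-of-constants formula. Without this identity (or an equivalent monotonicity input), your estimate of both (iv) and (v) does not go through. You should replace the Lemma \ref{lem:muOverR}-based step with the Lemma \ref{lem:auxEqs} argument; once that is done, the rest of your assembly (your items (i)--(iii) and (v), and the final collection of terms) is correct and gives the stated bound.
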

\begin{proof}
Again, we fix $u_0 \geq V$ and use the abbreviation \eqref{eq:abbrev-overline-phi}, as well as
\begin{equation} \label{eq:abbrev-du-overline-phi}
	\rd_{u} (r \overline{\phi}) (u,v) := \rd_{u} (r \overline{\phi}_{u_{0}})(u,v) .
\end{equation}
Recalling the equation for $F_2$ in \eqref{eq:inv4F12}, in order to control $I_2(u_0,v)$ from $F_2$, we need to estimate $\int_{u_0}^{v} (\frac{\lambda}{1-\mu} \frac{\mu}{r})(u_0,v') \ud v'$ and $\int_{u_0}^v \rd_{v} \overline{\phi} ( \dvr^{-1} \rd_{v} (r \overline{\phi}) - \dur^{-1} \rd_{u} ( r \overline{\phi})) (u_0,v') \ud v'.$
By Lemma \ref{lem:auxEqs},
$$\int_{u_0}^{v} (\frac{\lambda}{1-\mu} \frac{\mu}{r})(u_0,v') \ud v' = \log (1-\mu(u_0,v))+\int_{u_0}^v \frac{r(\rd_v \overline{\phi})^2}{\lambda}(u_0,v') \ud v'.$$
Since $\PD =\regR$, the integrand on the left-hand side is non-negative. Notice furthermore that since $\mu \geq 0$, $\log (1-\mu(u_0,v)) <0$. Thus, 
\beaa
\int_{u_0}^{v} (\frac{\lambda}{1-\mu} \frac{\mu}{r})(u_0,v') \ud v'
&\leq & |\int_{u_0}^v \frac{r(\rd_v \overline{\phi})^2}{\lambda}({u_0},v') \ud v'|\\
&\leq & \int_{u_0}^v |\rd_v \overline{\phi}({u_0},v')||(\frac{\rd_v(r\overline{\phi})}{\lambda}- \overline{\phi})({u_0},v')| \ud v'\\
&\leq & 2 \Lmb I_{1}({u_0},v)  \int_{u_0}^v |\rd_v \overline{\phi}({u_0},v')| \, \ud v' ,
\eeaa
where we have used \eqref{phi.est} and \eqref{dic.1.0} on the last line. Using Lemma \ref{lem:est4dvphi}, we estimate the integral on the last line by
\begin{equation} \label{dic.2.0}
\int_{u_0}^v |\rd_v \overline{\phi}({u_0},v')|dv' \leq
\int_{u_0}^v|\rd_v(\lambda^{-1}\rd_v(r \overline{\phi}))({u_0},v')| \ud v' ,
\end{equation}
and the right hand side can in turn be estimated using \eqref{dic.1.0} by
\bea
\int_{u_0}^v|\rd_v(\lambda^{-1}\rd_v(r \overline{\phi}))({u_0},v')| \ud v' 
&\leq &\int_{u_0}^v \lambda^{-1}|\rd_v^2(r \overline{\phi})({u_0},v')| \ud v'+\int_{u_0}^v \lambda^{-2}|\rd_v\lambda\rd_v(r \overline{\phi})({u_0},v')| \ud v'\notag\\
&\leq &\Lambda I_1({u_0},v)+ \Lambda^2 I_{1}({u_0},v) I_2({u_0},v). \notag
\eea

Therefore, we have
\be\label{dic.2.1}
\int_{u_0}^{v} (\frac{\lambda}{1-\mu} \frac{\mu}{r})({u_0},v') \ud v'\leq  C_{\Lambda} I_1({u_0},v)^{2} (1+ I_{2}({u_0},v)).
\ee

We now move on to bound $\int_{u_0}^v \rd_{v} \overline{\phi} \, \dvr^{-1} \rd_{v} (r \overline{\phi}) ({u_0}, v') \ud v'$. Using \eqref{dic.1.0} and \eqref{dic.2.0}, we easily estimate
\begin{align}
\int_{u_0}^v \abs{\rd_{v} \overline{\phi} \, \dvr^{-1} \rd_{v} ( r \overline{\phi}) ({u_0},v')} \ud v' \notag
\leq & \Lmb \int_{{u_0}}^{v} \abs{\rd_{v} \overline{\phi}}({u_0}, v') \, \ud v' \sup_{{u_0} \leq v' \leq v} \abs{\rd_{v} ( r \overline{\phi}) ({u_0},v')} \\
\leq & C_{\Lmb} I_{1}({u_0},v)^{2} (1+I_{2}({u_0},v)). \label{dic.2.2}
\end{align}

Finally, we are only left to bound $- \int_{u_0}^v \rd_{v} \overline{\phi} \, \dur^{-1} \rd_{u} ( r \overline{\phi}) ({u_0},v') \ud v'$. As before, we begin by estimating
\begin{align}
\int_{u_0}^v \abs{\rd_{v} \overline{\phi} \, \dur^{-1} \rd_{u} ( r \overline{\phi}) ({u_0},v')} \ud v'
\leq & \int_{{u_0}}^{v} \abs{\rd_{v} \overline{\phi}}({u_0}, v') \, \ud v' \sup_{{u_0} \leq v' \leq v} \abs{\dur^{-1} \rd_{u} ( r \overline{\phi}) ({u_0},v')} \notag \\
\leq & C_{\Lmb} I_{1}({u_0},v) (1+I_{2}({u_0},v)) \sup_{{u_0} \leq v' \leq v} \abs{\dur^{-1} \rd_{u} ( r \overline{\phi}) ({u_0},v')}. \label{dic.2.3}
\end{align}

In this case, we do not wish to pull out $\dur$ as we have not assumed any bound on it. Instead, we consider $\dur^{-1} \rd_{u} ( r \overline{\phi})$ as a whole and note that
\begin{equation}\label{inv.wave.eqn}
\rd_{v}(\dur^{-1} \rd_{u} ( r \overline{\phi}) ) 
= 	- \bb( \frac{\dvr}{1-\mu} \frac{\mu}{r} \bb) \dur^{-1} \rd_{u} ( r \overline{\phi}) 
	+ \bb( \frac{\dvr}{1-\mu} \frac{\mu}{r} \bb) \overline{\phi}.
\end{equation}
The equation \eqref{inv.wave.eqn} holds since by \eqref{eq:SSESF:dr} and \eqref{eq:SSESF:dphi}, we have 
\begin{equation*}
\rd_{v}(\dur^{-1} \rd_{u} ( r {\phi}) ) 
= 	- \bb( \frac{\dvr}{1-\mu} \frac{\mu}{r} \bb) \dur^{-1} \rd_{u} ( r {\phi}) 
	+ \bb( \frac{\dvr}{1-\mu} \frac{\mu}{r} \bb) {\phi}
\end{equation*}
and moreover both the left hand side and the right hand side of the equation are invariant under the transformation $\phi\to \phi+c$.

Therefore, by the variation of constants formula and \eqref{dic.0.1}, we have
\begin{equation*}
	\dur^{-1} \rd_{u} (r \overline{\phi})(u_0, v)
	= e^{-J({u_0},v)} \int_{{u_0}}^{v} e^{J({u_0},v')}  \frac{\dvr}{1-\mu} \frac{\mu}{r} \,  \overline{\phi} ({u_0}, v') \, \ud v',
\end{equation*}
where
\begin{equation*}
	J({u_0},v) := \int_{{u_0}}^{v} \frac{\dvr}{1-\mu} \frac{\mu}{r} ({u_0}, v') \, \ud v'.
\end{equation*}

By \eqref{dic.1.0} and \eqref{dic.2.1}, we have
\begin{equation*}
	\sup_{{u_0} \leq v' \leq v}\abs{\dur^{-1} \rd_{u} (r \overline{\phi}) ({u_0},v')} 
	\leq C_{\Lmb} I_{1}({u_0},v)^{3} (1+I_{2}({u_0},v)) e^{C_{\Lmb} I_{1}({u_0},v)^{2} (1+I_{2}({u_0},v))}.
\end{equation*}

Then by \eqref{dic.2.3}, we conclude that 
\begin{equation} \label{dic.2.4}
\int_{u_0}^v \abs{\rd_{v} \overline{\phi} \, \dur^{-1} \rd_{u} ( r \overline{\phi}) ({u_0},v')} \ud v'
	\leq C_{\Lmb} I_{1}({u_0},v)^{4} (1+I_{2}({u_0},v))^{2} e^{C_{\Lmb} I_{1}({u_0},v)^{2} (1+I_{2}({u_0},v))}.
\end{equation}

Combining \eqref{dic.2.1}, \eqref{dic.2.2} and \eqref{dic.2.4}, we conclude that \eqref{main.ineq.dichotomy.2} holds. \qedhere
\end{proof}

Next, we apply \eqref{main.ineq.dichotomy.1} and \eqref{main.ineq.dichotomy.2} to show that 
\begin{proposition} \label{dic.main.prop}
For ${u_0}$ sufficiently large and $v\geq {u_0}$, we have
$$I_1({u_0},v)+I_2({u_0},v)\leq C_{\Lambda} \epsilon.$$
\end{proposition}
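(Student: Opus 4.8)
The plan is to run a continuous induction (method of continuity) in the $v$-variable on a fixed slice $C_{u_0}$, using the inequalities \eqref{main.ineq.dichotomy.1} and \eqref{main.ineq.dichotomy.2} to remove the quadratic and higher-order terms. First I would fix $\eps>0$ small (to be determined, in particular smaller than some absolute threshold depending only on $\Lmb$), and apply Proposition \ref{8.1.1} to obtain $V$ such that \eqref{dichotomy.beginning} holds for all $u\geq V$; take $u_0\geq V$. Set $f(v):=I_1(u_0,v)+I_2(u_0,v)$ and introduce the set
\begin{equation*}
	S := \set{ v \geq u_0 : f(v') \leq 10\,C_{\Lmb}\,\eps \hbox{ for all } v' \in [u_0, v]},
\end{equation*}
where $C_{\Lmb}$ is a fixed constant, chosen large enough to dominate the constants appearing in \eqref{main.ineq.dichotomy.1}--\eqref{main.ineq.dichotomy.2}. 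The claim is $S = [u_0,\infty)$, which gives Proposition \ref{dic.main.prop} with the constant $10 C_{\Lmb}$ in place of $C_\Lmb$ (relabel).

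The three ingredients of the continuity argument: (i) \emph{$S$ is nonempty and contains a neighborhood of $u_0$}: by \eqref{dic.0.1} the quantities $\rd_v(r\overline\phi)(u_0,\cdot)$ and hence (via the functional inequalities of \S\ref{subsec:est4phi} and the defining equations) $I_1(u_0,v), I_2(u_0,v)$ tend to $0$ as $v\to u_0+$, so for $v$ close to $u_0$ we have $f(v)<10 C_\Lmb\eps$; (ii) \emph{$S$ is closed}: here I must be careful, as flagged in Remark \ref{technical.rmk.dichotomy}, because $\rd_v(r\phi)$ is only BV so $v\mapsto I_1(u_0,v)$, $I_2(u_0,v)$ need not be continuous — they are, however, monotone nondecreasing and left/right limits exist, so I would work with, say, right-continuous representatives and note the total-variation measures are finite; closedness then follows from monotone convergence, and the jumps are controlled by the same bootstrap inequality applied up to $v-$; (iii) \emph{$S$ is open}, i.e. on $S$ the bound improves from $10 C_\Lmb\eps$ to (say) $\tfrac12\cdot 10 C_\Lmb\eps$. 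For the openness/improvement step, on $v\in S$ we have $I_1(u_0,v), I_2(u_0,v)\leq 10 C_\Lmb\eps$, so plugging into \eqref{main.ineq.dichotomy.1}:
\begin{equation*}
	I_1(u_0,v) \leq 3\eps + C_\Lmb (10 C_\Lmb \eps)(10 C_\Lmb\eps) = 3\eps + 100 C_\Lmb^3\eps^2,
\end{equation*}
and into \eqref{main.ineq.dichotomy.2}, using $1+I_j \leq 2$ and $e^{C_\Lmb I_1^2(1+I_2)}\leq 2$ when $\eps$ is small enough (this is where the absolute smallness of $\eps$, hence of $10 C_\Lmb\eps$, is used):
\begin{equation*}
	I_2(u_0,v)\leq 3\eps + C_\Lmb (10 C_\Lmb\eps)^2 \cdot 2^2\cdot 2^2 \cdot 2 = 3\eps + C_\Lmb' \eps^2.
\end{equation*}
Summing, $f(v)\leq 6\eps + C''_\Lmb \eps^2 < \tfrac12\cdot 10 C_\Lmb\eps$ provided $C_\Lmb$ was fixed with $10 C_\Lmb > 12$ and $\eps$ is small enough that $C''_\Lmb\eps < 1$. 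Hence the improved bound holds strictly, so $S$ is open. By connectedness of $[u_0,\infty)$, $S=[u_0,\infty)$, which is the assertion (after absorbing $10C_\Lmb$ into $C_\Lmb$ and noting that $\eps$ was an arbitrary small positive number, so the statement holds for all $\eps$ below the threshold, and trivially for larger $\eps$).

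The main obstacle I anticipate is step (ii)/(iii) done rigorously in the BV setting: one cannot naively say $f$ is continuous, so the method of continuity must be replaced by an argument tracking the (nondecreasing) functions $I_1, I_2$ and their jump discontinuities — concretely, showing that if the bound holds on $[u_0, v)$ then it holds at $v$ by a limiting argument, and that the jump at $v$ cannot push $f$ past the threshold because the quadratic self-improvement leaves a definite gap. A secondary point requiring care is that the exponential factor in \eqref{main.ineq.dichotomy.2} is only harmless once we already know $I_1(u_0,v)$ is small, which is exactly what the bootstrap hypothesis $v\in S$ supplies — so the logic is self-consistent but the threshold on $\eps$ must be chosen \emph{after} $C_\Lmb$ is fixed, not before. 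Once Proposition \ref{dic.main.prop} is in hand, letting $\eps\to 0$ and recalling \eqref{phi.est}, \eqref{eq:inv4F12} and the invariance under $\phi\mapsto\phi+c$ yields \eqref{dichotomy.goal} for the original variables, completing the proof of Theorem \ref{thm.dichotomy}.
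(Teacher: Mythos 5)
Your overall strategy (a continuous induction in $v$ on the fixed slice $C_{u_0}$, using the quadratic self-improvement supplied by \eqref{main.ineq.dichotomy.1}--\eqref{main.ineq.dichotomy.2}) is the same as the paper's, and the computations in your openness step are correct. But there is a genuine gap in the part you identify as "the main obstacle," and the way you propose to dismiss it does not actually close the argument.

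You write that "the jump at $v$ cannot push $f$ past the threshold because the quadratic self-improvement leaves a definite gap." This conflates two different things. The self-improvement indeed yields a definite gap between the improved bound (say $4C_\Lambda\eps$) and the bootstrap threshold (say $5C_\Lambda\eps$), but that gap is only useful if you separately know that each jump of $I_1(u_0,\cdot)$ is smaller than the gap. The jump at a discontinuity of $\rd_v(r\phi)(u_0,\cdot)$ is an a priori quantity determined by the solution's structure, not by the bootstrap hypothesis; nothing in the self-improvement mechanism controls it. If a single jump were of size $\sim 2C_\Lambda\eps$, the continuity argument would break at that point. What is actually needed, and what the paper supplies, is a propagation-of-discontinuities analysis: one shows (using the definition of BV solutions, the initial normalization $\dvr(1,\cdot)=1/2$, and the equations \eqref{eq:SSESF:dr}--\eqref{eq:SSESF:dphi}) that (a) $\dvr$ remains continuous outside $\Gmm$ for all time, so that $I_2(u_0,\cdot)$ is in fact continuous, and (b) the jump discontinuities of $\rd_v(r\phi)(u_0,\cdot)$ occur exactly at the $v$-values $v_1<v_2<\cdots$ inherited from $C_1$, with \emph{unchanged jump sizes}. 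Only then does the bound \eqref{data.BV.bd} on the initial data imply each jump of $I_1(u_0,\cdot)$ is at most $\eps$, which is strictly smaller than the gap $C_\Lambda\eps$ (since $C_\Lambda>1$), and the continuity argument closes. Without this structural input your induction does not go through.

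A secondary, minor inaccuracy: you treat $I_2(u_0,\cdot)$ as potentially discontinuous and suggest handling its jumps symmetrically with $I_1$. In fact, as just noted, the paper's analysis shows $I_2(u_0,\cdot)$ is continuous, which simplifies the argument (only the jumps of $I_1$ matter); you should verify rather than assume this, since otherwise you would need a jump-size bound for $\dvr$ as well, which you haven't provided.
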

\begin{remark}
If it is the case that $\int_{u_0}^v(|\rd_v^2(r\overline{\phi})| + |\rd_v \lambda|)({u_0},v')dv'$ is continuous in $v$ for each fixed ${u_0}$, then the desired conclusion follows from \eqref{main.ineq.dichotomy.1} and \eqref{main.ineq.dichotomy.2} via a simple continuity argument in $v$. In particular, the conclusion follows in the case where the initial data of $\rd_v(r \phi)$ are in $W^{1,1}$ or $C^{1}$. The only remaining difficulty is therefore to control the size of the delta function singularities in $\rd_v^2(r\overline{\phi})$ in the general case where we only have a BV solution.
\end{remark}
\begin{proof}
We begin by studying the propagation of discontinuities for a BV solution to \eqref{eq:SSESF}. In the general case where $\rd_v(r \phi)(1, \cdot)$ is only in BV and contains jump discontinuities (at which $\rd_{v}(r \phi)(1, \cdot)$ is assumed to be right-continuous), notice that the jump discontinuities for a BV function are discrete, i.e., they occur only at a (possibly infinite) sequence of points $V<v_1<v_2< v_3 < \cdots$. On the other hand, note that by the initial condition $r = 2v$ on $C_{1}$, we have $\dvr(1,v) = \frac{1}{2}$; in particular, $\dvr$ is continuous initially.

Thanks to the initial condition, it follows that $\dvr$ does not possess any discontinuities outside $\Gmm$. Indeed, from the definition of a BV solution, $m$ and $r$ are continuous. Then, by \eqref{eq:SSESF:dr}, we see that that $\dur$ is Lipschitz in the $v$ direction outside of $\Gmm$, with bounded Lipschitz constant on each compact interval of $u$. Looking back at \eqref{eq:SSESF:dr} and recalling that $\dvr(1,v) = \frac{1}{2}$, we then see that $\dvr$ does not possess any discontinuities outside $\Gmm$ as desired. Since $\dvr$ is a priori in BV, it follows that $\int_{u_0}^{v} \abs{\rd_{v} \dvr({u_0},v')} \, \ud v'$ is continuous in $v \in ({u_0}, \infty)$ with $\int_{{u_0}}^{v} \abs{\rd_{v} \dvr({u_0},v')} \, \ud v' \to 0$ as $v \to {u_0}+$. 

By the above regularity statements and equation \eqref{eq:SSESF:dphi}, as well as the fact that $\phi$ is continuous outside $\Gmm$ by the definition of a BV solution, it now follows that the jump discontinuities of $\rd_{v}(r \phi)$ are propagated along constant $v_{i}$ curves. Therefore, for ${u_0} \geq V$, we see that $\rd_v (r \phi)({u_0},v)$ is a right-continuous BV function on $({u_0}, \infty)$ with jump discontinuities at ${u_0} < v_1 < v_2 < v_3 < \cdots$ with the same sizes as $\rd_{v} (r \phi)(1,v)$. 
From \eqref{dic.0.0}, notice that, using the abbreviations in \eqref{eq:abbrev-overline-phi} and \eqref{eq:abbrev-du-overline-phi},
\begin{equation*}
	\int_{{u_0}}^{v} \abs{\rd_{v}^{2} (r \overline{\phi})({u_0},v')} \, \ud v' = \int_{{u_0}}^{v} \abs{(\rd_{v}^{2} (r \phi) - c \rd_{v} \dvr)({u_0},v')} \, \ud v',
\end{equation*}
for the constant $c = \lim_{v \to {u_0}+} \phi({u_0},v)$, which is independent of $v$. Thanks to the continuity property of $\dvr({u_0}, \cdot)$, we see that the integral of $\abs{\rd_{v}^{2}(r \overline{\phi})({u_0}, \cdot)}$ has the same jump discontinuities as $\abs{\rd_{v}^{2} (r \phi)({u_0}, \cdot)}$. In particular, by \eqref{data.BV.bd}, each jump of $I_{1}({u_0},v)$ is at most of size $\eps$.

Fix $C_{\Lambda} > 1$ to be larger than the maximum of the constants from \eqref{main.ineq.dichotomy.1} and \eqref{main.ineq.dichotomy.2}. First, a standard continuity argument using \eqref{main.ineq.dichotomy.1} and \eqref{main.ineq.dichotomy.2} implies that if 
$$\lim_{v\to v_i+}\int_{u_0}^v |\rd_v^2(r\overline{\phi}) ({u_0},v')| \ud v'\leq 5 C_{\Lambda}\ep
\hbox{ and }
\lim_{v\to v_i+}\int_{u_0}^v |\rd_v \lambda({u_0},v')| \ud v'\leq 5\ep,$$
(with the convention $v_{0} := {u_0}$) then
$$\int_{u_0}^v |\rd_v^2(r\overline{\phi}) ({u_0},v')| \ud v'\leq 4 C_{\Lambda}\ep
\hbox{ and }
\int_{u_0}^v |\rd_v \lambda({u_0},v')| \ud v'\leq 4 \ep,$$
for $v_i< v< v_{i+1}$.

Assume, for the sake of contradiction, that the conclusion of the proposition is not satisfied. Recall that the integral of $\abs{\rd_v\lambda}$ is continuous. Thus, we have that for some $v_i$ with $i > 0$
$$\lim_{v\to v_i-}\int_{u_0}^v |\rd_v^2(r\overline{\phi}) ({u_0},v')| \ud v'\leq 4 C_{\Lambda}\ep,$$
holds, but at the same time
$$\lim_{v\to v_i+}\int_{u_0}^v |\rd_v^2(r\overline{\phi}) ({u_0},v')| \ud v' > 5 C_{\Lambda}\ep.$$

However, we have seen that the size of the jump in $I_1({u_0},v)$ is bounded by $\ep$, which is smaller than $C_{\Lambda} \ep$ if $C_{\Lambda}>1$. This leads to a contradiction and thus the conclusion of the proposition holds.
\end{proof}

We are now ready to conclude the proof of Theorem \ref{thm.dichotomy}.
\begin{proof} [Conclusion of Proof of Theorem \ref{thm.dichotomy}]
We first establish \eqref{dichotomy.goal}. In what follows, we use the abbreviations in \eqref{eq:abbrev-overline-phi} and \eqref{eq:abbrev-du-overline-phi}, such as $\overline{\phi} = \overline{\phi}_{u_{0}}$ etc. The idea is to transform back to $(\overline{\phi}, r, m) \mapsto (\phi, r, m)$ using \eqref{dic.0.2}. Note that $\abs{\rd_{v} \dvr}$ remains the same under this change, so it suffices to estimate $\abs{\rd_{v}^{2} (r \phi)}$. By \eqref{dic.2.0} and Proposition \ref{dic.main.prop}, for sufficiently large ${u_0}$, the limit $\overline{\phi}({u_0}, \infty) := \lim_{v \to \infty} \overline{\phi}({u_0},v)$ exists and satisfies
\begin{equation*}
	\abs{\overline{\phi}({u_0}, \infty)} \leq C_{\Lmb} \eps,
\end{equation*}
where we note that $C_{\Lmb}$ is independent of ${u_0}$.

By \eqref{dic.0.2}, we have $\phi(u,v) = \overline{\phi}(u,v) - \overline{\phi}(u, \infty)$ for all $u$. Thus, using Proposition \ref{dic.main.prop}, we estimate
\begin{align*}
	\int_{{u_0}}^{\infty} \abs{\rd_{v}^{2} ( r\phi)({u_0},v)} \, \ud v
	= & \int_{{u_0}}^{\infty} \abs{\rd_{v}^{2} ( r \overline{\phi}({u_0},v) - r \overline{\phi}({u_0}, \infty))} \, \ud v \\
	\leq & \int_{{u_0}}^{\infty} \abs{\rd_{v}^{2} (r \overline{\phi})({u_0},v)} \, \ud v 
		+ \abs{\overline{\phi}({u_0},\infty)} \int_{{u_0}}^{\infty} \abs{\rd_{v} \dvr ({u_0},v)} \, \ud v \\
	\leq & C_{\Lmb} (\eps + \eps^{2}).
\end{align*}
Since $u_0\geq V$ is arbitrary, this proves \eqref{dichotomy.goal}.

Finally, we prove that the conditions $(2)$ and $(3)$ of Definition \ref{def:locBVScat} hold.
Indeed, since $\rd_{v} \log \dvr = \dvr^{-1} \rd_{v} \dvr$, $(3)$ in Definition \ref{def:locBVScat} follows from \eqref{dichotomy.goal} and \eqref{dic.bnd4dvr}. In fact, $(3)$ in Definition \ref{def:locBVScat} holds with arbitrarily large $R > 0$. Next, by \eqref{eq:SSESF:dm}, non-negativity of $1-\mu$ and $\mu$ (by Lemma \ref{lem:mntn4r}) and the fact that $m$ is invariant under $\phi \mapsto \overline{\phi}$,
\begin{equation*}
	m(u_0, v) 
%	\leq \frac{1}{2} \sup_{u \leq v' \leq v} \abs{(\dvr^{-1} \rd_{v}(r \phi) - \phi)(u, v')} \int_{u}^{v} \abs{\rd_{v} \phi(u, v')} \, \ud v'
	\leq \frac{1}{2} \sup_{u_0 \leq v' \leq v} \abs{(\dvr^{-1} \rd_{v}(r \overline{\phi}) - \overline{\phi})(u_0, v')} \int_{u_0}^{v} \abs{\rd_{v} \overline{\phi}(u_0, v')} \, \ud v'
\end{equation*}
where the right-hand side is $\leq C_{\eps, \Lmb} \eps$ (with $C_{\eps, \Lmb}$ non-decreasing in $\eps$) by the estimates proved so far. Therefore, $(2)$ of Definition \ref{def:locBVScat} follows.
This concludes the proof of Theorem \ref{thm.dichotomy}. \qedhere
\end{proof}

\section{Refinement in the small data case} \label{sec:smallData}
In this section, we sketch a proof of Theorem \ref{thm:smallData}. The idea is to revisit the proofs of the main theorems (Theorems \ref{main.thm.1} and \ref{main.thm.2}), and notice that all the required smallness can be obtained by taking initial total variation of $\rd_{v}(r \phi)$ small. Key to this idea is the following lemma.

\begin{lemma} \label{lem:smallData}
There exist universal constants $\eps_{0}$ and $C_{0}$ such that for $\eps < \eps_{0}$, the following holds. Suppose that $\dvr(1, \cdot) = \frac{1}{2}$ and $\rd_{v}(r \phi)(1, \cdot)$ is of bounded variation with
\begin{equation} \label{eq:smallData:hyp}
	\int_{C_{1}} \abs{\rd_{v}^{2} (r \phi)} < \eps.
\end{equation}

Suppose furthermore that $\lim_{v \to \infty} \phi(1, v) = 0$. Then the maximal development  $(\phi, r, m)$ satisfies condition $(1)$ of Definition \ref{def:locBVScat} (future completeness of radial null geodesics) and obeys
\begin{gather}
	\sup_{v \in [1, \infty)} \int_{\uC_{v}} \abs{\frac{\mu }{(1-\mu)} \frac{\dur}{r}} 
	+ \sup_{u \in [1, \infty)} \int_{C_{u}} \abs{\frac{\mu }{(1-\mu)} \frac{\dvr}{r}} \leq C_{0} \eps^{2},	\label{eq:smallData:smallPtnl} \\
	\sup_{v \in [1, \infty)} \int_{\uC_{v}} \abs{\rd_{u} \phi}
	+ \sup_{u \in [1, \infty)} \int_{C_{u}} \abs{\rd_{v} \phi} \leq C_{0} \eps,				 \label{eq:smallData:smallDphi} \\
	\sup_{v \in [1, \infty)} \int_{\uC_{v}} (\abs{\rd_{u}^{2} (r \phi)} + \rd_{u} \log \dur)
	+ \sup_{u \in [1, \infty)} \int_{C_{u}} (\abs{\rd_{v}^{2} (r \phi)} + \rd_{v} \log \dvr) \leq C_{0} \eps.				 \label{eq:smallData:smallTV} 
\end{gather}

Moreover, the bounds in Proposition \ref{prop:geomLocBVScat} hold with
\begin{equation}  \label{eq:smallData:geom}
	K + \Lmb \leq C_{0}, \quad
	\Psi \leq C_{0} \eps.
\end{equation}
\end{lemma}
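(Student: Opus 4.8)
The strategy is to derive this lemma from Christodoulou's small-BV theorem (Theorem~\ref{Chr.BV.Thm}) supplemented by one short bootstrap argument for $\Psi$. In effect, the lemma records the quantitative ($\eps$-dependent) versions of Proposition~\ref{prop:geomLocBVScat}, Lemmas~\ref{lem:smallPtnl} and~\ref{lem:smallDphi}, and Theorem~\ref{thm:decayInCpt}, which become available once the total variation of $\rd_v(r\phi)$ on $C_1$ is small; the gain over the general case is that all the smallness we need is now quantified by $\eps$, so there is no need to pass to large $u$.

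To begin, I would apply Theorem~\ref{Chr.BV.Thm}: for $\eps < \eps_0$ the maximal development satisfies condition~$(1)$ of Definition~\ref{def:locBVScat}, $\PD = \regR$, the pointwise bounds \eqref{Chr.BV.Thm.geom} hold, and so do \eqref{Chr.BV.Thm.dv}--\eqref{Chr.BV.Thm.du}. The bounds \eqref{Chr.BV.Thm.geom} are precisely \eqref{eq:bnd4dvr}--\eqref{eq:bnd4conjKpp} with the \emph{universal} choices $\Lmb = 3$ and a fixed $K$ (e.g.\ $K = \tfrac32$), whence $K + \Lmb \le C_0$. The $L^1$ bounds for $\rd_v\phi$ and $\rd_u\phi$ are exactly \eqref{eq:smallData:smallDphi}. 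They also give \eqref{eq:smallData:smallTV}: writing $\rd_v^2(r\phi) = (\rd_v\log\dvr)\,\rd_v(r\phi) + \dvr\,\rd_v(\dvr^{-1}\rd_v(r\phi))$ (and symmetrically in $u$), integrating along $C_u$ (resp.\ $\uC_v$), and using $\sup_{\PD}\abs{\rd_v(r\phi)} \le C_0$ (from the $\Psi$ bound below) together with $\sup_{\PD}\abs{\rd_u(r\phi)} \le C_0$ (from \eqref{eq:bnd4durphi}) bounds $\int\abs{\rd_v^2(r\phi)}$ (resp.\ $\int\abs{\rd_u^2(r\phi)}$) by $C_0\eps$.

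The one genuinely new point is $\Psi \le C_0\eps$, and this is where the hypothesis $\lim_{v\to\infty}\phi(1,v) = 0$ is used. Since $\rd_v(r\phi)(1,\cdot)$ is BV it has a limit $L$ at $v = \infty$; from $r\phi(1,v) = \int_1^v \rd_v(r\phi)(1,v')\,\ud v'$ and $r(1,v) = (v-1)/2$ an elementary averaging argument forces $\phi(1,v) \to 2L$, hence $L = 0$ and $\sup_{C_1}\abs{\rd_v(r\phi)} = \sup_v\abs{\int_v^\infty \rd_v^2(r\phi)(1,\cdot)} \le \int_{C_1}\abs{\rd_v^2(r\phi)} < \eps$. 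I then rewrite \eqref{eq:SSESF:dphi} as $\rd_u\rd_v(r\phi) = \tfrac{\mu}{1-\mu}\tfrac{\dur}{r}\,\dvr\,\phi$ and integrate along incoming null curves from $C_1$, using $\abs{\dvr}\le\tfrac12$, the bound $\abs{\phi}\le\Lmb\sup_{C_u}\abs{\rd_v(r\phi)}$ of Lemma~\ref{lem:est4phi}, and the identity $\int_1^u\abs{\tfrac{\mu}{1-\mu}\tfrac{\dur}{r}}(u',v)\,\ud u' = \log\tfrac{\dvr(1,v)}{\dvr(u,v)} \le \log\tfrac32$ (from $\rd_u\log\dvr = \tfrac{\mu}{(1-\mu)r}\dur\le 0$ and \eqref{Chr.BV.Thm.geom}). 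With $\Psi(U) := \sup_{\set{1\le u\le U}}\abs{\rd_v(r\phi)}$, which is finite for each $U$ since we are dealing with a BV solution, this yields
\[
	\Psi(U) \le \eps + \tfrac{\Lmb}{2}\Big(\log\tfrac32\Big)\,\Psi(U).
\]
As $\tfrac32\log\tfrac32 < 1$, the last term is absorbed and $\Psi(U) \le C\eps$ uniformly in $U$, so $\Psi \le C_0\eps$ and $\sup_{\PD}\abs{\phi}\le\Lmb\Psi\le C_0\eps$, completing \eqref{eq:smallData:geom}.

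Finally, \eqref{eq:smallData:smallPtnl} follows from Lemma~\ref{lem:auxEqs}: along $C_u$, $\int_{C_u}\tfrac{\mu}{1-\mu}\tfrac{\dvr}{r} = \lim_{v\to\infty}\log(1-\mu)(u,v) + \int_{C_u}\dvr^{-1}r(\rd_v\phi)^2 \le \int_{C_u}\dvr^{-1}r(\rd_v\phi)^2$ since $\log(1-\mu)\le 0$, and $\dvr^{-1}r\,\rd_v\phi = \dvr^{-1}\rd_v(r\phi) - \phi$ is bounded by $C_0\Psi \le C_0\eps$, so the right side is $\le C_0\eps\int_{C_u}\abs{\rd_v\phi}\le C_0\eps^2$; the estimate over $\uC_v$ is identical via \eqref{eq:auxEqs:2}, using $\log(1-\mu)(1,v)\le 0$ and $(-\dur)^{-1}\abs{r\,\rd_u\phi}\le C_0\Psi$ (from \eqref{eq:bnd4durphi}). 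The main, though minor, obstacle I anticipate is bookkeeping: verifying that the bootstrap for $\Psi$ closes with a contraction constant strictly below $1$ — which works precisely because $\Lmb$ has the universal value $3$ and $\log\tfrac32 < 1$ — and checking that the whole argument never invokes $M_i < \infty$ (which, absent an asymptotic-flatness hypothesis, can fail here) but only Christodoulou's $L^1$ bounds, which hold uniformly in $u$ rather than merely for $u$ large.
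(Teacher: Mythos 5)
Your proof is correct and follows essentially the same route as the paper: establish $\sup_{C_1}\abs{\rd_v(r\phi)}\lesssim\eps$ from the hypothesis $\lim_{v\to\infty}\phi(1,v)=0$, close a bootstrap for $\Psi$ using the wave equation together with the geometric bounds of Theorem~\ref{Chr.BV.Thm}, and then read off the remaining estimates from Theorem~\ref{Chr.BV.Thm} and Lemma~\ref{lem:auxEqs}. Your bootstrap integrates the un-renormalized form of \eqref{eq:SSESF:dphi} and bounds $\int_1^u\abs{\tfrac{\mu}{1-\mu}\tfrac{\dur}{r}}\le\log\tfrac32$ (contraction constant $\tfrac32\log\tfrac32\approx 0.61$), whereas the paper integrates \eqref{eq:SSESF:dphi'} and bounds $\int_1^u(-\rd_u\dvr)=\dvr(1,v)-\dvr(u,v)\le\tfrac16$ (contraction constant $\tfrac12$); both close, and your derivation of $\sup_{C_1}\abs{\rd_v(r\phi)}<\eps$ via the BV limit at $v=\infty$ is a harmless alternative to the paper's passage through Lemma~\ref{lem:est4dvphi} and the axis identity. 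One small technical point: to secure a priori finiteness of the bootstrap quantity, define it as a supremum over compact domains of dependence $\calD(1,V)$ (as the paper does) rather than over the strip $\set{1\le u\le U}$, since Definition~\ref{def:BVsolution} only guarantees uniform BV control on compact $\calD(u_0,v_0)$.
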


\begin{proof} 
This lemma is an easy consequence of Theorem \ref{Chr.BV.Thm} and Lemma \ref{lem:auxEqs} once we show 
\begin{equation*}
\sup_{\PD} \abs{\rd_{v} (r \phi)} \leq C_{0} \eps,
\end{equation*}
using the additional condition $\lim_{v \to \infty} \phi(1, v) = 0$. By Lemma \ref{lem:est4dvphi}, note that $\int_{C_{1}} \abs{\rd_{v} \phi} \leq C \eps$; therefore, integrating from $v = \infty$, we have $\lim_{v \to 1+} \abs{\phi(1, v)} \leq C \eps$. Then using \eqref{eq:smallData:hyp} to integrate from $v = 1$, where we note that $\lim_{v \to 1+}\phi(1, v) = \lim_{v \to 1+} \rd_{v}(r \phi)(1, v)$, we obtain
\begin{equation*}
	\sup_{C_{1}} \abs{\rd_{v} (r \phi)} \leq C \eps.
\end{equation*}

Using \eqref{eq:SSESF:dphi'}, $\rd_{u} \dvr \leq 0$, Lemma \ref{lem:est4phi} (to control $\abs{\phi}$ from $\abs{\rd_{v}(r \phi)}$) and $\frac{1}{3} \leq \dvr \leq \frac{1}{2}$ (by Theorem \ref{Chr.BV.Thm}), it follows that
\begin{align*}
	\sup_{\calD(1, v)}\abs{\rd_{v}(r \phi)} 
	\leq & \sup_{1 \leq v' \leq v} \abs{\rd_{v} (r \phi)(1, v')} + \sup_{(u,v) \in \calD(1,v)} \sup_{1 \leq u' \leq u} \abs{\phi(u', v)} \int_{1}^{u} (- \rd_{u} \dvr)(u', v) \, \ud u' \\
	\leq & C \eps + \frac{1}{2} \sup_{\calD(1,v)} \abs{\rd_{v}(r \phi)},
\end{align*} 
which proves $\sup_{\PD} \abs{\rd_{v} (r \phi)} \leq C_{0} \eps$, as desired. \qedhere
\end{proof}

Equipped with Lemma \ref{lem:smallData}, we now proceed to outline the proof of Theorem \ref{thm:smallData}.
\begin{proof} [Proof of $(1)$ in Theorem \ref{thm:smallData}]
That $(\phi, r, m)$ is globally BV scattering follows from Theorem \ref{thm.dichotomy} and the fact that initial data with small total variation cannot lead to a development which blows up at infinity; the latter fact follows from the results proved by Christodoulou \cite[Section 4, Theorem 6.2]{Christodoulou:1993bt}. 

It remains to prove that \eqref{eq:decay1:1}--\eqref{eq:decay1:3} hold with $A_{1} \leq C_{\calI_{1}} (\calI_{1} + \eps)$ if $\eps > 0$ is sufficiently small. By \eqref{eq:decay1:H1}, it follows Lemma \ref{lem:decay1:cptu:0} holds with $H_{1} \leq C_{\calI_{1}} (\calI_{1} + \eps)$, and \eqref{eq:decay1:extr} in Lemma \ref{lem:decay1:extr} becomes�
\begin{equation*} \tag{\ref{eq:decay1:extr}$'$}
	\sup_{C_{u}} r^{\omg} \abs{\rd_{v} (r \phi)} \leq C_{\calI_{1}} u_{1} (\calI_{1} + \eps) + C M_{i} u_{1}^{-1} \calB_{1}(U),
\end{equation*}

Note that $M_{i} \leq C \calI_{1}^{2}$. Then repeating the arguments in \S \ref{sec.full.decay.1}, we see that \eqref{eq:decay1:intr:pf:key} becomes
\begin{equation*} \tag{\ref{eq:decay1:intr:pf:key}$'$}
	\calB_{1}(U) \leq C_{\calI_{1}} u_{1} (\calI_{1} + \eps) + C (\calI_{1}^{2} u_{1}^{-1} + \eps^{2}) \calB_{1}(U). 
\end{equation*}

It is important to note that the constant $C$ in the last term does not depend on $\calI_{1}$. Take $u_{1} = 1000 C (1+\calI_{15})^{2}$. Then for $\eps > 0$ sufficiently small (independent of $\calI_{1}$), we derive
\begin{equation*}
	\calB_{1}(U) \leq C_{\calI_{1}} (\calI_{1} + \eps)
\end{equation*}

It then follows that \eqref{eq:decay1:1} and \eqref{eq:decay1:2} hold with $A_{1} \leq C_{\calI_{1}} (\calI_{1} + \eps)$. Applying Lemma \ref{lem:decay1:uDecay4durphi}, we conclude that \eqref{eq:decay1:3} holds with $A_{1} \leq C_{\calI_{1}} (\calI_{1} + \eps)$ as well. \qedhere
\end{proof}

\begin{proof} [Proof of $(2)$ in Theorem \ref{thm:smallData}]
We need to prove that \eqref{eq:decay2:1}--\eqref{eq:decay2:4} hold with $A_{2} \leq C_{\calI_{2}} (\calI_{2} + \eps)$. The key is to show that Proposition \ref{prop:decay2:nullStr} holds with
\begin{equation} \label{eq:smallData:pf2:key}
	A_{2}' \leq C_{\calI_{2}} (\calI_{2} + \eps)
\end{equation}

Indeed, by the explicit bounds on the constants (in particular, \eqref{eq:decay2:rDecay:1}, \eqref{eq:decay2:rDecay:2}, \eqref{eq:decay2:uDecayInExtr:1}, \eqref{eq:decay2:uDecayInExtr:2}, \eqref{eq:decay2:final:aux} and \eqref{eq:decay2:A2''}), the desired conclusion easily follows once \eqref{eq:smallData:pf2:key} is established.

Note that $\calI_{1} \leq \calI_{2}$ by definition, and thus $A_{1} \leq C_{\calI_{2}} (\calI_{2} + \eps)$ by the preceding proof. We furthermore claim that the following statements hold:
\begin{itemize}
\item Lemma \ref{lem:decay2:key4nullStr} holds with
\begin{equation} \label{eq:smallData:pf2:1}
	 \eps(u_{2}) \leq C \eps,
\end{equation}
for every $u_{2} \geq 1$.

\item We have
\begin{equation} \label{eq:smallData:pf2:2}
	 H_{2}'(1) \leq C_{\calI_{2}} (\calI_{2} + \eps),
\end{equation}
where we remind the reader that
\begin{equation*}
H_{2}'(1) = \sup_{\set{(u,v) : u \in [1, 3] \, v \in [u, 3u]}} u^{\omg} \bb( \abs{\rd_{v}^{2}(r \phi)} + \abs{\rd_{u}^{2}(r \phi)} + \abs{\rd_{v} \dvr} + \abs{\rd_{u} \dur}\bb)
\end{equation*}
according to \eqref{eq:decay2:def4H'2}.
\end{itemize}

The first claim follows easily from Lemma \ref{lem:smallData} and \eqref{eq:decay2:eps}. For the second claim, since $1 \leq u \leq 3$, it suffices to prove\footnote{Recall that $\calD(1,9) = \set{(u,v) : u \in [1, 3] \, v \in [u, 3u]}$ is the domain of dependence of $C_{1} \cap \set{1 \leq v \leq 9}$.}
\begin{equation*}
	 \sup_{\calD(1,9)} \bb( \abs{\rd_{v}^{2}(r \phi)} + \abs{\rd_{u}^{2}(r \phi)} + \abs{\rd_{v} \dvr} + \abs{\rd_{u} \dur}\bb)\leq C (\calI_{2} + \eps),
\end{equation*}
which follows from a persistence of regularity argument, similar to our proof of Lemma \ref{lem:decay2:key4nullStr}. 

To conclude the proof, recall that we had
\begin{equation*}
	\calB_{2}(U) \leq H_{2}''(u_2) + \eps''(u_{2}) \calB_{2}(U)
\end{equation*}
where $\calB_{2}(U)$ was defined in \eqref{eq:decay2:def4B2}, and $H_{2}''(u_2)$, $\eps''(u_{2})$ obey the bounds in \eqref{eq:decay2:H2''} and \eqref{eq:decay2:eps''} respectively. Thanks to \eqref{eq:smallData:pf2:1}, it follows that we may take $u_{2} = 1$ and $\eps''(1) \leq C\eps$, where $C$ does not depend on $\calI_{2}$. Next, since $u_{2} = 1$, we see that $H_{2}''(1) \leq C_{\calI_{2}} (\calI_{2} + \eps)$ by \eqref{eq:smallData:pf2:2}. Therefore, for $\eps > 0$ sufficiently small (independent of $\calI_{2}$), we conclude that
\begin{equation*}
	\calB_{2}(U) \leq C_{\calI_{2}} (\calI_{2} + \eps),
\end{equation*}
which proves that Proposition \ref{prop:decay2:nullStr} holds with \eqref{eq:smallData:pf2:key}, as desired.  \qedhere
\end{proof}

\section{Optimality of the decay rates} \label{sec.opt}
In this section, we show the optimality of the decay rates obtained above, i.e., we prove Theorems \ref{thm.opt.1} and \ref{thm.opt.2}. 

\subsection{Optimality of the decay rates, in the case $1 < \omg' < 3$}  \label{subsec.opt.1}
In this subsection, we prove Theorem \ref{thm.opt.1}. More precisely, we will demonstrate that the proof of the upper bounds for $\phi$ and its derivatives can in fact be sharpened to give also lower bounds for $\rd_v(r\phi)$ and $\rd_u(r\phi)$ if the initial data satisfy appropriate lower bounds for $\om<3$.

\begin{proof}[Proof of Theorem \ref{thm.opt.1}]

We first prove the lower bound for $\rd_v(r\phi)$. We split the spacetime into the exterior region $\extr$ and interior region $\intr$ as before. Notice that in the exterior region, $u\lesssim r$ and it suffices to prove a lower bound for $r^\om\rd_v(r\phi)$. Similarly, in the interior region, $r\lesssim u$ and it suffices to prove a lower bound for $u^\om\rd_v(r\phi)$.

Revisiting the proof of Lemma \ref{lem:decay1:extr}, we note that instead of controlling $\rd_v(r\phi)$ by the initial data and error terms, we can bound the difference between $\rd_v(r\phi)(u,v)$ and the corresponding initial value of $\rd_v(r\phi)(1,v)$. More precisely, from the proof of Lemma \ref{lem:decay1:extr}, we have

\begin{align*}
	\abs{\rd_{v} (r \phi)(u,v)-\rd_{v} (r \phi)(1, v)}
	\leq  \frac{u_{1} K M_{i}}{r^2(u,v)(1+r(u,v))} H_{1} + \frac{K M(u_{1})}{u_{1} r^{\om}(u,v)} \calB_{1}(U)
\end{align*}
in the case $2<\om<3$ and
\begin{align*}
	\abs{\rd_{v} (r \phi)(u,v)-\rd_{v} (r \phi)(1, v)}
		\leq \frac{\om K M_{i}}{r(u,v)(1+r(u,v))} H_{1} + \frac{\om K M(u_{1})}{u_{1} r^{\om}(u,v)} \calB_{1}(U).
\end{align*}
in the case $1<\om\leq 2$. By the decay results proved in \S \ref{sec.full.decay.1}, we have
$$\sup_u(H_{1}+ \calB_{1}(u))\leq A$$
for some constant $A$. Therefore, by choosing $u_1$ sufficiently large, we have in the region $3u\leq v$,
$$r^{\om}\abs{\rd_{v} (r \phi)(u,v)-\rd_{v} (r \phi)(1, v)}\leq \frac{L}{4},$$
as long as $u\geq u_1$.
We now apply the assumption on the lower bound for the initial data $r^{\omg} \rd_{v} (r \phi)(1, v)\geq L$ for $v\geq V$. Choosing $u$ larger if necessary, we can assume that $u\geq V$. Then, we derive that in $3u\leq v$,
$$r^{\om}\rd_{v} (r \phi)(u,v)\geq \frac{L}{2}.$$

We now move to the interior region where $3u \geq v$. To this end, we improve the bounds in \eqref{eq:decay1:intr:pf:1}. First, notice that the lower bound in the exterior region implies that there exists $L'$ such that 
\bea
u^{\om}\rd_{v} (r \phi)(u,v)\geq L'\label{lower.bd.L1}
\eea
for $3u \leq v$. Then, integrating \eqref{eq:SSESF:dphi} along the incoming direction from $(u/3, v)$ to $(u, v)$, we get
\begin{equation*}
\begin{aligned}
	\abs{\rd_{v} (r \phi) (u,v)-\rd_{v} (r \phi) (u/3,v)} 
	\leq  \frac{1}{2} (\sup_{u' \in [u/3, u]} \sup_{C_{u'}} \abs{\phi}) \int_{u/3}^{u} \abs{\frac{2m \dur}{(1-\mu) r^{2}} (u', v)} \, \ud u'.
\end{aligned}
\end{equation*} 
By Theorem \ref{main.thm.1}, we have 
$$\sup_{C_u}|\phi|\leq A_{1} u^{-\om}$$
for some $A_{1} > 0$. Lemma \ref{lem:smallPtnl} implies that 
$$\int_{u/3}^{u} \abs{\frac{2m \dur}{(1-\mu) r^{2}} (u', v)} \, \ud u'\to 0$$
as $u\to \infty$. Thus the right hand side can be bounded by $\frac{L' u^{\om}}{2}$ after choosing $u$ to be sufficiently large.
Combining this with the lower bound \eqref{lower.bd.L1}, we have
$$u^{\om}\rd_{v} (r \phi) (u,v) \geq \frac{L'}{2}$$
for $3u \leq v$ and $u$ sufficiently large.

We now proceed to obtain the lower bound for $\rd_u(r\phi)$ by revisiting the proof of Lemma \ref{lem:decay1:uDecay4durphi}. Integrating \eqref{eq:SSESF:dphi} along the outgoing direction from $(u, u)$ to $(u, v)$, we have
\begin{align}\label{est.durphi.diff}
	\abs{\rd_{u} (r \phi)(u,v) - \lim_{v' \to u+} \rd_{u} (r \phi)(u, v')}
	\leq & \int_{C_{u}} \abs{\frac{\mu\dvr\dur}{(1-\mu)r}\phi}.
\end{align}
As before, we Theorem \ref{main.thm.1}, i.e.,
$$\sup_{C_u}|\phi|\leq A_{1} u^{-\om}$$
for some $A_{1} > 0$. By Lemma \ref{lem:smallPtnl} and the upper bound \eqref{eq:bnd4dur} for $|\nu|$, we have
$$\int_{C_{u}} \abs{\frac{\mu\dvr\dur}{(1-\mu)r}} \to 0, \quad\mbox{as }u\to\infty.$$
Therefore, we can choose $u$ sufficiently large such that
$$u^{\omg} \int_{C_{u}} \abs{\frac{\mu\dvr\dur}{(1-\mu)r}\phi}\leq \frac{L'}{4}.$$
Returning to \eqref{est.durphi.diff} and recalling that for $u$ large,
$$-\lim_{v' \to u+} \rd_{u} (r \phi)(u, v')=\lim_{v' \to u+} \rd_{v} (r \phi)(u, v')\geq \frac{L'}{2}u^{-\om},$$
we have
$$-\rd_{u} (r \phi)(u,v) \geq \frac{L'}{4}u^{-\om}$$
for $u$ sufficiently large, as desired. \qedhere

\end{proof}

\subsection{Key lower bound lemma}
The goal of the remainder of this section is to prove Theorem \ref{thm.opt.2}. In this subsection we establish the following result, which provides a sufficient condition for the desired lower bounds on the decay of $\phi$ in terms of a number (called $\mathfrak{L}$) computed on $\calI^{+}$. This will be an important ingredient for our proof of Theorem \ref{thm.opt.2} in the next subsection.

\begin{lemma}[Key lower bound lemma] \label{lem:LB}
Let $(\phi, r, m)$ be a $C^{1}$ solution to \eqref{eq:SSESF} which is locally BV scattering and asymptotically flat initial data of order $\omg = 3$ in $C^{1}$. 
Suppose furthermore that 
\begin{equation*}
	\mathfrak{L} := \lim_{v \to \infty} r^{3} \rd_{v} (r \phi)(1, v) + \int_{1}^{\infty} (M \dur_{\infty} \Phi)(u)  \, \ud u \neq 0,
\end{equation*}
where $M(u) := \lim_{v \to \infty} m(u, v)$, $\dur_{\infty}(u) := \lim_{v \to \infty} \dur(u,v)$ and $\Phi(u) := \lim_{v \to \infty} r \phi(u,v)$. 
Then there exist constants $U, L_{3} > 0$ such that the following lower bounds for the decay of $\rd_{v}(r \phi)$, $\rd_{u} (r \phi)$ hold on $\set{(u, v)  : u \geq U}$.
\begin{align} 
	\abs{\rd_{v}(r \phi)(u, v)} \geq & L_{3} \min \set{r(u,v)^{-3}, u^{-3}}, \label{eq:LB:1} \\
	\abs{\rd_{u}(r \phi)(u, v)} \geq & L_{3} u^{-3}. \label{eq:LB:2}
\end{align}
\end{lemma}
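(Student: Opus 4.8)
\emph{Plan of proof.} The idea is to identify the limit $P(u):=\lim_{v\to\infty}r(u,v)^{3}\rd_{v}(r\phi)(u,v)$, show it is controlled by $\mathfrak{L}$, and then propagate the resulting non-vanishing down to the axis along incoming rays. First I would establish that the relevant quantities have limits along $C_{u}$ as $v\to\infty$. Using Corollary~\ref{cor:decay1} (in particular $|\rd_{v}(r\phi)|\le A_{1}r^{-3}$ and $m\le C_{\Lmb}A_{1}^{2}u^{-5}$) together with the monotonicity of $m$ and $\dur$, one sees that $\Phi(u)=\lim_{v\to\infty}(r\phi)(u,v)$, $M(u)=\lim_{v\to\infty}m(u,v)$ and $\dur_{\infty}(u)=\lim_{v\to\infty}\dur(u,v)$ all exist, with $|\Phi|\le C$ and $M(u)\le C_{\Lmb}A_{1}^{2}u^{-5}$; moreover $\lim_{v\to\infty}\dvr(u,v)=\tfrac12$ and $\lim_{v\to\infty}\mu(u,v)=0$, since $\rd_{u}\log\dvr=\frac{2m\dur}{(1-\mu)r^{2}}$ is integrable in $u$ and tends to $0$ pointwise as $v\to\infty$. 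For fixed $u$ and $w\le u$ one also has $r(u,v)/r(w,v)\to1$ as $v\to\infty$, because $r(u,v)-r(w,v)=\int_{w}^{u}\dur\to\int_{w}^{u}\dur_{\infty}$ is finite while $r(w,v)\to\infty$.

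Next I would integrate the wave equation $\rd_{u}\rd_{v}(r\phi)=\frac{2m\dvr\dur}{(1-\mu)r^{2}}\phi$ in $u$ from $C_{1}$, multiply by $r(u,v)^{3}$, and let $v\to\infty$. By the previous step and dominated convergence, with the integrand dominated by $C_{\Lmb}A_{1}^{2}w^{-5}$ which is integrable on $[1,\infty)$, one obtains that $P(u)$ exists and
\[
P(u)=\lim_{v\to\infty}r^{3}\rd_{v}(r\phi)(1,v)+\int_{1}^{u}(M\dur_{\infty}\Phi)(w)\,\ud w,\qquad \lim_{u\to\infty}P(u)=\mathfrak{L}.
\]
Since $\mathfrak{L}\ne0$, after replacing $\phi$ by $-\phi$ if necessary we may assume $\mathfrak{L}>0$, and fix $U_{1}$ so that $P(u)\ge\mathfrak{L}/2>0$ for all $u\ge U_{1}$.

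The core of the argument is to turn this into the pointwise lower bound \eqref{eq:LB:1}. In a ``deep exterior'' region of the form $\set{v\ge\Theta(u)}$ with $\Theta(u)/u\to\infty$ — where $r(u,v)/r(w,v)\to1$ uniformly in $w\le u$ — the convergence $r^{3}\rd_{v}(r\phi)(u,v)\to P(u)$ is uniform: one splits the $w$-integral in the representation above, using the decay estimates for the tail and the pointwise limits for the head, so that $\rd_{v}(r\phi)(u,v)\ge\frac{\mathfrak{L}}{4}r(u,v)^{-3}$ there once $u$ is large. To reach all remaining points one integrates $\rd_{u}\rd_{v}(r\phi)$ along the incoming ray $\uC_{v}$ from a deep-exterior point $(w_{\ast},v)$ toward $(u,v)$: since $|\rd_{u}\rd_{v}(r\phi)|\le\tfrac12|\phi|\,\big|\frac{2m\dur}{(1-\mu)r^{2}}\big|$ and, by Lemma~\ref{lem:smallPtnl}, $\int_{\uC_{v}\cap\set{u\ge u_{1}}}\big|\frac{2m\dur}{(1-\mu)r^{2}}\big|<\eps$ for $u_{1}$ large, the accrued error is at most $\tfrac{A_{1}\eps}{2}w_{\ast}^{-3}$; this is absorbable into the lower bound at $(w_{\ast},v)$ provided $w_{\ast}$ is chosen comparable to the target $u$ (so the error scales like $u^{-3}$) and $\eps$ is small, which is arranged by a continuity/bootstrap argument in $u$ along $\uC_{v}$ starting from the deep exterior. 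This yields $\rd_{v}(r\phi)(u,v)\ge L_{3}\min\set{r^{-3},u^{-3}}$ on $\set{u\ge U}$. Making this propagation inward respect the correct $u$-weight uniformly — i.e.\ arranging the intermediate points so the errors scale like $u^{-3}$ rather than a larger power, while the ratio $r(u,v)/r(w,v)$ fails to be close to $1$ away from the deep exterior — is the main obstacle.

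Finally, \eqref{eq:LB:2} follows from \eqref{eq:LB:1}. On $\Gmm$ one has $\rd_{u}(r\phi)(u,u)=-\rd_{v}(r\phi)(u,u)$, so $|\rd_{u}(r\phi)(u,u)|\ge L_{3}u^{-3}$ for $u\ge U$ by \eqref{eq:LB:1} and continuity of $\rd_{u}(r\phi)$ for $C^{1}$ solutions. Integrating $\rd_{v}(\rd_{u}(r\phi))=(\rd_{v}\dur)\phi$ outward along $C_{u}$ and bounding the error by $\int_{u}^{v}|(\rd_{v}\dur)\phi|\le K\,(\sup_{C_{u}}|\phi|)\int_{C_{u}}\big|\frac{2m\dvr}{(1-\mu)r^{2}}\big|\le C_{K}A_{1}\eps\,u^{-3}$, using Lemma~\ref{lem:smallPtnl} and \eqref{eq:bnd4dur}, one concludes $|\rd_{u}(r\phi)(u,v)|\ge\tfrac{L_{3}}{2}u^{-3}$ for all $v\ge u$ and $u\ge U$, after shrinking $L_{3}$.
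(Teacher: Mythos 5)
Your setup through the formula
\[
P(u)=\lim_{v\to\infty}r^{3}\rd_{v}(r\phi)(1,v)+\int_{1}^{u}(M\dur_{\infty}\Phi)(w)\,\ud w,\qquad P(u)\to\mathfrak{L},
\]
is essentially the content of the paper's Steps~1--2, stated in a cleaner limiting form, and your treatment of $\rd_{u}(r\phi)$ by integrating outward from the axis is exactly the paper's Step~4 (the paper uses the sharp decay $m\leq Cu^{-5}$, $\abs{\phi}\leq Cu^{-3}$ to make the error $\lesssim u^{-9}$ rather than $\eps u^{-3}$, but either is fine once the interior lower bound is secured).

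The gap you yourself flag --- propagating the lower bound inward from the deep exterior --- is genuine and is not fixable by a bootstrap. With a deep exterior of the form $\set{v\geq\Theta(u)}$ with $\Theta(u)/u\to\infty$, any starting point $(w_{\ast},v)$ on $\uC_{v}$ inside the deep exterior has $w_{\ast}\leq\Theta^{-1}(v)$, and since $v$ can be as small as $u$ this forces $w_{\ast}/u\to 0$ as $u\to\infty$. The error you pick up from Lemma~\ref{lem:smallPtnl} is of order $\eps\,w_{\ast}^{-3}$, while the lower bound you carry forward from $(w_{\ast},v)$ is only of order $r(w_{\ast},v)^{-3}\sim v^{-3}\leq u^{-3}$; because $w_{\ast}^{-3}\gg u^{-3}$, the error dominates no matter how small $\eps$ is chosen (independently of $v$). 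A continuity argument along $\uC_{v}$ cannot cure a rate mismatch of this kind. What makes the argument close is to replace the deep exterior by a \emph{linear} exterior region $\extr^{\eta}=\set{u\leq\eta v}$ with small but \emph{fixed} $\eta$ (Step~1 of the paper), prove the exterior lower bound in the form $\rd_{v}(r\phi)\geq\tfrac{\mathfrak{L}}{2}(v/2)^{-3}$ there, and then integrate from $(\eta u,v)$ --- which is now \emph{comparable} to $u$ --- to $(u,v)$. Over the interval $[\eta u, u]$ one should not use the $\eps$-smallness of $\int\abs{\tfrac{2m\dur}{(1-\mu)r^{2}}}$ (which gives $\eps u^{-3}$, and the constant depends on $\eta$ in a delicate way), but rather the pointwise decay estimates from Theorem~\ref{main.thm.2} and Corollary~\ref{cor:decay2} (in particular $m\lesssim\min\set{r^{3}u^{-8},u^{-5}}$ and $\abs{\phi}\lesssim\min\set{u^{-3},r^{-1}u^{-2}}$), which give $\big|\int_{\eta u}^{u}\tfrac{2m\dvr\dur}{r^{2}}\phi\big|\lesssim_{\eta}u^{-9}\ll u^{-3}$. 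With that substitution your outline becomes a complete proof and is otherwise the same as the paper's.
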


\begin{remark} 
Note that $\rd_{v}(r \phi)$ and $\rd_{u}(r \phi)$ have definite signs by \eqref{eq:LB:1}, \eqref{eq:LB:2}. In fact, the proof below shows that the signs of $\rd_{v} (r \phi)$ and $-\rd_{u} (r \phi)$ agree with that of $\mathfrak{L}$.
\end{remark}

\begin{proof} 
Without loss of generality, assume that $\mathfrak{L} > 0$. For $0 < \eta \leq 1$, define the \emph{$\eta$-exterior region} by
\begin{equation*}
	\extr^{\eta} := \set{(u,v) \in \PD : u \leq \eta v}.
\end{equation*}

\pfstep{Step 1} In the first step, we make precise the relation between $r$ and $v$ in $\extr^{\eta}$ for small $\eta$. We claim that $r \aeq v/2$ in this region; more precisely,
\begin{equation} \label{eq:LB:pf:1}
\abs{\frac{r(u,v)}{v} - \frac{1}{2}} \leq \eta C_{A_{1}, A_{2}, K, \Lmb}.
\end{equation}

Integrating by parts, we have
\begin{align*}
r(u,v) 
= \int_{u}^{v} \dvr(u, v') \, \ud v' 
= - \int_{u}^{v} \rd_{v} \dvr(u, v') v' \, \ud v' + v \dvr(u, v) - u \dvr(u, u).
\end{align*}

To make the leading term $v \dvr(u,v)$ and small number $\frac{u}{v}$ explicit, we rewrite the last expression as follows:
\begin{equation*}
	r(u,v) = v \bb[ \dvr(u,v)  - \frac{u}{v} \bb( \dvr(u,u) + \int_{u}^{v} \rd_{v} \dvr(u, v') \frac{v'}{u} \, \ud v' \bb)  \bb].
\end{equation*}

Recall that $\dvr$ is uniformly bounded from the above and below on $\PD$, i.e., $\Lmb^{-1} \leq \dvr \leq 1/2$. Moreover, by the decay estimates for $\rd_{v} \dvr$ proved in Theorem \ref{main.thm.2}, we have
\begin{equation*}
\sup_{(u,v) \in \PD} \int_{u}^{v} \abs{\rd_{v} \dvr(u, v')} \frac{v'}{u} \, \ud v' \leq C_{A_{2}}.
\end{equation*}

As a consequence,
\begin{equation*}
\abs{\frac{r(u,v)}{v} - \dvr(u,v)} \leq \eta C_{A_{2}, \Lmb}.
\end{equation*}

Thus \eqref{eq:LB:pf:1} will follow once we establish
\begin{equation} \label{eq:LB:pf:1:1}
	\abs{\dvr(u,v) - \frac{1}{2}} \leq \eta^{2} C_{A_{1}, A_{2}, K, \Lmb}.
\end{equation}

This inequality is proved by integrating the decay estimate \eqref{eq:decay2:10} for $\rd_{u} \dvr = \rd_{u} \rd_{v} r$ along the incoming direction, starting from the normalization $\dvr(1, v) = 1/2$. 
Here, we use the easy geometric fact that if $(u,v)$ lies in $\extr^{\eta}$, then so does the incoming null curve from $(1,v)$ to $(u,v)$. 

\pfstep{Step 2} We claim that for $U_{1} \geq 1$ sufficiently large and $0 < \eta \leq 1$ suitably small, we have
\begin{equation} \label{eq:LB:pf:2}
	\rd_{v} (r \phi)(u,v) \geq \frac{\mathfrak{L}}{2} \bb( \frac{v}{2} \bb)^{-3}
\end{equation}
for $(u, v) \in \extr^{\eta} \cap \set{u \geq U_{1}}$.

We begin with
\begin{equation} \label{eq:LB:pf:2:1}
	\bb( \frac{v}{2} \bb)^{3} \rd_{v}(r \phi)(u,v) = \bb( \frac{v}{2} \bb)^{3} \rd_{v}(r \phi)(1, v) + \bb( \frac{v}{2} \bb)^{3} \int_{1}^{u} \frac{2m \dvr \dur}{(1-\mu) r^{3}} r \phi(u', v) \, \ud u',
\end{equation}
obtained by integrating the $\rd_{u} \rd_{v} (r \phi)$ equation and multiplying by $(v/2)^{3}$. To prove \eqref{eq:LB:pf:2}, it suffices to show that the right-hand side of \eqref{eq:LB:pf:2:1} is bounded from below by $\mathfrak{L}/2$ for $(u, v) \in \extr^{\eta} \cap \set{u \geq U_{1}}$ with sufficiently large $U_{1} \geq 1$ and small $0 < \eta \leq 1$.

Note that $r = \frac{v-1}{2}$ on $C_{1}$, and $v \geq \eta^{-1}$ if $(u,v) \in \extr^{\eta}$. Thus for $(u,v) \in \extr^{\eta}$ and $0 < \eta \leq 1$ sufficiently small, we have
\begin{equation*}
\abs{\bb( \frac{v}{2} \bb)^{3} \rd_{v} (r \phi)(1, v) - \lim_{v \to \infty} r^{3} \rd_{v} (r \phi)(1, v)} < \frac{\mathfrak{L}}{8}.
\end{equation*}

In order to proceed, it is useful to keep in mind the following technical point: For $U_{1} \geq 1$, by the decay estimates \eqref{eq:decay1:1} and \eqref{eq:decay1:6}, we have
\begin{equation} \label{eq:LB:pf:2:2}
	\sup_{v \geq U_{1}} \int_{U_{1}}^{v} \abs{\frac{2m \dvr \dur}{1-\mu} r \phi (u', v)} \, \ud u' \leq U_{1}^{-6} C_{A_{1}, \Lmb}.
\end{equation}

In what follows, let $(u, v) \in \extr^{\eta} \cap \set{u \geq U_{1}}$. Using \eqref{eq:LB:pf:1}, \eqref{eq:LB:pf:2:2} and the fact that the null segment from $(1, v)$ to $(u, v)$ lies in $\extr^{\eta}$, we get
\begin{equation*}
\bb\vert \bb( \frac{v}{2} \bb)^{3} \int_{1}^{u} \frac{2m \dvr \dur}{(1-\mu) r^{3}} r \phi(u', v) \, \ud u' - \int_{1}^{u} \frac{2m \dvr \dur}{1-\mu} r \phi(u', v) \, \ud u'\bb\vert
\leq \eta C_{A_{1}, A_{2}, K, \Lmb}.
\end{equation*}

Taking $U_{1} \geq 1$ large enough and using \eqref{eq:LB:pf:2:2}, we may arrange 
\begin{equation*}
	\sup_{v \geq U_{1}} \int_{U_{1}}^{v} \abs{\frac{2m \dvr \dur}{1-\mu} r \phi (u', v)} \, \ud u' + \int_{U_{1}}^{\infty} \abs{M \dur_{\infty} \Phi (u')} \, \ud u' < \frac{\mathfrak{L}}{8}.
\end{equation*} 

On the other hand, note that $2m \dvr \dur (1-\mu)^{-1} r \phi (u, v) \to M \dur_{\infty} \Phi(u)$ for each $u \geq 1$ as $v \to \infty$. Therefore, by the dominated convergence theorem, for $0 < \eta \leq 1$ sufficiently small (so that $v$ is large), we have
\begin{align*}
	\abs{\int_{1}^{U_{1}} \frac{2m \dvr \dur}{1-\mu} r \phi (u', v) \, \ud u' - \int_{1}^{U_{1}} M \dur_{\infty} \Phi(u')  \, \ud u'} < \frac{\mathfrak{L}}{8}.
\end{align*}

Putting these together and taking $0 < \eta \leq 1$ sufficiently small, we conclude \eqref{eq:LB:pf:2}.

\pfstep{Step 3} Next, we claim that there exists $U_{2} = U_{2}(U_{1}, A_{2}, \Lmb, K, \eta) \geq 1$ such that $U_{2} \geq U_{1}$ and for $(u,v) \in (\PD \setminus \extr^{\eta}) \cap \set{u \geq U_{2}}$, we have
\begin{align} 
	\rd_{v}(r \phi)(u, v) \geq 2 \eta^{3} \mathfrak{L} \, u^{-3}.		\label{eq:LB:pf:3}
\end{align}

Combined with \eqref{eq:LB:pf:2} (keeping in mind that $r \aeq v/2$ in $\extr^{\eta}$ by \eqref{eq:LB:pf:1}), this would establish \eqref{eq:LB:1}.

Take $U_{2} \geq \eta^{-1} U_{1}$, and consider $(u, v) \in (\PD \setminus \extr^{\eta}) \cap \set{u \geq U_{2}}$. Integrating \eqref{eq:SSESF:dphi}, we have
\begin{equation*}
	\rd_{v} (r \phi)(u,v) = \rd_{v} (r \phi)(\eta u, v) + \int_{\eta u}^{u} \frac{2m \dvr \dur}{r^{2}} \phi(u', v) \, \ud u'.
\end{equation*}
 
Note that $(\eta u, v) \in \extr^{\eta} \cap \set{u \geq U_{1}}$ since $v \geq u$ and $\eta u \geq \eta U_{2} \geq  U_{1}$. Therefore, by \eqref{eq:LB:pf:2} and the fact that $\eta^{-1} u > v$ (as $(u,v) \in \PD \setminus \extr^{\eta}$), the first term on the right-hand side obeys the lower bound
\begin{equation*}
	\rd_{v} (r \phi)(\eta u, v) \geq \bb( \frac{\mathfrak{L}}{2} \bb) \bb( \frac{v}{2} \bb)^{-3} > 4 \eta^{3} \mathfrak{L} \, u^{-3}.
\end{equation*}

On the other hand, using \eqref{eq:decay1:1} and \eqref{eq:decay2:11}, we have
\begin{equation*}
	\abs{\int_{\eta u}^{u} \frac{2m \dvr \dur}{r^{2}} \phi(u', v) \, \ud u'}
	\leq C_{A_{1}, A_{2}, \Lmb, K} \int_{\eta u}^{u} \frac{1}{(u')^{10}} \, \ud u' 
	\leq C_{A_{1}, A_{2}, \Lmb, K, \eta} \, u^{-9}.
\end{equation*}

Taking $U_{2}$ large enough, we conclude that \eqref{eq:LB:pf:3} holds.

\pfstep{Step 4} Finally, we claim that there exists $U = U(U_{2}, A_{2}, \Lmb, K, \eta) \geq 1$ such that $U \geq U_{2} \geq U_{1}$ and for $(u,v) \in \set{u \geq U}$, we have
\begin{align} 
	- \rd_{u}(r \phi)(u, v) \geq \eta^{3} \mathfrak{L} \, u^{-3}.		\label{eq:LB:pf:4}
\end{align}

This would prove \eqref{eq:LB:2}, thereby completing the proof of Lemma \ref{lem:LB}.

Our argument will be very similar to the previous step. Take $U \geq U_{2}$ and consider $(u,v) \in \set{u \geq U}$. Integrating \eqref{eq:SSESF:dphi} along the outgoing direction, we have
\begin{equation*}
	- \rd_{u}(r \phi)(u,v) = - \rd_{u}(r \phi)(u, u) - \int_{u}^{v} \frac{2 m \dvr \dur}{r^{2}} \phi(u, v') \, \ud v'.
\end{equation*}

Recall that $\lim_{v \to u+} \rd_{u} (r \phi)(u, v) = - \lim_{v \to u+} \rd_{v}(r \phi)(u,v)$. By \eqref{eq:LB:pf:3} and the fact that $u \geq U \geq U_{2}$, we see that the first term on the right-hand side obeys the lower bound
\begin{equation*}
	- \rd_{u} (r \phi)(u,u) \geq 2 \eta^{3} \mathfrak{L} \, u^{-3}.
\end{equation*}

On the other hand, using \eqref{eq:decay1:1} and \eqref{eq:decay2:11}, we have
\begin{equation*}
	\abs{\int_{u}^{v} \frac{2 m \dvr \dur}{r^{2}} \phi(u, v') \, \ud v'}
	\leq C_{A_{1}, A_{2}, K} \int_{u}^{v} \min \set{u^{-10}, r^{-2} u^{-8}} \, \dvr \, \ud v' 
	\leq C_{A_{1}, A_{2}, K} \, u^{-9}.
\end{equation*}

Taking $U$ sufficiently large, we conclude that \eqref{eq:LB:pf:4} holds. \qedhere
\end{proof}

\subsection{Optimality of the decay rates, in the case $\omg' \geq 3$}  \label{subsec.opt.2}
In this subsection, we prove Theorem \ref{thm.opt.2} by studying the solution to \eqref{eq:SSESF} arising from the initial value
\begin{equation*}
	\rd_{v}(r \phi)(1, v) = \eps \widetilde{\chi}\bb( \frac{v - v_{0}}{N} \bb),
\end{equation*}
where $\widetilde{\chi} : (-\infty, \infty) \to [0, \infty)$ is a smooth function such that
\begin{equation*}
\mathrm{supp} \, \widetilde{\chi} \subset (-1/2, 1/2), \quad
\int_{\bbR} \widetilde{\chi} = 1.
\end{equation*}

We also require that $v_{0} \geq 2$ and $N \leq v_{0}$. With such data, the initial total variation is of size $\leq C\eps$, i.e.,
\begin{equation*}
	\int_{1}^{\infty} \abs{\rd_{v}^{2} (r \phi) (1, v)} \, \ud v \leq \eps \int_{-\infty}^{\infty} \abs{ \widetilde{\chi}\,' \bb( \frac{v - v_{0}}{N} \bb) } \, \frac{\ud v}{N} \leq C \eps.
\end{equation*}

We also see that $\calI_{1} \leq C \eps v_{0}^{3}$ and $\calI_{2} \leq C \eps v_{0}^{4} / N$ with $\omg' = 3$, as
\begin{equation*}
	\sup_{v \in [1, \infty)} (1+r)^{3} \abs{\rd_{v} (r \phi)}(1,v) \leq C \eps v_{0}^{3}, \quad
	\sup_{v \in [1, \infty)} (1+r)^{4} \abs{\rd_{v}^{2} (r \phi)}(1,v) \leq C \eps \frac{v_{0}^{4}}{N}.
\end{equation*}

We are now ready to give a proof of Theorem \ref{thm.opt.2}. The idea is to compute $\mathfrak{L}$ to the leading order (which turns out to be $- c\eps^{3}$ for some $c > 0$), and then control the lower order terms by taking $\eps > 0$ sufficiently small and applying Theorem \ref{thm:smallData}.

\begin{proof} [Proof of Theorem \ref{thm.opt.2}]
For this proof, we fix $v_{0} = 4$ and $N = 1$. We use the shorthand
\begin{equation*}
	\chi(v) := \widetilde{\chi}(v-4).
\end{equation*}

By the preceding discussion on the size of initial data, we see that Theorem \ref{thm:smallData} applies when $\eps > 0$ is sufficiently small. Therefore, there exists a constant $C > 0$ independent of $\eps > 0$ such that Theorems \ref{main.thm.1}, \ref{main.thm.2} and Proposition \ref{prop:geomLocBVScat} hold with
\begin{equation} \label{eq:opt2:eps}
	A_{1}, A_{2} \leq C \eps, \quad K, \Lmb \leq C.
\end{equation}

We begin by showing
\begin{equation} \label{eq:opt2:dvrphi}
	\rd_{v}(r \phi)(u,v) = \eps \chi(v) + \Err_{1}(u,v),
\end{equation}
where
\begin{equation} \label{eq:opt2:Err1}
	\abs{\Err_{1}(u,v)} \leq C \eps^{3} \min \set{u^{-3}, r(u,v)^{-3}}.
\end{equation}

The argument is similar to the proof of Theorem \ref{thm.opt.1}, but this time we rely on Theorem \ref{thm:smallData} to make the dependence of $\Err_{1}$ on $\eps$ explicit. Indeed, by \eqref{eq:SSESF:dphi}, we have
\begin{equation*}
	\abs{\Err_{1}(u,v)}
	\leq \int_{1}^{u} \abs{\frac{\mu \dvr \dur}{(1-\mu)r}  \phi} (u', v) \, \ud u'.
\end{equation*}

Then estimating the right-hand side using Theorem \ref{main.thm.1}, Proposition \ref{prop:geomLocBVScat} and Corollary \ref{cor:decay2}, and using \eqref{eq:opt2:eps} make the $\eps$-dependence explicit, \eqref{eq:opt2:Err1} follows.

Integrating \eqref{eq:opt2:dvrphi}, we also have
\begin{align*} 
	r\phi(u,v) 	= & \int_{u}^{v} \rd_{v}(r \phi)(u, v') \, \ud v' \\
			= & \eps \int_{u}^{v} \chi(v') \, \ud v' + \int_{u}^{v} \Err_{1}(u, v') \, \ud v' \\
			= & \eps X(u,v) + \Err_{2}(u,v)
\end{align*}
where $X(u, v) := \int_{u}^{v} \chi(v') \, \ud v'$ and $\Err_{2}(u,v) := \int_{u}^{v} \Err_{1}(u, v') \, \ud v'$. Integrating \eqref{eq:opt2:Err1}, and using the bound $C^{-1} \leq \dvr \leq 1/2$, we easily obtain
\begin{equation} \label{eq:opt2:Err2}
	\abs{\Err_{2}(u,v)} \leq C \eps^{3} \min \set{r u^{-3}, u^{-2}}.	
\end{equation}

In particular, taking $v \to \infty$, we see that
\begin{equation} \label{eq:opt2:Phi}
	\abs{\Phi(u) - \eps X(u, \infty)} \leq C \eps^{3} u^{-2}.
\end{equation}

We now proceed to estimate $M(u)$. We begin with the easy observation
\begin{equation} \label{eq:opt2:UBforM}
	M(u) \leq C \eps^{2} u^{-5},
\end{equation}
which follows from Corollary \ref{cor:decay2} and \eqref{eq:opt2:eps}. On the other hand, recalling the definition of $M(u)$ from \eqref{eq:SSESF:dm} and using the elementary inequality $(a+b)^{2} \geq \frac{1}{2} a^{2} - b^{2}$,
\begin{align*}
	M(u)	= & \frac{1}{2} \int_{u}^{\infty} \frac{1-\mu}{\dvr} [ \rd_{v}(r\phi) - \frac{\dvr}{r} (r \phi) ]^{2} (u,v) \, \ud v  \\
		\geq & \frac{\eps^{2}}{4} \int_{u}^{\infty} \frac{1-\mu}{\dvr}(u,v) [\chi (v) - \frac{\dvr}{r} X (u,v) ]^{2} \, \ud v 
			- \frac{1}{2}\int_{u}^{\infty} \frac{1-\mu}{\dvr} [\Err_{1} - \frac{\dvr}{r} \Err_{2}]^{2} (u,v) \, \ud v .
\end{align*}

By \eqref{eq:opt2:eps}, \eqref{eq:opt2:Err1} and \eqref{eq:opt2:Err2}, we have
\begin{equation*}
\abs{\frac{1}{2} \int_{u}^{\infty} \frac{1-\mu}{\dvr} [\Err_{1} - \frac{\dvr}{r} \Err_{2}]^{2} (u,v) \, \ud v} \leq C \eps^{6}.
\end{equation*}

Furthermore, note that $(1-\mu) \geq (K \Lmb)^{-1} \geq C^{-1} > 0$ by Proposition \ref{prop:geomLocBVScat} and \eqref{eq:opt2:eps}. Also, for $(u,v) \in [1,2] \times [8, \infty)$, note that $\chi(v) = 0$ and $X(u,v) = 1$. Therefore, for $1 \leq u \leq 2$, there exists $c > 0$ (independent of $\eps > 0$) such that
\begin{align*}
\frac{1}{4}\int_{u}^{\infty} \frac{1-\mu}{\dvr}(u,v) [\chi - \frac{\dvr}{r} X]^{2} (u,v) \, \ud v
\geq & (4 C)^{-1} \int_{u}^{\infty}  [\chi - \frac{\dvr}{r} X]^{2} (u,v) \, \dvr^{-1} (u,v)  \ud v \\
\geq & (4 C)^{-1} \int_{8}^{\infty} \frac{\dvr}{r^{2}} (u,v) \, \ud v \\
%\geq & (4C)^{-1} \frac{1}{r(2,8)} \\
\geq & c\,.
\end{align*}

Therefore, we conclude that
\begin{equation} \label{eq:opt2:LBforM}
	M(u) \geq c \eps^{2} - C \eps^{6} \qquad \hbox{ for } 1 \leq u \leq 2.
\end{equation}

We are now ready to compute $\mathfrak{L}$ and complete the proof. We begin by observing that 
\begin{equation*}
\lim_{v \to \infty} r^{3} \abs{\rd_{v}(r \phi)(1,v)} = 0
\end{equation*}
by our choice of data. Therefore,
\begin{align*}
	- \mathfrak{L}
		 = &\int_{1}^{\infty} M (-\dur_{\infty}) \Phi(u) \, \ud u \\
		= & \eps \int_{1}^{\infty} M (u) (-\dur_{\infty}) (u) X(u,\infty) \, \ud u 
			+ \int_{1}^{\infty} M(u) (-\dur_{\infty})(u) \Err_{2}(u, \infty) \, \ud u. 
\end{align*}

By Proposition \ref{prop:geomLocBVScat}, \eqref{eq:opt2:eps}, \eqref{eq:opt2:Err2} and \eqref{eq:opt2:UBforM}, we have
\begin{equation*}
\abs{\int_{1}^{\infty} M(u) (-\dur_{\infty})(u) \Err_{2}(u, \infty) \, \ud u } \leq C \eps^{5}.
\end{equation*}

On the other hand, by Proposition \ref{prop:geomLocBVScat}, \eqref{eq:opt2:eps} and \eqref{eq:opt2:LBforM}, we have (taking $c > 0$ smaller if necessary)
\begin{align*}
		\eps \int_{1}^{\infty} M(u) (-\dur_{\infty})(u) X(u, \infty) \, \ud u 
		\geq & \eps \int_{1}^{2} M(u) (-\dur_{\infty})(u) X(u, \infty) \, \ud u \\
		\geq & \Lmb^{-1} \eps \int_{1}^{2} M(u) \, \ud u
		\geq c \eps^{3} - C \eps^{7}.
\end{align*}

Therefore, taking $\eps > 0$ sufficiently small, we see that $- \mathfrak{L} > \frac{c}{2} \eps^{3} > 0$. \qedhere
\end{proof}

% ----------------------------------------------------------------
\bibliographystyle{amsplain}

%\bibliography{SJ-library}

\begin{thebibliography}{1}

\bibitem{Alinhac} Serge Alinhac, \emph{{Stability of large solutions to quasilinear wave equations}}, Indiana Univ. Math. J. 58 (2009), no. 6, 2543-2574.

\bibitem{BCR}
Piotr Biz\'on, Tadeusz Chmaj, Andrzej Rostworowski, \emph{{Late-time tails of a self-gravitating massless scalar field, revisited}}, Class. Quan. Grav., 26 (2009), no. 17, 175006.

\bibitem{Christodoulou:1986ue}
Demetrios Christodoulou, \emph{{The problem of a self-gravitating scalar field}}, 
Comm. Math. Phys. 105 (1986), no. 3, 337-361.

\bibitem{Christodoulou:1987ta}
\bysame, \emph{{A mathematical theory of gravitational
  collapse}}, Comm. Math. Phys. 109 (1987), no. 4, 613-647.

\bibitem{Christodoulou:1991}
\bysame, \emph{{The formation of black holes and singularities in spherically symmetric gravitational collapse}},
Comm. Pure Appl. Math. 44 (1991), no. 3, 339-373.

\bibitem{Christodoulou:1993bt}
\bysame, \emph{{Bounded variation solutions of the spherically symmetric
  Einstein-scalar field equations}}, Comm. Pure Appl. Math. 46 (1993), no. 8, 1131-1220.

\bibitem{Christodoulou:1994}
\bysame, \emph{{Examples of naked singularity formation in the gravitational collapse of a scalar field}},
Annals of Math (2) 140 (1994), no. 3, 604-653.

\bibitem{Christodoulou:1999}
\bysame, \emph{{The instability of naked singularities in the gravitational collapse of a scalar field}},
Annals of Math (2) 149 (1999), no. 1, 183-217.

\bibitem{CK}
Demetrios Christodoulou and Sergiu Klainerman, \emph{{The global nonlinear stability of the Minkowski space}},
Princeton Mathematical Series 41 (1993).

\bibitem{CTZ}
Demetrios Christodoulou and A. Shadi Tahvildar--Zadeh, \emph{{On the asymptotic behavior of spherially symmetric wave maps}}, Duke Math. Jour. 71 (1993), no. 1, 31-69.

\bibitem{D}
Mihalis Dafermos, \emph{{The interior of charged black holes and the problem of uniqueness in general relativity}},
Comm. Pure Appl. Math. 58 (2005), no. 4, 445-504.

\bibitem{DR}
Mihalis Dafermos and Igor Rodnianski, \emph{{A proof of Price's law for the collapse of a self-gravitating scalar field}},
Invent. Math. 162 (2005), no. 2, 381-457.

\bibitem{DR2}
\bysame, \emph{{A new physical-space approach to decay for the wave equation with applications to black hole spacetimes}}, Proceedings ICMP (2010), 421-432.

\bibitem{DK}
Roland Donninger and Joachim Krieger, \emph{{Nonscattering solutions and blowup at infinity for the critical wave equation}},
 Math. Ann. 357 (2013), no. 1, 89-163.

\bibitem{Kommemi}
Jonathan Kommemi, \emph{{The global structure of spherically symmetric charged scalar field spacetimes}}, Comm. Math. Phys. 323 (2013), no. 1, 35-106.

\bibitem{LR}
Hans Lindblad and Igor Rodnianski, \emph{{The global stability of Minkowski space-time in harmonic gauge}},
Annals of Math (2) 171 (2010), no. 3, 1401-1477.

\bibitem{MTT}
Jason Metcalfe, Daniel Tataru and Mihai Tohaneanu \emph{{Price's law on nonstationary space-times}},
Adv. Math. 230 (2012), no. 3, 995-1028.

\bibitem{Morawetz}
Cathleen Morawetz, \emph{{Decay for solutions of the exterior problem for the wave equation}}, Comm. Pure Appl. Math. 28 (1975), 229-264.

\bibitem{Price}
Richard H. Price, \emph{{Nonspherical perturbations of relativistic gravitational collapse. I. Scalar and
gravitational perturbations}}, Phys. Rev. Lett. D (3) 5 (1972), 2419-2438.

\bibitem{Strauss}
Walter Strauss, \emph{{Dispersal of waves vanishing on the boundary of an exterior domain}}, Comm. Pure Appl. Math. 28 (1975), 265-278.

\bibitem{Ta}
Daniel Tataru, \emph{{Local decay of waves on asymptotically flat stationary space-times}}, Amer. J. Math 135 (2013), no. 2, 361-401.

\bibitem{Yang}
Shiwu Yang, \emph{{Global stability of solutions to nonlinear wave equations}}, preprint (2012).
\end{thebibliography}
% ----------------------------------------------------------------

\end{document}